\newif\iffull
\newif\ifdiff
\newif\ifapp
\newif\ifextrarules
\newtheorem{lemma}{Lemma}
\newtheorem{corollary}{Corollary}
\newtheorem{theorem}{Theorem}
\theoremstyle{definition}
\newcommand{\defeq}{\triangleq}
\newcommand{\langname}{\ensuremath{\lambda^{G\mu}}}
\newcommand{\oldlangname}{\ensuremath{\lambda^G}}
\newcommand{\toolname}{\textsc{GML}}
\newcommand{\newtoolname}{\textsc{GML$^\mu$}}
\newcommand{\bnfdef}{\mathop{::=}}
\newcommand{\bnfalt}{\mid}
\newcommand{\rulename}[1]{\textsc{#1}}
\newcommand{\Rule}[4][]{\ensuremath{\inferrule[{\footnotesize (#2)}]{#3}{#4}}}
\newcommand{\kw}[1]{\ensuremath{\mathsf{#1}}}
\newcommand{\ppt}{\ell}
\newcommand{\ppts}{\vec{\ppt}}
\newcommand{\graph}{\ensuremath{G}}
\newcommand{\kwtriv}{\langle \rangle}
\newcommand{\kwpair}[2]{(#1, #2)}
\newcommand{\kwfst}[1]{\kw{fst}~#1}
\newcommand{\kwsnd}[1]{\kw{snd}~#1}
\newcommand{\kwinl}[1]{\kw{inl}~#1}
\newcommand{\kwinr}[1]{\kw{inr}~#1}
\newcommand{\kwcase}[5]{\kw{case}~#1~\{#2. #3; #4. #5\}}
\newcommand{\kwfun}[5]{\kw{fun}[#1; #2]~#3~#4 = #5}
\newcommand{\kwapp}[3]{{#2}~#3}
\newcommand{\kwtapp}[3]{#1[#2;#3]}
\newcommand{\kwpar}[2]{\kw{par} (#1, #2)}
\newcommand{\kwfuture}[2]{\kw{future}[#1]~#2}
\newcommand{\kwforce}[1]{\kw{touch}~#1}
\newcommand{\kwnewf}[3]{\kw{new}~\hastycl{#1}{#2}. #3}
\newcommand{\kwroll}[1]{\kw{roll}~#1}
\newcommand{\kwunroll}[1]{\kw{unroll}~#1}
\newcommand{\kwhandle}[2]{\kw{handle}[#1]~#2}
\newcommand{\kwsingletapp}[2]{#1[#2]}
\newcommand{\kwunit}{\kw{unit}}
\newcommand{\kwarrow}[3]{#1 \xrightarrow{#3} #2}
\newcommand{\kwprod}[2]{#1 \times #2}
\newcommand{\kwsum}[2]{#1 + #2}
\newcommand{\kwfutt}[2]{#1~\kw{future}[#2]}
\newcommand{\kwpi}[3]{\Pi [#1; #2]. #3}
\newcommand{\con}{\ensuremath{c}}
\newcommand{\convar}{\alpha}
\newcommand{\kwrec}[2]{\mu #1 . #2}
\newcommand{\kwprec}[4]{\mu (#1; #2. #3; #4)}
\newcommand{\kwxi}[2]{\lambda  #1. #2}
\newcommand{\kwvapp}[2]{#1~#2}
\newcommand{\kind}{\kappa}
\newcommand{\kwtykind}{\mathsf{Ty}}
\newcommand{\kwkindarr}[2]{#1 \rightarrow #2}
\newcommand{\vs}{\ensuremath{U}}
\newcommand{\vvar}{\ensuremath{u}}
\newcommand{\genseed}{\ensuremath{\vec{u}}}
\newcommand{\vertgen}[2]{\ensuremath{\kw{gen}(#2)}}
\newcommand{\vspath}{\ensuremath{\mathit{VP}}}
\newcommand{\vsnormal}{~\kw{normal}}
\newcommand{\vsnormalneg}{~\kw{normal^{-}}}
\newcommand{\vsty}{\mathcal{U}}
\newcommand{\kwvty}{\blacklozenge}
\newcommand{\kwunitty}{1}
\newcommand{\kwvsrec}[2]{\mu #1 . #2}
\newcommand{\kwvscorec}[2]{\nu #1 . #2}
\newcommand{\vstyvar}{\ensuremath{t}}
\newcommand{\avail}{\ensuremath{a}}
\newcommand{\isav}{\blacksquare}
\newcommand{\isunav}{\square}
\newcommand{\vsprod}[4]{\kwprod{#1^{#2}}{#3^{#4}}}
\newcommand{\emptygraph}{\bullet}
\newcommand{\parcomp}{\mathbin{\otimes}}
\newcommand{\seqcomp}{\mathbin{\oplus}}
\newcommand{\dagor}{\lor}
\newcommand{\dagrec}[3]{\mu #1 . #2}
\newcommand{\gvar}{\ensuremath{\gamma}}
\newcommand{\leftcomp}[2]{#1\!\sswarrow_{#2}}
\newcommand{\touchcomp}[1]{\mathop{\strut^{#1}\!\ssearrow}}
\newcommand{\dagnew}[2]{\kw{new}~#1. #2}
\newcommand{\dagpi}[3]{\Pi [#1; #2]. #3}
\newcommand{\vertex}{u}
\newcommand{\verts}{\vs} %\vec{u}
\newcommand{\graphkind}{\ensuremath{\kappa_{G}}}
\newcommand{\kgraph}{*}
\newcommand{\vertsof}[1]{\mathit{Verts(}#1\mathit{)}}
\newcommand{\cgraph}{g}
\newcommand{\dagq}[4]{(#1, #2, #3, #4)}
\newcommand{\vertices}{V}
\newcommand{\edges}{E}
\newcommand{\startv}{s}
\newcommand{\sinkv}{t}
\newcommand{\ctx}{\Gamma}
\newcommand{\gctx}{\Delta}
\newcommand{\ectx}{\cdot}
\newcommand{\uspctx}{\Omega}
\newcommand{\utctx}{\Psi}
\newcommand{\hastype}[2]{#1 : #2}
\newcommand{\hastycl}[2]{#1\!:\!#2}
\newcommand{\hastypewithdag}[3]{\hastype{#1}{#2 \mid #3}}
\newcommand{\tywithdag}[7]{#1; #2; #3; #4 \vdash \hastypewithdag{#5}{#6}{#7}}
\newcommand{\affinetyped}[3]{#1 \vdash_A \hastype{#2}{#3}}
\newcommand{\unannvaltype}[2]{\vdash \hastype{#1}{#2}}
\newcommand{\dagwf}[5]{#1; #2; #3 \vdash #4 : #5}
\newcommand{\isctx}[4]{#1; #3 \vdash #4}
\newcommand{\fresh}{~\kw{fresh}}
\newcommand{\eval}[4]{#2 \Downarrow #3 \mid #4}
\newcommand{\sub}[3]{#1[#2/#3]}
\newcommand{\gsub}[3]{#1[#2/#3]}
\newcommand{\tsub}[3]{#1[#2/#3]}
\newcommand{\vsub}[3]{#1[#2/#3]}
\newcommand{\getgraphs}[1]{\mathit{Exp}(#1)}
\newcommand{\unrstep}{\hookrightarrow}
\newcommand{\uctx}{\Sigma}
\newcommand{\vsistype}[4]{#1; #2 \vdash \hastype{#3}{#4}}
\newcommand{\vstctx}{\Upsilon}
\newcommand{\haskind}[2]{#1 :: #2}
\newcommand{\iskind}[6]{#1; #3; #4 \vdash \haskind{#5}{#6}}
\newcommand{\unannkind}[3]{#1 \vdash \haskind{#2}{#3}}
\newcommand{\vseval}[2]{#1 \Downarrow #2}
\newcommand{\coneq}[6]{#1; #2; #3 \vdash \haskind{#4 \equiv #5}{#6}}
\newcommand{\vseq}[4]{#1 \vdash \hastype{#2 \equiv #3}{#4}}
\newcommand{\vstysubt}[2]{#1 \sqsubseteq #2}
\newcommand{\uspstep}[2]{#1 \rightsquigarrow #2}
\newcommand{\uspmerge}[2]{#1 \boxplus #2}
\newcommand{\uspsplit}[3]{\uspstep{#1}{\uspmerge{#2}{#3}}}
\newcommand{\vstystep}[2]{#1 \rightsquigarrow #2}
\newcommand{\vstymerge}[2]{#1 \boxplus #2}
\newcommand{\vstysplit}[3]{\vstystep{#1}{\vstymerge{#2}{#3}}}
\newcommand{\bnnf}[1]{\mathsf{NBNF}(#1)}
\newcommand{\guspctx}{{\Omega^\circ}}
\newcommand{\gutctx}{\Psi^\circ}
\newcommand{\gexistsst}[3]{there exists $#1$ such that $#2 \unrstep^* #1$ and $#3 \in \getgraphs{\bnnf{#1}}$}
\newcommand{\uctxexistsst}[4]{there exists $#2$ such that $\genseed \in #2$ implies $\genseed \fresh$
													and $\tywithdag{\ectx}{\ectx}{#1, #2}{\ectx}{#3}{#4}{\emptygraph}$}
\newcommand{\gsubvsunrollexst}[5]{\exists #3. #1 \unrstep^* #3 \wedge \vsub{#3}{#4}{#5} = #2}
\newcommand{\gsubvsunrollexstTwo}[7]{\exists #3. #1 \unrstep^* #3 \wedge \vsub{\vsub{#3}{#4}{#5}}{#6}{#7} = #2}
\newcommand{\vstysplitexst}[5]{there exists $#1$ such that $\vstysplit{#2}{#1}{#4}$ and $\vstysplit{#1}{#3}{#5}$}
\newcommand{\uspsplitexst}[5]{there exists $#1$ such that $\uspsplit{#2}{#1}{#4}$ and $\uspsplit{#1}{#3}{#5}$}
\newcommand{\uspsplitbothexst}[7]{there exists $#1$ and $#2$ such that $\uspsplit{#3}{#1}{#2}$ and $\uspsplit{#1}{#4}{#6}$ and $\uspsplit{#2}{#5}{#7}$}
\newcommand{\splitLprodexst}[6]{there exists $#2$ and $#3$ such that $\vstysplit{#1}{#2}{#3}$ and 
															$\vstysubt{\vsprod{#2}{\isav}{#4}{\isunav}}{#5}$ and 
															$\vstysubt{\vsprod{#3}{\isav}{#4}{\isunav}}{#6}$}
\newcommand{\splitRprodexst}[6]{there exists $#2$ and $#3$ such that $\vstysplit{#1}{#2}{#3}$ and 
															$\vstysubt{\vsprod{#4}{\isunav}{#2}{\isav}}{#5}$ and 
															$\vstysubt{\vsprod{#4}{\isunav}{#3}{\isav}}{#6}$}
\newcommand{\splitLfullprodexstA}[6]{There exists $#2$ and $#3$ such that $\vstysplit{#1}{#2}{#3}$ and 
															$\vstysubt{\vsprod{#2}{\isav}{#4}{\isav}}{#5}$ and 
															$\vstysubt{\vsprod{#3}{\isav}{#4}{\isunav}}{#6}$}
\newcommand{\splitLfullprodexstB}[6]{There exists $#2$ and $#3$ such that $\vstysplit{#1}{#2}{#3}$ and 
															$\vstysubt{\vsprod{#2}{\isav}{#4}{\isunav}}{#5}$ and 
															$\vstysubt{\vsprod{#3}{\isav}{#4}{\isav}}{#6}$}
\newcommand{\splitRfullprodexstA}[6]{There exists $#2$ and $#3$ such that $\vstysplit{#1}{#2}{#3}$ and 
															$\vstysubt{\vsprod{#4}{\isav}{#2}{\isav}}{#5}$ and 
															$\vstysubt{\vsprod{#4}{\isunav}{#3}{\isav}}{#6}$}
\newcommand{\splitRfullprodexstB}[6]{There exists $#2$ and $#3$ such that $\vstysplit{#1}{#2}{#3}$ and 
															$\vstysubt{\vsprod{#4}{\isunav}{#2}{\isav}}{#5}$ and 
															$\vstysubt{\vsprod{#4}{\isav}{#3}{\isav}}{#6}$}
\newcommand{\splitbothprodexst}[8]{There exists $#2$ and $#3$ and $#5$ and $#6$ such that $\vstysplit{#1}{#2}{#3}$ and 
															$\vstysplit{#4}{#5}{#6}$ and 
															$\vstysubt{\vsprod{#2}{\isav}{#5}{\isav}}{#7}$ and 
															$\vstysubt{\vsprod{#3}{\isav}{#6}{\isav}}{#8}$}
\newcommand{\splituspLexst}[6]{$#3 = #1, \hastype{#5}{#6}$ and $\uspsplit{#2}{#1}{#4}$}
\newcommand{\splituspRexst}[6]{$#4 = #1, \hastype{#5}{#6}$ and $\uspsplit{#2}{#3}{#1}$}
\newcommand{\splituspLexstfull}[6]{$#3 = #1, \hastype{#5}{#6}$ and $\uspsplit{#2}{#1}{#4}$
														and $#5 \notin #4$}
\newcommand{\splituspRexstfull}[6]{$#4 = #1, \hastype{#5}{#6}$ and $\uspsplit{#2}{#3}{#1}$
														and $#5 \notin #3$}
\newcommand{\splituspbothexst}[9]{$\vstysplit{#7}{#8}{#9}$ and
															$#4 = #1, \hastype{#6}{#8}$ and
															$#5 = #2, \hastype{#6}{#9}$ and
															$\uspsplit{#3}{#1}{#2}$}
\newcommand{\splituspbothcrossexst}[9]{$\vstysplit{#7}{#8}{#9}$ and
															$#4 = #1, \hastype{#6}{#9}$ and
															$#5 = #2, \hastype{#6}{#8}$ and
															$\uspsplit{#3}{#1}{#2}$}
\newcommand{\vstytouspsplitext}[6]{there exists $#2$ and $#3$ such that
															$\uspsplit{#1}{#2}{#3}$ and
															$\vsistype{#2}{\ectx}{#4}{#5}$ and
															$\vsistype{#3}{\ectx}{#4}{#6}$}
\newcommand{\undecfutty}[1]{#1~\kw{future}}
\newcommand{\undecty}{\sigma}
\newcommand{\undece}{\epsilon}
\newcommand{\undechandle}[1]{\kw{handle}~#1}
\newcommand{\futurify}[5]{#1 \vdash_{#2} #3 \leadsto #4; #5}
\newcommand{\futurifye}[3]{\vdash_{#1} #2 \leadsto #3}
\newcommand{\annvstctx}[1]{\mathsf{Ann}(#1)}
\newcommand{\unannvstctx}[1]{\mathsf{Unann}(#1)}
\newcommand{\skm}[1]{\skmnote{#1}}
\begin{document}

%%
%% The "title" command has an optional parameter,
%% allowing the author to define a "short title" to be used in page headers.
%\title{Graph Types for Recursive Data Structures}
\title{Analyzing Collections and Pipelines of Futures with Graph Types}
\title{Pipelines and Beyond: Graph Types for ADTs with Futures}
\iffull\titlenote{This is an extended version of a paper that appeared at POPL 2024.}\fi
%\title{Predicting Pipelined Parallelism with Graph Types}
%\iffull
%\subtitle{Extended Supplementary Version}
%\fi

%%
%% The "author" command and its associated commands are used to define
%% the authors and their affiliations.
%% Of note is the shared affiliation of the first two authors, and the
%% "authornote" and "authornotemark" commands
%% used to denote shared contribution to the research.
%\author{Ben Trovato}
%\authornote{Both authors contributed equally to this research.}
%\email{trovato@corporation.com}
%\orcid{1234-5678-9012}
%\author{G.K.M. Tobin}
%\authornotemark[1]
%\email{webmaster@marysville-ohio.com}
%\affiliation{%
%  \institution{Institute for Clarity in Documentation}
%  \streetaddress{P.O. Box 1212}
%  \city{Dublin}
%  \state{Ohio}
%  \country{USA}
%  \postcode{43017-6221}
%}

\author{Francis Rinaldi}
\email{frinaldi@hawk.iit.edu}
\affiliation{
  \institution{Illinois Institute of Technology}
  \country{USA}
}

\author{june wunder}
\email{jwunder@bu.edu}
\affiliation{
  \institution{Boston University}
  \country{USA}
}

\author{Arthur Azevedo de Amorim}
\email{aaavcs@rit.edu}
\affiliation{
  \institution{Rochester Institute of Technology}
  \country{USA}
}

\author{Stefan K. Muller}
\email{smuller2@iit.edu}
\affiliation{
  \institution{Illinois Institute of Technology}
  \country{USA}
}

%%
%% By default, the full list of authors will be used in the page
%% headers. Often, this list is too long, and will overlap
%% other information printed in the page headers. This command allows
%% the author to define a more concise list
%% of authors' names for this purpose.
%\renewcommand{\shortauthors}{Trovato and Tobin, et al.}

%%
%% The abstract is a short summary of the work to be presented in the
%% article.
\begin{abstract}
  %% DO NOT EDIT THIS FILE-it is produced automatically from the file 'abstract'
Parallel programs are frequently modeled as {\em dependency} or {\em cost} graphs, which can be used to detect various bugs, or simply to visualize the parallel structure of the code. However, such graphs reflect just one particular execution and are typically constructed in a {\em post-hoc} manner. {\em Graph types}, which were introduced recently to mitigate this problem, can be assigned statically to a program by a type system and compactly represent the family of all graphs that could result from the program.

Unfortunately, prior work is restricted in its treatment of {\em futures}, an increasingly common and especially dynamic form of parallelism. In short, each instance of a future must be statically paired with a vertex name. Previously, this led to the restriction that futures could not be placed in collections or be used to construct data structures. Doing so is not a niche exercise: such structures form the basis of numerous algorithms that use forms of pipelining to achieve performance not attainable without futures. All but the most limited of these examples are out of reach of prior graph type systems.

In this paper, we propose a graph type system that allows for almost arbitrary combinations of futures and recursive data types. We do so by indexing datatypes with a type-level {\em vertex structure}, a codata structure that supplies unique vertex names to the futures in a data structure. We prove the soundness of the system in a parallel core calculus annotated with vertex structures and associated operations. Although the calculus is annotated, this is merely for convenience in defining the type system. We prove that it is possible to annotate arbitrary recursive types with vertex structures, and show using a prototype inference engine that these annotations can be inferred from OCaml-like source code for several complex parallel algorithms.
\end{abstract}

\iffull
\else
%%
%% The code below is generated by the tool at http://dl.acm.org/ccs.cfm.
%% Please copy and paste the code instead of the example below.
%%
\begin{CCSXML}
<ccs2012>
   <concept>
       <concept_id>10011007.10011006.10011008.10011009.10010175</concept_id>
       <concept_desc>Software and its engineering~Parallel programming languages</concept_desc>
       <concept_significance>500</concept_significance>
       </concept>
   <concept>
       <concept_id>10003752.10003790.10003801</concept_id>
       <concept_desc>Theory of computation~Linear logic</concept_desc>
       <concept_significance>300</concept_significance>
       </concept>
   <concept>
       <concept_id>10011007.10010940.10010992.10010998.10011000</concept_id>
       <concept_desc>Software and its engineering~Automated static analysis</concept_desc>
       <concept_significance>100</concept_significance>
       </concept>
 </ccs2012>
\end{CCSXML}

\ccsdesc[500]{Software and its engineering~Parallel programming languages}
\ccsdesc[300]{Theory of computation~Linear logic}
\ccsdesc[100]{Software and its engineering~Automated static analysis}

%% %%
%% %% Keywords. The author(s) should pick words that accurately describe
%% %% the work being presented. Separate the keywords with commas.
\keywords{parallel programs, graph types, cost graphs, computation graphs, futures, pipelining}
\fi
%%
%% This command processes the author and affiliation and title
%% information and builds the first part of the formatted document.
\maketitle

\section{Introduction}
\label{sec:intro}

Decades of work on reasoning about parallel programs have focused
on {\em computation} or {\em cost graphs}, directed graphs that
represent the dependencies of threads.
Computation graphs are a convenient target for analysis because they abstract
away details of the program, language, and even the parallelism features that
were used, while still capturing enough information about the relationships
between threads to perform many useful analyses.
For example, computation graphs have been used to study
deadlock~\citep{CogumbreiroHuMaYo18},
data races~\citep{BannerjeeBlMaPe06},
priority inversions~\citep{BabaogluMaSc93} and
evaluation cost~\citep{BlellochGr95, BlellochGr96}.

To analyze such properties, it is desirable to calculate the computation graph
of a program statically, at compile time or analysis time.
Doing so is often possible in languages and threading libraries for
{\em coarse-grained parallelism}, such as \texttt{pthreads},
where thread creation and synchronization are expensive and rare.
Much recent interest in parallel programming, however, has been in the area
of {\em fine-grained parallelism}, in which threads are created cheaply and
eagerly, often based on runtime conditions.
For example, a program might fork at each level of a divide-and-conquer
algorithm, or a web server might spawn a new thread to handle every incoming
request asynchronously.
Reasoning statically about the dependency structure of fine-grained parallel
programs is difficult because of the highly dynamic nature of thread creation
and synchronization in these programs.

This difficulty is compounded when programs use {\em futures} and related
abstractions for fine-grained parallelism, which are becoming increasingly
popular and have been made available in Python, Scala, Rust, and the most recent
release of OCaml~\citep{SivaramakrishnanDoWhJaKeSaPaDhMa20}, among other
languages.
Essentially, a future is a first-class handle to an asynchronous computation.
The result of the computation can be demanded via a
{\em force} or {\em touch} operation, which blocks if the result
is not yet available.
Because futures run in separate threads,
we can model each future as its own vertex
$u$ in the computation graph of a program. Edges leading into $u$ track the
intermediate results used to compute the future, and when we touch the future,
we add an edge from $u$ to the thread where the touch happened.
Futures may be passed around a program arbitrarily and end up being touched in a
very different part of the program from where it was spawned, leading to great
power and flexibility but also complex computation graphs which are difficult to
reason about.

To address the difficulty of predicting parallel dependences in fine-grained
parallel programs, especially those with futures,
\citet{Muller22} introduced the notion of {\em graph
  types}, which statically overapproximate the set of computation graphs that
might result from running a program.
A {\em graph type system} statically assigns graph types to programs, and its
soundness theorem ensures that the actual computation graph resulting from any
execution of a well-typed program is described by the program's graph type.

Much of the complexity of the graph type system centers around futures.
Because futures can be touched in an entirely different part of the program from
where they are created, each future type is annotated with a distinguished {\em
  vertex name}, so that the graph type system can refer to the correct vertex
when tracking the dependencies of touch operations.  (Explicit vertex names are
not needed in simpler parallelism models such as fork-join, because it is clear
what thread is being synchronized.)
To avoid tracking spurious dependencies, the graph type system ensures that each
vertex name is associated with at most one future during execution.  More
precisely, when spawning a new future, the graph type system annotates the type
of the result with a fresh vertex name, which is tracked in a separate affine
context to prevent reuse.

This treatment of futures leads to a
significant limitation in prior work: it is difficult or
impossible to build useful data structures containing futures.
Even an expression as simple as
\lstinline{[future e1; future e2]} (a list containing two new futures) cannot
be assigned a type.
The reason is that the two elements of this list must have
types~$\kwfutt{\tau}{u_1}$ and~$\kwfutt{\tau}{u_2}$, respectively,
where~$u_1$ and~$u_2$ are distinct vertex names and~$\kwfutt{\tau}{u}$ is the
type of a future returning a value of type~$\tau$ with the vertex
named~$u$---these two elements can't
be placed in a list because prior work supports only homogenous lists.
Although this example is simple and artificial,
much of the power of futures, as opposed to more limited
parallelism models such as fork-join, comes from the ability to
program with data structures that contain an unbounded number
of futures, such as lists and trees.
As examples, \citet{BlellochRe97} describe a number of algorithms and data
structures that use futures in complex ways to pipeline computations, resulting
in asymptotic improvements over the best known fork-join implementations.
These programming idioms exercise the full complexity of futures, motivating the
need for techniques to reason statically about the computation graphs of these
programs.

In this paper, we develop a graph type
system, and accompanying inference algorithm, that can handle complex
data structures using futures.
As a motivating example, consider a function that produces a pipeline of
increasingly precise approximations of~$\pi$.
This could be, for example, the first stage in a graphics or simulation
pipeline.
%later stages would take these iterative approximations and generate
%increasingly accurate results until some time limit or precision threshold is
%reached.
%
We wish to compute the approximations asynchronously so that earlier
approximations can be used while later ones are still being computed.
Figure~\ref{fig:intro-pi} shows two possible implementations of such a function.
The implementation on the left produces a list of futures with the intermediate
results.  The function~$\kw{list\_pi}$ takes a number~$\kw{k}$ and a {\em
  future}~$\kw{a}$, which computes the~$(k-1)$st approximation.
Each iteration of~$\kw{list\_pi}$ spawns a new future to compute the~$\kw{k}$th
term of the Gregory series multiplied by~4, adds it to the running total being
computed by~$\kw{a}$, and adds the new future (which is completing the new
running total) to a list, then calls~$\kw{list\_pi}$ recursively to compute the
remaining terms.
To illustrate a use of this structure, the \lstinline{main} function
takes the second approximation from the list.
Note that the~$\kw{list\_pi}$ function, as written, doesn't terminate.

Because the function~$\kw{list\_pi}$ produces a list of futures, it cannot be
given a graph type under prior work~\cite{Muller22}.\footnote{%
  Actually, \citet{Muller22} does discuss a similar pipelining example in his
  system; cf. Figure~10 and Section~6. However, that example is expressible
  precisely because it does not accumulate the intermediate results in a list.}
This is a shame, because its computation graph would have revealed a
subtle but fatal bug: despite the futures, there is no real asynchrony
or pipelining because almost the entire list of approximations (which, in
this example, is infinite) must be constructed before the program
proceeds.
This can be seen in the visualization on the left side of
Figure~\ref{fig:intro-pi-graphs}, which is produced automatically by our
implementation from the inferred graph type.
In the figure, vertices in the graph, notated with either a text label or a
small circle, represent pieces of computation.
The vertices with labels like~$n1 \bullet 1$ are the final vertices of
a future, and these labels are the vertex names assigned to the future.
The reason for these particular labels will become clear later in the paper.
Edges represent dependences: an edge out of a labeled vertex indicates a
touch of the corresponding future, and other edges represent sequential
dependences within a thread or the spawning of a future.
A path of edges in the graph therefore represents a chain of sequential
dependences and two vertices with no path between them indicate opportunities
for parallelism.
Long paths indicate a lack of parallelism.

The figure shows a visualization of the graph type of the program, with
the recursion of \lstinline{list_pi} unrolled a fixed number of times to make
the recursive structure visually clear.
A vertex labeled~$\dots$ indicates a recursive call that has been elided
because of the cutoff on number of unrollings.
The vertex representing the \lstinline{touch} operation in
\lstinline{main} is circled in red: we can see that there is a long chain
of dependences on the critical path to reach this operation, which means the
operation will be significantly delayed when running the program.
Indeed, the topmost~$\dots$ appears on the critical path, indicating a
potentially (and, in this case, actually) infinite critical path.

The second implementation in Figure~\ref{fig:intro-pi} instead uses
a new data structure \lstinline{'a pipe} which resembles a lazy list: the head
of the list is computed eagerly and may be used immediately but the tail of
the list is computed asynchronously in a future.
The function \lstinline{pipeline_pi} takes the running total~\lstinline{a}
(now as an actual float, rather than a future) and~\lstinline{k}.
It adds the~\lstinline{k}th approximation to the running total,
then returns the new running total as well as a future to
call~\lstinline{pipeline_pi} recursively to compute the remainder of the pipeline.
This is reminiscent of the ``producer'' example of~\citet{BlellochRe97}.
%, who
%use futures in this general pattern to construct a wide variety of
%pipelined data structures.
%
As we can see from the visualization on the right side of
Figure~\ref{fig:intro-pi-graphs}, the graphs corresponding
to~\lstinline{pipeline_pi} exhibit much more parallelism than the previous version.
Here, the \lstinline{touch} operation in \lstinline{main} (again circled in red)
occurs in parallel with the computation of the remainder of the list
and there are only a small, finite number of operations on its critical path.

\begin{figure}
\begin{minipage}{0.49\textwidth}
\begin{lstlisting}
let rec list_pi (a, k) : float future list =
  let a' =
     future ((-1.0) ** (k +. 1.0)
             *. 4.0 /. (2. *. k -. 1.0)
             +. touch a)
  in
  a'::(list_pi (a', k +. 1))

let main () =
  touch (hd (tl (list_pi (0.0, 1.0))))
\end{lstlisting}
\end{minipage}%
\begin{minipage}{0.48\textwidth}
\begin{lstlisting}
type 'a pipe = Pipe of 'a * 'a pipe future

let rec pipeline_pi (a, k) : float pipe =
  let a' = a +. (-1.0) ** (k +. 1.0)
           *. 4.0 /. (2. *. k -. 1.0)
  in
  Pipe (a', future (pipeline_pi (a', k +. 1.)))

let main () =
  let Pipe (_, f1) = pipeline_pi (0.0, 1.0) in
  let Pipe (pi2, _) = touch f1 in pi2
\end{lstlisting}
\end{minipage}%
\caption{Two implementations of a function that iteratively computes~$\pi$
  with futures.}
\label{fig:intro-pi}
\end{figure}

\iffull\else
\begin{wrapfigure}{i}{0.45\textwidth}
%\begin{figure}
  \includegraphics[clip,scale=0.4]{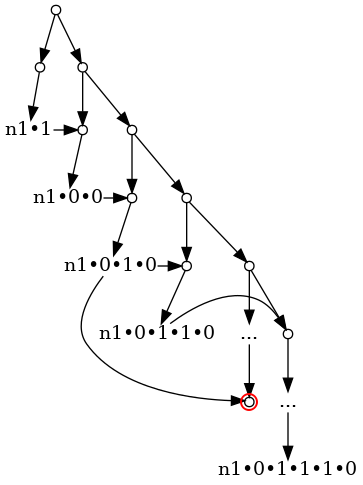}%
  \includegraphics[clip,scale=0.4]{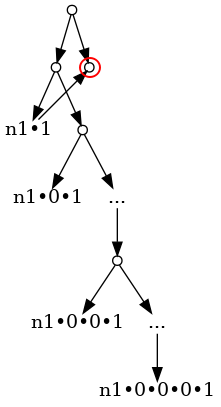}
  \caption{Visualizations for \texttt{list\_pi} (left) and \texttt{pipeline\_pi}
    (right) showing the differences in parallelization strategies. In both
    figures, the node corresponding to the \texttt{touch} operation in
    \texttt{main} is circled in red for emphasis.
  }
  \label{fig:intro-pi-graphs}
  %\vspace{-0.5cm}
  \end{wrapfigure}
%\end{figure}
\fi

In this paper, we present a graph type system that can statically compute graph
types for the above pipelining examples (and many more), thus allowing us to
detect and repair the parallelism bugs we discussed.
We present the type system in~$\langname$, a core calculus containing
both futures and recursive types.
The key to our approach is parameterizing recursive data structures involving
futures with a source of fresh vertex names called a {\em vertex structure} (or
VS, for short).
Conceptually, one can think of a VS as a separate structure of the same shape
as the program's recursive data structure, containing unique vertex names.
For example, both the~\lstinline{float future list} and the~\lstinline{float pipe} of
Figure~\ref{fig:intro-pi} would be parameterized by a stream of vertices.
The two functions~\lstinline{list_pi} and~\lstinline{pipeline_pi} would take this
vertex stream, let's call it~$\vs$,
as an implicit parameter and, at each iteration, use the next
vertex in the stream ($\kwfst{\vs}$) to spawn the new future and pass the
rest of the stream ($\kwsnd{\vs}$) to the recursive call.
As a result, the returned list (resp., pipe) will be ``zipped'' together with
the vertex stream in the sense that the first future in the list (resp., pipe)
will use the first vertex of the stream, and so on.
In this way, we need not ``unroll'' the vertex structure at compile time:
the types will refer to projections of a VS parameter.\footnote{We do, however,
  unroll a VS when unrolling the corresponding graph types, e.g., to create
  the visualizations in Figure~\ref{fig:intro-pi-graphs}. In the figure,
  $n1$ is the ``root'' of a VS and the notations that follow are projections
  of this VS.}
Vertex structures are not limited to streams: in general, a VS can be an
infinite (corecursive) tree with arbitrary branching patterns.
We show that this allows us to construct a VS corresponding to
arbitrary recursive data structures.

\iffull
\begin{wrapfigure}{i}{0.45\textwidth}
%\begin{figure}
  \includegraphics[clip,scale=0.4]{list_pi}%
  \includegraphics[clip,scale=0.4]{pipeline_pi}
  \caption{Visualizations for \texttt{list\_pi} (left) and \texttt{pipeline\_pi}
    (right) showing the differences in parallelization strategies. In both
    figures, the node corresponding to the \texttt{touch} operation in
    \texttt{main} is circled in red for emphasis.
  }
  \label{fig:intro-pi-graphs}
  %\vspace{-0.5cm}
  \end{wrapfigure}
%\end{figure}
\fi

Recall that we require vertex names to be unique.
The vertices contained by a vertex structure are all unique,
but ensuring that each vertex is used at most once is non-trivial
and requires numerous extensions to the graph type system.
One source of complexity is that types can perform significant computation
on vertex structures.
As an example, as we discussed above, the same vertex~$u$ cannot be used as
a name for two futures.
In the presence of computation on VSs, this is not a simple restriction to
enforce syntactically: if~$\vs$ is a vertex structure,
then under a reasonable semantics for
vertex structures,~$\kwfst{(\kwfst{\vs}, \kwsnd{\vs})}$ and~$\kwfst{\vs}$ refer
to the same vertex name and therefore cannot both be used to spawn futures.

The~$\langname$ calculus assumes that data structures are annotated with
vertex structures and makes explicit many of the manipulations of VSs
described above.
However,~$\langname$ should be seen as an intermediate representation---an
inference algorithm can infer all necessary annotations from unannotated code
in a high-level source language.
As a proof of concept, we extend GML~\citep{Muller22}, a graph type checker for
a subset of OCaml (but which did not previously support lists containing
futures, or any sort of user-defined algebraic data types) with support for
user-defined algebraic data types containing futures.
Our graph type checker is able to infer annotations and produce graph
types from the examples in Figure~\ref{fig:intro-pi}, as well as all of the
other examples contained in this paper, with no additional
annotations or programmer burden, as well as to produce visualizations of their
graph types.
As shown in the example above, such visualizations can allow programmers to
identify errors in the parallelization of their code, and can also be used
to reason about parallel complexity and other features.
Prior work~\citep{Muller22} also explains how graph types can be used to
aid other analyses, such as deadlock detection.
A formal presentation of the inference algorithm is out of the scope of this
paper, but much of the challenge in our extension is constructing the
vertex structure corresponding to an arbitrary user-defined algebraic data
type.
We describe this process formally and prove some metatheoretic results about
it.

In sum, our contributions are:
\begin{itemize}
\item \langname{}, a parallel calculus with a graph type system
  supporting recursive data types (Section~\ref{sec:lang}).
\item A soundness result for \langname{}, guaranteeing that the graph type of a
  program correctly describes the computation graph that arises when running the
  program (Section~\ref{sec:soundness}).
\item An algorithm for inferring the shape of a vertex structure that will
  provide the necessary vertex names for an arbitrary recursive data structure,
  and results showing (among other things) that such a VS exists for any
  valid recursive data type (Section~\ref{sec:infer}).
\item A prototype implementation of graph type inference for an OCaml-like
  source language, including OCaml-style user-defined algebraic
  data types mixed with futures (Section~\ref{sec:impl}).
  %. We have
  %used the implementation to visualize the graph types of the examples
  %described in this introduction, as well as several others .
\end{itemize}

\iffull
Technical details of the type system and many of the proofs are contained
in the appendix, which we reference throughout the paper.
\else
Due to space limitations, we defer some of the technical details
and many of the proofs to the full version of the paper~\citep{full}.
\fi
We begin with an overview of graph types as well as a high-level description of
our extensions.

\section{Overview}\label{sec:overview}

We begin with an overview of graph types but refer the interested reader to
the original paper~\citep{Muller22} for a more thorough presentation; we
indicate using footnotes where we diverge from that paper's presentation.
Our motivating example is a parallel implementation of Quicksort using
futures~(Figure~\ref{fig:qsort}).
The code is supplemented with annotations in gray that are inserted during type
inference and used in the formal presentation of~{\langname}, but are not
written in actual code; these annotations will be explained later, in
Section~\ref{sec:lang}.
The implementation returns immediately in the case of an empty list.
On a non-empty list, the first element is selected as a pivot and used to
partition the list using a sequential function~$\kw{partition}$, whose
implementation we omit.
A future is spawned to sort the first list recursively while the
second list is sorted in the main thread.
When the second list is sorted, we touch the future to retrieve the sorted
first list, and append the lists.

The type of the function in {\langname} is given below the code; the type
indicates that~$\kw{qsort}$ accepts and returns an~\lstinline{'a list}.
As is common in presentations of type-and-effect systems, we write an
annotation over the arrow indicating the effects performed by running the
function.
In this case, the ``effect'' is the {\em graph type} of the function; that is,
a graph type describing the family of computation graphs which might arise
from executing~$\kw{qsort}$.
The prefix~$\mu \gvar.$ indicates that~$\graph$ binds a recursive instance of
itself as~$\gvar$---this notation is taken from standard presentations of
recursive types.
The body of~$\graph$ is
a disjunction of two families of graphs, indicated by the~$\vee$ symbol.
This notation appears when the code executes a conditional or pattern
match and indicates two possible families of graphs:~$\graph_1 \dagor \graph_2$
indicates that the graph can take a form indicated by either~$\graph_1$
or~$\graph_2$.
In the example,
the first graph type, $\emptygraph$, indicates a sequential computation
and corresponds to executing the base case.
The second graph type corresponds to the recursive case, and indicates that a
future is spawned.
In order to refer to this future later in the graph type, futures are assigned
unique names.
By convention, these names are assumed to refer to a vertex ``attached'' to the
computation graph of a future as a final vertex.
We will refer to this vertex as the ``sink'' vertex of the future,
borrowing a term from graph
theory because, until the future is touched, it has no outgoing edges.
The notation~$\dagnew{\hastycl{\vvar}{\kwvty}}{}$, which appears in
corresponding locations in the graph type~$\graph$ and as an annotation in the
code, indicates binding a new {\em vertex variable}~$\vvar$ which locally refers
to a new, fresh vertex name; $\kwvty$ is the type of this variable and means
that $\vvar$ refers to a single vertex.\footnote{This type annotation is not
  used or needed in prior work, where only single vertices are bound. We
  introduce it here for consistency with the syntax used in the rest of this
  paper.}
When the annotated program is evaluated (we do this only to prove soundness;
the vertex name annotations have no runtime meaning in the actual program)
or the graph type is unrolled (e.g., to produce the visualizations of
Figure~\ref{fig:intro-pi-graphs}),~$\vvar$ will be instantiated with a new,
fresh vertex name.\footnote{In the terminology of this paper, it will actually
  be instantiated with a new vertex structure of type~$\kwvty$.}

\newcommand{\figannot}[1]{{\color{gray}#1}}

%\begin{figure}
%\begin{minipage}{0.55\textwidth}
\begin{wrapfigure}{i}{0.55\textwidth}
  \hspace{0.05\textwidth}
  \begin{minipage}{0.5\textwidth}
    \begin{lstlisting}
let rec qsort (l: 'a list) : 'a list =
  match l with
  | [] -> []
  | p::t ->
    let (lt, ge) = partition p t in
    $\figannot{\kwnewf{\vvar}{\kwvty}{}}$
    let future_sort_lt = future$\figannot{[\vvar]}$ (qsort lt) in
    let sort_ge = qsort ge in
    let sort_lt = touch future_sort_lt in
    sort_lt @ [p] @ sort_ge
\end{lstlisting}
%  \end{minipage}%
%  \begin{minipage}{0.45\textwidth}
    \[
    \begin{array}{r c l}
    \kw{qsort} & : & \kwarrow{\kw{'a~list}}{\kw{'a~list}}{\graph}\\
    \multicolumn{3}{l}{\text{where}}\\
    \graph & = &
    \dagrec{\gvar}{[\emptygraph \dagor (\dagnew{\hastycl{\vvar}{\kwvty}}
        {\leftcomp{\gvar}{\vvar} \seqcomp \gvar \seqcomp
          \touchcomp{\vvar}})]}{}
    \end{array}
    \]
%  \end{minipage}
  \caption{Code and types for parallel-recursive Quicksort using futures.
    %(the definition of the partition function is omitted).
    Code annotations
    in gray are shown for convenience; these are not written by the
    programmer.}
  \label{fig:qsort}
  \end{minipage}
  %\end{figure}
\end{wrapfigure}

The sequential composition of two graph types is
denoted~$\graph_1 \seqcomp \graph_2$, indicating that the program performs
a computation described by~$\graph_1$ followed by one described by~$\graph_2$.
In our example graph type, the graph type corresponding to the recursive
case is the sequential composition of three operations.
The graph type~$\leftcomp{\gvar}{\vvar}$ indicates that~$\vvar$ is the sink
of a future whose graph is described by the graph type~$\gvar$
(which, recall, is a recursive
instance of~$\graph$ corresponding to a recursive call to~$\kw{qsort}$).
In general,~$\leftcomp{\graph}{\vvar}$ indicates a future whose computation
graph can be described by~$\graph$ and whose sink vertex is given the
name~$\vvar$.
%
%\aaa{It is not clear what it means for a vertex to be attached to a graph; is
%it attached at any particular position (e.g. the sink)?}%
In~{\langname}, spawns using the \lstinline{future} keyword are also annotated
with the vertex that is used; this annotation is shown in gray in the code.
The spawn in the graph type
is then sequentially composed
with another instance of~$\gvar$
for the other recursive call, and finally a touch of the future whose sink
is~$\vvar$ (a touch of vertex~$\vvar$ is denoted~$\touchcomp{\vvar}$).

Note that the vertex~$\vvar$ in the Quicksort example exists only within the
scope of the binding and so, in particular,
cannot be allowed to escape the scope.
If futures are, e.g., returned from a function, the vertices for those futures
must be created outside and passed as parameters to the function.
As an example, take the~$\kw{pipeline\_pi2}$ function in
Figure~\ref{fig:spawn2}, which returns a future.  This function is similar to
the analogous function of Section~\ref{sec:intro}, but limited to two
approximations of $\pi$.
The vertex parameter is made explicit in the~{\langname} annotations, and
also appears in the type of the function (shown on the
right side of the figure) as a~$\Pi$ binding.
This construct in a graph type binds two parameters.
Both parameters stand for {\em vertex structures} (VSs), type-level
(co)data structures containing vertices,
and both are annotated with {\em vertex structure types}
indicating their shapes.
The first parameter,~$\vvar$, will contain the vertices the function may use to
spawn futures.
In the case of~\lstinline{pipeline_pi2}, it is annotated with VS
type~$\kwprod{\kwvty}{\kwvty}$, indicating a pair of vertices (recall
that~$\kwvty$ is a VS type representing a single vertex).\footnote{Note that prior work had similar notation for
vertex parameters but, as it did not have vertex structures, allowed~$\Pi$s to
bind arbitrary-length vectors of vertex parameters.
The introduction of vertex structures in the present work, which we will use
later to more substantial effect, also simplifies this notation and makes it
more uniform.}
The second parameter contains the vertices the function may touch; in the
case of~\lstinline{pipeline_pi2}, it is empty as indicated by the unit VS
type~$\kwunitty$.\footnote{The theory of {\langname} does not include the
VS type~$\kwunitty$ to keep the calculus minimal, but it is included in our
implementation and would be a straightforward addition to the calculus.}

The function $\kw{pipeline\_pi2}$ returns a future (spawned using the first
component of the vertex structure~$\vvar$) that produces a pair of a float and
another future, spawned with the second component.
Note that the types of futures explicitly indicate the vertices with which
the future was spawned.
As in the Quicksort code, these vertices also appear as annotations on the
\lstinline{future} keyword which are inferred during type checking.
Finally, the graph type~$\graph$ of the function body shows that the function
spawns a future using the vertex~$\kwfst{\vvar}$, which in turn spawns a future
using the vertex~$\kwsnd{\vvar}$, which finally does not spawn further threads.

The code in Figure~\ref{fig:spawn2} also shows a function that calls
$\kw{pipeline\_pi2}$ and touches the two futures.
The graph type of this function binds a new vertex structure for
the two futures, which no longer need to escape the function.
As in Quicksort, the new vertex structure is bound using a binding of the
form~$\dagnew{\hastycl{\vvar}{\vsty}}{\graph}$, where the vertex structure
type is now the product~$\kwprod{\kwvty}{\kwvty}$ instead of~$\kwvty$.
The call to~$\kw{pipeline\_pi2}$ instantiating the bound vertex structure
variable~$\vvar$ with the new VS~$\vvar'$ is indicated by substituting~$\vvar'$
for~$\vvar$ in~$\graph$.

\begin{figure}
  \begin{minipage}{0.4\textwidth}
    \begin{lstlisting}
let pipeline_pi2$\figannot{[\hastycl{\vvar}{\kwprod{\kwvty}{\kwvty}}; \hastycl{\_}{\kwunitty}]}$ () =
  future$\figannot{[\kwfst{\vvar}]}$ (3.1, future$\figannot{[\kwsnd{\vvar}]}$ 3.14)

let use_pi () =
  $\figannot{\kwnewf{\vvar}{\kwprod{\kwvty}{\kwvty}}{}}$
  let (pi1, pi2_fut) =
    touch (pipeline_pi2 ())
  in touch pi2_fut
\end{lstlisting}
  \end{minipage}%
  \begin{minipage}{0.55\textwidth}
    \[
    \begin{array}{r c l}
      \kw{pipeline\_pi2} & \!:\! &
      \dagpi{\hastycl{\vvar}{\kwprod{\kwvty}{\kwvty}}}
            {\hastycl{\_}{\kwunitty}}
            {\kwarrow{\kwunit}
              {\kwfutt{\tau}{\kwfst{\vvar}}}
              {\graph}}\\
      \kw{use\_pi} & \!:\! &
      \kwarrow{\kwunit}
              {\kw{float}}
              {\graph'}
              \end{array}
              \]
              where
              \[
              \begin{array}{r c l}
              %\multicolumn{3}{l}{\text{where}}\\
              \tau & = & \kwprod{\kw{float}}
                  {\kwfutt{\kw{float}}{\kwsnd{\vvar}}}\\
    \graph & = &
    \leftcomp{(\leftcomp{\emptygraph}{\kwsnd{\vvar}})}{\kwfst{\vvar}}\\
    \graph' & = &
    \dagnew{\hastycl{\vvar'}{\kwprod{\kwvty}{\kwvty}}}
           {\sub{\graph}{\vvar'}{\vvar} \seqcomp \touchcomp{\kwfst{\vvar}}
             \seqcomp \touchcomp{\kwsnd{\vvar}}}
    \end{array}
    \]
  \end{minipage}
  \caption{A function that iteratively approximates~$\pi$
    twice in a pipelined manner}
  \label{fig:spawn2}
\end{figure}

Before proceeding, we make one additional note about the
graph type system.
We have referred to~$\vertex$ and similar as {\em unique} vertex
names---each vertex name can be used to spawn a future at most once,
otherwise the resulting graph will be ambiguous (if two futures have~$\vertex$
as a sink vertex, there is no way to know to which future a
touch~$\touchcomp{\vertex}$ refers).
The graph type system (both ours and that of prior work) enforce this using
an affine type system that restricts the use of vertex names.

\begin{figure}
  \begin{minipage}{0.42\textwidth}
    \begin{lstlisting}
type 'a pipe = Pipe of 'a * 'a pipe future

let rec pipeline_pi$\figannot{[\hastycl{\vvar}\kw{vstream}; \hastycl{\_}{\kwunitty}]}$ (a, k) =
  let a' = a +. (-1.0) ** (k +. 1.0)
           *. 4.0 /. (2. *. k -. 1.0)
  in
  Pipe (a', future$\figannot{[\kwfst{\vvar}]}$
        (pipeline_pi$\figannot{[\kwsnd{\vvar}]}$ (a', k +. 1.)))

let rec nth$\figannot{[\hastycl{\_}{\kwunitty}; \hastycl{\vvar}{\kw{vstream}}]}$
  ((pipe, n) : 'a pipe$\figannot{[\vvar]}$ * int) =
  let Pipe (a, f) = pipe in
  if n <= 0 then a
  else nth$\figannot{[\kwsnd{\vvar}]}$ (touch f, n - 1)

let main () =
  $\figannot{\kwnewf{\vvar}{\kw{vstream}}{}}$
  nth$\figannot{[\vvar]}$ (pipeline_pi$\figannot{[\vvar]}$ (0.0, 1.0)) 1000
\end{lstlisting}
  \end{minipage}%
  \begin{minipage}{0.55\textwidth}
    \[
    \begin{array}{r l l}
      \kw{pipeline\_pi} & \!:\! &
      \dagpi{\hastycl{\vvar}{\kw{vstream}}}
            {\hastycl{\_}{\kwunitty}}
            {}\\
            & & \quad \kwarrow{\kwprod{\kw{float}}{\kw{float}}}
              {\kw{float~pipe}[\vvar]}
              {\graph}\\
      \kw{nth} & \!:\! &
      \dagpi{\hastycl{\_}{\kwunitty}}{\hastycl{\vvar}{\kw{vstream}}}
            {}\\
            & & \quad \kwarrow{\kwprod{\kw{float~pipe}[\vvar]}{\kw{int}}}
              {\kw{float}}
              {\graph'}\\
    \kw{main} & : & \kwarrow{\kwunit}{\kw{float}}
       {\graph''}
       \end{array}
       \]
       where
       \[
       \begin{array}{r l l}
    \graph & \!=\! &
    \dagrec{\gvar}
           {\dagpi{\hastycl{\vvar}{\kw{vstream}}}
             {\hastycl{\_}{\kwunitty}}
             {\leftcomp{(\kwsingletapp{\gvar}{\kwsnd{\vvar}})}{\kwfst{\vvar}}}}
           {}\\
    \graph' & \!=\! &
    \dagrec{\gvar}
           {\dagpi{\hastycl{\_}{\kwunitty}}{\hastycl{\vvar}{\kw{vstream}}}
             {\emptygraph \dagor \touchcomp{\kwfst{\vvar}} \seqcomp
               \kwsingletapp{\gvar}{\kwsnd{\vvar}}}}
           {}\\
    \graph'' & \!=\! &
    \dagnew{\hastycl{\vvar}{\kw{vstream}}}
           {\kwtapp{\graph}{\vvar}{\kwtriv} \seqcomp
             \kwtapp{\graph'}{\kwtriv}{\vvar}}\\
    \multicolumn{3}{l}{\kw{vstream}  = 
    \kwvscorec{\vstyvar}{\kwprod{\kwvty}{\vstyvar}}}
    \end{array}
    \]
  \end{minipage}
  \caption{Code and types for a function that iteratively approximates~$\pi$
    indefinitely in a pipelined manner.}
  \label{fig:pi}
\end{figure}

Thus far, we have discussed examples that are within the capabilities of prior
work.
Now suppose we wish to generalize~$\kw{pipeline\_pi2}$ to continue producing
iterative approximations indefinitely.
The code in Figure~\ref{fig:pi} does this,
producing a value of type~$\kw{float~pipe}$, also defined in the figure, which
is a recursive type containing an approximation and a future to
continue the pipeline.
This is reminiscent of the ``producer'' example of~\citet{BlellochRe97}, who
use futures in this general pattern to construct a wide variety of
pipelined data structures.
As in the Introduction, each iteration
computes the~$\kw{k}$th term in the approximation, adds it to a
running total~$\kw{a}$, and returns the new running total as well as a future to
call~$\kw{pipeline\_pi}$ recursively to compute the~$\kw{k+1}$st term.

Useful instances of the recursive data type~$\kw{'a~pipe}$ cannot be typed with
the existing graph type system, because doing so would require an infinite
sequence of new vertex names and a way of associating each future in the
pipeline with successive vertex names.
One (incorrect but illustrative) approach would be to instantiate each future
in the type with a fresh vertex name using, for example, an existential.
The pipe type would then be annotated as follows:
\begin{lstlisting}
type 'a pipe = Pipe of 'a * pipe future$\figannot{[\exists \vvar : \kwvty. \vvar]}$
\end{lstlisting}
This is still not useful, however, because it doesn't allow any vertex name
to escape the scope of the single future type, just as the vertex in
the~$\kw{qsort}$ function was confined to the~$\kw{qsort}$ function.
The type~$\kwfutt{\tau}{\exists \vvar:\kwvty. \vvar}$ tells us that the
future is spawned with {\em some} vertex, but gives no information about
{\em which}, an untenable loss of precision when we try to touch this future
and add an edge to its vertex.
As an illustration of this loss of precision, consider the following program,
and suppose we wish to use its graph type to check for deadlocks (simply put,
a program may deadlock if its graph type can unroll to a cyclic graph):
\begin{lstlisting}
let rec f n =
  if n <= 0 then [future (fun () -> 0)]
  else
    let l' = f (n - 1) in
    (future (fun () -> touch (List.hd l')))::l'
\end{lstlisting}
Each future in the list contains a function that touches the following
future in the list.
This is a fairly clear structure and a
visualization or suitable analysis of the graph type produced by our
system could show that the program is deadlock-free.
However, if the type of the output list were given as
$\kwfutt{(\kwarrow{\kw{unit}}{\kw{int}}{\touchcomp{\exists
    u. u}})}{\exists u. u}$,
the most precise thing that could be said about this list
is that it is a list of thunks under futures, each of which touches
any future in the list (or, indeed, without further information, any
future in the {\em program}), including itself.
Thus, a sound deadlock
detector would have to conclude that the program might deadlock.

As a more precise solution to the problem of generating unique vertex names
for elements in a data structure,
we introduce {\em vertex structures}, mentioned above, which we
allow to be (co)recursive and thus serve as the source or collection of vertex
names we need.
In the code annotations and the type of~$\kw{pipeline\_pi}$ on the right side
of the figure, the function takes a vertex structure parameter of vertex
structure type~$\kw{vstream}$, which is defined in the lower right side of
the figure to be a corecursive type of an infinite list or stream of vertices.
The return type of the function is~$\kw{float~pipe}[\vvar]$, where the
recursive~$\kw{pipe}$ type is now parameterized by a vertex structure.
This vertex structure is threaded through the recursive structure of the
pipeline data type such that successive futures in the data type are associated
with corresponding vertices from~$\vvar$.
The details of this are technical and so we defer them, as well as the formal
presentation of recursive data types in~{\langname}, to the next section.
As in~$\kw{pipeline\_pi2}$, the first future
uses the vertex~$\kwfst{\vvar}$, which appears as an annotation in
the code and on the graph type ($G$ contains~$\leftcomp{}{\kwfst{\vvar}}$,
indicating a spawn of~$\kwfst{\vvar}$).\footnote{We treat
  corecursive vertex structure types as equi(co)recursive, so no unrolling is
  needed.}
However, now the function calls itself recursively to generate
the rest of the pipeline.
Because the function takes a vertex parameter, this recursive call must
instantiate the vertex parameter with a~$\kw{vstream}$, and it does so with
the tail of the stream,~$\kwsnd{\vvar}$.
This appears in the graph type
as~$\kwsingletapp{\gvar}{\kwsnd{\vvar}}$.

We complete this overview with a demonstration of how the pipeline can be
consumed, which shows how vertex structures link individual futures to their
touches.
The~$\kw{nth}$ function in Figure~\ref{fig:pi} consumes a pipeline recursively,
returning the~$\kw{n}^{th}$ value.
It also takes a parameter~$\vvar$ of vertex structure type~$\kw{vstream}$, but
this time as the second parameter, because the function uses these vertices
to touch futures and does not spawn futures.
The use of~$\vvar$ as the parameter to the~$\kw{pipe}$ data type indicates that
vertices for futures in the pipeline will be drawn from the stream~$\vvar$,
which is enough information to infer in the graph type~$\graph'$ that the
touch~$\touchcomp{\kwfst{\vvar}}$ targets the first vertex of~$\vvar$.
The function then calls itself recursively with the tail of the vertex stream,
which also appears in the recursive instantiation of~$\gvar$ in the graph
type.
Finally,~$\kw{main}$ calls~$\kw{nth}$ with the~$\pi$ pipeline.
As we have seen before, the vertex structure~$\vvar$ is bound here so that its
scope covers its uses both for spawns (in~$\kw{pipeline\_pi}$) and for
touches (in~$\kw{nth}$).
The calls to both functions instantiate the vertex structure parameter with
the same vertex structure~$\vvar$, linking the spawns and touches in the
graph types.
The graph type for~$\kw{main}$ composes the graph types of the producer and the
consumer and links the spawns and touches by instantiating both graph types
with the same vertex structure.

%\input{prelim}
%\input{warmup}
%\input{dag}

%% TREELIKE EXPLANATION

%% Figure \ref{fig:treelike} describes rules for determining if a VS type is "treelike." The judgement $\vsttree{\vstctx}{\vsty}$ indicates that $\vsty$ is treelike under the context $\vstctx$, which lists treelike VS type variables $\vstyvar$. A VS type is considered "treelike" if it is $\kwvty$, a product of treelike VS types, a corecursive VS type of a treelike VS type, or a VS type variable in $\vstctx$. A treelike VS type can be imagined as a tree, where $\kwvty$ is a leaf, a pair VS type is a parent of its two components VS types, and a corecursive VS type is a recursive instance of the tree. 

%% Treelike VS types are important because they represent the treelike way that vertex generators generate vertices; consequently, vertex generators must have treelike VS types. For a vertex generator $\vertgen{\vsty}{\genseed}$, if $\vsty$ is $\kwvty$, it is a vertex; if it is a pair VS type, then the subtrees representing its components can be traversed using $\vgenl{\vertgen{\vsty}{\genseed}}$ and $\vgenr{\vertgen{\vsty}{\genseed}}$; if it is a corecursive VS type, a corecursive layer of the tree can be unrolled and further traversed via $\vgenproj{\vertgen{\vsty}{\genseed}}{\vvar}$. Since the $\kw{new}$ clause declares a new VS variable that is replaced with a vertex generator during runtime, \langname requires that the VS type of the declared VS variable is treelike.

\section{Graph Types with Vertex Structures}\label{sec:lang}

%There are seven components in \langname: expressions $e$, type constructors $\con$, kinds $\kind$, graph types $\graph$, graph kinds $\graphkind$, vertex structures (VSs) $\vs$, and vertex structure types (VS types) $\vsty$. 

%In \langname, valid expressions are statically assigned a type and a graph type. Any expressions that utilize futures, as well as their types and graph types, must utilize VSs for two reasons: graph types need VSs to reference the futures used in expressions, and constructors require VSs to type expressions using them. To enforce VS type safety, constructors and graph types require that VS parameters and variable bindings be annotated with a VS type.

%\begin{wrapfigure}{i}{0.6\textwidth}
\begin{figure}
%  \begin{minipage}{0.6\textwidth}
  \begin{mathpar}
  \begin{array}{l r l l}
    \mathit{Vertex~Structures} & \vs & \bnfdef &
    %\vertgen{\vsty}{\genseed} \bnfalt
    \vvar \bnfalt
    \kwpair{\vs}{\vs} \bnfalt
    \kwfst{\vs} \bnfalt
    \kwsnd{\vs} %\bnfalt
%    \kwinl{\vs} \bnfalt
%    \kwinr{\vs}
    \\

    %% \mathit{Vertex~Path} & \vspath & \bnfdef &
    %% \vertgen{\vsty}{\genseed} \bnfalt
    %% \kwfst{\vspath} \bnfalt
    %% \kwsnd{\vspath}
    %% \\

    %% \mathit{Normal~VSs} & \vsv & \bnfdef &
    %% \vspath \bnfalt
    %% \kwpair{\vsv}{\vsv} \bnfalt
    %% \kwinl{\vsv} \bnfalt
    %% \kwinr{\vsv} \bnfalt
    %% \kwvsroll{\vsv} \bnfalt
    %% \overline{\vsv}
    %% \\

    %% \mathit{Aux.~Normal~VSs} & \overline{\vsv} & \bnfdef &
    %% \vvar \bnfalt
    %% \kwfst{\overline{\vsv}} \bnfalt
    %% \kwsnd{\overline{\vsv}} \bnfalt
    %% \kwvsunroll{\overline{\vsv}}\\

    \mathit{Vertex~Structure~Types} & \vsty & \bnfdef &
    \kwvty \bnfalt
    \vsprod{\vsty}{\avail}{\vsty}{\avail} \bnfalt
%    \kwsum{\vsty}{\vsty} \bnfalt
    \vstyvar \bnfalt
%    \kwvsrec{\vstyvar}{\vsty} \bnfalt
    \kwvscorec{\vstyvar}{\vsty}
    \\

    \mathit{Availability} & \avail & \bnfdef &
    \isav \bnfalt
    \isunav
    \\

    \mathit{Graph~Types} & \graph & \bnfdef &
    \gvar \bnfalt
    \emptygraph \bnfalt
    \graph \seqcomp \graph \bnfalt
%%    \graph \parcomp \graph \bnfalt
    \graph \dagor \graph \bnfalt
    \dagrec{\gvar}{\graph}{} \bnfalt
    \leftcomp{\graph}{\vs} \bnfalt
    \touchcomp{\vs} \bnfalt \\ & & &
    \dagpi{\hastycl{\vvar}{\vsty}}{\hastycl{\vvar}{\vsty}}{\graph} \bnfalt
    \kwtapp{\graph}{\vs}{\vs} \bnfalt
    \dagnew{\hastycl{\vvar}{\vsty}}{\graph} %\bnfalt\\ & & &
%    \kwvcase{\vs}{\vvar}{\graph}{\vvar}{\graph}
    \\

    \mathit{Graph~Kinds} & \graphkind & \bnfdef &
    \kgraph \bnfalt
    \dagpi{\hastycl{\vvar_f}{\vsty_f}}{\hastycl{\vvar_t}{\vsty_t}}{\graphkind}
    \\

    \mathit{Kinds} & \kind & \bnfdef &
    \kwtykind \bnfalt
    \kwkindarr{\vsty}{\kwtykind}
    \\

    \mathit{Type~Constructors} & \con & \bnfdef &
    \kwunit \bnfalt
    \kwpi{\hastycl{\vvar}{\vsty}}{\hastycl{\vvar}{\vsty}}{\kwarrow{\con}{\con}{\graph}} \bnfalt
    \kwprod{\con}{\con} \bnfalt
    \kwsum{\con}{\con} \bnfalt
    \kwfutt{\con}{\vs} \bnfalt
    \convar \bnfalt\\ & & &
    \kwprec{\convar}{\hastycl{\vvar}{\vsty}}{\con}{\vs} \bnfalt
    \kwxi{\hastycl{\vvar}{\vsty}}{\con} \bnfalt
    \kwvapp{\con}{\vs} %\bnfalt
%    \kwvcase{\vs}{\vvar}{\con}{\vvar}{\con}
    \\
    
    \mathit{Expressions} & e & \bnfdef &
    x \bnfalt
%%    v \bnfalt
    \kwtriv \bnfalt
    \kwfun{\vvar}{\vvar}{f}{x}{e} \bnfalt
    \kwapp{}{\kwtapp{e}{\verts}{\verts}}{e} \bnfalt
    \kwpair{e}{e} \bnfalt
    \kwfst{e} \bnfalt
    \kwsnd{e} \bnfalt\\ & & &
%%    \kwpar{e}{e} \bnfalt
    \kwinl{e} \bnfalt
    \kwinr{e} \bnfalt
    \kwcase{e}{x}{e}{y}{e} \bnfalt
    \kwroll{e} \bnfalt
    \kwunroll{e} \bnfalt\\ & & &
    \kwfuture{\vs}{e} \bnfalt
    \kwforce{e} \bnfalt
    \kwnewf{\vvar}{\vsty}{e} %\bnfalt\\ & & &
%    \kwcase{\vs}{\vvar}{e}{\vvar}{e}
%%    \\
%%
%%    \mathit{Values} & v & \bnfdef &
%%    \kwtriv \bnfalt
%%    \kwfun{\vvar}{\vvar}{f}{x}{e} \bnfalt
%%    \kwpair{v}{v} \bnfalt
%%    \kwinl{v} \bnfalt
%%    \kwinr{v} \bnfalt
%%    \kwroll{v} \bnfalt\\ & & &
%%    \kwhandle{\vspath}{v}
  \end{array}
  \end{mathpar}
  \caption{Syntax of {\langname}.}
  \label{fig:lang}
  %\end{minipage}
\end{figure}
%\end{wrapfigure}

This section provides a formal presentation of \langname{}, whose syntax
is given in Figure~\ref{fig:lang}.
%
%In some respects, the language is similar to the one introduced by
%\citet{Muller22}.  In particular, expressions ($e$) in \langname{} are assigned
%both types ($\tau$) and graph types ($\graph$), and graph types describe the
%computation graphs that can arise when running the expression.
%
In the remainder of this section, we describe the features of the language
in detail, focusing on the main novelties of \langname{} compared to prior
work: \emph{vertex structures} (VSs) and \emph{recursive types}.
%
%% As illustrated in the examples we have seen so far, recursive types such as
%% trees or lists are crucial for expressing common programming idioms for
%% fine-grained parallelism.  However, these idioms require data types that can
%% store a potentially infinite number of distinct futures, which is not possible
%% in prior work~\citep{Muller22}.  The VSs of \langname{} provide a way for data
%% types to refer to infinitely many distinct futures, thereby circumventing this
%% limitation.
%
%In the remainder of this section, we discuss these two features in detail.
%(Figure~\ref{fig:lang} summarizes the syntax of the language for reference.)

\subsection{Vertex Structures and Their Types}

Vertex Structures~($\vs$) contain vertices that represent futures in
computation graphs.  As shown in Figure~\ref{fig:lang}, VSs appear in
annotations within expressions, type constructors, and graph types;
these annotations are not inserted into real code by programmers, but
are filled in during type inference.

Vertex structures are classified with VS types ($\vsty$).  The
VS type~$\kwvty$ represents a single vertex, and only VSs of type $\kwvty$ can be used
to name futures.  The product type
$\vsprod{\vsty_1}{\avail_1}{\vsty_2}{\avail_2}$ represents pairs of VSs in
$\vsty_1$ and $\vsty_2$.  The availability annotations $\avail_1$ and $\avail_2$
indicate whether the corresponding component is available ($\isav$) or
unavailable ($\isunav$) for spawning new futures; their use is
inspired by record types in the Cogent language~\cite{OConnorCRJAKMSK21}.
The need for availability will become clearer later, when we
discuss the type system.  Finally, we can also form
corecursive VS types $\kwvscorec{\vstyvar}{\vsty}$, which we will use to generate
graph types that require a potentially unbounded number of vertices.

Figure~\ref{fig:vert-typing} presents the rules for assigning VS types
to VSs. The judgment
$\vsistype{\uspctx}{\utctx}{\vs}{\vsty}$ denotes that the VS $\vs$ has VS type
$\vsty$, where $\uspctx$ and $\utctx$ are contexts that map VS variables to
their types.
Vertices, and thus VSs, are treated in an affine manner to ensure that any
vertex is used at most once to spawn a future---this affine treatment leads
to the use of two contexts.
The first,~$\uspctx$, is an affine context storing vertices that may be used
to spawn futures and the second,~$\utctx$, is an unrestricted context for
vertices that may be used to touch futures (we may touch a vertex any number
of times).
Because we wish to be able
to touch any vertex we spawn, the set of variables in $\uspctx$ will always be a
subset of that in $\utctx$.
%, though they may have different mappings to VS types due to $\uspctx$-splitting (discussed later).
A VS variable is well-typed if it is in either $\uspctx$ or $\utctx$, and
%, and a vertex generator is well-typed if its seed is in either context. 
we assume that $\uspctx$ does not contain multiple
mappings for the same variable.

\begin{figure*}
  \small
  \centering
  \def \MathparLineskip {\lineskip=0.43cm}
  \begin{mathpar}
    \Rule{U:OmegaVar}
         {\strut}
         {\vsistype{\uspctx, \hastype{\vvar}{\vsty}}{\utctx}{\vvar}{\vsty}}
    \and
    \Rule{U:PsiVar}
         {\strut}
         {\vsistype{\uspctx}{\utctx, \hastype{\vvar}{\vsty}}{\vvar}{\vsty}}
%%    \and
%%    \Rule{U:OmegaGen}
%%         {\strut}
%%         {\vsistype{\uspctx, \hastype{\genseed}{\vsty}}{\utctx}{\vertgen{\vsty}{\genseed}}{\vsty}}
%%    \and
%%    \Rule{U:PsiGen}
%%         {\strut}
%%         {\vsistype{\uspctx}{\utctx, \hastype{\genseed}{\vsty}}{\vertgen{\vsty}{\genseed}}{\vsty}}
    \and
    \Rule{U:Pair}
         {\uspsplit{\uspctx}{\uspctx_1}{\uspctx_2}\\
		  \vsistype{\uspctx_1}{\utctx}{\vs_1}{\vsty_1}\\
		  \vsistype{\uspctx_2}{\utctx}{\vs_2}{\vsty_2}}
         {\vsistype{\uspctx}{\utctx}
			{\kwpair{\vs_1}{\vs_2}}{\vsprod{\vsty_1}{\isav}{\vsty_2}{\isav}}}
	\and
    \Rule{U:OnlyLeftPair}
         {\vsistype{\uspctx}{\utctx}{\vs_1}{\vsty_1}\\
			\vsistype{\ectx}{\utctx'}{\vs_2}{\vsty_2}}
         {\vsistype{\uspctx}{\utctx}
				{\kwpair{\vs_1}{\vs_2}}{\vsprod{\vsty_1}{\isav}{\vsty_2}{\isunav}}}
    \and
    \Rule{U:OnlyRightPair}
         {\vsistype{\ectx}{\utctx'}{\vs_1}{\vsty_1}\\
			\vsistype{\uspctx}{\utctx}{\vs_2}{\vsty_2}}
         {\vsistype{\uspctx}{\utctx}
				{\kwpair{\vs_1}{\vs_2}}{\vsprod{\vsty_1}{\isunav}{\vsty_2}{\isav}}}
    \and
    \Rule{U:Fst}
         {\vsistype{\uspctx}{\utctx}{\vs}{\vsprod{\vsty_1}{\isav}{\vsty_2}{\avail}}}
         {\vsistype{\uspctx}{\utctx}{\kwfst{\vs}}{\vsty_1}}
    \and
    \Rule{U:Snd}
         {\vsistype{\uspctx}{\utctx}{\vs}{\vsprod{\vsty_1}{\avail}{\vsty_2}{\isav}}}
         {\vsistype{\uspctx}{\utctx}{\kwsnd{\vs}}{\vsty_2}}
    \and
    \Rule{U:Subtype}
         {\vsistype{\uspctx}{\utctx}{\vs}{\vsty'}\\
				\vstysubt{\vsty'}{\vsty}}
         {\vsistype{\uspctx}{\utctx}{\vs}{\vsty}}

  \end{mathpar}
  \caption{Vertex structure type system for {\langname}.}
  \label{fig:vert-typing}
\end{figure*}

\begin{figure*}
  \small
  \centering
  \def \MathparLineskip {\lineskip=0.43cm}
  \begin{mathpar}
	\Rule{UT:ProdLeft}
         {\strut}
         {\vstysubt{\vsprod{\vsty_1}{\avail_1}{\vsty_2}{\avail_2}}
			{\vsprod{\vsty_3}{\isunav}{\vsty_2}{\avail_2}}}
    \and
	\Rule{UT:ProdRight}
         {\strut}
         {\vstysubt{\vsprod{\vsty_1}{\avail_1}{\vsty_2}{\avail_2}}
			{\vsprod{\vsty_1}{\avail_1}{\vsty_3}{\isunav}}}
    \and
	\Rule{UT:Prod}
         {\vstysubt{\vsty_1}{\vsty_1'}\\
			\vstysubt{\vsty_2}{\vsty_2'}}
         {\vstysubt{\vsprod{\vsty_1}{\avail_1}{\vsty_2}{\avail_2}}
			{\vsprod{\vsty_1'}{\avail_1}{\vsty_2'}{\avail_2}}}
%    \and
%	\Rule{UT:Sum}
%         {\vstysubt{\vsty_1}{\vsty_1'}\\
%			\vstysubt{\vsty_2}{\vsty_2'}}
%         {\vstysubt{\kwsum{\vsty_1}{\vsty_2}}
%			{\kwsum{\vsty_1'}{\vsty_2'}}}
%    \and
%	\Rule{UT:Rec1}
%         {\strut}
%         {\vstysubt{\kwvsrec{\vstyvar}{\vsty}}
%			{\vsub{\vsty}{\kwvsrec{\vstyvar}{\vsty}}{\vstyvar}}}
%    \and
%	\Rule{UT:Rec2}
%         {\strut}
%		{\vstysubt{\vsub{\vsty}{\kwvsrec{\vstyvar}{\vsty}}{\vstyvar}}
%			{\kwvsrec{\vstyvar}{\vsty}}}
    \and
	\Rule{UT:Corec1}
         {\strut}
         {\vstysubt{\kwvscorec{\vstyvar}{\vsty}}
			{\vsub{\vsty}{\kwvscorec{\vstyvar}{\vsty}}{\vstyvar}}}
    \and
	\Rule{UT:Corec2}
         {\strut}
		{\vstysubt{\vsub{\vsty}{\kwvscorec{\vstyvar}{\vsty}}{\vstyvar}}
			{\kwvscorec{\vstyvar}{\vsty}}}
    \and
	\Rule{UT:Reflexive}
         {\strut}
         {\vstysubt{\vsty}{\vsty}}
    \and
	\Rule{UT:Transitive}
         {\vstysubt{\vsty}{\vsty''}\\
			\vstysubt{\vsty''}{\vsty'}}
         {\vstysubt{\vsty}{\vsty'}}
  \end{mathpar}
  \caption{VS subtyping.}
  \label{fig:vert-subtyping}
\end{figure*}

VSs can be variables~($\vvar$),\footnote{
	Unlike the original
  presentation~\citep{Muller22}, $\vvar$ refers to a variable instead of a
  vertex.} pairs, and
projections.  As seen in the rules $\rulename{U:Fst}$ and $\rulename{U:Snd}$,
only available components can be projected. For example, if $\vvar$ has VS type
$\vsprod{\vsty_1}{\isav}{\vsty_2}{\isunav}$, then $\kwfst{\vvar}$ is safe to
use, but $\kwsnd{\vvar}$ is not.
Rule~\rulename{U:Subtype} is a subsumption rule for the subtyping
relation on VS types, denoted $\vstysubt{\vsty'}{\vsty}$ and defined
in Figure~\ref{fig:vert-subtyping}.
%As seen in the subsumption rule
%\rulename{U:Subtype}, this means that a VS of VS type $\vsty'$ may be used wherever
%one of VS type $\vsty$ is expected.
%
We allow three forms of subtyping: first (\rulename{UT:Corec1} and
\rulename{UT:Corec2}), we can freely roll and unroll corecursive VS types.
Second
(\rulename{UT:ProdLeft} and \rulename{UT:ProdRight}),
it is safe to take an available component and treat it as unavailable.
Third, the types of
unavailable components of VSs may be changed at will, which is safe since those
sides can never be used.
%% For example, we can prove the following subtype
%% judgment:

%% \[\vstysubt{
%%     \vsprod{\kwvty}{\isav}{
%%       (\kwvscorec{\vstyvar}{\vsprod{\kwvty}{\isav}{\vstyvar}{\isav}})}
%%     {\isav}
%%   }
%%   {
%%     \vsprod{\kwvty}{\isunav}{
%%       (\vsprod{\kwvty}{\isav}{\kwvty}{\isunav})}
%%     {\isav}}.\]

\begin{figure*}
  \small
  \centering
  \def \MathparLineskip {\lineskip=0.43cm}
  \begin{mathpar}
	\Rule{US:Prod}
         {\strut}
         {\vstysplit{\vsprod{\vsty_1}{\isav}{\vsty_2}{\isav}}
			{\vsprod{\vsty_1}{\isav}{\vsty_2}{\isunav}}
			{\vsprod{\vsty_1}{\isunav}{\vsty_2}{\isav}}}
    \and
    \Rule{US:SplitBoth}
         {\vstysplit{\vsty_1}{\vsty_1'}{\vsty_1''}\\
			\vstysplit{\vsty_2}{\vsty_2'}{\vsty_2''}}
         {\vstysplit{\vsprod{\vsty_1}{\isav}{\vsty_2}{\isav}}
			{\vsprod{\vsty_1'}{\isav}{\vsty_2'}{\isav}}
			{\vsprod{\vsty_1''}{\isav}{\vsty_2''}{\isav}}}
    \and
    \Rule{US:SplitLeft}
         {\vstysplit{\vsty_1}{\vsty_1'}{\vsty_1''}}
         {\vstysplit{\vsprod{\vsty_1}{\isav}{\vsty_2}{\avail}}
			{\vsprod{\vsty_1'}{\isav}{\vsty_2}{\avail}}
			{\vsprod{\vsty_1''}{\isav}{\vsty_2}{\isunav}}}
    \and
    \Rule{US:SplitRight}
         {\vstysplit{\vsty_2}{\vsty_2'}{\vsty_2''}}
         {\vstysplit{\vsprod{\vsty_1}{\avail}{\vsty_2}{\isav}}
			{\vsprod{\vsty_1}{\avail}{\vsty_2'}{\isav}}
			{\vsprod{\vsty_1}{\isunav}{\vsty_2''}{\isav}}}
%    \and
%    \Rule{US:Recursive}
%         {\vstysplit{\vsub{\vsty}{\kwvsrec{\vstyvar}{\vsty}}{\vstyvar}}{\vsty_1}{\vsty_2}}
%         {\vstysplit{\kwvsrec{\vstyvar}{\vsty}}{\vsty_1}{\vsty_2}}
    \and
    \Rule{US:Corecursive}
         {\vstysplit{\vsub{\vsty}{\kwvscorec{\vstyvar}{\vsty}}{\vstyvar}}{\vsty_1}{\vsty_2}}
         {\vstysplit{\kwvscorec{\vstyvar}{\vsty}}{\vsty_1}{\vsty_2}}
    \and
    \Rule{US:Subtype}
         {\vstysplit{\vsty}{\vsty_1'}{\vsty_2}\\
			\vstysubt{\vsty_1'}{\vsty_1}}
         {\vstysplit{\vsty}{\vsty_1}{\vsty_2}}
    \and
    \Rule{US:Commutative}
         {\vstysplit{\vsty}{\vsty_2}{\vsty_1}}
         {\vstysplit{\vsty}{\vsty_1}{\vsty_2}}
  \end{mathpar}
  \caption{Vertex structure type splitting.}
  \label{fig:vert-type-split}
\end{figure*}

\begin{figure*}
  \small
  \centering
  \def \MathparLineskip {\lineskip=0.43cm}
  \begin{mathpar}
	\Rule{OM:Empty}
         {\strut}
         {\uspsplit{\ectx}{\ectx}{\ectx}}
    \and
    \Rule{OM:Commutative}
         {\uspsplit{\uspctx}{\uspctx_2}{\uspctx_1}}
         {\uspsplit{\uspctx}{\uspctx_1}{\uspctx_2}}
    \and
    \Rule{OM:Var}
         {\uspsplit{\uspctx}{\uspctx_1}{\uspctx_2}}
         {\uspsplit{\uspctx, \hastype{\vvar}{\vsty}}{\uspctx_1, \hastype{\vvar}{\vsty}}{\uspctx_2}}
%%    \and
%%    \Rule{OM:Gen}
%%         {\uspsplit{\uspctx}{\uspctx_1}{\uspctx_2}}
%%         {\uspsplit{\uspctx, \hastype{\genseed}{\vsty}}{\uspctx_1, \hastype{\genseed}{\vsty}}{\uspctx_2}}
    \and
	\Rule{OM:VarTypeSplit}
         {\uspsplit{\uspctx}{\uspctx_1}{\uspctx_2}\\
			\vstysplit{\vsty}{\vsty_1}{\vsty_2}}
         {\uspsplit{\uspctx, \hastype{\vvar}{\vsty}}
			{\uspctx_1, \hastype{\vvar}{\vsty_1}}
			{\uspctx_2, \hastype{\vvar}{\vsty_2}}}
%%    \and
%%	\Rule{OM:GenTypeSplit}
%%         {\uspsplit{\uspctx}{\uspctx_1}{\uspctx_2}\\
%%			\vstysplit{\vsty}{\vsty_1}{\vsty_2}}
%%         {\uspsplit{\uspctx, \hastype{\genseed}{\vsty}}
%%			{\uspctx_1, \hastype{\genseed}{\vsty_1}}
%%			{\uspctx_2, \hastype{\genseed}{\vsty_2}}}
  \end{mathpar}
  \caption{$\uspctx$ context splitting.}
  \label{fig:uspctx-split}
\end{figure*}

The VS typing rule \rulename{U:Pair} uses an auxiliary splitting relation
$\uspsplit{\bullet}{\bullet}{\bullet}$~\cite{OConnorCRJAKMSK21}, which is
described in Figure~\ref{fig:uspctx-split}. 
This relation is responsible for
enforcing the affine treatment of $\uspctx$ contexts. The judgment
$\uspsplit{\uspctx}{\uspctx_1}{\uspctx_2}$ states that $\uspctx$ splits into the
disjoint contexts $\uspctx_1$ and $\uspctx_2$. It is important that~$\uspctx_1$
and~$\uspctx_2$ be disjoint so that futures spawned under $\uspctx_1$ and under
$\uspctx_2$ have distinct vertices. However, we allow a variable with a product
VS type to appear in both contexts, as long as the availability of the products is
in turn split between the two. This is allowed by \rulename{OM:VarTypeSplit}:
$\uspctx, \hastype{\vvar}{\vsty}$ may split to
$\uspctx_1, \hastype{\vvar}{\vsty_1}$ and $\uspctx, \hastype{\vvar}{\vsty_2}$ if
$\vstysplit{\vsty}{\vsty_1}{\vsty_2}$ holds. Intuitively, $\vsty$, $\vsty_1$,
and $\vsty_2$ are the same types but with different availabilities: if a
component of a product VS type is available in $\vsty$, then that component is
available in $\vsty_1$ or $\vsty_2$ or neither, but not both. For example,
if~$\hastycl{\vvar}{\vsprod{\vsty_1}{\isav}{\vsty_2}{\isav}}$ appears
in~$\uspctx$, we may
have~$\hastycl{\vvar}{\vsprod{\vsty_1}{\isav}{\vsty_2}{\isunav}}$ in~$\uspctx_1$
and~$\hastycl{\vvar}{\vsprod{\vsty_1}{\isunav}{\vsty_2}{\isav}}$ in~$\uspctx_2$,
but we cannot have~$\hastycl{\vvar}{\vsprod{\vsty_1}{\isav}{\vsty_2}{\isav}}$
appear in either $\uspctx_1$ and $\uspctx_2$.

The VS type splitting judgment $\vstysplit{\vsty}{\vsty_1}{\vsty_2}$ is defined in
Figure \ref{fig:vert-type-split}. The core mechanism of VS type splitting is
\rulename{US:Prod}, which states a VS type $\vsprod{\vsty_1}{\isav}{\vsty_2}{\isav}$
may split into $\vsprod{\vsty_1}{\isav}{\vsty_2}{\isunav}$ and
$\vsprod{\vsty_1}{\isunav}{\vsty_2}{\isav}$. Most of the other rules are
``search'' rules allowing applications of \rulename{US:Prod} in nested VS types. The
other significant VS type splitting rule is \rulename{US:Subtype}, which allows for
``weakening'' the VS types resulting from a split (by turning available sides of
product VS types to unavailable).

\subsection{Graph Types and Type Constructors}

\begin{figure*}
  \small
  \centering
  \def \MathparLineskip {\lineskip=0.43cm}
  \begin{mathpar}
    \Rule{DW:Spawn}
         {\uspsplit{\uspctx}{\uspctx_1}{\uspctx_2}\\
		  \dagwf{\gctx}{\uspctx_1}{\utctx}{\graph}{\kgraph}\\
           \vsistype{\uspctx_2}{\ectx}{\vs}{\kwvty}\\
%           \vertex \not\in \vertices
         }
         {\dagwf{\gctx}{\uspctx}{\utctx}
           {\leftcomp{\graph}{\vs}}{\kgraph}}
    \and
    \Rule{DW:Touch}
         {\vsistype{\ectx}{\utctx}{\vs}{\kwvty}}
         {\dagwf{\gctx}{\uspctx}{\utctx}{\touchcomp{\vs}}{\kgraph}}
    \and
    \Rule{DW:New}
         {\dagwf{\gctx}{\uspctx, \hastype{\vvar}{\vsty}}{\utctx, \hastype{\vvar}{\vsty}}
             {\graph}{\kgraph}\\
%           \vsttree{\ectx}{\vsty}
		}
         {\dagwf{\gctx}{\uspctx}{\utctx}{\dagnew{\hastycl{\vvar}{\vsty}}{\graph}}{\kgraph}}
    \and
    %\iffull
    \Rule{DW:RecPi}
         {\dagwf{\gctx,\hastype{\gvar}
             {\dagpi{\hastycl{\vvar_f}{\vsty_f}}{\hastycl{\vvar_t}{\vsty_t}}{\kgraph}}}
           {\hastype{\vvar_f}{\vsty_f}}{\utctx, \hastype{\vvar_f}{\vsty_f}, \hastype{\vvar_t}{\vsty_t}}{\graph}
           {\kgraph}}
         {\dagwf{\gctx}{\uspctx}{\utctx}
           {\dagrec{\gvar}{\dagpi{\hastycl{\vvar_f}{\vsty_f}}{\hastycl{\vvar_t}{\vsty_t}}{\graph}}{}}
           {\dagpi{\hastycl{\vvar_f}{\vsty_f}}{\hastycl{\vvar_t}{\vsty_t}}{\kgraph}}}
    \and
    %\fi
    \Rule{DW:App}
         {\uspsplit{\uspctx}{\uspctx_1}{\uspctx_2}\\
		   \dagwf{\gctx}{\uspctx_1}{\utctx}{\graph}
           {\dagpi{\hastycl{\vvar_f}{\vsty_f}}{\hastycl{\vvar_t}{\vsty_t}}{\kgraph}}\\
             \vsistype{\uspctx_2}{\ectx}{\vs_f}{\vsty_f}\\
		\vsistype{\ectx}{\utctx}{\vs_f}{\vsty_f}\\
             \vsistype{\ectx}{\utctx}{\vs_t}{\vsty_t}}
         {\dagwf{\gctx}{\uspctx}{\utctx}
           {\kwtapp{\graph}{\vs_f}{\vs_t}}
           {\kgraph}
             %\tsub{\tsub{\graphkind}{\verts_f'}{\verts_f}}{\verts_t'}{\verts_t}}
         }
  \end{mathpar}
  \caption{Selected rules for graph type formation.}
  \label{fig:dag-wf-abbr}
\end{figure*}

\begin{figure*}
  \small
  \centering
  \def \MathparLineskip {\lineskip=0.43cm}
  \begin{mathpar}
    \Rule{K:Unit}
         {\strut}
         {\iskind{\gctx}{\uspctx}{\utctx}{\vstctx}
           {\kwunit}{\kwtykind}}
    \and
    \Rule{K:Var}
         {\strut}
         {\iskind{\gctx}{\uspctx}{\utctx}{\vstctx, \haskind{\convar}{\kind}}
           {\convar}{\kind}}
    \and
    \Rule{K:Fun}
         {\iskind{\gctx}{\uspctx_1}{\utctx, \hastype{\vvar_f}{\vsty_f}, \hastype{\vvar_t}{\vsty_t}}{\vstctx}
             {\con_1}{\kwtykind}\\
           \iskind{\gctx}{\uspctx_2}{\utctx, \hastype{\vvar_f}{\vsty_f}, \hastype{\vvar_t}{\vsty_t}}{\vstctx}
             {\con_2}{\kwtykind}\\
           \dagwf{\gctx}{\ectx}{\utctx}{\graph}
                 {\dagpi{\hastycl{\vvar_f}{\vsty_f}}{\hastycl{\vvar_t}{\vsty_t}}{\kgraph}}}
         {\iskind{\gctx}{\uspctx}{\utctx}{\vstctx}
           {\kwpi{\hastycl{\vvar_f}{\vsty_f}}{\hastycl{\vvar_t}{\vsty_t}}
             {\kwarrow{\con_1}{\con_2}{\kwtapp{\graph}{\vvar_f}{\vvar_t}}}}{\kwtykind}}
    \and
    \Rule{K:Sum}
         {\iskind{\gctx}{\uspctx}{\utctx}{\vstctx}
             {\con_1}{\kwtykind}\\
           \iskind{\gctx}{\uspctx}{\utctx}{\vstctx}
             {\con_2}{\kwtykind}}
         {\iskind{\gctx}{\uspctx}{\utctx}{\vstctx}
           {\kwsum{\con_1}{\con_2}}{\kwtykind}}
    \and
    \Rule{K:Prod}
         {\iskind{\gctx}{\uspctx_1}{\utctx}{\vstctx}
             {\con_1}{\kwtykind}\\
           \iskind{\gctx}{\uspctx_2}{\utctx}{\vstctx}
             {\con_2}{\kwtykind}}
         {\iskind{\gctx}{\uspctx_1, \uspctx_2}{\utctx}{\vstctx}
             {\kwprod{\con_1}{\con_2}}{\kwtykind}}
    \and
    \Rule{K:Fut}
         {\iskind{\gctx}{\uspctx}{\utctx}{\vstctx}
             {\con}{\kwtykind}\\
           \vsistype{\ectx}{\utctx}{\vs}{\kwvty}}
         {\iskind{\gctx}{\uspctx}{\utctx}{\vstctx}
             {\kwfutt{\con}{\vs}}{\kwtykind}}
    \and
    \Rule{K:Rec}
         {\iskind{\gctx}{\uspctx}{\utctx, \hastype{\vvar}{\vsty}}
             {\vstctx, \haskind{\convar}{\kwkindarr{\vsty}{\kwtykind}}}{\con}{\kwtykind}\\
           \vsistype{\ectx}{\utctx}{\vs}{\vsty}}
         {\iskind{\gctx}{\uspctx}{\utctx}{\vstctx}
             {\kwprec{\convar}{\hastycl{\vvar}{\vsty}}{\con}{\vs}}{\kwtykind}}
    \and
    \Rule{K:Lambda}
         {\iskind{\gctx}{\hastype{\vvar_f}{\vsty_f}}{\utctx, \hastype{\vvar}{\vsty}}{\vstctx}
             {\con}{\kwtykind}}
         {\iskind{\gctx}{\uspctx}{\utctx}{\vstctx}
             {\kwxi{\hastycl{\vvar}{\vsty}}{\con}}
             {\kwkindarr{\vsty}{\kwtykind}}}
    \and
    \Rule{K:App}
         {\iskind{\gctx}{\uspctx_1}{\utctx}{\vstctx}
             {\con}{\kwkindarr{\vsty}{\kwtykind}}\\
           \vsistype{\ectx}{\utctx}{\vs}{\vsty}}
         {\iskind{\gctx}{\uspctx_1, \uspctx_2}{\utctx}{\vstctx}
             {\kwvapp{\con}{\vs}}{\kwtykind}}
%    \and
%    \Rule{K:Let}
%         {\iskind{\gctx}{}{\utctx, \hastype{\vvar_1}{\vsty_1}, \hastype{\vvar_2}{\vsty_2}}{\vstctx}
%             {\con}{\kind}\\
%           \vsistype{\ectx}{\utctx}{\vs}{\kwprod{\vsty_1}{\vsty_2}}}
%         {\iskind{\gctx}{}{\utctx}{\vstctx}
%             {\kwletpair{\vvar_1}{\vvar_2}{\vs}{\con}}{\kind}}
%    \and
%    \Rule{K:Case}
%         {\vsistype{\ectx}{\utctx}{\vs}{\kwsum{\vsty_1}{\vsty_2}}\\
%           \iskind{\gctx}{\uspctx_2, \hastype{\vvar_1}{\vsty_1}}
%             {\utctx, \hastype{\vvar_1}{\vsty_1}}{\vstctx}
%             {\con_1}{\kwtykind}\\
%           \iskind{\gctx}{\uspctx_2, \hastype{\vvar_2}{\vsty_2}}
%             {\utctx, \hastype{\vvar_2}{\vsty_2}}{\vstctx}
%             {\con_2}{\kwtykind}}
%         {\iskind{\gctx}{\uspctx_1, \uspctx_2}{\utctx}{\vstctx}
%             {\kwvcase{\vs}{\vvar_1}{\con_1}{\vvar_2}{\con_2}}{\kwtykind}}
  \end{mathpar}
  \caption{Type constructor kinding rules.}
  \label{fig:kinds}
\end{figure*}

There are two kinding judgments to
characterize well-formed types: one for graph
types, and another one for type constructors.
For graph types, the judgment
$\dagwf{\gctx}{\uspctx}{\utctx}{\graph}{\graphkind}$ states that the graph type
$\graph$ has graph kind $\graphkind$
(Figure~\ref{fig:dag-wf-abbr}).
%The most notable change is the use of VSs; in
%particular, the contexts $\uspctx$ and $\utctx$ can bind variables of different
%VS types, whereas \citet{Muller22} only allowed variables denoting individual
%vertices.
The context $\gctx$ maps graph type variables to their kinds, and is
used to check that recursive graph types are well-formed
(see \rulename{DW:RecPi}).
\ifapp
The full set of graph type formation rules are given in Figure \ref{fig:dag-wf} in the appendix.
\else
We show selected graph type formation rules for space reasons; others are similar and are given in
the supplementary appendix.
\fi

For type constructors, the judgment
$\iskind{\gctx}{}{\utctx}{\vstctx}{\con}{\kind}$ states that the type
constructor $\con$ is well-formed and has kind $\kind$ (Figure~\ref{fig:kinds}).
%Compared to prior work, the two main changes are the use of kinds $\kind$ and
%the context $\vstctx$.
%\footnote{%
%  In prior work~\cite{Muller22}, there was no need for $\vstctx$ or for multiple
%  kinds of type constructors, so the analog of this judgment was simply written
%  $\gctx; \utctx \vdash \con~\mathsf{ok}$.}
The context $\vstctx$ maps type
variables to their kinds.
%, and is used to define recursive types, to which
%we will return shortly.
%
Type constructors can be ordinary types, which are given the kind~$\kwtykind$
(and for which we sometimes use the metavariable~$\tau$).
%when a type constructor is indeed a type, we will sometimes use the
%notation~$\tau$ for it).
%
Types may also be parameterized by vertex structures.
This allows, for example, a type of lists of futures which is parameterized
by the VS providing the vertices for the futures.\footnote{We could use the
  same type parameter mechanism to allow types to be parameterized by other
  types, as in the ML type \texttt{'a list}, but this is orthogonal and we
  do not consider it in the formalism to streamline the presentation.}
The kind $\kwkindarr{\vsty}{\kwtykind}$ classifies
type-level functions that take a VS of type $\vsty$ and return a type
constructor of kind $\kwtykind$.  Rule \rulename{K:Lambda} describes how to
assign the kind $\kwkindarr{\vsty}{\kwtykind}$ to such functions.

As mentioned earlier, the motivation behind~{\langname} is to allow
recursive types containing futures.  We achieve this by
parameterizing recursive types by VSs containing the vertices for these
futures. %%  In principle, we could employ different types of VSs for a given
%% recursive type, but it suffices to consider an infinite version of the recursive
%% type we want to parameterize. For example, a type describing lists of integer
%% futures is parameterized by an infinite stream of vertices: for a list of size
%% $n$, we would use the first $n$ vertices of the stream to name each future on
%% the list.  As a first approximation, we might define this type as follows:
%% \[\kwrec{\convar}{\kwxi{\hastycl{\vvar}{\kw{vstream}}}{(\kwsum{\kwunit}{(\kwprod{\kwfutt{\kw{int}}{\kwfst{\vvar}}}{\kwvapp{\convar}{(\kwsnd{\vvar})}})})}},
%% \]
%% where $\kw{vstream} = \kwvscorec{\vstyvar}{\kwprod{\kwvty}{\vstyvar}}$ and
%% $\convar$ is a recursive type binding.%
%% \aaa{The issue is not clear to me. Is the problem that type constructors are
%%   isorecursive?}  However, we cannot use this definition because there is no way
%% of applying this type constructor to a VS, since the outermost type constructor
%% is a recursive binding instead of a type-level VS function. To solve this
%% problem, we use {\em parameterized} recursive data types
The syntax for a parameterized recursive data type is
$\kwprec{\convar}{\hastycl{\vvar}{\vsty}}{\con}{\vs}$, where
$\hastycl{\vvar}{\vsty}.\con$ is a type level VS function (equivalent to
$\kwxi{\hastycl{\vvar}{\vsty}}{\con}$), $\convar$ is a recursive binding of
$\hastycl{\vvar}{\vsty}.\con$, and $\vs$ is the argument applied to
$\hastycl{\vvar}{\vsty}.\con$ when the recursive type is unrolled. The formation
of parameterized recursive types is performed by \rulename{K:Rec}, in which
$\convar$ has kind $\kwkindarr{\vsty}{\kwtykind}$ within $\con$ since it
represents a type-level function that passes a VS argument to the VS argument of
the recursive instance (seen in more detail below). By applying a sub-VS of $\vs$ to an instance of
$\convar$ in $\con$, the type
$\kwprec{\convar}{\hastycl{\vvar}{\vsty}}{\con}{\vs}$ is able to recur over
$\vs$. (Note that non-parameterized recursive types do not require special
syntax because we can parameterize them in a trivial way by using a dummy VS as
the parameter.) We can now implement the type constructor for a list of integer
futures as
\[\kwxi{\hastycl{\vvar'}{\kw{vstream}}}{\kwprec{\convar}{\hastycl{\vvar}{\kw{vstream}}}{(\kwsum{\kwunit}{(\kwprod{\kwfutt{\kw{int}}{\kwfst{\vvar}}}{\kwvapp{\convar}{(\kwsnd{\vvar})}})})}{\vvar'}}.\]

\begin{figure*}
  \small
  \centering
  \def \MathparLineskip {\lineskip=0.43cm}
  \begin{mathpar}
    \Rule{UE:FstPair}
         {\vseq{\utctx}{\vs}{\kwpair{\vs_1}{\vs_2}}{\vsprod{\vsty_1}{\isav}{\vsty_2}{\avail}}}
         {\vseq{\utctx}{\kwfst{\vs}}{\vs_1}{\vsty_1}}
    \and
    \Rule{UE:SndPair}
         {\vseq{\utctx}{\vs}{\kwpair{\vs_1}{\vs_2}}{\vsprod{\vsty_1}{\avail}{\vsty_2}{\isav}}}
         {\vseq{\utctx}{\kwsnd{\vs}}{\vs_2}{\vsty_2}}
  \end{mathpar}
  \caption{Selected rules for vertex structure equivalence.}
  \label{fig:select-vert-equiv}
\end{figure*}

\begin{figure*}
  \small
  \centering
  \def \MathparLineskip {\lineskip=0.43cm}
  \begin{mathpar}
    \Rule{CE:Fut}
         {\coneq{\gctx}{\utctx}{\vstctx}{\con}{\con'}{\kwtykind}\\
			\vseq{\utctx}{\vs}{\vs'}{\kwvty}}
         {\coneq{\gctx}{\utctx}{\vstctx}{\kwfutt{\con}{\vs}}{\kwfutt{\con'}{\vs'}}{\kwtykind}}
    \and
    \Rule{CE:BetaEq}
         {\iskind{\gctx}{}{\utctx, \hastype{\vvar}{\vsty}}{\vstctx}{\con}{\kwtykind}\\
			\vsistype{\ectx}{\utctx}{\vs}{\vsty}}
         {\coneq{\gctx}{\utctx}{\vstctx}{\kwvapp{(\kwxi{\hastycl{\vvar}{\vsty}}{\con})}{\vs}}
           {\vsub{\con}{\vs}{\vvar}}{\kwtykind}}
  \end{mathpar}
  \caption{Selected rules for type constructor equivalence.}
  \label{fig:select constructor evaluation}
\end{figure*}

Because VSs can occur in types, type checking \langname{} programs may require
performing some type-level computation, notably when projecting vertices out of
a VS.  To address this, we introduce two judgments: an equivalence judgment on
VSs, $\vseq{\utctx}{\vs}{\vs'}{\vsty}$ (see Figure~\ref{fig:select-vert-equiv}), and an equivalence
judgment on type constructors,
$\coneq{\gctx}{\utctx}{\vstctx}{\con_1}{\con_2}{\kind}$
(see Figure~\ref{fig:select constructor evaluation}).  Regarding VSs, the most notable
rules are \rulename{UE:FstPair} and \rulename{UE:SndPair}, which
extract (available) components of a pair.
For type constructors, equivalence has two
purposes: performing a type-level VS function application, which is performed by
\rulename{CE:BetaEq}; and changing VSs within types to equivalent VSs according
to the VS equivalence rules (\rulename{CE:Future} is given as an example).
\ifapp
The full set of VS equivalence and type equivalence rules are given in Figures \ref{fig:vert-equiv} and
\ref{fig:constructor evaluation} respectively in the appendix.
\else
Other rules are relatively straightforward and are deferred to the supplementary
appendix for space reasons.
\fi

\subsection{Graph Type System}

\begin{figure*}
  \small
  \centering
  \def \MathparLineskip {\lineskip=0.43cm}
  \begin{mathpar}
    \Rule{S:Var}
         {\strut}
         {\tywithdag{\gctx}{\uspctx}{\utctx}{\ctx, \hastype{x}{\tau}}{x}{\tau}
           {\emptygraph}}
    \and
    \Rule{S:Unit}
         {\strut}
         {\tywithdag{\gctx}{\uspctx}{\utctx}{\ctx}
           {\kwtriv}{\kwunit}{\emptygraph}}
    \and
    \Rule{S:Fun}
         {\gctx' = \gctx,\hastype{\gvar}{\dagpi{\hastycl{\vvar_f}{\vsty_f}}{\hastycl{\vvar_t}{\vsty_t}}{\kgraph}}\\
		\ctx' = \ctx, \hastype{f}
	             {\kwpi{\hastycl{\vvar_f}{\vsty_f}}{\hastycl{\vvar_t}{\vsty_t}}
	               {\kwarrow{\tau_1}{\tau_2}{\kwtapp{\gvar}{\vvar_f}{\vvar_t}}}},
	             \hastype{x}{\tau_1}\\
           \gvar\fresh\\
		\utctx' = \utctx, \hastype{\vvar_f}{\vsty_f}, \hastype{\vvar_t}{\vsty_t}\\
           \tywithdag{\gctx'}{\hastype{\vvar_f}{\vsty_f}}{\utctx'}{\ctx'}{e}{\tau_2}{\graph}\\
           \iskind{\gctx}{\ectx}{\utctx'}{\ectx}{\tau_1}{\kwtykind}\\
           \iskind{\gctx}{\ectx}{\utctx'}{\ectx}{\tau_2}{\kwtykind}}
         {\tywithdag{\gctx}{\uspctx}{\utctx}{\ctx}
           {\kwfun{\vvar_f}{\vvar_t}{f}{x}{e}}
           {\kwpi{\hastycl{\vvar_f}{\vsty_f}}{\hastycl{\vvar_t}{\vsty_t}}
             {\kwarrow{\tau_1}{\tau_2}
               {\kwtapp{(\dagrec{\gvar}{\dagpi{\hastype{\vvar_f}{\vsty_f}}{\hastype{\vvar_t}{\vsty_t}}
               {\graph}}{})}{\vvar_f}{\vvar_t}}}}
           {\emptygraph}
         }
    \and
    \Rule{S:App}
         {\uspsplit{\uspctx}{\uspctx_1}{\uspctx'}\\
		   \uspsplit{\uspctx'}{\uspctx_2}{\uspctx_3}\\
		   \tywithdag{\gctx}{\uspctx_1}{\utctx}{\ctx}{e_1}
           {\kwpi{\hastycl{\vvar_f}{\vsty_f}}{\hastycl{\vvar_t}{\vsty_t}}
             {\kwarrow{\tau_1}{\tau_2}{\kwtapp{\graph_3}{\vvar_f}{\vvar_t}}}}
           {\graph_1}\\
           \tywithdag{\gctx}{\uspctx_2}{\utctx}{\ctx}{e_2}
                     {\tsub{\tsub{\tau_1}{\vs_f}{\vvar_f}}{\vs_t}{\vvar_t}}
                     {\graph_2}\\
           \vsistype{\uspctx_3}{\ectx}{\vs_f}{\vsty_f}\\
           \vsistype{\ectx}{\utctx}{\vs_f}{\vsty_f}\\
           \vsistype{\ectx}{\utctx}{\vs_t}{\vsty_t}}
         {\tywithdag{\gctx}{\uspctx}{\utctx}{\ctx}
           {\kwapp{}{\kwtapp{e_1}{\vs_f}{\vs_t}}{e_2}}
           {\tsub{\tsub{\tau_2}{\vs_f}{\vvar_f}}{\vs_t}{\vvar_t}}
           {\graph_1 \seqcomp \graph_2 \seqcomp
             \kwtapp{\graph_3}{\vs_f}{\vs_t}}
         }
    \and
    \Rule{S:Pair}
         {\uspsplit{\uspctx}{\uspctx_1}{\uspctx_2}\\
		   \tywithdag{\gctx}{\uspctx_1}{\utctx}{\ctx}{e_1}{\tau_1}{\graph_1}\\
           \tywithdag{\gctx}{\uspctx_2}{\utctx}{\ctx}{e_2}{\tau_2}{\graph_2}}
         {\tywithdag{\gctx}{\uspctx}{\utctx}{\ctx}
           {\kwpair{e_1}{e_2}}{\kwprod{\tau_1}{\tau_2}}
           {\graph_1 \seqcomp \graph_2}}
    \and
    \Rule{S:Fst}
         {\tywithdag{\gctx}{\uspctx}{\utctx}{\ctx}{e}
           {\kwprod{\tau_1}{\tau_2}}{\graph}}
         {\tywithdag{\gctx}{\uspctx}{\utctx}{\ctx}{\kwfst{e}}
           {\tau_1}{\graph}}
%%    \and
%%    \Rule{S:Par}
%%         {\uspsplit{\uspctx}{\uspctx_1}{\uspctx_2}\\
%%		   \tywithdag{\gctx}{\uspctx_1}{\utctx}{\ctx}{e_1}{\tau_1}{\graph_1}\\
%%           \tywithdag{\gctx}{\uspctx_2}{\utctx}{\ctx}{e_2}{\tau_2}{\graph_2}}
%%         {\tywithdag{\gctx}{\uspctx}{\utctx}{\ctx}{\kwpar{e_1}{e_2}}
%%           {\kwprod{\tau_1}{\tau_2}}
%%           {\graph_1 \parcomp \graph_2}}
         \and
    \Rule{S:Snd}
         {\tywithdag{\gctx}{\uspctx}{\utctx}{\ctx}{e}
           {\kwprod{\tau_1}{\tau_2}}{\graph}}
         {\tywithdag{\gctx}{\uspctx}{\utctx}{\ctx}{\kwsnd{e}}
           {\tau_2}{\graph}}
         \and
    \Rule{S:InL}
         {\tywithdag{\gctx}{\uspctx}{\utctx}{\ctx}{e}{\tau_1}{\graph}\\\\
           \iskind{\gctx}{}{\utctx}{\ectx}{\tau_2}{\kwtykind}
         }
         {\tywithdag{\gctx}{\uspctx}{\utctx}{\ctx}{\kwinl{e}}
           {\kwsum{\tau_1}{\tau_2}}{\graph}}
         \and
    \Rule{S:InR}
         {\tywithdag{\gctx}{\uspctx}{\utctx}{\ctx}{e}{\tau_2}{\graph}\\\\
           \iskind{\gctx}{}{\utctx}{\ectx}{\tau_1}{\kwtykind}
         }
         {\tywithdag{\gctx}{\uspctx}{\utctx}{\ctx}{\kwinr{e}}
           {\kwsum{\tau_1}{\tau_2}}{\graph}}
         \and
    \Rule{S:Case}
         {\uspsplit{\uspctx}{\uspctx_1}{\uspctx_2}\\
		   \tywithdag{\gctx}{\uspctx_1}{\utctx}{\ctx}{e_1}
           {\kwsum{\tau_1}{\tau_2}}{\graph_1}\\
           \tywithdag{\gctx}{\uspctx_2}{\utctx}{\ctx,\hastype{x}{\tau_1}}{e_2}
                     {\tau'}{\graph_2}\\
           \tywithdag{\gctx}{\uspctx_2}{\utctx}{\ctx,\hastype{y}{\tau_2}}{e_3}
                     {\tau'}{\graph_3}}
         {\tywithdag{\gctx}{\uspctx}{\utctx}{\ctx}
           {\kwcase{e_1}{x}{e_2}{y}{e_3}}{\tau'}
           {\graph_1 \seqcomp (\graph_2 \dagor \graph_3)}}
         \ifextrarules
  \end{mathpar}
  \caption{Graph type system for {\langname} (Part 1 of 2).}
  \label{fig:statics}
\end{figure*}
\begin{figure*}
  \small
  \centering
  \def \MathparLineskip {\lineskip=0.43cm}
  \begin{mathpar}
    \else
    \and
    \fi
   % \and
    \Rule{S:Roll}
         {\tywithdag{\gctx}{\uspctx}{\utctx}{\ctx}{e}
           {\sub{\sub{\tau}{\vs}{\vvar}}
             {\kwxi{\hastycl{\vvar'}{\vsty}}{\kwprec{\convar}{\hastycl{\vvar}{\vsty}}{\tau}{\vvar'}}}{\convar}}{\graph}\\
           \iskind{\gctx}{}{\utctx}{\ectx}{\kwprec{\convar}{\hastycl{\vvar}{\vsty}}{\tau}{\vs}}{\kwtykind}}
         {\tywithdag{\gctx}{\uspctx}{\utctx}{\ctx}{\kwroll{e}}
           {\kwprec{\convar}{\hastycl{\vvar}{\vsty}}{\tau}{\vs}}{\graph}}
    \and
    \Rule{S:Unroll}
         {\tywithdag{\gctx}{\uspctx}{\utctx}{\ctx}{e}
           {\kwprec{\convar}{\hastycl{\vvar}{\vsty}}{\tau}{\vs}}{\graph}}
         {\tywithdag{\gctx}{\uspctx}{\utctx}{\ctx}{\kwunroll{e}}
           {\sub{\sub{\tau}{\vs}{\vvar}}
             {\kwxi{\hastycl{\vvar'}{\vsty}}{\kwprec{\convar}{\hastycl{\vvar}{\vsty}}{\tau}{\vvar'}}}{\convar}}{\graph}}
    \and
    \Rule{S:Future}
         {\uspsplit{\uspctx}{\uspctx_1}{\uspctx_2}\\\\
		   \tywithdag{\gctx}{\uspctx_1}{\utctx}{\ctx}{e}{\tau}{\graph}\\
           \vsistype{\uspctx_2}{\ectx}{\vs}{\kwvty}\\
           \vsistype{\ectx}{\utctx}{\vs}{\kwvty}}
         {\tywithdag{\gctx}{\uspctx}{\utctx}{\ctx}
           {\kwfuture{\vs}{e}}
           {\kwfutt{\tau}{\vs}}
           {\leftcomp{\graph}{\vs}}}
    \and
%%    \Rule{S:Handle}
%%         {\tywithdag{\gctx}{\ectx}{\utctx}{\ctx}{v}{\tau}{\emptygraph}\\
%%           \vsistype{\ectx}{\utctx}{\vspath}{\kwvty}}
%%         {\tywithdag{\gctx}{\uspctx}{\utctx}{\ctx}{\kwhandle{\vspath}{v}}
%%           {\kwfutt{\tau}{\vspath}}
%%           {\emptygraph}}
%%    \and
    \Rule{S:Touch}
         {\tywithdag{\gctx}{\uspctx}{\utctx}{\ctx}{e}
           {\kwfutt{\tau}{\vs}}{\graph}}
         {\tywithdag{\gctx}{\uspctx}{\utctx}{\ctx}
           {\kwforce{e}}{\tau}
           {\graph \seqcomp \touchcomp{\vs}}}
    \and
    \Rule{S:New}
         {\tywithdag{\gctx}{\uspctx, \hastype{\vvar}{\vsty}}
           {\utctx, \hastype{\vvar}{\vsty}}{\ctx}{e}
           {\tau}{\graph}\\
           \iskind{\gctx}{\uspctx}{\utctx}{\ectx}{\tau}{\kwtykind}\\
%           \vsttree{\ectx}{\vsty}
         }
         {\tywithdag{\gctx}{\uspctx}{\utctx}{\ctx}
           {\kwnewf{\vvar}{\vsty}{e}}
           {\tau}
           {\dagnew{\hastycl{\vvar}{\vsty}}{\graph}}}
    \and
    \Rule{S:Type-Eq}
         {\coneq{\gctx}{\utctx}{\ectx}{\tau_1}{\tau_2}{\kwtykind}\\
            \tywithdag{\gctx}{\uspctx}{\utctx}{\ctx}{e}
           {\tau_1}{\graph}}
         {\tywithdag{\gctx}{\uspctx}{\utctx}{\ctx}{e}
           {\tau_2}{\graph}}
  \end{mathpar}
  \caption{\ifextrarules Graph type system for {\langname} (Part 2 of 2).
    \else Graph type system for {\langname}.\fi}
  \label{fig:statics}
\end{figure*}

Figure \ref{fig:statics} presents the type system for \langname{}, which assigns
graph types (and types) to expressions. The judgment
$\tywithdag{\gctx}{\uspctx}{\utctx}{\ctx}{e}{\tau}{\graph}$ states that the
expression $e$ has type $\tau$ and graph type $\graph$. In addition to the 
graph type context $\gctx$ and
VS contexts~$\uspctx$ and~$\utctx$, this judgment uses the
context~$\ctx$ which, as usual, maps expression variables to their types.
%As
%mentioned earlier, assigning a graph type $\graph$ to $e$ means that any
%computation graph resulting from executing $e$ will be in the family of
%computation graphs represented by $\graph$.

We give a quick overview of the graph type system: $\emptygraph$ represents
expressions that execute purely sequentially; $\graph_1 \seqcomp \graph_2$
represents expressions that execute an expression with graph type $\graph_1$
followed by an expression with graph type $\graph_2$;
$\graph_1 \dagor \graph_2$ represents expressions that execute an
expression with graph type $\graph_1$ or an expression with graph type
$\graph_2$; $\dagrec{\gvar}{\graph}{}$ is a recursive graph type;
$\leftcomp{\graph}{\vs}$ represents spawning a future at vertex $\vs$ that
executes an expression with graph type $\graph$ in parallel;
and $\touchcomp{\vs}$ represents touching the future at vertex $\vs$.
A parameterized graph type
$\dagpi{\hastycl{\vvar_f}{\vsty_f}}{\hastycl{\vvar_t}{\vsty_t}}{\graph}$
accepts two VSs as arguments ($\vvar_f$ has VS type
$\vsty_f$ and is added to $\uspctx$ and $\utctx$ while $\vvar_t$ has VS type
$\vsty_t$ and is only added to $\utctx$).
Such graph type functions are applied
with the syntax~$\kwtapp{\graph}{\vs}{\vs}$.
Finally, $\dagnew{\hastycl{\vvar}{\vsty}}{\graph}$ binds a new
VS variable of VS type $\vsty$ and represents expressions that do the same.

Rule \rulename{S:Fun} types function expressions
$\kwfun{\vvar_f}{\vvar_t}{f}{x}{e}$ where $f$ is the name of the function, $x$
is an expression parameter, $\vvar_f$ and $\vvar_t$ are two VS parameters, and
$e$ is the body of the function. Excluding bindings of new VS variables within
$e$, the only VS variable that can be used for spawning futures in $e$ is
$\vvar_f$, while future touches can use any VS variable within the context
(including $\vvar_f$ and $\vvar_t$), hence the function having two VS
parameters. The type of functions is
\[\kwpi{\hastycl{\vvar_f}{\vsty_f}}{\hastycl{\vvar_t}{\vsty_t}}
  {\kwarrow{\tau_1}{\tau_2}
    {\kwtapp{(\dagrec{\gvar}{\dagpi{\hastype{\vvar_f}{\vsty_f}}{\hastype{\vvar_t}{\vsty_t}}
          {\graph}}{})}{\vvar_f}{\vvar_t}}}\] where $\tau_1$ is the type of the
parameter $x$, $\tau_2$ is the type of the function body, $\graph$ is the graph
type of the function body, and
$\kwtapp{(\dagrec{\gvar}{\dagpi{\hastycl{\vvar_f}{\vsty_f}}{\hastycl{\vvar_t}{\vsty_t}}{\graph}}{})}{\vvar_f}{\vvar_t}$
is the graph type representing graphs produced by applying the function. The
function type is parameterized by the VS parameters $\vvar_f$ and $\vvar_t$ (the same
ones from the function expression), which have VS types $\vsty_f$ and $\vsty_t$
respectively. The graph type
$\kwtapp{(\dagrec{\gvar}{\dagpi{\hastycl{\vvar_f}{\vsty_f}}{\hastycl{\vvar_t}{\vsty_t}}{\graph}}{})}{\vvar_f}{\vvar_t}$
contains a recursive binding to a graph type function whose body is $\graph$
(note that this function binds a new, separate $\vvar_f$ and $\vvar_t$ within
$\graph$), and this recursive binding is applied to the $\vvar_f$ and $\vvar_t$
bound within the type. This allows $\graph$ to pass different VS arguments to
recursive instances of itself (this is useful, for example, when recursive
instances access deeper levels of a vertex stream). In addition to adding the
function expression's parameters to the context when typing the function body, we add $f$, the
recursive binding of the function; and $\gvar$, the recursive binding of the
graph type function whose body is $\graph$.

%Compared to prior work~\cite{Muller22}, the most significant additions to the
%type system come from rules \rulename{S:Type-Eq}, \rulename{S:Roll}, and
%\rulename{S:Unroll}.
Rule \rulename{S:Type-Eq} ensures that typing respects type
constructor equivalence.
%: if $\tau_1$ is equivalent to $\tau_2$, then all
%expressions of type $\tau_1$ also have type $\tau_2$.
Rules \rulename{S:Roll} and
\rulename{S:Unroll} roll and unroll parameterized recursive data types:
$\kwprec{\convar}{\hastycl{\vvar}{\vsty}}{\tau}{\vs}$ unrolls to
$\sub{\sub{\tau}{\vs}{\vvar}}{\kwxi{\hastycl{\vvar'}{\vsty}}{\kwprec{\convar}{\hastycl{\vvar}{\vsty}}{\tau}{\vvar'}}}{\convar}$
by applying $\vs$ to $\hastycl{\vvar}{\vsty}.\tau$ (hence
$\sub{\tau}{\vs}{\vvar}$) and then substituting itself recursively. Instead of
replacing the $\convar$ with an instance of the recursive type, we replace it
with a type-level VS function that passes its argument to the recursive type.

\section{Soundness}\label{sec:soundness}

The goal of this section is to prove the soundness of the graph type system
for {\langname}; that is, that the computation graph of a program is described
by its graph type.
In order to prove this theorem, we must first formalize 1) the notion of a computation graph being ``described by'' a graph type
and 2) the operational
semantics by which a program evaluates to produce a computation graph.
The first notion is one of {\em normalization}~\citep{Muller22}, a process
for constructing the set of computation graphs corresponding to a given
graph type.
The second is a {\em cost semantics}, which we present as a big-step semantics
that evaluates an expression to a value and a computation graph.

The rest of this section is structured as follows.
In Section~\ref{sec:soundness-generators}, we discuss how we represent the
creation of new vertices (which occurs during both normalization, as~$\kw{new}$
bindings are normalized, and during evaluation, as they are evaluated).
We then formalize normalization (Section~\ref{sec:soundness-norm}), and
finally present the cost semantics and prove soundness
(Section~\ref{sec:soundness-cost}).

\subsection{Generation of Runtime Vertex Names}\label{sec:soundness-generators}

\begin{figure*}
  \small
  \centering
  \def \MathparLineskip {\lineskip=0.43cm}
  \begin{mathpar}
    \Rule{UV:Var}
         {\strut}
         {\vseval{\vvar}{\vvar}}
    \and
    \Rule{UV:Path}
         {\strut}
         {\vseval{\vertgen{\vsty}{\genseed}}{\vertgen{\vsty}{\genseed}}}
    \and
    \Rule{UV:Pair}
         {\vseval{\vs_1}{\vs_1'}\\
           \vseval{\vs_2}{\vs_2'}}
         {\vseval{\kwpair{\vs_1}{\vs_2}}{\kwpair{\vs_1'}{\vs_2'}}}
    \and
    \Rule{UV:FstPair}
         {\vseval{\vs}{\kwpair{\vs_1}{\vs_2}}}
         {\vseval{\kwfst{\vs}}{\vs_1}}
    %\and
    %\Rule{UV:SndPair}
    %     {\vseval{\vs}{\kwpair{\vs_1}{\vs_2}}}
    %     {\vseval{\kwsnd{\vs}}{\vs_2}}
    \and
    \Rule{UV:FstNotPair}
         {\vseval{\vs}{\vs'}\\
		\vs' \neq \kwpair{\vs_1}{\vs_2}}
         {\vseval{\kwfst{\vs}}{\kwfst{\vs'}}}
    %\and
    %\Rule{UV:SndNotPair}
    %     {\vseval{\vs}{\vs'}\\
%		\vs' \neq \kwpair{\vs_1}{\vs_2}}
%         {\vseval{\kwsnd{\vs}}{\kwsnd{\vs'}}}
  \end{mathpar}
  \caption{Selected rules for vertex structure normalization.}
  \label{fig:vs-eval-abbr}
\end{figure*}

Recall that the constructs~$\dagnew{\hastycl{\vvar}{\vsty}}{\graph}$ in graph
types and~$\kwnewf{\vvar}{\vsty}{e}$ in expressions
bind ``fresh'' vertex structure (VS) variables to be used in the graph type
and the
expression, respectively.
Because VSs can be infinite and of arbitrary type,
some care must be taken in how to represent them at
``runtime'', i.e., in normalization
and the cost semantics.
The key insight is that VSs
are (possibly infinite) trees with unique vertices at each leaf.
It is thus possible to uniquely identify a vertex by the VS it
comes from, and the path taken to reach it from the root of the VS.
Paths in a vertex structure~$\vs$ are already represented in our syntax
as sequences of projections, e.g.,~$\kwfst{\kwsnd{\vs}}$, so most of the
new conceptual work is in representing the roots of the vertex structures.

We use the syntax~$\genseed$ and variants to represent a unique vertex
name, called a {\em generator},
which serves as the root of a VS~$\vertgen{\vsty}{\genseed}$.
Generators are included in the contexts~$\uspctx$ and~$\utctx$ like VS
variables, but are meaningful runtime symbols representing unique
VSs.
We will use the notation~$\guspctx$ and~$\gutctx$ to refer to contexts that
contain only generators, and no variables.
These are the only contexts that will exist at runtime for typing top-level
terms and graph types, as such terms and graph types contain no free variables.
We refer to these terms and graph types as {\em closed}
even though they may contain free vertex names in
the form of generators.
The judgments for VS typing and $\uspctx$ context splitting are extended with
rules for generators that resemble the rules for variables.

Equipped with a way to represent the roots of new vertex structures, we turn
our attention again to paths from the root to a vertex.
Currently, the same path can be represented in multiple ways,
for example~$\vseq{\utctx}{\kwfst{\kwpair{\kwfst{\vs}}{\kwsnd{\vs}}}}{\kwfst{\vs}}{\vsty}$.
It will be useful to have a normal form for paths,
so we introduce a normalization operation on VSs with the
%which we introduce by
%defining a normalization operation on vertex structures.
%
judgment~$\vseval{\vs}{\vs}$, defined in
Figure~\ref{fig:vs-eval-abbr} (rules symmetric to these are omitted).
Intuitively, the operation~$\beta$-reduces any projections of pairs (but
leaves alone projections of vertex structures that are not syntactically
pairs, e.g.~$\kwfst{\vertgen{\vsty}{\genseed}}$).

%% Furthermore, the next lemma shows that normalization is deterministic up to
%% equivalence of vertex structures.
%% %
%% As a corollary (because if~$\vseval{\vs}{\vs'}$, then~$\vseval{\vs'}{\vs'}$),

%% \begin{lemma}\label{lem:vs-eval-transitive}
%% 	If~$\vseval{\vs}{\vs''}$
%% 		and $\vseq{\utctx}{\vs}{\vs'}{\vsty}$,
%%   	then~$\vseval{\vs'}{\vs''}$.
%% \end{lemma}

\begin{figure}
  \begin{mathpar}
    \begin{array}{l r l l}
      \mathit{VSs} & \vs & \bnfdef &
      \dots \bnfalt
      \vspath
    \\

    \mathit{Vertex~Path} & \vspath & \bnfdef &
    \vertgen{\vsty}{\genseed} \bnfalt
    \kwfst{\vspath} \bnfalt
    \kwsnd{\vspath}
	\\

    \mathit{Expressions} & e & \bnfdef &
      \dots \bnfalt
    v
    \\

    \mathit{Values} & v & \bnfdef &
    \kwtriv \bnfalt
    \kwfun{\vvar}{\vvar}{f}{x}{e} \bnfalt
    \kwpair{v}{v} \bnfalt
    \kwinl{v} \bnfalt
    \kwinr{v} \bnfalt
    \kwroll{v} \bnfalt
    \kwhandle{\vspath}{v}
    \end{array}
  \end{mathpar}
  \caption{Extended syntax for generators, vertex paths, and values}
  \label{fig:ext-syntax}
\end{figure}

\begin{figure}
  \centering
  \def \MathparLineskip {\lineskip=0.43cm}
  \begin{mathpar}
    \Rule{S:Handle}
         {\tywithdag{\gctx}{\ectx}{\utctx}{\ctx}{v}{\tau}{\emptygraph}\\
           \vsistype{\ectx}{\utctx}{\vspath}{\kwvty}}
         {\tywithdag{\gctx}{\uspctx}{\utctx}{\ctx}{\kwhandle{\vspath}{v}}
           {\kwfutt{\tau}{\vspath}}
           {\emptygraph}}
  \end{mathpar}
  \caption{Rule for typing handles.}
  \label{fig:handle-statics}
\end{figure}

%\begin{lemma}\label{lem:vs-paths-unique}\strut
%\begin{enumerate}
%\item If~$\vseval{\vs'}{\kwfst{\vs}}$ or
%	$\vseval{\vs'}{\kwsnd{\vs}}$
%	then there exists a $\vs''$
%	such that $\vseval{\vs''}{\vs}$.
%	\label{lem:vs-proj-vals-contain-vals}
%\item If~$\vseq{\utctx}{\vs}{\kwpair{\vs_1}{\vs_2}}{\vsty}$
%	then $\vs \neq \vspath$.
%	\label{lem:no-vs-path-pairs}
%\item If~$\vseval{\vs_0}{\vs}$
%		and either $\vseq{\utctx}{\vs}{\vspath}{\vsty}$
%		or $\vseq{\utctx}{\vspath}{\vs}{\vsty}$,
%  	then~$\vs = \vspath$.
%	\label{lem:vs-paths-unique-main}
%\end{enumerate}
%\end{lemma}

We use the term {\em vertex paths}, and the notation~$\vspath$,
to refer to closed, normal VSs.
Figure~\ref{fig:ext-syntax} extends the syntax for VSs with generators and
gives the syntax for vertex paths.
We now have a way of producing references to unique vertices: generators give
rise to unique, non-intersecting VSs, and unique vertex paths in
a given VS refer to unique vertices.
When evaluating a~$\kw{new}$ binding, we will simply create a fresh generator.
The remainder of the vertices in the VS are then created
implicitly, and will be accessed by the program as it traverses the VS.

\subsection{Normalization}\label{sec:soundness-norm}

\begin{figure*}
  \centering
  \def \MathparLineskip {\lineskip=0.43cm}
  \begin{mathpar}
    \Rule{UR:Seq1}
         {\graph_1 \unrstep \graph_1'}
         {\graph_1 \seqcomp \graph_2
           \unrstep
           \graph_1' \seqcomp \graph_2}
    \and
    \Rule{UR:Seq2}
         {\graph_2 \unrstep \graph_2'}
         {\graph_1 \seqcomp \graph_2
           \unrstep
           \graph_1 \seqcomp \graph_2'}
    \and
    \Rule{UR:Rec}
         {\strut}
         {\dagrec{\gvar}{\graph}{} \unrstep
           \gsub{\graph}{\dagrec{\gvar}{\graph}{}}{\gvar}
         }
  \end{mathpar}
  \caption{Selected rules for graph type unrolling.}
  \label{fig:unroll-graph-abbr}
\end{figure*}

\begin{figure}
  \small
  \[
  \begin{array}{r l l l}
    \bnnf{\emptygraph} & \defeq & \emptygraph &\\
    \bnnf{\graph_1 \seqcomp \graph_2} & \defeq &
    \bnnf{\graph_1} \seqcomp \bnnf{\graph_2}\\
%%    \bnnf{\graph_1 \parcomp \graph_2} & \defeq &
%%    \bnnf{\graph_1} \parcomp \bnnf{\graph_2}\\
    \bnnf{\graph_1 \dagor \graph_2} & \defeq &
    \bnnf{\graph_1} \dagor \bnnf{\graph_2} &\\
    \bnnf{\dagrec{\gvar}{\graph}{\uctx}} & \defeq &
    \dagrec{\gvar}{\graph}{\uctx}\\
    \bnnf{\leftcomp{\graph}{\vs}} & \defeq &
    \leftcomp{\bnnf{\graph}}{\vs'}
    & \vseval{\vs}{\vs'}\\%, \vs' \vsnormal\\
    \bnnf{\touchcomp{\vs}} & \defeq &
    \touchcomp{\vs'}
    & \vseval{\vs}{\vs'}\\%, \vs' \vsnormal\\
    
    \bnnf{\dagnew{\hastycl{\vvar}{\vsty}}{\graph}} & \defeq &
    \bnnf{\gsub{\graph}{\vertgen{\vsty}{\genseed}}{\vvar}} & \genseed \fresh\\
    \bnnf{\dagpi{\hastype{\vvar_f}{\vsty_f}}{\hastype{\vvar_t}{\vsty_t}}{\graph}} & \defeq &
    \dagpi{\hastype{\vvar_f}{\vsty_f}}{\hastype{\vvar_t}{\vsty_t}}{\graph}\\
    \bnnf{\kwtapp{\graph}{\vs_f}{\vs_t}} & \defeq &
    \bnnf{\gsub{\graph'}{\vs_f, \vs_t}{\vvar_f, \vvar_t}}
    & \bnnf{\graph} = \dagpi{\hastycl{\vvar_f}{\vsty_f}}{\hastycl{\vvar_t}{\vsty_t}}{\graph'}\\
    \bnnf{\kwtapp{\graph}{\vs_f}{\vs_t}} & \defeq &
    \kwtapp{\graph}{\vs_f}{\vs_t} & \text{o.w.}\\
    %% \bnnf{\kwletpair{\vvar_1}{\vvar_2}{\vs}{\graph}} & \defeq &
    %% \bnnf{\gsub{\graph}{\vs_1, \vs_2}{\vvar_1, \vvar_2}} &
    %% \vseval{\vs}{\kwpair{\vs_1}{\vs_2}}\\
    %% \bnnf{\kwletpair{\vvar_1}{\vvar_2}{\vs}{\graph}} & \defeq &
    %% \bnnf{\gsub{\graph}{\vgenl{\vspath},
    %%   \vgenr{\vspath}}{\vvar_1, \vvar_2}} &
    %% \vseval{\vs}{\vspath}\\
%    \bnnf{\kwvcase{\vs}{\vvar_1}{\graph_1}{\vvar_2}{\graph_2}} & \defeq &
%    \bnnf{\gsub{\graph_1}{\vs'}{\vvar_1}} &
%    \vseval{\vs}{\kwinl{\vs'}}\\
%    \bnnf{\kwvcase{\vs}{\vvar_1}{\graph_1}{\vvar_2}{\graph_2}} & \defeq &
%    \bnnf{\gsub{\graph_2}{\vs'}{\vvar_2}} &
%    \vseval{\vs}{\kwinr{\vs'}}\\
  \end{array}
  \]
  \caption{New-$\beta$ normal form.}
  \label{fig:bnnf}
\end{figure}

\begin{figure}
  \[
  \begin{array}{r l l}
    \getgraphs{\emptygraph} & \defeq & \{\emptygraph\}\\
  \getgraphs{\graph_1 \seqcomp \graph_2} & \defeq &
  \{\cgraph_1' \seqcomp \cgraph_2' \mid
  \cgraph_1' \in \getgraphs{\graph_1}, \cgraph_2' \in \getgraphs{\graph_2}\}\\
%%  \getgraphs{\graph_1 \parcomp \graph_2} & \defeq &
%%  \{\cgraph_1' \parcomp \cgraph_2' \mid
%%  \cgraph_1' \in \getgraphs{\graph_1}, \cgraph_2' \in \getgraphs{\graph_2} &\\
  \getgraphs{\graph_1 \dagor \graph_2} & \defeq &
  \getgraphs{\graph_1} \cup \getgraphs{\graph_2}\\
  \getgraphs{\dagrec{\gvar}{\graph}{\uctx}} & \defeq & \emptyset\\
  \getgraphs{\kwtapp{\graph}{\vs_f}{\vs_t}} & \defeq & \emptyset\\
  \getgraphs{\leftcomp{\graph}{\vs}} & \defeq &
  \{\leftcomp{\cgraph'}{\vs} \mid \cgraph' \in \getgraphs{\graph}\}\\
  \getgraphs{\touchcomp{\vs}} & \defeq &
  \{\touchcomp{\vs}\}\\
  %\getgraphs{\kwtapp{\graph}{\vs_f}{\vs_t}} & \defeq &
  %\emptyset
  \end{array}
  \]
  \caption{Graph type expansion.}
  \label{fig:mk-graphs}
\end{figure}

Figures~\ref{fig:unroll-graph-abbr}--\ref{fig:mk-graphs} present the machinery for
normalization.
Because a recursive graph type represents an infinite set of graphs (as it
can be unrolled any number of times), we stage the construction of these sets
so that every set constructed is finite.
Constructing a set of graphs consists of three operations, each of which
performs some of the required tasks.
First, recursive graph types are unrolled a desired number of times,
yielding another graph type that is equivalent up to unrollings of
recursive~$\mu$ bindings.
Next, the graph type is reduced to ``New-$\beta$ normal form'' (NBNF),
which ``evaluates'' any exposed ``new'' bindings by generating and substituting
fresh vertex structures.
This process also performs any applicable~$\beta$ reductions on exposed
applications.
At this point, we are left with a valid graph type, but one with no exposed
``new'' bindings or applications.
Finally, the resulting graph type is {\em expanded} into the set of graphs;
because there are no exposed ``new'' bindings, this process does not involve
generating any new vertex names or structures.
We will now discuss each of these steps in more detail.

Figure~\ref{fig:unroll-graph-abbr} gives a small-step semantics for unrolling
recursive bindings in graph types.
The bulk of the work is done by rule \rulename{UR:Rec}, which steps
a binding~$\dagrec{\gvar}{\graph}{}$ to~$\graph[\dagrec{\gvar}{\graph}{}/\gvar]$.
\ifapp
The remaining rules ``search'' the graph type for recursive bindings, so
we defer the full set of rules to Figure \ref{fig:unroll-graph} in the appendix.
\else
The remaining rules ``search'' the graph type for recursive bindings, so
we defer most of these rules to the supplementary appendix.
\fi
Note that, unlike in a standard left-to-right (or right-to-left) operational
semantics, the rules \rulename{UR:Seq1} and \rulename{UR:Seq2} allow any
instance of recursion in the type to be unrolled at any time in a
nondeterministic fashion.
For example, both steps below are valid:
\[
\begin{array}{c}
(\dagrec{\gvar}{\gvar \seqcomp \gvar}{}) \seqcomp
(\dagrec{\gvar}{\gvar \dagor \gvar}{})
\unrstep
((\dagrec{\gvar}{\gvar \seqcomp \gvar}{})
\seqcomp
(\dagrec{\gvar}{\gvar \seqcomp \gvar}{})) \seqcomp
(\dagrec{\gvar}{\gvar \dagor \gvar}{})\\
(\dagrec{\gvar}{\gvar \seqcomp \gvar}{}) \seqcomp
(\dagrec{\gvar}{\gvar \dagor \gvar}{})
\unrstep
(\dagrec{\gvar}{\gvar \seqcomp \gvar}{}) \seqcomp
((\dagrec{\gvar}{\gvar \dagor \gvar}{}) \seqcomp
(\dagrec{\gvar}{\gvar \dagor \gvar}{}))
\end{array}
\]

The rules for reducing to NBNF, given in Figure~\ref{fig:bnnf},
eliminate ``new'' bindings by substituting fresh vertex structure generators,
and perform any exposed applications.
Evaluation proceeds recursively through sequential compositions
and alternatives, but not under binders.
As a result, closed sub-graph-types of NBNF graph types are themselves NBNF.

Finally, Figure~\ref{fig:mk-graphs} gives the rules for expanding a graph
type into the set of graphs it represents.
We represent a graph~$\cgraph$
formally as a 4-tuple~$(\vertices, \edges, \startv, \sinkv)$
containing the sets of vertices~$\vertices$ and edges~$\edges$, as well as
a designated ``start'' vertex~$\startv$ and ``end'' vertex~$\sinkv$.
We use shorthands for combining graphs sequentially and in parallel; these
shorthands use many of the same operators as graph type composition, but
should not be confused.
Figure~\ref{fig:dag} gives formal definitions for these shorthands; for more
description of them, the reader is referred to prior work~\citep{Muller22}.
In brief, sequential composition~$\seqcomp$ joins the end vertex of the first
graph to the start vertex of the second graph.
The ``left composition'' operator~\citep{spoonhower09},
written~$\leftcomp{\cgraph}{\vvar}$, adds a subgraph~$\cgraph$
corresponding to a future to the graph, with an edge representing the spawn.
It also adds the vertex~$\vvar$ as the sink of the future's graph.
The ``touch'' operator~$\touchcomp{\vvar}$ adds an edge from~$\vvar$.

\begin{figure*}
  \small
    \centering
\begin{mathpar}
  \begin{array}{l l l l}
    \emptygraph & \defeq & \dagq{\{\vertex\}}{\emptyset}{\vertex}{\vertex} &
    \vertex \fresh\\
    \dagq{\vertices_1}{\edges_1}{\startv_1}{\sinkv_1} \seqcomp
  \dagq{\vertices_2}{\edges_2}{\startv_2}{\sinkv_2} & \defeq &
  \dagq{\vertices_1 \cup \vertices_2}
       {\edges_1 \cup \edges_2 \cup \{(\sinkv_1, \startv_2)\}}
       {\startv_1}{\sinkv_2} & \vertices_1 \cap \vertices_2 = \emptyset\\
%%  \dagq{\vertices_1}{\edges_1}{\startv_1}{\sinkv_1} \parcomp
%%  \dagq{\vertices_2}{\edges_2}{\startv_2}{\sinkv_2} & \defeq &
%%  (\vertices_1 \cup \vertices_2 \cup \{\vertex_1, \vertex_2\},
%%    & \vertex_1, \vertex_2 \fresh,\\
%%    & & ~~\edges_1 \cup \edges_2 \cup
%%         \{(\vertex_1, \startv_1), (\vertex_1, \startv_2),
%%         (\sinkv_1, \vertex_2), (\sinkv_2, \vertex_2)\}, \vertex_1, \vertex_2)
%%         & \vertices_1 \cap \vertices_2 = \emptyset
%%       \\
  \leftcomp{\dagq{\vertices}{\edges}{\startv}{\sinkv}}{\vertex} & \defeq &
  \dagq{\vertices \cup \{\vertex, \vertex'\}}
       {\edges \cup \{(\vertex', \startv), (\sinkv, \vertex)\}}
       {\vertex'}{\vertex'} & \vertex' \fresh, \vertex \not\in \vertices\\
  \touchcomp{\vertex} & \defeq &
  \dagq{\{\vertex'\}}{\{(\vertex, \vertex')\}}{\vertex'}{\vertex'} &
  \vertex'\fresh
       
\end{array}
\end{mathpar}
\caption{Shorthands for combining graphs.}
\label{fig:dag}
\end{figure*}

Sequential compositions are expanded by expanding both
subgraphs, and then sequentially composing the resulting graphs.
Alternation simply takes the union of the two sets of graphs.
Expansion does not perform any additional unrolling, so the expansion of
a recursive graph type is the empty set of graphs.
Expansion of the future~$\leftcomp{\graph}{\vs}$
is performed by left-composing all of the resulting graphs with the
vertex~$\vs$, and touches~$\touchcomp{\vs}$ simply expand to the singleton
graph consisting of the touch.
%
%Note that it is not necessary to expand ``new'' bindings
%because the graph type is in NBNF.
%
Note that because NBNF has already expanded all ``new'' bindings,
there is no rule for expanding these, and
expansion does not generate new vertex names.
The latter is a key property in guaranteeing that expansion results in
well-formed graphs.

These three operations combine to form a normalization process that is correct:
any set of graphs that results from unrolling, normalizing, and expanding a
well-formed graph type is well-formed. This result is formalized by Theorem \ref{thm:norm-correct},
\ifapp
which is proven in Appendix \ref{app:soundness-proofs} alongside several necessary technical lemmas in Appendices \ref{app:lang-proofs} and \ref{app:soundness-proofs}.
\else
which is proven in the attached supplementary appendix alongside several necessary technical lemmas.
\fi

\begin{theorem}\label{thm:norm-correct}
  If~$\dagwf{\ectx}{\uspctx}{\utctx}{\graph}{\graphkind}$
  and~$\graph \unrstep^* \graph'$,
  then~$\bnnf{\graph'}$ exists and if~$\cgraph \in \getgraphs{\bnnf{\graph'}}$,
  then~$\cgraph$ is a well-formed graph.
\end{theorem}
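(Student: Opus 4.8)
The plan is to decompose the statement along the three phases of normalization from Section~\ref{sec:soundness-norm} --- unrolling of recursive bindings ($\unrstep$), reduction to new-$\beta$ normal form ($\bnnf{\cdot}$), and expansion into graphs ($\getgraphs{\cdot}$) --- and to show that each phase preserves an appropriate well-formedness invariant, then chain the three results. In detail: (i) unrolling preserves graph-type kinding; (ii) on a well-kinded graph type with an empty graph-type-variable context, $\bnnf{\cdot}$ is defined, returns a graph type of a restricted NBNF shape, and remains well-kinded under the ambient VS contexts extended by the fresh generators it introduces; and (iii) expanding a well-kinded NBNF graph type of kind $\kgraph$ yields only well-formed graphs. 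Because the theorem's $\gctx$ is empty, every graph type produced along the way stays graph-type-variable-closed, so (i)--(iii) compose directly; the case where $\graphkind$ is a $\dagpi$-kind is trivial, since then $\bnnf{\graph'}$ is headed by a $\dagpi$, a $\dagrec$, or a stuck application, on all of which $\getgraphs{\cdot}$ is empty (or undefined), leaving nothing to check.

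For (i), I would prove that $\dagwf{\gctx}{\uspctx}{\utctx}{\graph}{\graphkind}$ and $\graph \unrstep \graph'$ imply $\dagwf{\gctx}{\uspctx}{\utctx}{\graph'}{\graphkind}$ by induction on the unrolling derivation: the ``search'' rules such as \rulename{UR:Seq1}/\rulename{UR:Seq2} (and their analogues under the other formers) follow from the IH, and \rulename{UR:Rec} needs a graph-type-variable substitution lemma --- replacing $\gvar$ by $\dagrec{\gvar}{\graph}{}$, which is itself well-kinded under the same contexts by \rulename{DW:RecPi}, preserves the kind and the affine discipline on $\uspctx$ --- and the reflexive--transitive closure $\unrstep^*$ follows by iteration. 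For (ii), totality of $\bnnf{\cdot}$ follows by a simple induction on the number of graph-type constructors: all recursive calls shrink this measure, the $\kw{new}$-clause because substituting a fresh generator-rooted VS for a VS variable leaves the count unchanged while stripping a binder, and the $\kwtapp{\graph}{\vs_f}{\vs_t}$-clause because, when the head normalizes to a $\dagpi$, the recursive call is on its body after a VS substitution, which strictly decreases the count. Preservation of kinding for $\bnnf{\cdot}$ is then an induction using the VS-typing rules, a VS-to-generator renaming lemma (for the $\kw{new}$-clause), and the VS- and type-constructor equivalence judgments (for the $\beta$-clause); in the same pass I would record that an NBNF graph type is built only from $\emptygraph$, $\seqcomp$, $\dagor$, $\leftcomp{\graph}{\vspath}$, $\touchcomp{\vspath}$, $\dagrec{\gvar}{\graph}{}$, $\dagpi{\cdot}{\cdot}{\cdot}$, and applications $\kwtapp{\graph}{\vs_f}{\vs_t}$ whose head normalizes to a $\dagrec$, with no exposed $\kw{new}$ binding or reducible application, and that its closed sub-graph-types are again NBNF (as the text observes).

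For (iii) I would prove, by induction on the NBNF graph type $\graph$ with $\dagwf{\ectx}{\uspctx}{\utctx}{\graph}{\kgraph}$, the strengthened claim that every $\cgraph \in \getgraphs{\graph}$ is a well-formed graph and, moreover, that the vertices of $\cgraph$ occurring as the sink of a left-composition all lie among the ``available leaves'' supplied by $\uspctx$ (and are used there injectively), while the remaining vertices of $\cgraph$ are globally fresh. The $\dagrec$, $\dagpi$, and stuck-$\kwtapp$ cases are vacuous because $\getgraphs{\cdot}$ is empty there; the $\emptygraph$, $\leftcomp{\graph}{\vspath}$, and $\touchcomp{\vspath}$ cases are read off the shorthands of Figure~\ref{fig:dag} --- in particular the side condition $\vspath \notin \vertsof{\cgraph}$ required to form $\leftcomp{\cgraph}{\vspath}$ is exactly what the affine split $\uspsplit{\uspctx}{\uspctx_1}{\uspctx_2}$ in \rulename{DW:Spawn} guarantees, and a touch contributes only a fresh vertex and a dangling in-edge --- and the $\seqcomp$ and $\dagor$ cases combine the IHs, the disjointness of subgraph vertex sets demanded by sequential composition on graphs (and by the single-sink condition) being supplied by the disjointness of the split $\uspctx$ contexts for named sinks and by freshness for the anonymous vertices (for $\dagor$ no disjointness is needed, since any $\cgraph$ comes from a single branch). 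Composing (i)--(iii) with $\graphkind = \kgraph$ then yields the theorem.

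The main obstacle is calibrating the strengthened invariant in phase (iii) so that the single-sink and vertex-disjointness conditions pass cleanly through $\seqcomp$ and $\leftcomp{\graph}{\vspath}$: this forces one to make precise exactly which vertices a subgraph derived under a context $\uspctx$ is entitled to ``own'', which must track product VS types, their availability annotations, and the VS-type splitting relation of Figure~\ref{fig:vert-type-split}, and then to verify that this ownership is both preserved and correctly partitioned by each graph-type formation rule. A secondary, more mechanical burden is assembling the several substitution and renaming lemmas --- for graph-type variables, for VS variables, and for refreshing VS variables to generators --- that drive phases (i) and (ii).
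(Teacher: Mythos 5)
Your proposal matches the paper's proof: the theorem is proved there by chaining exactly your three phases, namely a lemma that unrolling preserves graph-type kinding (via a graph-type-variable substitution lemma for the \rulename{UR:Rec} case), a lemma that $\bnnf{\cdot}$ exists and preserves kinding under contexts extended with the fresh generators it introduces, and a lemma that expansion of a well-kinded NBNF graph type of kind $\kgraph$ yields well-formed graphs, strengthened with essentially your ownership invariant (every vertex of $\cgraph$ is either typeable at $\kwvty$ under the ambient affine context or fresh, with disjointness of split contexts supplying the no-duplicate-vertex conditions). No substantive differences to report.
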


\subsection{Cost Semantics and Soundness}\label{sec:soundness-cost}

We equip {\langname} with a cost semantics,
a big-step operational semantics that evaluates an expression and also produces
the computation graph that represents the execution.
The judgment is~$\eval{}{e}{v}{\cgraph}$, meaning that expression~$e$
evaluates to value~$v$, producing the cost graph~$\cgraph$.
The rules for this judgment are in Figure~\ref{fig:cost}, 
and the syntax for values are in Figure~\ref{fig:ext-syntax}.
In~\rulename{C:Future}, the body of the future is evaluated (in a real
execution, the body of the future will be evaluated in parallel, but the
big-step cost semantics deliberately abstracts away evaluation order)
and the future evaluates to a handle, a new syntactic form which records the
result of the future.
In addition, we evaluate the vertex structure~$\vs$ used to spawn the future
to a vertex path~$\vspath$, which is recorded by the handle.
The~\rulename{C:Touch} rule extracts both the vertex path and future result
from the handle.
In~\rulename{C:New}, a new generator~$\genseed$ is created and used to
generate a vertex structure which instantiates the variable~$\vvar$.

\begin{figure*}
  \small
  \centering
  \def \MathparLineskip {\lineskip=0.43cm}
  \begin{mathpar}
    \Rule{C:Value}
         {\strut}
         {\eval{\ppts}{v}{v}{\emptygraph}}
    \and
    %\Rule{C:Unit}
    %     {\strut}
    %     {\eval{\ppts}{\kwtriv}{\kwtriv}{\emptygraph}}
    %\and
    %\Rule{C:Fun}
    %     {\strut}
    %     {\eval{\ppts}{\kwfun{\vvar_f}{\vvar_t}{f}{x}{e}}
    %       {\kwfun{\vvar_f}{\vvar_t}{f}{x}{e}}{\emptygraph}}
    %\and
    \Rule{C:App}
         {\eval{\ppts}{e_1}{\kwfun{\vvar_f}{\vvar_t}{f}{x}{e}}{\cgraph_1}\\
           \eval{\ppts}{e_2}{v'}{\cgraph_2}\\\\
           \eval{\ppts,\ppt}
                {\sub{\tsub{\tsub{\sub{e}{\kwfun{\vvar_f}{\vvar_t}{f}{x}{e}}{f}}{\vs_f}{\vvar_f}}{\vs_t}{\vvar_t}}{v'}{x}}{v}{\cgraph_3}}
         {\eval{\ppts}{\kwapp{\ppt}{\kwtapp{e_1}{\vs_f}{\vs_t}}{e_2}}{v}
           {\cgraph_1 \seqcomp \cgraph_2 \seqcomp \cgraph_3}}
    \and
    \Rule{C:Pair}
         {\eval{\ppts}{e_1}{v_1}{\cgraph_1}\\
           \eval{\ppts}{e_2}{v_2}{\cgraph_2}}
         {\eval{\ppts}{\kwpair{e_1}{e_2}}{\kwpair{v_1}{v_2}}
           {\cgraph_1 \seqcomp \cgraph_2}}
    \and
    \Rule{C:Fst}
         {\eval{\ppts}{e}{\kwpair{v_1}{v_2}}{\cgraph}}
         {\eval{\ppts}{\kwfst{e}}{v_1}{\cgraph}}
         \and
         \ifextrarules
    \Rule{C:Snd}
         {\eval{\ppts}{e}{\kwpair{v_1}{v_2}}{\cgraph}}
         {\eval{\ppts}{\kwsnd{e}}{v_2}{\cgraph}}
         \and
         \fi
    %% \Rule{C:Par}
    %%      {\eval{\ppts}{e_1}{v_1}{\cgraph_1}\\
    %%        \eval{\ppts}{e_2}{v_2}{\cgraph_2}}
    %%      {\eval{\ppts}{\kwpar{e_1}{e_2}}{\kwpair{v_1}{v_2}}
    %%        {\cgraph_1 \parcomp \cgraph_2}}
    %% \and
    \Rule{C:InL}
         {\eval{\ppts}{e}{v}{\cgraph}}
         {\eval{\ppts}{\kwinl{e}}{\kwinl{v}}{\cgraph}}
    \and
    \ifextrarules
    \Rule{C:InR}
         {\eval{\ppts}{e}{v}{\cgraph}}
         {\eval{\ppts}{\kwinr{e}}{\kwinr{v}}{\cgraph}}
    \and
    \fi
    \Rule{C:CaseL}
         {\eval{\ppts}{e}{\kwinl{v}}{\cgraph_1}\\
           \eval{\ppts}{\sub{e_1}{v}{x}}{v'}{\cgraph_2}}
         {\eval{\ppts}{\kwcase{e}{x}{e_1}{y}{e_2}}{v'}
           {\cgraph_1 \seqcomp \cgraph_2}}
         \and
         \ifextrarules
    \Rule{C:CaseR}
         {\eval{\ppts}{e}{\kwinr{v}}{\cgraph_1}\\
           \eval{\ppts}{\sub{e_2}{v}{y}}{v'}{\cgraph_2}}
         {\eval{\ppts}{\kwcase{e}{x}{e_1}{y}{e_2}}{v'}
           {\cgraph_1 \seqcomp \cgraph_2}}
         \and
         \fi
    \Rule{C:Roll}
         {\eval{\ppts}{e}{v}{\cgraph}}
         {\eval{\ppts}{\kwroll{e}}{\kwroll{v}}{\cgraph}}
    \and
    \Rule{C:Unroll}
         {\eval{\ppts}{e}{\kwroll{v}}{\cgraph}}
         {\eval{\ppts}{\kwunroll{e}}{v}{\cgraph}}
    \and
    \Rule{C:Future}
         {\eval{\ppts}{e}{v}{\cgraph}\\
           \vseval{\vs}{\vspath}}
         {\eval{\ppts}{\kwfuture{\vs}{e}}{\kwhandle{\vspath}{v}}
           {\leftcomp{\cgraph}{\vspath}}}
    \and
    %% \Rule{C:Handle}
    %%      {\strut}
    %%      {\eval{\ppts}{\kwhandle{\vspath}{v}}{\kwhandle{\vspath}{v}}{\emptygraph}}
    %% \and
    \Rule{C:Touch}
         {\eval{\ppts}{e}{\kwhandle{\vspath}{v}}{\cgraph}}
         {\eval{\ppts}{\kwforce{e}}{v}{\cgraph \seqcomp \touchcomp{\vspath}}}
    \and
    \Rule{C:New}
         {\genseed \fresh\\\\\eval{\ppts}{\sub{e}{\vertgen{\vsty}{\genseed}}{\vvar}}{v}{\cgraph}}
         {\eval{\ppts}{\kwnewf{\vvar}{\vsty}{e}}
           {v}{\cgraph}}
%    \and
%    \Rule{C:LetVertsPair}
%         {\eval{\ppts}{\sub{e}{\vs_1, \vs_2}{\vvar_1, \vvar_2}}{v}{\cgraph}\\
%           \vseval{\vs}{\kwpair{\vs_1}{\vs_2}}}
%         {\eval{\ppts}{\kwletpair{\vvar_1}{\vvar_2}{\vs}{e}}
%           {v}{\cgraph}}
%    \and
%    \Rule{C:LetVertsGen}
%         {\eval{\ppts}{\sub{e}{\vgenl{\vspath}, \vgenr{\vspath}}
%             {\vvar_1, \vvar_2}}{v}{\cgraph}\\
%           \vseval{\vs}{\vspath}}
%         {\eval{\ppts}{\kwletpair{\vvar_1}{\vvar_2}{\vs}{e}}
%           {v}{\cgraph}}
%    \and
%    \Rule{C:CaseVertsL}
%         {\eval{\ppts}{\sub{e_1}{\vs'}{\vvar_1}}{v}{\cgraph}\\
%           \vseval{\vs}{\kwinl{\vs'}}}
%         {\eval{\ppts}{\kwcase{\vs}{\vvar_1}{e_1}{\vvar_2}{e_2}}
%           {v}{\cgraph}}
%    \and
%    \Rule{C:CaseVertsR}
%         {\eval{\ppts}{\sub{e_2}{\vs'}{\vvar_2}}{v}{\cgraph}\\
%           \vseval{\vs}{\kwinr{\vs'}}}
%         {\eval{\ppts}{\kwcase{\vs}{\vvar_1}{e_1}{\vvar_2}{e_2}}
%           {v}{\cgraph}}

  \end{mathpar}
  \caption{Cost Semantics for {\langname} (selected).
    Rules symmetric to these are omitted.}
  \label{fig:cost}
\end{figure*}

The soundness theorem for the graph type system of {\langname} is that
if a program~$e$ has a graph type~$\graph$ and evaluates to produce
a graph~$\cgraph$, then~$\cgraph$ is described by~$\graph$ (that is,~$\cgraph$
should be in the set of graphs obtained by
normalizing~$\graph$ using the machinery in the previous subsection).
This is stated formally as Theorem~\ref{thm:soundness}.
The formal statement of the theorem also includes a context~$\gutctx$ containing
generators created during execution which may be captured in the result
value~$v$.

\begin{theorem}\label{thm:soundness}
  If~$\tywithdag{\ectx}{\uspctx}{\utctx}{\ectx}{e}{\tau}{\graph}$
  and~$\eval{}{e}{v}{\cgraph}$,
  then there exists a~$\gutctx$
  such that $\genseed \in \gutctx$ implies $\genseed \fresh$
  and $\tywithdag{\ectx}{\ectx}{\utctx, \gutctx}{\ectx}{v}{\tau}{\emptygraph}$
  and there exists a~$\graph'$ such that~$\graph \unrstep^* \graph'$
  and~$\cgraph \in \getgraphs{\bnnf{\graph'}}$.
\end{theorem}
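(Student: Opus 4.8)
The plan is to prove Theorem~\ref{thm:soundness} by induction on the derivation of the cost-semantics judgment $\eval{}{e}{v}{\cgraph}$; the statement is already general enough to serve directly as the induction hypothesis, and note that for such a derivation to exist every vertex-structure annotation in $e$ (hence in $\graph$) must be closed, so $\uspctx$ and $\utctx$ may be taken to be generator-only contexts $\guspctx$, $\gutctx$. Before the induction I would establish three groups of supporting lemmas. First, a \emph{value-typing} lemma: every well-typed value $v$ can be re-typed with an empty affine context and graph type $\emptygraph$ after extending $\utctx$ with the generators mentioned by the $\kwhandle{\vspath}{\cdot}$ nodes it contains; this supplies the value part of the conclusion and determines $\gutctx$, and uses the monoid laws of graph-type equivalence to collapse the $\emptygraph\seqcomp\emptygraph$'s that arise from, e.g., pairs of values. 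Second, the usual \emph{substitution lemmas}: substituting a well-typed value for an expression variable, substituting a well-typed closed VS for a VS variable (which interacts with the VS-equivalence/normalization judgment $\vseval{\cdot}{\cdot}$ as expected), and---crucially for $\rulename{S:Fun}$---substituting a recursive function value for its self-reference $f$, which \emph{simultaneously} replaces the graph-type variable $\gvar$ by the recursive graph type $\dagrec{\gvar}{\dagpi{\hastycl{\vvar_f}{\vsty_f}}{\hastycl{\vvar_t}{\vsty_t}}{\graph}}{}$. Third, \emph{bookkeeping facts}: $\vseval{\cdot}{\cdot}$ is deterministic and normalizing; $\bnnf{\cdot}$ commutes with $\seqcomp$, $\dagor$, $\leftcomp{\cdot}{\cdot}$ and $\touchcomp{\cdot}$ (immediate from Figure~\ref{fig:bnnf}), in particular $\bnnf{\leftcomp{\graph}{\vs}}=\leftcomp{\bnnf{\graph}}{\vspath}$ when $\vseval{\vs}{\vspath}$; $\getgraphs{\cdot}$ distributes over these connectives as the graph shorthands of Figure~\ref{fig:dag} prescribe; and $\unrstep^{*}$ is a congruence for $\seqcomp$, $\leftcomp{\cdot}{\cdot}$, application nodes and $\Pi$-bindings (from the ``search'' unrolling rules).

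With these lemmas, most cases of the induction are mechanical. For a composite rule---$\rulename{C:Pair}$, $\rulename{C:CaseL}$, $\rulename{C:Touch}$, $\rulename{C:Future}$, $\rulename{C:Roll}$, $\rulename{C:Unroll}$, and so on---I would first invert the typing of $e$, peeling off any uses of $\rulename{S:Type-Eq}$ through a generation lemma stated modulo type-constructor equivalence, to expose the typings of the subexpressions $e_i$ (with graph types $\graph_i$). Applying the induction hypothesis to each premise gives fresh-generator contexts $\gutctx_i$ and graph types $\graph_i'$ with $\graph_i\unrstep^{*}\graph_i'$ and $\cgraph_i\in\getgraphs{\bnnf{\graph_i'}}$. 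I then take $\gutctx\defeq\bigcup_i\gutctx_i$ (together with the one fresh generator $\genseed$ in the $\rulename{C:New}$ case, where $\rulename{C:New}$ and the clause of $\bnnf{\cdot}$ for $\kw{new}$-bindings introduce matching fresh roots because every $\kw{new}$-binding in $\graph$ comes from a $\kw{new}$-expression in $e$), build $\graph'$ by combining the $\graph_i'$ with the connective appearing in the typing rule's conclusion, and lift the individual $\graph_i\unrstep^{*}\graph_i'$ to $\graph\unrstep^{*}\graph'$ by congruence. Membership $\cgraph\in\getgraphs{\bnnf{\graph'}}$ then follows from the bookkeeping facts---for $\rulename{C:Pair}$, $\cgraph_1\seqcomp\cgraph_2\in\getgraphs{\bnnf{\graph_1'}\seqcomp\bnnf{\graph_2'}}=\getgraphs{\bnnf{\graph_1'\seqcomp\graph_2'}}$; for $\rulename{C:CaseL}$ we keep the left disjunct of $\graph_2\dagor\graph_3$; for $\rulename{C:Future}$ we use $\bnnf{\leftcomp{\graph}{\vs}}=\leftcomp{\bnnf{\graph}}{\vspath}$ with $\vspath$ the unique normal form recorded by the handle; and $\rulename{C:Roll}/\rulename{C:Unroll}$ merely rewrite the type along the parameterized-recursive-type unfolding, exactly mirroring $\rulename{S:Roll}/\rulename{S:Unroll}$ and leaving the graph type alone.

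The main obstacle is the application case $\rulename{C:App}$, where the call is given graph type $\graph_1\seqcomp\graph_2\seqcomp\kwtapp{\graph_3}{\vs_f}{\vs_t}$ with $\graph_3=\dagrec{\gvar}{\dagpi{\hastycl{\vvar_f}{\vsty_f}}{\hastycl{\vvar_t}{\vsty_t}}{\graph_0}}{}$ (inverting the typing of the returned function value via $\rulename{S:Fun}$ forces $\graph_3$ to have this recursive shape), while the induction hypothesis on the third premise speaks about the \emph{substituted} body. I would close this in four steps. (i) By the recursive-function, VS and value substitution lemmas, the substituted body has graph type $\graph_b\defeq\gsub{\gsub{\gsub{\graph_0}{\graph_3}{\gvar}}{\vs_f}{\vvar_f}}{\vs_t}{\vvar_t}$; write $H\defeq\gsub{\graph_0}{\graph_3}{\gvar}$. (ii) The induction hypothesis gives $\graph_b\unrstep^{*}\graph_b'$ with $\cgraph_3\in\getgraphs{\bnnf{\graph_b'}}$. (iii) Since $\vs_f$ and $\vs_t$ contain no recursive graph-type bindings, every $\unrstep$-step inside $\graph_b$ pulls back to one inside $H$, so there is $H'$ with $H\unrstep^{*}H'$ and $\graph_b'=\gsub{\gsub{H'}{\vs_f}{\vvar_f}}{\vs_t}{\vvar_t}$. (iv) Applying $\rulename{UR:Rec}$ to $\graph_3$ inside the application node, then congruence through the application node and the $\Pi$-binding, gives $\kwtapp{\graph_3}{\vs_f}{\vs_t}\unrstep^{*}\kwtapp{(\dagpi{\hastycl{\vvar_f}{\vsty_f}}{\hastycl{\vvar_t}{\vsty_t}}{H'})}{\vs_f}{\vs_t}$, and since $\bnnf{\cdot}$ does not reduce under a $\Pi$-binder it immediately performs the pending application, so $\bnnf{\kwtapp{(\dagpi{\hastycl{\vvar_f}{\vsty_f}}{\hastycl{\vvar_t}{\vsty_t}}{H'})}{\vs_f}{\vs_t}}=\bnnf{\gsub{\gsub{H'}{\vs_f}{\vvar_f}}{\vs_t}{\vvar_t}}=\bnnf{\graph_b'}$, whose expansion still contains $\cgraph_3$. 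Composing with the $\graph_1$, $\graph_2$ parts as in the previous paragraph gives $\graph\unrstep^{*}\graph'$ with $\cgraph_1\seqcomp\cgraph_2\seqcomp\cgraph_3\in\getgraphs{\bnnf{\graph'}}$, i.e.\ $\cgraph\in\getgraphs{\bnnf{\graph'}}$.

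Two issues demand care throughout and, together with the $\rulename{C:App}$ argument above, account for essentially all of the real work. First, fresh names must line up across the evaluation, the $\kw{new}$-elimination inside $\bnnf{\cdot}$, and the fresh vertices introduced by the $\seqcomp$, $\leftcomp{\cdot}{\cdot}$ and $\touchcomp{\cdot}$ graph shorthands; this is handled by threading a single pool of fresh symbols (the label parameter $\ppts$ of the cost semantics and the generators) through everything, so that ``$\cgraph\in\getgraphs{\bnnf{\graph'}}$'' is read up to the consistent choice of these fresh names. Second, the graph-type system nominally assigns composite expressions graph types like $\emptygraph\seqcomp\graph$ rather than $\graph$, so both the value-typing lemma and the final assembly rely on the unit laws of graph-type equivalence; since the typing rules respect graph-type equivalence (the companion of $\rulename{S:Type-Eq}$), these rewrites are available wherever needed. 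Finally, well-formedness of the resulting $\cgraph$---not part of the statement of Theorem~\ref{thm:soundness} but useful downstream---follows immediately by Theorem~\ref{thm:norm-correct}.
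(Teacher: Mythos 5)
Your proposal is correct and follows essentially the same route as the paper: induction on the evaluation derivation, substitution lemmas, a framing/congruence lemma for $\unrstep^*$ (the paper's Lemma~\ref{lem:unroll-gluing}), the pull-back of unrolling through VS substitution (Lemma~\ref{lem:graph-sub-unroll}), and the observation that $\bnnf{\cdot}$ performs the pending $\beta$-reduction after \rulename{UR:Rec} exposes the $\Pi$-binder in the application case. The minor presentational differences (stating value typing as a separate lemma rather than reading it off the induction hypothesis, and invoking unit laws where the paper simply identifies $\emptygraph\seqcomp\emptygraph$ with $\emptygraph$) do not change the argument.
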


\ifapp
The proof of the theorem can be found in Appendix \ref{app:soundness-proofs}
alongside several necessary technical lemmas in Appendices \ref{app:lang-proofs} and \ref{app:soundness-proofs}.
\else
The proof of the theorem, as well as statements and proofs of several
necessary technical lemmas, appears in the full version of the paper~\citep{full}.
These lemmas include:
\begin{itemize}
\item A number of substitution results for expressions, types,
  vertex structures, etc.
\item If $\tywithdag{\gctx}{\uspctx}{\utctx}{\ctx}{e}{\tau}{\graph}$
  then~$\iskind{\gctx}{}{\utctx}{\ectx}{\tau}{\kwtykind}$
  and~$\dagwf{\gctx}{\uspctx}{\utctx}{\graph}{\kgraph}$
  (assuming~$\ctx$ contains only well-kinded types).
\item A ``framing'' property that allows us to repeatedly unroll a sub-graph
  type while keeping the rest of the graph type the same,
  for example, if~$\graph_1 \unrstep^* \graph_1'$
    then~$\graph_1 \seqcomp \graph_2 \unrstep^* \graph_1' \seqcomp \graph_2$.
\end{itemize}
\fi

\section{Elaboration of Recursive Types with Vertex Structures}\label{sec:infer}

Thus far, we have presented the annotated language~{\langname} containing
recursive data types~$\kwprec{\convar}{\hastycl{\vvar}{\vsty}}{\con}{\vspath}$,
annotated with a vertex path~$\vspath$ of type~$\vsty$ that provides
vertex names for data structures of the recursive type.
We have motivated that~$\vspath$ should have a structure that in some sense
``maps on'' to the recursive structure of the list so that any futures in
the structure have a corresponding vertex name.
As examples, a list data type corresponds to an infinite stream of vertices,
and a binary tree data type corresponds to an infinite binary tree of vertices.
%and a 2-3 tree (where each node may have~2 or~3 children) corresponds to an
%infinite ternary tree of vertices.
%
As discussed, the annotated language is provided merely as a core calculus
for expressing the ideas of the graph type system; the annotations can be
inferred from unannotated code by our implementation.

Other than the addition of vertex structures, the general structure of the
algorithm for inferring these annotations is similar to that of
GML~\citep{Muller22}, and the details of the algorithm are largely outside the
scope of this paper.
However, one important and non-obvious fact for inferring annotations
for~$\langname$ is that it is indeed possible to annotate any recursive
data structure with a corresponding vertex path.
Showing this fact is the goal of this section.
We do so by defining a set of rules for annotating unannotated types and
values with vertex structure annotations.
For simplicity, the system we present in this section is declarative and not
algorithmic, so it still abstracts away many of the complexities of our
inference algorithm, but we show that the rules are complete  and thus that any
recursive type may be so annotated.

We first define a syntax for unannotated types~$\undecty$ and
unannotated values~$\undece$.
\[
\begin{array}{r l l}
  \undecty & \bnfdef &
  \convar \bnfalt
  \kwunit \bnfalt
  \kwpi{\hastycl{\vvar_t}{\vsty_t}}
                    {\hastycl{\vvar_f}{\vsty_f}}
                    {\kwarrow{\con_1}{\con_2}{}} \bnfalt
    \kwprod{\undecty}{\undecty} \bnfalt
    \kwsum{\undecty}{\undecty} \bnfalt
    \undecfutty{\undecty} \bnfalt

    \mu \convar. \undecty\\

    \undece & \bnfdef &
    \kwtriv \bnfalt
    \kwfun{\vvar}{\vvar}{f}{x}{e} \bnfalt
    \kwpair{\undece}{\undece} \bnfalt
    \kwinl{\undece} \bnfalt
    \kwinr{\undece} \bnfalt
    \kwroll{\undece} \bnfalt
    \undechandle{\undece}
\end{array}
\]
Unannotated types consist of the unit type, functions, products, and sums,
as well as an unannotated future type and an unannotated recursive type.
Note that the annotation of functions is orthogonal to the annotation of
recursive data types; we assume that function types and function values have
already been annotated and include annotated function types and annotated
function values as unannotated types and unannotated values, respectively.
The unannotated future type~$\undecfutty{\undecty}$ is similar to the annotated
future type~$\kwfutt{\con}{\vs}$ but is not annotated with a VS.
Similarly, the unannotated recursive type~$\kwrec{\convar}{\undecty}$
binds a type variable but does not bind a VS variable and
does not take a VS as an argument.
Because unannotated types do not interact with vertex structures, there is
no type-level lambda and all unannotated types have kind~$\kwtykind$ (and so
we do not distinguish between ``unannotated types'' and ``unannotated type
constructors'').
Unannotated values differ from values~$v$ only in that future handles are
not annotated with vertex paths.

Figure~\ref{fig:annot-vs-types} defines the
judgment~$\futurify{\vstctx}{\vs}{\undecty}{\con}{\vsty}$.
This indicates that~$\undecty$ may be annotated to the type constructor~$\con$
(which will, by construction, have kind~$\kwtykind$).
It also returns a vertex structure type~$\vsty$ that ``corresponds'' to the
type~$\undecty$.
For recursive types~$\kwrec{\convar}{\undecty}$, the VS type~$\vsty$ is the
type of the VS annotation for the recursive type (that is,
$\kwrec{\convar}{\undecty}$ will be annotated to be
$\kwprec{\convar}{\hastycl{\vvar}{\vsty}}{\con}{\vs}$
for some~$\con$ and some~$\vs$).
As an example, if~$\undecty$ is the type of int future
lists,
then~$\vsty$ will be (equivalent to) the type of vertex
streams,~$\kwvscorec{\vstyvar}{\vsprod{\kwvty}{\isav}{\vstyvar}{\isav}}$.
The judgment takes a type variable context~$\vstctx$ mapping type variables
to kinds (these will be annotated types and so their kinds will not be~$\kwtykind$).
It is also parameterized by a vertex structure~$\vs$ to use for annotations.
When annotating a closed unannotated type~$\undecty$, this parameter will
simply be instantiated with a fresh vertex path~$\vertgen{}{\genseed}$ to
derive~$\futurify{\ectx}{\vertgen{}{\genseed}}{\undecty}{\tau}{\vsty}$.
The returned type~$\tau$ would be annotated with projections of~$\vertgen{}{\genseed}$.
The returned VS type~$\vsty$ would be the type that~$\genseed$
should be assigned in order for~$\tau$ to be well-kinded.

Rule~\rulename{F:TyVar} looks up the type variable~$\convar$ in the context.
By construction, its kind will be of the form~$\kwkindarr{\vsty}{\kwtykind}$,
indicating that to properly annotate the use of the variable~$\convar$,
it must be applied to a VS of type~$\vsty$.
We use the VS~$\vs$ for the annotation and return the
type~$\vsty$ as the required type of~$\vs$.
The unit and function types do not require additional annotations, and so
are simply returned.\footnote{The returned VS type is~$\kwvty$,
  which will result in the
  addition of unnecessary vertices to the final VS; it would
  be straightforward to add a multiplicative unit to VS types,
  which would be the most appropriate VS type to return here, but
  we have not done so this far to keep the VS type language as
  simple as possible.}
Rule~\rulename{F:Prod} takes a VS~$\vs$ and annotates
the first component~$\undecty_1$ with the left projection of~$\vs$
and the second component~$\undecty_2$ with the right projection.
The type required for~$\vs$ is thus the product of the two returned types.
Rule~\rulename{F:Sum}, somewhat counterintuitively, also returns a product
of the two VS types.
This is because if a data structure can take one of two forms, the
corresponding VS must offer either
possibility.\footnote{As an optimization, we could take a ``maximum'' over the
  two VS types. For example, in a 2-3 tree, where each node
  may have two or three children, the corresponding VS could
  always offer three branches and a 2-node would use the first two.}
Rule~\rulename{F:Fut} takes~$\vs$ to be a product whose second component
is a single vertex, which it uses to annotate the
future; the first component is used to annotate the future's return type.
Finally, rule~\rulename{F:Rec} annotates a recursive
type~$\kwrec{\convar}{\undecty}$.
It begins by adding~$\convar$ to the context with
kind~$\kwkindarr{\kwvscorec{\vstyvar}{\vsty}}{\kwtykind}$
(this is the only truly non-algorithmic feature of these rules;
we do not discuss how to construct~$\vsty$).
With this context, it annotates~$\undecty$.
The resulting VS type is rolled back into the corecursive
type~$\kwvscorec{\vstyvar}{\vsty}$, which is the type required for~$\vs$.

\paragraph{Example.}
We can represent the type of a list of integer futures as an unannotated
type~$\undecty$:
\[\undecty = \kwrec{\convar}{\kwsum{\kwunit}
  {(\kwprod{\undecfutty{\kw{int}}}{\convar})}}
\]
Using the rules of Figure~\ref{fig:annot-vs-types},
we can infer the following annotation for~$\undecty$:
\[
\futurify{\ectx}{\vs}{\undecty}
         {\kwprec{\convar}{\hastycl{\vvar}{\vsty}}
           {\kwsum{\kwunit}
             {(\kwprod{\kwfutt{\kw{int}}{\kwfst{\kwsnd{\vvar}}}}
               {\kwvapp{\convar}{(\kwsnd{\kwsnd{\vvar}})}})}}
           {\vs}}
         {\vsty}
\]
where~$\vsty = \kwvscorec{\vstyvar}{\vsprod{\kwvty}{\isav}
  {(\vsprod{\kwvty}{\isav}{\vstyvar}{\isav})}{\isav}}$.

The VS corresponding to~$\undecty$ is a stream of vertices
(note that because we treat VS types equi-corecursively, the VS type above is
equivalent to~$\kwvscorec{\vstyvar}{\vsprod{\kwvty}{\isav}{\vstyvar}{\isav}}$
but unrolled slightly).
In the body of the recursive annotated type, which is
$\kwsum{\kwunit}
{(\kwprod{\kwfutt{\kw{int}}{\kwfst{\kwsnd{\vvar}}}}
  {\kwvapp{\convar}{(\kwsnd{\kwsnd{\vvar}})}})}$,
the first vertex of the stream is discarded (this is an effect of mapping the
type~$\kwunit$ to the VS type~$\kwvty$ even though it does not need a vertex),
the second vertex of the stream (the first vertex of the tail) is used for the
future and the remainder (the tail of the tail) is passed to the recursive
instance of the type.

%% As another example, we would represent the type \texttt{int pipe} from
%% Section~\ref{sec:overview} as
%% \[\undecty = \kwrec{\convar}{\kwprod{\kw{int}}{\undecfutty{\convar}}}
%% \]
%% We can infer the following annotation:
%% \[
%% \futurify{\ectx}{\vs}{\undecty}
%%          {\kwprec{\convar}{\hastycl{\vvar}{\vsty}}
%%            {\kwprod{\kw{int}}{\kwfutt{(\kwvapp{\convar}{(\kwfst{\kwsnd{\vs}})})}
%%                {\kwsnd{\kwsnd{\vs}}}}}
%%            {\vs}}
%%          {\vsty}
%% \]
%% where~$\vsty = \kwvscorec{\vstyvar}{\vsprod{\kwvty}{\isav}
%%   {(\vsprod{\vstyvar}{\isav}{\kwvty}{\isav})}{\isav}}$
%% which is also isomorphic to a stream of vertices.

\begin{figure*}
  \small
  \centering
  \def \MathparLineskip {\lineskip=0.43cm}

\begin{mathpar}
  \Rule{F:TyVar}
       {\strut}
       {\futurify{\vstctx,\haskind{\convar}{\kwkindarr{\vsty}{\kwtykind}}}
         {\vs}
         {\convar}{\kwvapp{\convar}{\vs}}
         {\vsty}
       }
  \and
  \Rule{F:Unit}
       {\strut}
       {\futurify{\vstctx}{\vs}{\kwunit}{\kwunit}{\kwvty}}
  \and
  \Rule{F:Fun}
       {
         \strut
       }
       {
         \futurify{\vstctx}{\vs}
                  {\kwpi{\hastycl{\vvar_f}{\vsty_f}}{\hastycl{\vvar_t}{\vsty_t}}
                    {\kwarrow{\con_1}{\con_2}{\graph}}
                  }
                  {\kwpi{\hastycl{\hastycl{\vvar_f}{\vsty_f}}{\vvar_t}{\vsty_t}}
                    {\kwarrow{\con_1}{\con_2}{\graph}}}
                  {\kwvty}
       }
  \and
  \Rule{F:Prod}
       {
         \futurify{\vstctx}{\kwfst{\vs}}{\undecty_1}{\con_1}{\vsty_1}\\
         \futurify{\vstctx}{\kwsnd{\vs}}{\undecty_2}{\con_2}{\vsty_2}
       }
       {
         \futurify{\vstctx}{\vs}{\kwprod{\undecty_1}{\undecty_2}}
                  {\kwprod{\con_1}{\con_2}}
                  {\vsprod{\vsty_1}{\isav}{\vsty_2}{\isav}}
       }
  \and
  \Rule{F:Sum}
       {
         \futurify{\vstctx}{\kwfst{\vs}}{\undecty_1}{\con_1}{\vsty_1}\\
         \futurify{\vstctx}{\kwsnd{\vs}}{\undecty_2}{\con_2}{\vsty_2}
       }
       {
         \futurify{\vstctx}{\vs}{\kwsum{\undecty_1}{\undecty_2}}
                  {\kwsum{\con_1}{\con_2}}
                  {\vsprod{\vsty_1}{\isav}{\vsty_2}{\isav}}
       }
  \and
  \Rule{F:Fut}
       {\futurify{\vstctx}{\kwfst{\vs}}{\undecty}{\con}{\vsty}
       }
       {\futurify{\vstctx}{\vs}{\undecfutty{\undecty}}
         {\kwfutt{\con}{\kwsnd{\vs}}}
         {\vsprod{\vsty}{\isav}{\kwvty}{\isav}}
       }
  \and
  \Rule{F:Rec}
       {
         \futurify{\vstctx,
           \haskind{\convar}{\kwkindarr{\kwvscorec{\vstyvar}{\vsty}}{\kwtykind}}}
                  {\vvar}
                  {\undecty}{\con}
                  {\sub{\vsty}{\kwvscorec{\vstyvar}{\vsty}}{\vstyvar}}
       }
       {
         \futurify{\vstctx}{\vs}
                  {\kwrec{\convar}{\undecty}}
                  {\kwprec{\convar}
                    {\hastype{\vvar}{\kwvscorec{\vstyvar}{\vsty}}}
                    {\con}{\vs}}
                  {\kwvscorec{\vstyvar}{\vsty}}
       }
\end{mathpar}
  \caption{Annotating types with vertex structures.}
  \label{fig:annot-vs-types}
\end{figure*}

The judgment described above declaratively shows a correspondence between
unannotated types and the vertex structure types required to annotate them.
Later in this section, we show that this relation is complete with respect to
well-kinded unannotated types, and thus that any type has a corresponding
VS type.
We next wish to show that a VS of the returned VS type actually
does suffice to provide all necessary vertices for a data structure of the
given type.
We do this using another judgment that annotates unannotated values.
This judgment is defined in Figure~\ref{fig:annot-vs-values}
and takes the form~$\futurifye{\vspath}{\undece}{v}$,
where~$\vspath$ is a vertex path to use for annotation (similar to the type
annotation judgment above),~$\undece$ is an unannotated value, and~$v$
is the annotated value.
We restrict annotations of values to vertex paths since the value $\kwhandle{\vspath}{v}$ may only use vertex paths $\vspath$ as the handle.
Otherwise, annotation of values proceeds in much the same way as annotation of types.
Rule~\rulename{FE:Pair} uses the two components of~$\vspath$
to annotate the components of the pair.
Rules~\rulename{FE:InL} and~\rulename{FE:InR} use the first
and second components, respectively, of~$\vspath$ to annotate left and right
injections (recall that, for a sum type,~$\vspath$ is given a product
type so that the two components of~$\vspath$ may be used for the two
injections).
Finally, just as~\rulename{F:Fut} uses the first component of~$\vs$
to annotate the type of the future's payload and the second component as the
vertex for the future, rule~\rulename{FE:Handle} uses~$\kwfst{\vspath}$ to
annotate the payload and~$\kwsnd{\vspath}$ to annotate the handle itself.

\begin{figure*}
  \small
  \centering
  \def \MathparLineskip {\lineskip=0.43cm}
  \begin{mathpar}
    \Rule{FE:Unit}
         {\strut}
         {\futurifye{\vspath}{\kwtriv}{\kwtriv}}
    \and
    \Rule{FE:Fun}
         {\strut}
         {\futurifye{\vspath}{\kwfun{\vvar_f}{\vvar_t}{f}{x}{e}}
           {\kwfun{\vvar_f}{\vvar_t}{f}{x}{e}}
         }
    \and
    \Rule{FE:Pair}
         {\futurifye{\kwfst{\vspath}}{\undece_1}{v_1}\\
           \futurifye{\kwsnd{\vspath}}{\undece_2}{v_2}
         }
         {
           \futurifye{\vspath}{\kwpair{\undece_1}{\undece_2}}
                     {\kwpair{v_1}{v_2}}
         }
    \and
    \Rule{FE:InL}
         {
           \futurifye{\kwfst{\vspath}}{\undece}{v}
         }
         {
           \futurifye{\vspath}{\kwinl{\undece}}{\kwinl{v}}
         }
    \and
    \Rule{FE:InR}
         {
           \futurifye{\kwsnd{\vspath}}{\undece}{v}
         }
         {
           \futurifye{\vspath}{\kwinr{\undece}}{\kwinr{v}}
         }
    \and
    \Rule{FE:Roll}
         {
           \futurifye{\vspath}{\undece}{v}
         }
         {
           \futurifye{\vspath}{\kwroll{\undece}}{\kwroll{v}}
         }
    \and
    \Rule{FE:Handle}
         {
           \futurifye{\kwfst{\vspath}}{\undece}{v}
         }
         {
           \futurifye{\vspath}{\undechandle{\undece}}
                     {\kwhandle{\kwsnd{\vspath}}{v}}
         }
  \end{mathpar}
  \caption{Annotating values with vertex structures.}
  \label{fig:annot-vs-values}
\end{figure*}

\paragraph{Example.}
Consider the list of integer futures from above.
We claimed that the correct VS type for this type
is~$\vsty = \kwvscorec{\vstyvar}{\vsprod{\kwvty}{\isav}{\vstyvar}{\isav}}$.
The rules of Figure~\ref{fig:annot-vs-values} provide a ``recipe'' for
constructing a future list using a vertex path of VS type~$\vsty$.
As an example, consider the unannotated value
\[\kwroll{\kwinr{\kwpair{\undechandle{1}}
      {\kwroll{\kwinr{\kwpair{\undechandle{2}}{
            \kwroll{\kwinl{\kwtriv}}}}}}}}
\]
which represents the list containing two future handles, one returning~1 and
the other returning~2.
Applying the rules, we get the expression
\[\kwroll{\kwinr{\kwpair{\kwhandle{\kwfst{\kwsnd{\vspath}}}{1}}{
      \kwroll{\kwinr{\kwpair{\kwhandle{\kwfst{\kwsnd{\kwsnd{\kwsnd{\vspath}}}}}{2}}{
            \kwroll{\kwinl{\kwtriv}}}}}}}}
\]
As described above, the futures take
consecutive odd vertices from the stream~$\vspath$.

The main result of this section has three components.
First, any well-kinded unannotated type may be matched with an annotated type
by the rules of Figure~\ref{fig:annot-vs-types}.
Second, if a well-kinded unannotated type is annotated with a VS of the VS type
returned by the annotation judgment, then the annotated type is also well-kinded.
Third, if an unannotated type~$\undecty$ is annotated with a vertex path
(that is, if~$\futurify{\ectx}{\vspath}{\undecty}{\tau}{\vsty}$),
then any well-typed unannotated value of type~$\undecty$
may be annotated with $\vspath$ by the rules of Figure~\ref{fig:annot-vs-values},
and the annotated value is well-typed when $\vspath$ has type $\vsty$.
Moreover, to show that $\vspath$ has ``enough'' vertices to fully
annotate the value with unique vertices, we show that the annotated value
is well-typed under a new typing judgment that uses only an affine context
for vertices.
Usually, values would be typed with the unrestricted
context~$\utctx$, because a data structure is allowed to contain multiple
handles to the same future, but in this case, we wish to show that we
{\em can} restrict data structures to use new vertices for each handle.
We write the new judgment~$\affinetyped{\uspctx}{v}{\tau}$.
The rules are similar to the standard
typing rules, but use the affine context~$\uspctx$ for typing handles.
This rule for typing handle values
is given in Figure \ref{fig:affine-handle-statics}.
Values always have the graph type~$\emptygraph$, so we omit the graph
type from the judgment.
\ifapp
The full set of affine typing rules for values is given in Figure \ref{fig:affine-statics} in the appendix.
\else
The remaining rules are straightforward and are deferred to the supplementary
appendix for space reasons.
\fi

\begin{figure}
\begin{minipage}{0.45\textwidth}
  \centering
  \def \MathparLineskip {\lineskip=0.43cm}
  \begin{mathpar}
    \Rule{SV:Handle}
         {\uspsplit{\uspctx}{\uspctx_1}{\uspctx_2}\\\\
           \affinetyped{\uspctx_1}{v}{\tau}\\
           \vsistype{\uspctx_2}{\ectx}{\vspath}{\kwvty}}
         {\affinetyped{\uspctx}{\kwhandle{\vspath}{v}}
           {\kwfutt{\tau}{\vspath}}}
  \end{mathpar}
  \caption{Affine typing rule for handle values.}
  \label{fig:affine-handle-statics}
\end{minipage}%
\begin{minipage}{0.55\textwidth}
  \[
  \begin{array}{r l l l}
	\unannvstctx{\ectx} & \defeq & \ectx &\\
    \unannvstctx{\vstctx, \haskind{\convar}{\kind}} & \defeq
		& \unannvstctx{\vstctx}, \haskind{\convar}{\kwtykind} &\\
	\strut\\
    \annvstctx{\ectx} & \defeq & \ectx &\\
    \annvstctx{\vstctx, \haskind{\convar}{\kwtykind}} & \defeq
		& \annvstctx{\vstctx}, \haskind{\convar}{\kwkindarr{\vstyvar}{\kwtykind}} & \vstyvar \fresh
  \end{array}
  \]
  \caption{Annotating and unannotating kinds in $\vstctx$.}
  \label{fig:ann-vstctx}
\end{minipage}
\end{figure}

%\begin{figure}

%\end{figure}

Theorem~\ref{thm:annot-complete} formalizes the main result of this section,
that is, that 1) type annotation is complete with respect to well-kinded
unannotated types, 2) type annotation annotates well-kinded unannotated types
into well-kinded types, and 3) annotating values with vertex structures of the
returned VS type results in well-typed values.
In order to show this, we introduce a kinding judgment for unannotated types,
$\unannkind{\vstctx}{\undecty}{\kwtykind}$,
and a typing judgment for unannotated values, $\unannvaltype{\undece}{\undecty}$.
\ifapp
The rules for these judgments are similar to those for annotated types
and expressions and can be found in Figures \ref{fig:unannotated-kinding} and 
\ref{fig:unannotated-statics} of the appendix, respectively.
\else
The rules for these judgments are similar to those for annotated types
and expressions, so we defer them to the supplementary appendix.
\fi
%standard, so we have allocated them to the appendix.
%\fr{Is it fine to reference the appendix like this?}}.
%
Since the kinds of type variables
bound by unannotated and annotated recursive types are different
($\kwtykind$ and $\kwkindarr{\vsty}{\kwtykind}$ respectively),
we need some way to change the kinds that these type variables are bound to.
We address this with the functions $\mathsf{Unann}$ and $\mathsf{Ann}$. 
$\unannvstctx{\vstctx}$ takes a context $\vstctx$ suitable for annotating types 
and kinding annotated types 
(where type variables can, and will always,
have kind $\kwkindarr{\vsty}{\kwtykind}$),
and return an {\em unannotated context},
one suitable for kinding unannotated types
(where every type variable has kind $\kwtykind$).
$\annvstctx{\vstctx}$ performs this process in reverse, where the VS type
expected by every type variable in $\vstctx$ is a fresh VS type variable 
unique to that type variable (each which can be substituted with the desired VS type).

\ifapp
The proof of Theorem \ref{thm:annot-complete}, as well as statements and proofs of several
necessary technical lemmas, appears in Appendix \ref{app:infer-proofs}.
\else
The proof of Theorem \ref{thm:annot-complete}, as well as statements and proofs of several
necessary technical lemmas, appears in the attached supplementary appendix.
\fi

\begin{theorem}\label{thm:annot-complete}\strut
  \begin{enumerate}
  \item
    For an unannotated context $\vstctx$,
	if~$\unannkind{\vstctx}{\undecty}{\kwtykind}$,
    then for any~$\vs$,
    there exist~$\tau$ and~$\vsty$
	such that~$\futurify{\annvstctx{\vstctx}}{\vs}{\undecty}{\tau}{\vsty}$.
	\label{lem:annot-progress}
  \item
    If~$\futurify{\vstctx}{\vs}{\undecty}{\tau}{\vsty}$
		and $\unannkind{\unannvstctx{\vstctx}}{\undecty}{\kwtykind}$
		and $\vsistype{\ectx}{\utctx}{\vs}{\vsty}$,
    then $\iskind{\ectx}{}{\utctx}{\vstctx}{\tau}{\kwtykind}$.
	\label{lem:annot-preservation}
  \item
    If~$\futurify{\ectx}{\vspath}{\undecty}{\tau}{\vsty}$
    and~$\unannvaltype{\undece}{\undecty}$
	and $\unannkind{\ectx}{\undecty}{\kwtykind}$,
    then there exists~$v$ such that~$\futurifye{\vspath}{\undece}{v}$
    and for any~$\uspctx$ such that~$\vsistype{\uspctx}{\ectx}{\vspath}{\vsty}$.
    we have~$\affinetyped{\uspctx}{v}{\tau}$.
	\label{lem:annot-values}
  \end{enumerate}
\end{theorem}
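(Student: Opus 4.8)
The plan is to establish the three claims by induction, taking advantage of the fact that the annotation judgment $\futurify{\vstctx}{\vs}{\undecty}{\con}{\vsty}$ of Figure~\ref{fig:annot-vs-types} is syntax-directed in $\undecty$, so that in each case exactly one rule applies and only the recursive-type case is genuinely interesting. For part~(1) I would induct on the derivation of $\unannkind{\vstctx}{\undecty}{\kwtykind}$, equivalently on the structure of $\undecty$. All cases but $\kwrec{\convar}{\undecty'}$ are direct: each unannotated former has a unique matching rule ($\rulename{F:Unit}$, $\rulename{F:Fun}$, $\rulename{F:Prod}$, $\rulename{F:Sum}$, $\rulename{F:Fut}$) whose premises are instances of the induction hypothesis at the immediate subterms, using the appropriate projection of the quantified $\vs$ as parameter; a variable $\convar$ uses $\rulename{F:TyVar}$, which applies precisely because $\annvstctx{\vstctx}$ assigns $\convar$ a kind of the form $\kwkindarr{\vstyvar}{\kwtykind}$. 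The $\kwrec{\convar}{\undecty'}$ case is the crux: rule $\rulename{F:Rec}$ must bind $\convar$ at kind $\kwkindarr{\kwvscorec{\vstyvar}{\vsty}}{\kwtykind}$, where $\vsty$ is the very body of the corecursive VS type being produced — an apparent circularity. I would break it by first applying the induction hypothesis to $\undecty'$ under the context $\annvstctx{\vstctx,\haskind{\convar}{\kwtykind}} = \annvstctx{\vstctx},\haskind{\convar}{\kwkindarr{\vstyvar}{\kwtykind}}$ (for a fresh $\vstyvar$), obtaining $\con_0$ and $\vsty_0$ in which $\vstyvar$ may occur free; then a VS-type substitution lemma for $\futurify{}{}{}{}{}$ — if $\futurify{\vstctx,\haskind{\convar}{\kwkindarr{\vstyvar}{\kwtykind}}}{\vs}{\undecty'}{\con_0}{\vsty_0}$ then for any $\vsty_2$, $\futurify{\vstctx,\haskind{\convar}{\kwkindarr{\vsty_2}{\kwtykind}}}{\vs}{\undecty'}{\sub{\con_0}{\vsty_2}{\vstyvar}}{\sub{\vsty_0}{\vsty_2}{\vstyvar}}$ — instantiated at $\vsty_2 := \kwvscorec{\vstyvar}{\vsty_0}$ yields exactly the premise of $\rulename{F:Rec}$ for the choice $\vsty := \vsty_0$ (using $\kwvscorec{\vstyvar}{\vsty_0} = \sub{\vstyvar}{\kwvscorec{\vstyvar}{\vsty_0}}{\vstyvar}$), so $\rulename{F:Rec}$ closes the case.

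For part~(2) I would induct on the derivation of $\futurify{\vstctx}{\vs}{\undecty}{\tau}{\vsty}$, matching each rule against the kinding rules of Figure~\ref{fig:kinds}: $\rulename{F:TyVar}$ against $\rulename{K:Var}$ then $\rulename{K:App}$ (using the hypothesis $\vsistype{\ectx}{\utctx}{\vs}{\vsty}$); $\rulename{F:Unit}$ and $\rulename{F:Fun}$ against $\rulename{K:Unit}$ and $\rulename{K:Fun}$ (the function case simply re-uses the well-kindedness that the unannotated-kinding judgment already records for a pre-annotated function type); $\rulename{F:Prod}$, $\rulename{F:Sum}$, $\rulename{F:Fut}$ against $\rulename{K:Prod}$, $\rulename{K:Sum}$, $\rulename{K:Fut}$, where the induction hypotheses are fed by pushing $\vsistype{\ectx}{\utctx}{\vs}{\vsprod{\vsty_1}{\isav}{\vsty_2}{\isav}}$ through $\rulename{U:Fst}$ and $\rulename{U:Snd}$, and the $\uspctx$-splits required by those kinding rules are trivial because the affine context is empty; and $\rulename{F:Rec}$ against $\rulename{K:Rec}$, where the induction hypothesis on the body applies because the bound VS variable can be given the unrolled VS type via $\rulename{U:PsiVar}$ followed by $\rulename{U:Subtype}$ with $\rulename{UT:Corec1}$. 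This part is routine bookkeeping.

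For part~(3) I would induct on the unannotated value $\undece$, inverting $\unannvaltype{\undece}{\undecty}$ to fix the shape of $\undecty$ and then inverting the syntax-directed derivation $\futurify{\ectx}{\vspath}{\undecty}{\tau}{\vsty}$ to fix $\tau$ and $\vsty$. In every case the annotated value $v$ is produced by the matching rule of Figure~\ref{fig:annot-vs-values}; for its well-typedness I would use a \emph{path-splitting lemma}: if $\vsistype{\uspctx}{\utctx}{\vspath}{\vsprod{\vsty_1}{\isav}{\vsty_2}{\isav}}$ then $\uspsplit{\uspctx}{\uspctx_1}{\uspctx_2}$ for some $\uspctx_1,\uspctx_2$ with $\vsistype{\uspctx_1}{\utctx}{\kwfst{\vspath}}{\vsty_1}$ and $\vsistype{\uspctx_2}{\utctx}{\kwsnd{\vspath}}{\vsty_2}$ (proved by induction on the VS typing, the generator case using $\rulename{OM:VarTypeSplit}$ with $\rulename{US:Prod}$). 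With this, the pair and handle cases ($\rulename{FE:Pair}$, $\rulename{FE:Handle}$) split $\uspctx$ and invoke the induction hypothesis together with the affine rules for pairs and for handles (Figure~\ref{fig:affine-handle-statics}); the injection cases ($\rulename{FE:InL}$, $\rulename{FE:InR}$) need no split because $\rulename{U:Fst}$/$\rulename{U:Snd}$ preserve the context, with the discarded summand well-kinded by part~(2); the unit and function cases are immediate. The $\kwroll{\undece'}$ case requires applying the induction hypothesis to $\undece'$ at the \emph{unrolled} unannotated type, for which I need a second substitution lemma stating that annotation commutes with unrolling: from $\futurify{\vstctx,\haskind{\convar}{\kwkindarr{\vsty_0}{\kwtykind}}}{\vs}{\undecty'}{\con}{\vsty_1}$ and $\futurify{\vstctx}{\vvar_0}{\undecty_0}{\con_0}{\vsty_0}$ one obtains $\futurify{\vstctx}{\vs}{\sub{\undecty'}{\undecty_0}{\convar}}{\sub{\con}{(\kwxi{\hastycl{\vvar_0}{\vsty_0}}{\con_0})}{\convar}}{\vsty_1}$, so that annotating the unrolling of $\kwrec{\convar}{\undecty'}$ gives, up to $\rulename{CE:BetaEq}$, exactly the type $\rulename{S:Roll}$ expects from the $\tau$ produced by $\rulename{F:Rec}$. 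The induction hypothesis then supplies $v'$ with $\affinetyped{\uspctx}{v'}{\tau'}$, the type-equivalence rule retypes it at the literal unrolling, and the affine roll rule concludes.

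The two recursive-type arguments — tying the corecursive knot in part~(1), and the ``annotation commutes with unrolling'' lemma in part~(3) — are where the real difficulty lies; everything else is syntax-directed bookkeeping, modulo two routine observations: that VS typing admits weakening from the affine context $\uspctx$ into the unrestricted context $\utctx$ (needed to feed part~(2)'s hypothesis while proving part~(3)), and that unannotated kinding and unannotated value typing are stable under the type-variable substitutions used to unroll recursive types. These substitution lemmas are in the same spirit as the substitution results already needed for \langname{} itself.
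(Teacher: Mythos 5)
Your proposal is correct and follows essentially the same route as the paper's proof: the knot-tying substitution of $\kwvscorec{\vstyvar}{\vsty}$ for the fresh kind variable in part~(1) is the paper's Lemma~\ref{lem:subst-annot-vstyvar}, your path-splitting lemma for part~(3) is the paper's combination of \rulename{US:Prod} with Lemma~\ref{lem:vert-split-typing}, and your ``annotation commutes with unrolling, up to \rulename{CE:BetaEq}'' lemma is the paper's Lemma~\ref{lem:subst-rec-annot} (whose official statement also builds in the simultaneous substitution of the vertex path for the bound VS variable, a detail your formulation elides but that does not change the argument). The remaining cases are the same syntax-directed bookkeeping the paper carries out.
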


\section{Implementation and Examples}\label{sec:impl}

We have implemented a prototype graph inference algorithm for \langname{} on top
of \toolname{}~\citep{Muller22}, an existing graph type inference algorithm.
The goal of the implementation, which we call \newtoolname{},
is to infer vertex structure annotations
and graph types from ordinary, unannotated OCaml programs.
\toolname{} extends OCaml syntax with the keywords \lstinline{future} for
spawning
expressions into a future, \lstinline{touch} for joining a future handle's value to
the current thread, and a type \lstinline{'a future}.
Additionally, \newtoolname{} supports OCaml's user-definable recursive datatypes,
which were not previously supported by \toolname{} (there are some limitations,
which we discuss at the end of this section).
For example, we can define the \lstinline{'a pipe} type from
Sections~\ref{sec:intro} and~\ref{sec:overview}
using standard OCaml syntax:
\begin{lstlisting}
type 'a pipe = Pipe of 'a * 'a pipe future ;;
\end{lstlisting}
Our extension of \toolname{} successfully infers the corresponding vertex
structure annotations, for the type itself and for all of its uses in the
code in Figure~\ref{fig:pi}.

In addition, we implemented (by extending facilities existing in \toolname{})
a visualizer that uses several heuristics to generate a visualization of
a representative graph corresponding to each inferred
graph type.\footnote{Once the graph type is unrolled to generate the
representative graph, we output a file that can be turned into a visualization
using GraphViz~\citep{GansnerNo00}.}
This allows developers to see at a glance how their program will parallelize.
We have used \newtoolname{} to infer graph types
for all example programs in this paper.

The details of the implementation are out of the scope of the paper.
However, the main challenge in extending GML with support for algebraic data
types is generating the VS type corresponding to an ADT.
Our algorithm for this closely follows the presentation of
Section~\ref{sec:infer}.\footnote{As discussed in that section, the only
  non-algorithmic
detail of the presentation was constructing~$\vsty$ in~\rulename{F:Rec};
in the implementation, we add~$\kwkindarr{\vstyvar}{\kwtykind}$
to the context instead of~$\kwkindarr{\kwvscorec{\vstyvar}{\vsty}}{\kwtykind}$.
This means the context contains non-well-formed VS types, which makes the
theory more unwieldy but yields a convenient implementation.}
When processing a type declaration, \toolname{} generates the associated
VS type, and also generates a constructor and deconstructor function for
each constructor.
%
%For example, for the \lstinline{'a pipe} datatype, we would generate a
%constructor of type
%\lstinline{'a$\figannot{[\kwfst{\vvar}]}$ * 'a pipe future$\figannot{[\kwsnd{\vvar}]}$ -> 'a pipe$\figannot{[\vvar]}$}
%and a deconstructor of type
%\lstinline{'a pipe$\figannot{[\vvar]}$ -> 'a$\figannot{[\kwfst{\vvar}]}$ * 'a pipe future$\figannot{[\kwsnd{\vvar}]}$}.
%
Constructor applications are desugared to ordinary applications of the
constructor function and the deconstructor function is used during pattern
matching.
Another major challenge is implementing unification on vertex structures.
At the moment, our implementation uses a set of heuristics that are not
guaranteed to be complete (i.e., unification may fail for VSs that could be
unified, resulting in a spurious type error) but work well in practice on the
large examples tested.

In addition to extending the subset of OCaml supported by \toolname{}, we
have also substantially re-architected the code.
In \newtoolname{},
the graph type checker is completely separate from the type checker.
This simplifies the implementation and has a number of other benefits.
First, all futures in a program are known by the time graph checking begins.
This allows the implementation to infer graph types in several instances
where type annotations would previously have been required
(one such instance is noted in prior work~\citep{Muller22} as a limitation
of \toolname{},
which is not a limitation of \newtoolname{}).
Additionally, this architecture would simplify the process of integrating
graph checking as an extension of the OCaml compiler, as an additional pass
on type-checked ASTs.

\subsection{Examples}

To show the utility of \newtoolname{}, we discuss several example programs
for which it can infer and visualize graph types.

%% \begin{wrapfigure}{i}{0.5\textwidth}
%%   \vspace{-1.2cm}
%%   \hspace{2mm}
%% \begin{minipage}{0.3\textwidth}
%%     \begin{lstlisting}
%% let rec map fxs =
%%   let (f, xs) = fxs in
%%   match xs with
%%   | [] -> []
%%   | x :: xs ->
%%     let y = future (f x) in
%%     let ys = (map (f, xs)) in
%%     (touch y) :: ys
%% ;;
%% \end{lstlisting}%
%%    \end{minipage}%
%% %\hspace{.5in}
%% \begin{minipage}{0.25\textwidth}
%% \includegraphics[clip,scale=0.5, trim=.7in .7in .7in 1in]{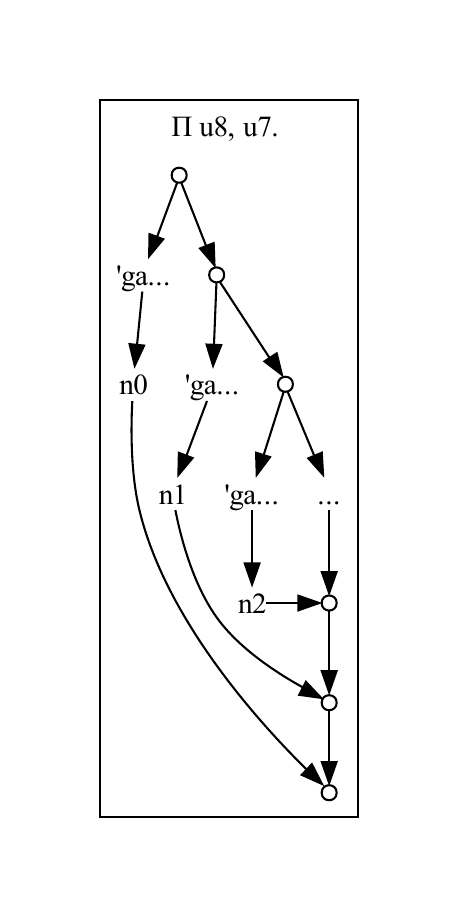}%
%% \end{minipage}
%%   \caption{Parallelized map function}
%%   \label{fig:map}
%%  % \vspace{-0.75cm}
%% \end{wrapfigure}

\paragraph*{Produce-consume}
The producer-consumer example of \citet{BlellochRe97}, shown in
Figure~\ref{fig:sum},
is similar to the \lstinline{pi_pipeline} function of Figure \ref{fig:pi},
but allows the pipelined list to be finite (ending with \lstinline{FNil}).
As in the pipeline example, the \lstinline{FCons} constructor allows the
tail of the \lstinline{flist} to continue being computed in a future.
We compose \lstinline{produce}, which (for the sake of a simple example)
outputs a list of the numbers 1--n,
with the \lstinline{consume} function which calculates the sum of the list.
In the graph of the composed functions (right side of the figure), the
touches of \lstinline{consume} happen in parallel with the production of
the list.

\begin{figure}
  \begin{minipage}{0.45\textwidth}
    \begin{lstlisting}
type 'a flist =
| FNil
| FCons of 'a * ('a flist future);;

let rec produce n =
  if n < 0 then FNil
  else FCons (n, future (produce (n - 1)));;

let rec consume sumxs =
  let (sum, xs) = sumxs in
  match xs with
  | FNil -> sum
  | FCons (x, xs) -> consume (x + sum, force xs);;

consume (0, (produce n));;
\end{lstlisting}
  \end{minipage}%
  \hspace{.5in}
  \begin{minipage}{0.25\textwidth}
    \includegraphics[clip,scale=0.6, trim=.7in .7in .7in 1in]{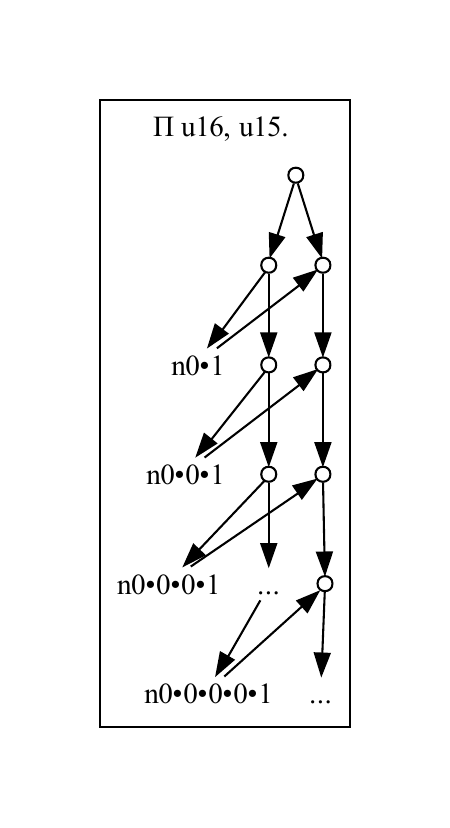}%
  \end{minipage}
  \caption{Blelloch-Reid-Miller produce-consume example}
  \label{fig:sum}
\end{figure}

\paragraph*{Tree Sum} In Figure \ref{fig:treesum}, we present operations
on a pipelined tree data structure~\citep{BlellochRe97}.
As with \lstinline{flist}, the two subtrees of an \lstinline{ftree} are futures, so they may be computed asynchronously while the value at the node is used.
The function
\lstinline{bst} generates a tree of numbers 0 to 10, then \lstinline{tree_sum}
calculates the sum of elements in the generated tree.
While the particular application of summing a binary tree is fairly simple,
one can imagine using the same structure for more complicated use-cases.
Because of the design of the data structure, \lstinline{bst}
immediately returns a future and then
\lstinline{tree_sum} can perform its calculation as later recursive steps of
\lstinline{bst} are still executing.
%
%\langname{} allows the programmer to define
%their data-structure with futures in positions that make the most sense for
%their application.

\begin{figure}
  \begin{minipage}{0.35\textwidth}
    \begin{lstlisting}
type ftree =
| Empty
| Node of int * ftree future * ftree future;;

let rec bst lohi =
  let (lo, hi) = lohi in
  if lo >= hi then Empty
  else
    let mid = (lo + hi) / 2 in
    Node (mid, future (bst (lo, mid)),
          future (bst (mid, hi)));;

let rec tree_sum tree =
  match tree with
  | Empty -> 0
  | Node (x, l, r) ->
    let left_sum_fut = future (tree_sum (touch l)) in
    let right_sum = tree_sum (touch r) in
    let left_sum = touch left_sum_fut in
    x + left_sum + right_sum;;

tree_sum (bst (0, 10));;
\end{lstlisting}
  \end{minipage}%
  \begin{minipage}{0.45\textwidth}
    \includegraphics[clip,scale=0.5]{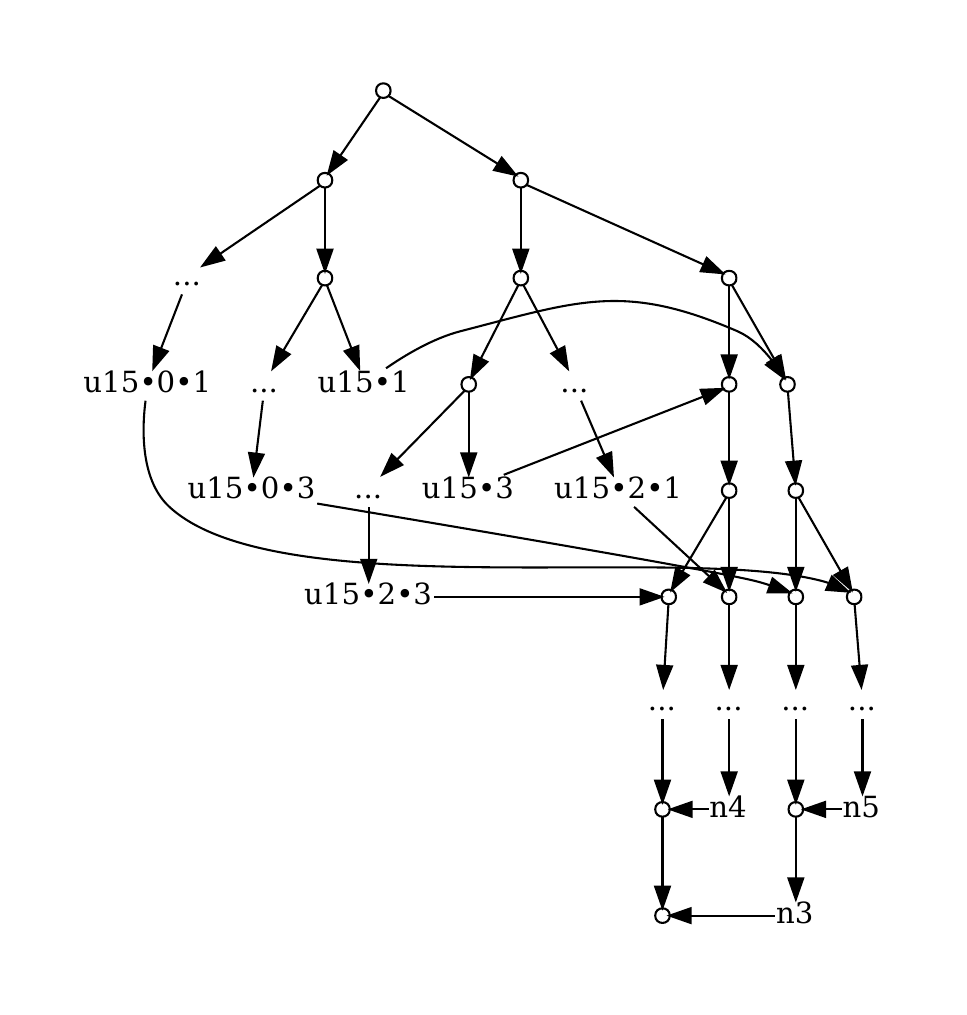}%
  \end{minipage}
  \caption{Sum over elements in a pipelined tree.}
  \label{fig:treesum}
\end{figure}

\paragraph*{Tree Reverse}

The function in Figure~\ref{fig:reverse} reverses a pipelined tree of
the type defined in Figure~\ref{fig:treesum}.
Here, the interesting feature of the output was not the visualization of the
function's graph type, which shows a similar structure to~\lstinline{tree_sum}
and~\lstinline{bst}, but the function type, which is
\[\kw{reverse} : \kwpi{\hastycl{\vvar_f}{\kw{vtree}}}
     {\hastycl{\vvar_t}{\kw{vtree}}}
     {\kwarrow{\kw{ftree}~\vvar_t}
     {\kw{ftree}~\vvar_f}
     {\graph}}
     \]
where~$\kw{vtree} = \kwvscorec{\vstyvar}
{\vstyvar \times \kwvty \times \vstyvar \times \kwvty}$.
We omit the graph type~$\graph$ for clarity.
The function takes two VS parameters and a tree indexed by~$\vvar_t$ and
returns a tree indexed by~$\vvar_f$.
%
%\begin{wrapfigure}{i}{0.45\textwidth}
\begin{figure}
  %\vspace{1cm}
  %\hspace{12mm}%
  \begin{minipage}{0.87\textwidth}
\begin{lstlisting}
let rec reverse tree =
  match tree with
  | Empty -> Empty
  | Node (x, l, r) -> Node (x, future (reverse (force r)), future (reverse (force l)))
\end{lstlisting}
\end{minipage}
  \caption{Reverse a pipelined tree.}
  \label{fig:reverse}
  %\end{wrapfigure}
\end{figure}
At first glance, this may seem imprecise because one might expect the VSs
parameterizing the input and output tree to be related (after all, the output
tree is the reverse of the input).
However, this is not correct: \lstinline{reverse}
touches (in a pipelined way) all of the
futures of the input tree and constructs a new tree with the reversed
{\em values} but {\em new futures} from the VS~$\vvar_f$.
Here, not just the graph type but the return type parameterized by its VS
can correct a misunderstanding about the parallel behavior of a program.

%% \paragraph*{Parallelized Map} \newtoolname{} In Figure \ref{fig:map}, we take the standard map
%% function, which maps a function~$f$ over all elements of a polymorphic list,
%% and parallelize it so all applications of $f$ happen in their own
%% futures.
%% %
%% In the visualization, each instance of \lstinline{'ga}, which is a parameter
%% for the graph type of the function~$f$, appears as a separate future split
%% off of the main thread.
%% %
%% These
%% vertices are then touched sequentially in order to build the new list back up.
%% %
%% Note that the instances of \lstinline{'ga} have no paths between them---this
%% indicates that all calls to~$f$ can happen in parallel, which achieves the
%% desired parallelization, particularly if~$f$ is expensive.
%% \jw{should I include the graph type produced by the implementation? I think
%% referring to 'ga without it may be confusing but the type itself is pretty ugly
%% because it's only for machines:}

\subsection{Limitations}
Though our inference algorithm checks most useful programs, there are some limitations.
%\begin{itemize}
%\item
First, polymorphic types cannot be instantiated with types
  that include futures.
  For example, a list of futures would have to be explicitly defined as a
  new type \lstinline{'a futlist} rather than by instantiating built-in lists
  to form the type \lstinline{'a future list}.
  This represents a design trade-off; graph types are not currently expressive
  enough to represent, say, a \lstinline{reverse} function on lists of futures.
  If the standard list type could be instantiated with a
  future, the polymorphic \lstinline{reverse} function would need to be
  assigned a general enough type to cover all instantiations of \lstinline{'a},
  which wouldn't be possible.
%  (instantiating \lstinline{'b}
%  in \lstinline{'b list} with \lstinline{'a future}); rather,
%  the programmer must define a separate
%  \lstinline{'a futlist} type.
%
%  This is because \langname{} does not allow for the ability to generalize over% vertex structures.
  %
% \item The user cannot annotate programs with their own graph types.
%   \skm{I really think of this as a strength (programmers don't have to annotate
%     programs...). If you can come up with an example where a programmer could
%     give an annotation in our existing system that we can't infer, maybe
%     mention this briefly. The example below is one, though it would be better
%     to have one that relates to more interesting VSs; otherwise it's just
%     a limitation we inherit from GML}
%   %
%   This limits the programs that can be checked to the programs that
%   can be inferred.
%   %
%   We made the decision to disallow user-annotated graphs in order to
%   keep developer overhead to a minimum and reduce code complexity.
%
  A limitation we inherit from \toolname{} is that functional arguments
  of higher order functions
  cannot spawn futures.
  This is possible in \langname{}, but cannot be inferred without annotations.
%\end{itemize}

% \skm{Here it would be worth showing off some of the outputs on the case studies
%   including the visualization.}

% \skm{Maybe leave this to the conclusion}
% In the future we hope to explore the ways that graph type analysis in
% \langname{} can catch and prevent bugs to do with

% \aaa{Status of the implementation/possible deliverables}

% What is missing:

% \begin{itemize}
% \item Recursive functions
%   \begin{itemize}
%   \item Hardwired combinators with axiomatized types
%   \item Check general recursive definitions
%   \end{itemize}
% \item Recursive data types
%   \begin{itemize}
%   \item Hardwired data types (e.g. lists of futures) \aaa{Stefan already had a
%       lists of futures implementation based on the old version}
%   \item General user-defined data types
%   \end{itemize}
% \item Useful polymorphic type definitions (i.e., can we make
%   \verb!int future list! be a useful type if we have defined \verb!'a list!
%   generically?)
%   \begin{itemize}
%   \item Maybe we still need a list type that is specialized to
%     non-vertex-structure-aware types, so that functions like list reverse are
%     expressible.
%   \end{itemize}
% \item Visualization
% \item Analysis of graph types (e.g. bug finding, parallelism analysis)
% \item Compiling and running code (probably not happening right now)
%   \begin{itemize}
%   \item Maybe integrate with OCaml 5?
%   \end{itemize}
% \end{itemize}

%\input{case}
%\input{disc}
\section{Related Work}

\paragraph{Graph Types and Related Analyses.}
The use of graphs to represent the parallel programs dates back
to at least the late 1960s~\citep{KarpMi66, Rodriguez69}.
Our notation is most directly inspired by the work
of~\citet{BlellochGr95,BlellochGr96} and~\citet{spoonhower09}, who extended
these graphs with notations for futures.
In this work, graphs were produced dynamically from programs using a {\em
  cost semantics}, which abstractly evaluates the program to form the graph
(or a family of graphs if execution is nondeterministically).
The first work we are aware of on statically approximating such graphs for
fine-grained parallel programs was the prior work of
one of us~\citep{Muller22}, which developed
a calculus~$\oldlangname$ and corresponding graph type system for inferring
graph types of parallel programs with futures.
Our work builds on~$\oldlangname$, including the use of an affine type system
to ensure that vertex names are unique and therefore do not appear twice
in a graph, which would result in an invalid graph.
However, the main thrust of this paper is overcoming the significant limitation
in~$\oldlangname$ that affine treatment of vertex names prevents building
collections of futures.

Dependency graphs are frequently used to represent
control dependencies in coarse-grained parallel programs and these have been
the target of several static analyses
(e.g.,~\citep{ChenXuZhYa02, Cheng93, KasaharaNoKaChUs95}) but such tools do
not contend with the substantial dynamicity inherent in fine-grained parallel
programs, especially those with futures.
Dependency graphs are also used to represent other dependencies in a program,
including data dependencies; analyzing the structure of such dependencies is
a form of program slicing (e.g.,~\citep{Weiser84, Korel87}).

As observed in prior work~\citep{Muller22}, graph type systems draw on ideas from
{\em region type systems}~\citep{TofteTa97}, where assigning a vertex to
a future corresponds to allocating an object within a {\em region} of memory,
in order to
aid in memory management and/or ensure safety (e.g.~\citep{FluetMoAh06}),
including in the presence of concurrency and complex, dynamic
data structures~\citep{MilanoTuMy22}.
It is not possible to list all of the
related work on regions and related systems, so we refer the interested reader
to the chapter by \citet{HengleinMaNi05}.
Two major differences with region systems are that vertex assignments must
be unique (whereas typically many objects are allocated within a single
region) and that, to generate useful graphs, we wish for vertex assignments
to be visible at a global scope (see the example from the Introduction of
why locally allocated vertices are not suitable for graph types of data
structures).

\paragraph{Heterogeneous and Indexed Data Structures.}
Indexed types~\citep{Zenger97, XiPf99}, a limited form of dependent types
in which a type is {\em indexed} by a value from a specified domain,
have long been used to add expressiveness to types---a classic example is a
type of vectors indexed with a natural number giving the vector's length.
We index recursive data types with a vertex structure
to assign unique vertices to futures in recursive data structures.
Vertex structures have a non-trivial semantics of their own but are not
first-class objects at the expression level, so computation on VSs
may be seen as an instance of {\em type-level computation}.
A similar indexing idea and type-level computation have been previously
combined to achieve heterogeneity in HList~\citep{KiselyovLaSc04}, a Haskell
library that expresses heterogeneous lists by indexing the list type
constructor with a type-level list.
Their work does not appear to generalize beyond lists or to infinite indices.

Richer forms of type-level computation have been explored, and could be used
to further generalize the theory of vertex structures.
\citet{YorgeyWeCrPJVyMa12} extend Haskell's kind system with features for
expressing a variety of type-level data structures.
As another example, we have considered extending vertex structures
with sums (so a vertex structure could, e.g., represent a
finite list) and using type-level
matching~\citep{BlanvillainBrKjOd22} to constrain the lengths of lists
by the length of the vertex structure parameter.
While this would expand the expressiveness of vertex structures, extending
VSs beyond tree-like corecursive structures causes a problem for inference
and so it seems likely that such an extension would require programmer
annotations in some cases.

We note that this paper enters a rich design space of combining data and
codata (e.g.~\citep{ThibodeauCaPi16}).
We have shown in Section~\ref{sec:infer} that any data type can be
``overapproximated by'' a codata type in the sense that there is a
straightforward, local mapping from nodes in the data type's AST to those
of the codata type (the vertex structure in our case).
Whether this has a deeper meaning in the theory of data and codata is left
to future work.

\paragraph{Affine Type Systems.}
We use an affine type system to handle vertex structures and ensure that
vertex names in output graphs are unique.
Affine type systems have been used in a number of languages to ensure safe
usage of resources (broadly construed), notably including
Cyclone~\citep{Cyclone} and Rust~\citep{Rust}.
Our notation for splitting and availability is inspired by
Cogent~\citep{OConnorCRJAKMSK21}, which uses these ideas for an affine
treatment of record types.

\paragraph{Encoding in Rust.} As a memory safety focused language, Rust's type
system would likely benefit from the features GML.
Though Rust is able to encode other type systems such as session
types~\citep{10.1145/2808098.2808100, 10.1007/978-3-030-50029-0_8}, we do not
believe GML could be usefully or at least conveniently encoded in Rust as is.
The main limiting factor we forsee is that we believe each vertex name would
need its own lifetime.
This is a problem because Rust requires all lifetimes to be declared statically,
and each piece of code can only refer to finitely many lifetimes.
However, in GML, vertex names are generated dynamically, therefore a function
might manipulate infinitely many vertices.

\section{Conclusion}
We have presented a type system for annotating parallel programs with futures
with {\em graph types}, which compactly represent the parallel structure of
the program.
Unlike prior work, we support complex data structures
containing futures.
%, which can be used to build pipelined algorithms that are
%asymptotically faster than versions without this feature.
%
As evidenced by our prototype implementation, it is possible to infer these
graph types automatically for examples that use this feature for efficient
pipelined algorithms.
Our implementation is also able to generate visualizations of the
resulting graph types, which can aid in understanding the structure of
parallel code and finding bugs.

In the future, we hope to expand on the types of program analysis and
bug-finding that can be done with these graph types.
For example, we could build on the prototype analyses of \citet{Muller22}
to study deadlock and asymptotic complexity in the
complex, pipelined graphs that arise from programs with data structures of
futures.
We also plan to scale the implementation up to support a larger subset of
OCaml, with the goal of integrating the analysis into the OCaml compiler.
Finally, because the graph type system itself does not depend on a particular
source language, we plan to explore implementing the graph type system in
front-ends for a variety of languages, so that more programmers
can benefit from graph types as an analysis tool and reasoning aid.

%% \begin{acks}
%%   The author would like to thank the anonymous reviewers, as well as Umut
%%   Acar, Jatin Arora, Guy Blelloch, and Sam Westrick for helpful discussions at
%%   various points in this work.
%%   %
%%   This work was partially supported by the National Science Foundation under
%%   grant CCF-2107289.
%% \end{acks}

%%
%% The next two lines define the bibliography style to be used, and
%% the bibliography file.
\bibliographystyle{ACM-Reference-Format}
\bibliography{main}

\clearpage

%%
%% If your work has an appendix, this is the place to put it.
\ifapp
\appendix
\appendix

%This companion appendix contains full sets of rules, definitions, and proofs
%from the technical sections of the paper, organized by the corresponding
%section.

\section{From Section~\ref{sec:lang}}\label{app:lang-proofs}

\begin{figure}
  \centering
  \def \MathparLineskip {\lineskip=0.43cm}
  \begin{mathpar}
    \Rule{U:OmegaGen}
         {\strut}
         {\vsistype{\uspctx, \hastype{\genseed}{\vsty}}{\utctx}{\vertgen{\vsty}{\genseed}}{\vsty}}
    \and
    \Rule{U:PsiGen}
         {\strut}
         {\vsistype{\uspctx}{\utctx, \hastype{\genseed}{\vsty}}{\vertgen{\vsty}{\genseed}}{\vsty}}
    \and
    \Rule{OM:Gen}
         {\uspsplit{\uspctx}{\uspctx_1}{\uspctx_2}}
         {\uspsplit{\uspctx, \hastype{\genseed}{\vsty}}{\uspctx_1, \hastype{\genseed}{\vsty}}{\uspctx_2}}
    \and
	\Rule{OM:GenTypeSplit}
         {\uspsplit{\uspctx}{\uspctx_1}{\uspctx_2}\\
			\vstysplit{\vsty}{\vsty_1}{\vsty_2}}
         {\uspsplit{\uspctx, \hastype{\genseed}{\vsty}}
			{\uspctx_1, \hastype{\genseed}{\vsty_1}}
			{\uspctx_2, \hastype{\genseed}{\vsty_2}}}
  \end{mathpar}
  \caption{Extended rules for VS typing and $\uspctx$ context splitting with generators.}
  \label{fig:generator-additions}
\end{figure}

\skm{I moved these here to be close to the text. Lemma 1 doesn't seem to have text (maybe not necessary for the body, but then the lemma probably isn't necessary for the body either).}

Figure \ref{fig:generator-additions} gives the typing rules for generators and the $\uspctx$ context splitting rules for contexts containing generators, which are symmetric to their VS variable counterparts in Figures \ref{fig:vert-typing} and \ref{fig:uspctx-split} respectively. Generators behave identically to VS variables throughout this section of the appendix, with the exception that generators cannot be substituted for using the substitution lemma given in this section.

We will now describe some technical properties of VS type splitting and $\uspctx$ context splitting. 
Lemma \ref{lem:split-form} gives the canonical forms of VS types and $\uspctx$-contexts that are involved in splitting judgements.
Lemma \ref{lem:vsty-split-supertyping} states if $\vsty$ splits into $\vsty_1$ and $\vsty_2$, then all subtypes of $\vsty$ also split into $\vsty_1$ and $\vsty_2$ (i.e. $\vsty$ can be "strengthened" in the splitting judgement).
Lemma \ref{lem:split-move} states that VS type splitting and $\uspctx$ splitting are associative. For example, if $\vsty$ splits into three disjoint types, this is manifested by having $\vsty$ split into one of the three types and an intermediate $\vsty'$ that splits into the other two types. The associative nature of splitting allows any one of the three types to be split directly from $\vsty$ whie guaranteeing the existance of an intermediate $\vsty'$ that splits into the other two types. Loosely speaking, the order in which disjoint VS types and $\uspctx$s are split from a single source is inconsequential.

\begin{lemma}\label{lem:split-form}\strut
\begin{enumerate}
  \item If~$\vstysplit{\vsty}{\vsty_1}{\vsty_2}$
	then one of the following is true:
	\begin{enumerate}
		\item $\vsty = \vsprod{\vsty_3}{\isav}{\vsty_4}{\isav}$.
		\item $\vsty = \vsprod{\vsty_3}{\isav}{\vsty_4}{\isunav}$.
		\item $\vsty = \vsprod{\vsty_3}{\isunav}{\vsty_4}{\isav}$.
		\item $\vsty = \kwvscorec{\vstyvar}{\vsty_3}$.
	\end{enumerate}
    \label{lem:split-vsty-form}
  \item If~$\vstysplit{\vsprod{\vsty_1}{\isav}{\vsty_2}{\isunav}}{\vsty_3}{\vsty_4}$
	then there exists a $\vsty_1'$ and $\vsty_1''$
	such that~$\vstysplit{\vsty_1}{\vsty_1'}{\vsty_1''}$
	and~$\vstysubt{\vsprod{\vsty_1'}{\isav}{\vsty_2}{\isunav}}{\vsty_3}$
	and~$\vstysubt{\vsprod{\vsty_1''}{\isav}{\vsty_2}{\isunav}}{\vsty_4}$.
    \label{lem:split-left-prod}
  \item If~$\vstysplit{\vsprod{\vsty_1}{\isunav}{\vsty_2}{\isav}}{\vsty_3}{\vsty_4}$
	then there exists a $\vsty_2'$ and $\vsty_2''$
	such that~$\vstysplit{\vsty_2}{\vsty_2'}{\vsty_2''}$
	and~$\vstysubt{\vsprod{\vsty_1}{\isunav}{\vsty_2'}{\isav}}{\vsty_3}$
	and~$\vstysubt{\vsprod{\vsty_1}{\isunav}{\vsty_2''}{\isav}}{\vsty_4}$.
    \label{lem:split-right-prod}
  \item If~$\vstysplit{\vsprod{\vsty_1}{\isav}{\vsty_2}{\isav}}{\vsty_3}{\vsty_4}$
	then at least one of the following is true:
	\begin{enumerate}
		\item $\vstysubt{\vsprod{\vsty_1}{\isav}{\vsty_2}{\isunav}}{\vsty_3}$
			and $\vstysubt{\vsprod{\vsty_1}{\isunav}{\vsty_2}{\isav}}{\vsty_4}$.
		\item $\vstysubt{\vsprod{\vsty_1}{\isunav}{\vsty_2}{\isav}}{\vsty_3}$
			and $\vstysubt{\vsprod{\vsty_1}{\isav}{\vsty_2}{\isunav}}{\vsty_4}$.
		\item There exists a $\vsty_1'$ and $\vsty_1''$
			such that $\vstysplit{\vsty_1}{\vsty_1'}{\vsty_1''}$
			and $\vstysubt{\vsprod{\vsty_1'}{\isav}{\vsty_2}{\isav}}{\vsty_3}$
			and $\vstysubt{\vsprod{\vsty_1''}{\isav}{\vsty_2}{\isunav}}{\vsty_4}$.
		\item There exists a $\vsty_1'$ and $\vsty_1''$
			such that $\vstysplit{\vsty_1}{\vsty_1'}{\vsty_1''}$
			and $\vstysubt{\vsprod{\vsty_1'}{\isav}{\vsty_2}{\isunav}}{\vsty_3}$
			and $\vstysubt{\vsprod{\vsty_1''}{\isav}{\vsty_2}{\isav}}{\vsty_4}$.
		\item There exists a $\vsty_2'$ and $\vsty_2''$
			such that $\vstysplit{\vsty_2}{\vsty_2'}{\vsty_2''}$
			and $\vstysubt{\vsprod{\vsty_1}{\isav}{\vsty_2'}{\isav}}{\vsty_3}$
			and $\vstysubt{\vsprod{\vsty_1}{\isunav}{\vsty_2''}{\isav}}{\vsty_4}$.
		\item There exists a $\vsty_2'$ and $\vsty_2''$
			such that $\vstysplit{\vsty_2}{\vsty_2'}{\vsty_2''}$
			and $\vstysubt{\vsprod{\vsty_1}{\isunav}{\vsty_2'}{\isav}}{\vsty_3}$
			and $\vstysubt{\vsprod{\vsty_1}{\isav}{\vsty_2''}{\isav}}{\vsty_4}$.
		\item There exists a $\vsty_1'$ and $\vsty_1''$ and $\vsty_2'$ and $\vsty_2''$
			such that~$\vstysplit{\vsty_1}{\vsty_1'}{\vsty_1''}$
			and \\$\vstysplit{\vsty_2}{\vsty_2'}{\vsty_2''}$
			and $\vstysubt{\vsprod{\vsty_1'}{\isav}{\vsty_2'}{\isav}}{\vsty_3}$
			and $\vstysubt{\vsprod{\vsty_1''}{\isav}{\vsty_2''}{\isav}}{\vsty_4}$.
	\end{enumerate}
    \label{lem:split-full-prod}
  \item If~$\uspsplit{\uspctx, \hastype{\vvar}{\vsty}}{\uspctx_1}{\uspctx_2}$,
	then one of the following is true:
	\begin{enumerate}
		\item $\uspctx_1 = \uspctx_1', \hastype{\vvar}{\vsty}$
			and $\uspsplit{\uspctx}{\uspctx_1'}{\uspctx_2}$
			and $\vvar \notin \uspctx_2$.
		\item $\uspctx_2 = \uspctx_2', \hastype{\vvar}{\vsty}$
			and $\uspsplit{\uspctx}{\uspctx_1}{\uspctx_2'}$
			and $\vvar \notin \uspctx_1$.
		\item $\vstysplit{\vsty}{\vsty_1}{\vsty_2}$
			and $\uspctx_1 = \uspctx_1', \hastype{\vvar}{\vsty_1}$
			and $\uspctx_2 = \uspctx_2', \hastype{\vvar}{\vsty_2}$
			and $\uspsplit{\uspctx}{\uspctx_1'}{\uspctx_2'}$.
		\item $\vstysplit{\vsty}{\vsty_1}{\vsty_2}$
			and $\uspctx_1 = \uspctx_1', \hastype{\vvar}{\vsty_2}$
			and $\uspctx_2 = \uspctx_2', \hastype{\vvar}{\vsty_1}$
			and $\uspsplit{\uspctx}{\uspctx_1'}{\uspctx_2'}$.
	\end{enumerate}
    \label{lem:split-uspctx-var-form}
  \item If~$\uspsplit{\uspctx, \hastype{\genseed}{\vsty}}{\uspctx_1}{\uspctx_2}$,
	then one of the following is true:
	\begin{enumerate}
		\item $\uspctx_1 = \uspctx_1', \hastype{\genseed}{\vsty}$
			and $\uspsplit{\uspctx}{\uspctx_1'}{\uspctx_2}$
			and $\genseed \notin \uspctx_2$.
		\item $\uspctx_2 = \uspctx_2', \hastype{\genseed}{\vsty}$
			and $\uspsplit{\uspctx}{\uspctx_1}{\uspctx_2'}$
			and $\genseed \notin \uspctx_1$.
		\item $\vstysplit{\vsty}{\vsty_1}{\vsty_2}$
			and $\uspctx_1 = \uspctx_1', \hastype{\genseed}{\vsty_1}$
			and $\uspctx_2 = \uspctx_2', \hastype{\genseed}{\vsty_2}$
			and $\uspsplit{\uspctx}{\uspctx_1'}{\uspctx_2'}$.
		\item $\vstysplit{\vsty}{\vsty_1}{\vsty_2}$
			and $\uspctx_1 = \uspctx_1', \hastype{\genseed}{\vsty_2}$
			and $\uspctx_2 = \uspctx_2', \hastype{\genseed}{\vsty_1}$
			and $\uspsplit{\uspctx}{\uspctx_1'}{\uspctx_2'}$.
	\end{enumerate}
    \label{lem:split-uspctx-seed-form}
\end{enumerate}
\end{lemma}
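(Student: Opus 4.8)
The plan is to prove all seven parts by induction on the height of the relevant splitting derivation, with a case analysis on the last rule applied. Parts 1--4 induct on a derivation of $\vstysplit{\vsty}{\vsty_1}{\vsty_2}$ using the rules of Figure~\ref{fig:vert-type-split}; parts 5--6 induct on a derivation of $\uspsplit{\uspctx, \hastype{\vvar}{\vsty}}{\uspctx_1}{\uspctx_2}$ (resp.\ with a generator at the end) using Figure~\ref{fig:uspctx-split} together with \rulename{OM:Gen} and \rulename{OM:GenTypeSplit} from Figure~\ref{fig:generator-additions}. It is important that the induction is on derivation height rather than on the structure of the type or context, since \rulename{US:Subtype} and \rulename{US:Commutative} (and \rulename{OM:Commutative}) leave the source of the split unchanged while still strictly decreasing height.

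Part~1 is immediate: each of \rulename{US:Prod}, \rulename{US:SplitBoth}, \rulename{US:SplitLeft}, \rulename{US:SplitRight} has a conclusion whose source is syntactically a product with one of the listed availability patterns, \rulename{US:Corecursive} has a corecursive source, and \rulename{US:Subtype}/\rulename{US:Commutative} preserve the source, so the claim follows from the induction hypothesis. For part~2 I would first note which rules can conclude a split of $\vsprod{\vsty_1}{\isav}{\vsty_2}{\isunav}$: only \rulename{US:SplitLeft}, \rulename{US:Subtype}, and \rulename{US:Commutative}. The \rulename{US:SplitLeft} case directly supplies the required split of $\vsty_1$ with identity subtyping; the \rulename{US:Subtype} case composes the induction hypothesis with \rulename{UT:Transitive} (which works precisely because the statement asks for a canonical-form product that is a \emph{subtype} of $\vsty_3$, not an equality); and the \rulename{US:Commutative} case swaps the roles of $\vsty_3$ and $\vsty_4$, re-closing the goal by one extra application of \rulename{US:Commutative} to the recursively obtained split of $\vsty_1$. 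Part~3 is entirely symmetric. Part~4 is the same in spirit but draws on more rules: \rulename{US:Prod} yields alternative (a), \rulename{US:SplitLeft} yields (c), \rulename{US:SplitRight} yields (e), \rulename{US:SplitBoth} yields (g), \rulename{US:Commutative} produces the mirrored alternatives (b), (d), (f), and \rulename{US:Subtype} is once again absorbed by transitivity of VS subtyping.

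For parts~5 and~6, since $\uspctx$-contexts are ordered sequences and end in the displayed binding, the last rule of the derivation must be \rulename{OM:Var}, \rulename{OM:VarTypeSplit}, or \rulename{OM:Commutative} in part~5 (resp.\ \rulename{OM:Gen}, \rulename{OM:GenTypeSplit}, or \rulename{OM:Commutative} in part~6): \rulename{OM:Empty} is impossible on a nonempty context, and the generator rules cannot fire on a variable-terminated context (and vice versa). \rulename{OM:Var} gives alternative (a), \rulename{OM:VarTypeSplit} gives (c) (invoking the premise $\vstysplit{\vsty}{\vsty_1}{\vsty_2}$ directly), and \rulename{OM:Commutative} reduces to the induction hypothesis on a strictly shorter derivation, swapping (a)$\leftrightarrow$(b) and (c)$\leftrightarrow$(d).

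I expect part~4 to be the main obstacle, essentially for bookkeeping reasons: one must track, for each contributing rule and each of the two outputs, exactly which of the seven alternatives is produced, and verify that the \rulename{US:Subtype} and \rulename{US:Commutative} cases map the alternatives returned by the induction hypothesis back to alternatives of the same shape (weakening one output stays within the same alternative by \rulename{UT:Transitive}; commuting sends each alternative to its mirror image). None of these steps is deep, but the case count is large and the phrasing of the seven alternatives must be matched carefully.
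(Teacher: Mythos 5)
Your proposal matches the paper's proof: every part is proved by induction on the splitting derivation with a case analysis on the last rule, the non-structural rules (\rulename{US:Subtype}, \rulename{US:Commutative}, \rulename{OM:Commutative}) are absorbed exactly as you describe via \rulename{UT:Transitive} and re-application of commutativity, and parts 3 and 6 are discharged by symmetry. The only detail you gloss over is that alternative (a) of parts 5--6 needs the standing assumption that $\uspctx$ binds each variable at most once (so that $\vvar \notin \uspctx_2$ follows), which the paper notes explicitly in the \rulename{OM:Var} case; this is a one-line observation, not a gap in the approach.
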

\begin{proof}\strut
  \begin{enumerate}
\item By induction on the derivation of
  	$\vstysplit{\vsty}{\vsty_3}{\vsty_4}$.
\begin{itemize}
\item \rulename{US:Prod}.
	By inversion on the rule.

\item \rulename{US:SplitBoth}.
	By inversion on the rule.

\item \rulename{US:SplitLeft}.
	By inversion on the rule.

\item \rulename{US:SplitRight}.
	By inversion on the rule.

\item \rulename{US:Corecursive}.
	By inversion on the rule.

\item \rulename{US:Subtype}
	By induction.

\item \rulename{US:Commutative}.
	By induction.
\end{itemize}

\item By induction on the derivation of
  	$\vstysplit{\vsprod{\vsty_1}{\isav}{\vsty_2}{\isunav}}{\vsty_3}{\vsty_4}$.
\begin{itemize}
\item \rulename{US:SplitLeft}.
	By inversion on the rule.

\item \rulename{US:Subtype}.
	Then $\vstysplit{\vsprod{\vsty_1}{\isav}{\vsty_2}{\isunav}}{\vsty_3'}{\vsty_4}$
		and~$\vstysubt{\vsty_3'}{\vsty_3}$.
	By induction,
		\splitLprodexst{\vsty_1}{\vsty_1'}{\vsty_1''}{\vsty_2}{\vsty_3'}{\vsty_4}.
	Apply \rulename{UT:Transitive}.

\item \rulename{US:Commutative}.
	Then $\vstysplit{\vsprod{\vsty_1}{\isav}{\vsty_2}{\isunav}}{\vsty_4}{\vsty_3}$.
	By induction,
		\splitLprodexst{\vsty_1}{\vsty_1''}{\vsty_1'}{\vsty_2}{\vsty_4}{\vsty_3}.
	Apply \rulename{US:Commutative}.
\end{itemize}

\item By symmetry with part \ref{lem:split-left-prod}.

\item By induction on the derivation of
  	$\vstysplit{\vsprod{\vsty_1}{\isav}{\vsty_2}{\isav}}{\vsty_3}{\vsty_4}$
\begin{itemize}
\item \rulename{US:Prod}.
	By inversion on the rule.

\item \rulename{US:SplitBoth}.
	By inversion on the rule.

\item \rulename{US:SplitLeft}.
	By inversion on the rule.

\item \rulename{US:SplitRight}.
	By inversion on the rule.

\item \rulename{US:Subtype}.
	Then $\vstysplit{\vsprod{\vsty_1}{\isav}{\vsty_2}{\isav}}{\vsty_3'}{\vsty_4}$
		and~$\vstysubt{\vsty_3'}{\vsty_3}$.
	By induction and an application of \rulename{UT:Transitive}.

\item \rulename{US:Commutative}.
	Then $\vstysplit{\vsprod{\vsty_1}{\isav}{\vsty_2}{\isunav}}{\vsty_4}{\vsty_3}$.
	By induction, there are seven cases:
	\begin{enumerate}
		\item $\vstysubt{\vsprod{\vsty_1}{\isav}{\vsty_2}{\isunav}}{\vsty_4}$
				and $\vstysubt{\vsprod{\vsty_1}{\isunav}{\vsty_2}{\isav}}{\vsty_3}$.
		\item $\vstysubt{\vsprod{\vsty_1}{\isunav}{\vsty_2}{\isav}}{\vsty_4}$
				and $\vstysubt{\vsprod{\vsty_1}{\isav}{\vsty_2}{\isunav}}{\vsty_3}$.
		\item \splitLfullprodexstA{\vsty_1}{\vsty_1''}{\vsty_1'}{\vsty_2}{\vsty_4}{\vsty_3}.
			Apply US:Commutative.
		\item \splitLfullprodexstB{\vsty_1}{\vsty_1''}{\vsty_1'}{\vsty_2}{\vsty_4}{\vsty_3}.
			Apply US:Commutative.
		\item \splitRfullprodexstA{\vsty_2}{\vsty_2''}{\vsty_2'}{\vsty_1}{\vsty_4}{\vsty_3}.
			Apply US:Commutative.
		\item \splitRfullprodexstB{\vsty_2}{\vsty_2''}{\vsty_2'}{\vsty_1}{\vsty_4}{\vsty_3}.
			Apply US:Commutative.
		\item \splitbothprodexst{\vsty_1}{\vsty_1''}{\vsty_1'}{\vsty_2}{\vsty_2''}{\vsty_2'}{\vsty_4}{\vsty_3}.
			Apply US:Commutative twice.
	\end{enumerate}
\end{itemize}

\item By induction on the derivation of
  	$\uspsplit{\uspctx, \hastype{\vvar}{\vsty}}{\uspctx_1}{\uspctx_2}$.
\begin{itemize}
\item \rulename{OM:Var}.
	By inversion on the rule.
	Since $\uspctx, \hastype{\vvar}{\vsty}$ can contain only a single binding for $\vvar$,
		$\vvar \notin \uspctx$,
		so $\vvar \notin \uspctx_2$.

\item \rulename{OM:VarTypeSplit}.
	By inversion on the rule.

\item \rulename{OM:Commutative}.
	By induction, there are four cases:
	\begin{enumerate}
		\item \splituspLexstfull{\uspctx_2'}{\uspctx}{\uspctx_2}{\uspctx_1}{\vvar}{\vsty}.
			Apply \rulename{OM:Commutative}.
		\item \splituspRexstfull{\uspctx_1'}{\uspctx}{\uspctx_2}{\uspctx_1}{\vvar}{\vsty}.
			Apply \rulename{OM:Commutative}.
		\item \splituspbothexst{\uspctx_2'}{\uspctx_1'}{\uspctx}{\uspctx_2}{\uspctx_1}{\vvar}{\vsty}{\vsty_1}{\vsty_2}.
			Apply \rulename{OM:Commutative} and \rulename{US:Commutative}.
		\item \splituspbothcrossexst{\uspctx_2'}{\uspctx_1'}{\uspctx}{\uspctx_2}{\uspctx_1}{\vvar}{\vsty}{\vsty_1}{\vsty_2}.
			Apply \rulename{OM:Commutative}.
	\end{enumerate}
\end{itemize}

\item By symmetry with part \ref{lem:split-uspctx-var-form}.
  \end{enumerate}
\end{proof}

\begin{lemma}\label{lem:vsty-split-supertyping}\strut
\begin{enumerate}
  \item If~$\vstysplit{\kwvscorec{\vstyvar}{\vsty}}{\vsty_1}{\vsty_2}$
	then $\vstysplit{\vsub{\vsty}{\kwvscorec{\vstyvar}{\vsty}}{\vstyvar}}{\vsty_1'}{\vsty_2'}$.
    \label{lem:split-corec}
  \item If~$\vstysubt{\vsty}{\vsty'}$
	and~$\vstysplit{\vsty'}{\vsty_1}{\vsty_2}$
	then~$\vstysplit{\vsty}{\vsty_1}{\vsty_2}$.
    \label{lem:vsty-split-supertyping-last}
\end{enumerate}
\end{lemma}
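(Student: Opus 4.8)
The plan is to prove the two parts in order, since part~\ref{lem:vsty-split-supertyping-last} will invoke part~\ref{lem:split-corec} but not conversely, so there is no circularity. For part~\ref{lem:split-corec}, I would induct on the derivation of $\vstysplit{\kwvscorec{\vstyvar}{\vsty}}{\vsty_1}{\vsty_2}$. By Lemma~\ref{lem:split-form}(\ref{lem:split-vsty-form}), the source of any split is either a product with at least one available component or a corecursive type, so among the rules of Figure~\ref{fig:vert-type-split} only \rulename{US:Corecursive}, \rulename{US:Subtype}, and \rulename{US:Commutative} can conclude a split whose source is $\kwvscorec{\vstyvar}{\vsty}$. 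The \rulename{US:Corecursive} case is immediate, since its premise $\vstysplit{\vsub{\vsty}{\kwvscorec{\vstyvar}{\vsty}}{\vstyvar}}{\vsty_1}{\vsty_2}$ is exactly what we want. In the \rulename{US:Subtype} and \rulename{US:Commutative} cases the source type is unchanged between premise and conclusion, so the induction hypothesis applies to the premise and I re-apply the same rule, now to the unrolling.

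For part~\ref{lem:vsty-split-supertyping-last}, I would induct on the derivation of $\vstysubt{\vsty}{\vsty'}$. The \rulename{UT:Reflexive} case is trivial, and \rulename{UT:Transitive} chains two applications of the induction hypothesis through the intermediate type. \rulename{UT:Corec1} produces a split of $\kwvscorec{\vstyvar}{\vsty}$ from a split of its unrolling, which is precisely \rulename{US:Corecursive}; \rulename{UT:Corec2} goes the other way, which is precisely part~\ref{lem:split-corec}. The substance is in the three product rules. In each of \rulename{UT:ProdLeft}, \rulename{UT:ProdRight}, and \rulename{UT:Prod}, the supertype $\vsty'$ is a product that sits on the left of a split, so by Lemma~\ref{lem:split-form}(\ref{lem:split-vsty-form}) at least one of its components is available (in particular, for \rulename{UT:ProdLeft}/\rulename{UT:ProdRight} the component whose type is {\em not} erased must carry $\isav$, which rules out the degenerate branches), and I would case on the availability pattern. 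I then invoke the canonical-forms facts for product splits---Lemma~\ref{lem:split-form}(\ref{lem:split-left-prod}), (\ref{lem:split-right-prod}), (\ref{lem:split-full-prod})---to decompose the given split of $\vsty'$ into splits of its components together with subtyping constraints relating the results to $\vsty_1$ and $\vsty_2$. Applying the induction hypothesis to the component subtyping premises (which are subderivations of $\vstysubt{\vsty}{\vsty'}$) transports those component splits down to the components of $\vsty$, and then \rulename{US:SplitLeft}, \rulename{US:SplitRight}, \rulename{US:SplitBoth}, or \rulename{US:Prod} reassembles a split of $\vsty$ itself. Finally, routine product congruences for subtyping (instances of \rulename{UT:Prod}, \rulename{UT:ProdLeft}, \rulename{UT:ProdRight}, closed under \rulename{UT:Transitive}) together with \rulename{US:Subtype} and \rulename{US:Commutative} rewrite the two halves of this reassembled split into the target halves $\vsty_1$ and $\vsty_2$.

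I expect the main obstacle to be the \rulename{UT:Prod} case when both availability annotations are $\isav$: there Lemma~\ref{lem:split-form}(\ref{lem:split-full-prod}) yields seven sub-cases, and each demands its own bookkeeping to select the right reassembly rule and the right chain of subtyping congruences and \rulename{US:Subtype}/\rulename{US:Commutative} steps that lands back on $\vsty_1$ and $\vsty_2$. The individual sub-cases are not deep, but the combinatorics are delicate, and one must be careful to pass the correct component subtyping to the induction hypothesis (the one for the component being split, not the unavailable one). The observation that a product with both components unavailable can never head a split derivation is worth stating up front, since it eliminates the branches of \rulename{UT:ProdLeft}/\rulename{UT:ProdRight} that would otherwise leave no available component on which to build the reassembled split.
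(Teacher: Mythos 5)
Your proposal is correct and follows essentially the same route as the paper: part~\ref{lem:split-corec} by induction on the splitting derivation with only the \rulename{US:Corecursive}, \rulename{US:Subtype}, and \rulename{US:Commutative} cases possible, and part~\ref{lem:vsty-split-supertyping-last} by induction on the subtyping derivation, dispatching the corecursive cases via \rulename{US:Corecursive} and part~\ref{lem:split-corec}, and handling the product cases by the canonical-forms decompositions of Lemma~\ref{lem:split-form} followed by reassembly with the splitting rules and \rulename{US:Subtype}. Your identification of the seven-way \rulename{UT:Prod} sub-case analysis as the main bookkeeping burden matches the paper's proof exactly.
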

\begin{proof}\strut
  \begin{enumerate}
\item By induction on the derivation of
  	$\vstysplit{\kwvscorec{\vstyvar}{\vsty}}{\vsty_1}{\vsty_2}$.
\begin{itemize}
\item \rulename{US:Corecursive}.
	By inversion on the rule.

\item \rulename{US:Subtype}.
	Then $\vstysplit{\kwvscorec{\vstyvar}{\vsty}}{\vsty_1'}{\vsty_2}$
		and~$\vstysubt{\vsty_1'}{\vsty_1}$.
	By induction,
		$\vstysplit{\vsub{\vsty}{\kwvscorec{\vstyvar}{\vsty}}{\vstyvar}}{\vsty_1'}{\vsty_2}$.
	Apply \rulename{US:Subtype}.

\item \rulename{US:Commutative}.
	Then $\vstysplit{\kwvscorec{\vstyvar}{\vsty}}{\vsty_2}{\vsty_1}$.
	By induction,
		$\vstysplit{\vsub{\vsty}{\kwvscorec{\vstyvar}{\vsty}}{\vstyvar}}{\vsty_2}{\vsty_1}$.
	Apply \rulename{US:Commutative}.
\end{itemize}

\item By induction on the derivation of
	$\vstysubt{\vsty}{\vsty'}$.
  \begin{itemize}
  \item \rulename{UT:ProdLeft}.
	Then $\vsty = \vsprod{\vsty_3}{\avail_1}{\vsty_4}{\avail_2}$
		and~$\vsty' = \vsprod{\vsty_5}{\isunav}{\vsty_4}{\avail_2}$.
	By Lemma \ref{lem:split-form},
		$\avail_2 = \isav$.
	By Lemma \ref{lem:split-form},
		\splitRprodexst{\vsty_4}{\vsty_4'}{\vsty_4''}{\vsty_5}{\vsty_1}{\vsty_2}.
	By \rulename{US:SplitRight},
		$\vstysplit{\vsty}{\vsprod{\vsty_3}{\avail_1}{\vsty_4'}{\isav}}{\vsprod{\vsty_3}{\isunav}{\vsty_4''}{\isav}}$.
	By \rulename{UT:PairLeft},
		$\vstysubt{\vsprod{\vsty_3}{\avail_1}{\vsty_4'}{\isav}}{\vsprod{\vsty_5}{\isunav}{\vsty_4'}{\isav}}$.
	By \rulename{UT:PairLeft},
		$\vstysubt{\vsprod{\vsty_3}{\isunav}{\vsty_4''}{\isav}}{\vsprod{\vsty_5}{\isunav}{\vsty_4''}{\isav}}$.
	Apply \rulename{US:Subtype} four times.

  \item \rulename{UT:ProdLeft}.
	By symmetry with the previous case.

\item \rulename{UT:Prod}.
	Then $\vsty = \vsprod{\vsty_3}{\avail_1}{\vsty_4}{\avail_2}$
		and~$\vsty' = \vsprod{\vsty_3'}{\avail_1}{\vsty_4'}{\avail_2}$
		and~$\vstysubt{\vsty_3}{\vsty_3'}$
		and~$\vstysubt{\vsty_4}{\vsty_4'}$.
	By Lemma \ref{lem:split-form}, there are three cases:
	\begin{enumerate}
		\item $\avail_1 = \isav$ and $\avail_2 = \isunav$.
			By Lemma \ref{lem:split-form},
				\splitLprodexst{\vsty_3'}{\vsty_3''}{\vsty_3'''}{\vsty_4'}{\vsty_1}{\vsty_2}.
			By induction,
				$\vstysplit{\vsty_3}{\vsty_3''}{\vsty_3'''}$.
			By \rulename{US:SplitLeft},
				$\vstysplit{\vsty}{\vsprod{\vsty_3''}{\isav}{\vsty_4}{\isunav}}{\vsprod{\vsty_3'''}{\isav}{\vsty_4}{\isunav}}$.
			By \rulename{UT:Pair},
				$\vstysubt{\vsprod{\vsty_3''}{\isav}{\vsty_4}{\isunav}}{\vsprod{\vsty_3''}{\isav}{\vsty_4'}{\isunav}}$.
			By \rulename{UT:Pair},
				$\vstysubt{\vsprod{\vsty_3'''}{\isav}{\vsty_4}{\isunav}}{\vsprod{\vsty_3'''}{\isav}{\vsty_4'}{\isunav}}$.
			Apply \rulename{US:Subtype} four times.
		\item $\avail_1 = \isunav$ and $\avail_2 = \isav$. By symmetry with the previous case.
		\item $\avail_1 = \avail_2 = \isav$.
			By Lemma \ref{lem:split-form}, there are seven cases:
			\begin{enumerate}
				\item $\vstysubt{\vsprod{\vsty_3'}{\isav}{\vsty_4'}{\isunav}}{\vsty_1}$
						and $\vstysubt{\vsprod{\vsty_3'}{\isunav}{\vsty_4'}{\isav}}{\vsty_2}$.
					By \rulename{US:Prod},
						$\vstysplit{\vsty}{\vsprod{\vsty_3}{\isav}{\vsty_4}{\isunav}}{\vsprod{\vsty_3}{\isunav}{\vsty_4}{\isav}}$.
					By \rulename{UT:Prod},
						$\vstysubt{\vsprod{\vsty_3}{\isav}{\vsty_4}{\isunav}}{\vsprod{\vsty_3'}{\isav}{\vsty_4'}{\isunav}}$.
					By \rulename{UT:Prod},
						$\vstysubt{\vsprod{\vsty_3}{\isunav}{\vsty_4}{\isav}}{\vsprod{\vsty_3'}{\isunav}{\vsty_4'}{\isav}}$.
					Apply \rulename{US:Subtype} twice.
				\item $\vstysubt{\vsprod{\vsty_3'}{\isunav}{\vsty_4'}{\isav}}{\vsty_1}$
						and $\vstysubt{\vsprod{\vsty_3'}{\isav}{\vsty_4'}{\isunav}}{\vsty_2}$.
					By symmetry with the previous case.
				\item \splitLfullprodexstA{\vsty_3'}{\vsty_3''}{\vsty_3'''}{\vsty_4'}{\vsty_1}{\vsty_2}.
					By induction,
						$\vstysplit{\vsty_3}{\vsty_3''}{\vsty_3'''}$.
					By \rulename{US:SplitLeft},
						$\vstysplit{\vsty}{\vsprod{\vsty_3''}{\isav}{\vsty_4}{\isav}}{\vsprod{\vsty_3'''}{\isav}{\vsty_4}{\isunav}}$.
					By \rulename{UT:Pair},
						$\vstysubt{\vsprod{\vsty_3''}{\isav}{\vsty_4}{\isav}}{\vsprod{\vsty_3''}{\isav}{\vsty_4'}{\isav}}$.
					By \rulename{UT:Pair},
						$\vstysubt{\vsprod{\vsty_3'''}{\isav}{\vsty_4}{\isunav}}{\vsprod{\vsty_3'''}{\isav}{\vsty_4'}{\isunav}}$.
					Apply \rulename{US:Subtype} four times.
				\item \splitLfullprodexstB{\vsty_3'}{\vsty_3''}{\vsty_3'''}{\vsty_4'}{\vsty_1}{\vsty_2}.
					By symmetry with the previous case.
				\item \splitRfullprodexstA{\vsty_4'}{\vsty_4''}{\vsty_4'''}{\vsty_3'}{\vsty_1}{\vsty_2}.
					By symmetry with the previous case.
				\item \splitRfullprodexstB{\vsty_4'}{\vsty_4''}{\vsty_4'''}{\vsty_3'}{\vsty_1}{\vsty_2}.
					By symmetry with the previous case.
				\item \splitbothprodexst{\vsty_3'}{\vsty_3''}{\vsty_3'''}{\vsty_4'}{\vsty_4''}{\vsty_4'''}{\vsty_1}{\vsty_2}.
					By induction,
						$\vstysplit{\vsty_3}{\vsty_3''}{\vsty_3'''}$
						and $\vstysplit{\vsty_4}{\vsty_4''}{\vsty_4'''}$.
					By \rulename{US:SplitBoth},
						$\vstysplit{\vsty}{\vsprod{\vsty_3''}{\isav}{\vsty_4''}{\isav}}{\vsprod{\vsty_3'''}{\isav}{\vsty_4'''}{\isav}}$.
					Apply \rulename{US:Subtype} twice.
			\end{enumerate}
	\end{enumerate}

\item \rulename{UT:Corec1}.
	Then $\vsty = \kwvscorec{\vstyvar}{\vsty''}$
		and~$\vsty' = \vsub{\vsty''}{\kwvscorec{\vstyvar}{\vsty''}}{\vstyvar}$.
	Apply US:Recursive.

\item \rulename{UT:Corec2}.
	Then $\vsty = \vsub{\vsty''}{\kwvscorec{\vstyvar}{\vsty''}}{\vstyvar}$
		and~$\vsty' = \kwvscorec{\vstyvar}{\vsty''}$.
	By part \ref{lem:split-corec}.

\item \rulename{UT:Transitive}.
	Then $\vstysubt{\vsty}{\vsty''}$
		and~$\vstysubt{\vsty''}{\vsty'}$.
	By induction,
		$\vstysplit{\vsty''}{\vsty_1}{\vsty_2}$.
	By induction.
\end{itemize}
\end{enumerate}
\end{proof}

\begin{lemma}\label{lem:split-move}\strut
\begin{enumerate}
  \item If~$\vstysplit{\vsty}{\vsty_1}{\vsty_2}$ then:
	\begin{enumerate}
		\item If~$\vstysplit{\vsty_1}{\vsty_1'}{\vsty_1''}$
			then there exists a $\vsty'$
			such that~$\vstysplit{\vsty}{\vsty'}{\vsty_1''}$
			and~$\vstysplit{\vsty'}{\vsty_1'}{\vsty_2}$.
		\item If~$\vstysplit{\vsty_2}{\vsty_2'}{\vsty_2''}$
			then there exists a $\vsty'$
			such that~$\vstysplit{\vsty}{\vsty'}{\vsty_2''}$
			and~$\vstysplit{\vsty'}{\vsty_2'}{\vsty_1}$.
	\end{enumerate}
    \label{lem:split-move-vsty}
  \item If~$\uspsplit{\uspctx}{\uspctx_1}{\uspctx_2}$ then:
	\begin{enumerate}
		\item If~$\uspsplit{\uspctx_1}{\uspctx_1'}{\uspctx_1''}$
			then there exists an $\uspctx'$
			such that~$\uspsplit{\uspctx}{\uspctx'}{\uspctx_1''}$
			and~$\uspsplit{\uspctx'}{\uspctx_1'}{\uspctx_2}$.
		\item If~$\uspsplit{\uspctx_2}{\uspctx_2'}{\uspctx_2''}$
			then there exists an $\uspctx'$
			such that~$\uspsplit{\uspctx}{\uspctx'}{\uspctx_2''}$
			and~$\uspsplit{\uspctx'}{\uspctx_2'}{\uspctx_1}$.
	\end{enumerate}
    \label{lem:split-move-usp}
  \item If~$\uspsplit{\uspctx}{\uspctx_1}{\uspctx_2}$
	and~$\uspsplit{\uspctx_1}{\uspctx_1'}{\uspctx_1''}$
	and~$\uspsplit{\uspctx_2}{\uspctx_2'}{\uspctx_2''}$
	then there exists an $\uspctx'$ and $\uspctx''$
	such that~$\uspsplit{\uspctx}{\uspctx'}{\uspctx''}$
	and~$\uspsplit{\uspctx'}{\uspctx_1'}{\uspctx_2'}$
	and~$\uspsplit{\uspctx''}{\uspctx_1''}{\uspctx_2''}$.
    \label{lem:split-move-two-usp}
\end{enumerate}
\end{lemma}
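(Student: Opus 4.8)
The plan is to prove the three parts in order: part~(1) is the core associativity fact for VS type splitting, part~(2) lifts it to $\uspctx$ contexts by structural induction, and part~(3) follows from part~(2) by a purely equational regrouping with no further induction. Throughout, the intuition is that splitting is associative and commutative, so the content of part~(1) is really that the ternary split ``$\vsty$ decomposes into $\vsty_1'$, $\vsty_1''$, and $\vsty_2$'' can be witnessed by an intermediate binary split grouped either way.

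For part~(1), the approach is induction on the derivation of the first split $\vstysplit{\vsty}{\vsty_1}{\vsty_2}$, with a nested case analysis on the second split $\vstysplit{\vsty_1}{\vsty_1'}{\vsty_1''}$ driven by the canonical-forms characterization of Lemma~\ref{lem:split-form}. The rules \rulename{US:Subtype} and \rulename{US:Commutative} are dispatched by the induction hypothesis (their premises contain strictly smaller split derivations) together with \rulename{UT:Transitive}; the \rulename{US:Corecursive} case reduces to the unrolled VS type, whose split derivation is again strictly smaller, so the induction stays well-founded and $\kwvscorec{\vstyvar}{\vsty}$ needs no separate treatment. In each structural case the first split forces $\vsty$ to be a product with a particular availability pattern, which in turn pins down $\vsty_1$ up to the availabilities of its sub-products; I then invoke the product-decomposition items of Lemma~\ref{lem:split-form} to rewrite $\vstysplit{\vsty_1}{\vsty_1'}{\vsty_1''}$ as splits of the \emph{components} of $\vsty$, build the witness $\vsty'$ by re-pairing those component splits with the appropriate availabilities, and re-derive $\vstysplit{\vsty}{\vsty'}{\vsty_1''}$ and $\vstysplit{\vsty'}{\vsty_1'}{\vsty_2}$ using \rulename{US:Prod}, \rulename{US:SplitLeft}/\rulename{US:SplitRight}/\rulename{US:SplitBoth}, and a final pass of \rulename{US:Subtype} (with \rulename{UT:ProdLeft}/\rulename{UT:ProdRight} to coerce the now-unavailable components) to absorb the subtyping slack that Lemma~\ref{lem:split-form} leaves behind. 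The second half of part~(1), which splits $\vsty_2$ instead of $\vsty_1$, follows from the first half by \rulename{US:Commutative}.

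For part~(2), the approach is induction on the structure of the context $\uspctx$. The empty context is immediate by \rulename{OM:Empty}. For a context $\uspctx_0, \hastype{\vvar}{\vsty}$ (and symmetrically for a generator binding, using the rules of Figure~\ref{fig:generator-additions} and the generator item of Lemma~\ref{lem:split-form}), the variable item of Lemma~\ref{lem:split-form} tells us how $\hastype{\vvar}{\vsty}$ is distributed across $\uspsplit{\uspctx}{\uspctx_1}{\uspctx_2}$, and a second application tells us how the resulting piece is distributed across $\uspsplit{\uspctx_1}{\uspctx_1'}{\uspctx_1''}$; combining the finitely many resulting configurations, I use part~(1) of this lemma to reassociate the type-level split of $\vsty$ whenever the binding was split by \rulename{OM:VarTypeSplit}, the induction hypothesis to reassociate the residual contexts $\uspctx_0$, and \rulename{OM:Var}/\rulename{OM:VarTypeSplit}/\rulename{OM:Commutative} to reassemble the witness $\uspctx'$. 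The $\uspctx_2$ variant again follows by \rulename{OM:Commutative}. Part~(3) then follows: the three given splits exhibit $\uspctx$ as a four-way split of $\uspctx_1'$, $\uspctx_1''$, $\uspctx_2'$, and $\uspctx_2''$, and a constant number of applications of the associativity of part~(2), interleaved with \rulename{OM:Commutative} to swap the sides of a split, regroups these so that $\uspctx_1'$ and $\uspctx_2'$ form one outer summand $\uspctx'$ and $\uspctx_1''$ and $\uspctx_2''$ the other, $\uspctx''$.

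I expect the main obstacle to be the bookkeeping in the product case of part~(1): the number of availability configurations of a product VS type, multiplied by the seven subcases of the full-product item of Lemma~\ref{lem:split-form}, makes that case large, and in each branch one must verify that the subtyping coercions land on \emph{unavailable} components (so that \rulename{UT:ProdLeft}/\rulename{UT:ProdRight} apply) rather than on available ones, where rewriting the component type would be unsound. Once part~(1) is organized this way, parts~(2) and~(3) are routine.
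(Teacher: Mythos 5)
Your proposal follows essentially the same route as the paper's proof: part~(1) by induction on the derivation of $\vstysplit{\vsty}{\vsty_1}{\vsty_2}$ with the secondary split analyzed via Lemma~\ref{lem:split-form} and the witness reassembled from \rulename{US:Prod}/\rulename{US:SplitLeft}/\rulename{US:SplitRight}/\rulename{US:SplitBoth} plus \rulename{US:Subtype} coercions landing only on unavailable components, part~(2) lifted through the context-splitting rules using part~(1) for the \rulename{OM:VarTypeSplit} configurations, and part~(3) as a finite regrouping via part~(2) and \rulename{OM:Commutative}. The one correction: in the \rulename{US:Subtype} case of part~(1), \rulename{UT:Transitive} is not what lets you invoke the induction hypothesis --- the premise gives you $\vstysplit{\vsty}{\vsty_1'}{\vsty_2}$ with $\vstysubt{\vsty_1'}{\vsty_1}$ but your secondary split is of $\vsty_1$, so you must first transfer it down to the subtype $\vsty_1'$ using Lemma~\ref{lem:vsty-split-supertyping}, exactly as the paper does.
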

\begin{proof}\strut
  \begin{enumerate}
  \item By induction on the derivation of
  $\vstysplit{\vsty}{\vsty_1}{\vsty_2}$.
  \begin{itemize}
  \item \rulename{US:Prod}.
	Then $\vsty = \vsprod{\vsty_3}{\isav}{\vsty_4}{\isav}$
		and~$\vsty_1 = \vsprod{\vsty_3}{\isav}{\vsty_4}{\isunav}$
		and~$\vsty_2 = \vsprod{\vsty_3}{\isunav}{\vsty_4}{\isav}$.
	\begin{enumerate}
		\item If~$\vstysplit{\vsty_1}{\vsty_1'}{\vsty_1''}$,
			by Lemma \ref{lem:split-form},
				\splitLprodexst{\vsty_3}{\vsty_3'}{\vsty_3''}{\vsty_4}{\vsty_1'}{\vsty_1''}.
			By \rulename{US:SplitLeft},
				$\vstysplit{\vsty}{\vsprod{\vsty_3'}{\isav}{\vsty_4}{\isav}}{\vsprod{\vsty_3''}{\isav}{\vsty_4}{\isunav}}$.
			By \rulename{US:Subtype},
				$\vstysplit{\vsty}{\vsprod{\vsty_3'}{\isav}{\vsty_4}{\isav}}{\vsty_1''}$.
			By \rulename{US:Prod},
				$\vstysplit{\vsprod{\vsty_3'}{\isav}{\vsty_4}{\isav}}{\vsprod{\vsty_3'}{\isav}{\vsty_4}{\isunav}}
					{\vsprod{\vsty_3'}{\isunav}{\vsty_4}{\isav}}$.
			By \rulename{UT:ProdLeft},
				$\vstysubt{\vsprod{\vsty_3'}{\isunav}{\vsty_4}{\isav}}{\vsprod{\vsty_3}{\isunav}{\vsty_4}{\isav}}$.
			Apply \rulename{US:Subtype} twice.
		\item If~$\vstysplit{\vsty_2}{\vsty_2'}{\vsty_2''}$,
			by symmetry with the previous case.
	\end{enumerate}

  \item \rulename{US:SplitLeft}.
	Then $\vsty = \vsprod{\vsty_3}{\isav}{\vsty_4}{\avail}$
		and~$\vstysplit{\vsty_3}{\vsty_3'}{\vsty_3''}$
		and~$\vsty_1 = \vsprod{\vsty_3'}{\isav}{\vsty_4}{\avail}$
		and~$\vsty_2 = \vsprod{\vsty_3''}{\isav}{\vsty_4}{\isunav}$.
	\begin{enumerate}
		\item If~$\vstysplit{\vsty_1}{\vsty_1'}{\vsty_1''}$,
			there are two cases:
			\begin{enumerate}
				\item $\avail = \isunav$.
					By Lemma \ref{lem:split-form},
						\splitLprodexst{\vsty_3'}{\vsty_3'''}{\vsty_3''''}{\vsty_4}{\vsty_1'}{\vsty_1''}.
					By induction,
						\vstysplitexst{\vsty''}{\vsty_3}{\vsty_3'''}{\vsty_3''''}{\vsty_3''}.
					By \rulename{US:SplitLeft},
						$\vstysplit{\vsty}{\vsprod{\vsty''}{\isav}{\vsty_4}{\isunav}}{\vsprod{\vsty_3''''}{\isav}{\vsty_4}{\isunav}}$.
					By \rulename{US:Subtype},
						$\vstysplit{\vsty}{\vsprod{\vsty''}{\isav}{\vsty_4}{\isunav}}{\vsty_1''}$.
					By \rulename{US:SplitLeft},
						$\vstysplit{\vsprod{\vsty''}{\isav}{\vsty_4}{\isunav}}{\vsprod{\vsty_3'''}{\isav}{\vsty_4}{\isunav}}{\vsty_2}$.
					Apply \rulename{US:Subtype}.
				\item $\avail = \isav$.
					By Lemma \ref{lem:split-form}, there are seven cases:
					\begin{enumerate}
						\item $\vstysubt{\vsprod{\vsty_3'}{\isav}{\vsty_4}{\isunav}}{\vsty_1'}$
								and $\vstysubt{\vsprod{\vsty_3'}{\isunav}{\vsty_4}{\isav}}{\vsty_1''}$.
							By \rulename{US:Prod},
								$\vstysplit{\vsty}{\vsprod{\vsty_3}{\isav}{\vsty_4}{\isunav}}{\vsprod{\vsty_3}{\isunav}{\vsty_4}{\isav}}$.
							By \rulename{UT:ProdLeft},
								$\vstysubt{\vsprod{\vsty_3}{\isunav}{\vsty_4}{\isav}}{\vsprod{\vsty_3'}{\isunav}{\vsty_4}{\isav}}$.
							By two applications of \rulename{US:Subtype},
								$\vstysplit{\vsty}{\vsprod{\vsty_3}{\isav}{\vsty_4}{\isunav}}{\vsty_1''}$.
							By \rulename{US:SplitLeft},
								$\vstysplit{\vsprod{\vsty_3}{\isav}{\vsty_4}{\isunav}}{\vsprod{\vsty_3'}{\isav}{\vsty_4}{\isunav}}{\vsty_2}$.
							Apply \rulename{US:Subtype}.
						\item $\vstysubt{\vsprod{\vsty_3'}{\isunav}{\vsty_4}{\isav}}{\vsty_1'}$
								and $\vstysubt{\vsprod{\vsty_3'}{\isav}{\vsty_4}{\isunav}}{\vsty_1''}$.
							By symmetry with the previous case.
						\item \splitLfullprodexstA{\vsty_3'}{\vsty_3'''}{\vsty_3''''}{\vsty_4}{\vsty_1'}{\vsty_1''}.
							By induction,
								\vstysplitexst{\vsty''}{\vsty_3}{\vsty_3'''}{\vsty_3''''}{\vsty_3''}.
							By \rulename{US:SplitLeft},
								$\vstysplit{\vsty}{\vsprod{\vsty''}{\isav}{\vsty_4}{\isav}}{\vsprod{\vsty_3''''}{\isav}{\vsty_4}{\isunav}}$.
							By \rulename{US:Subtype},
								$\vstysplit{\vsty}{\vsprod{\vsty''}{\isav}{\vsty_4}{\isav}}{\vsty_1''}$.
							By \rulename{US:SplitLeft},
								$\vstysplit{\vsprod{\vsty''}{\isav}{\vsty_4}{\isav}}{\vsprod{\vsty_3'''}{\isav}{\vsty_4}{\isav}}{\vsty_2}$.
							Apply \rulename{US:Subtype}.
						\item \splitLfullprodexstB{\vsty_3'}{\vsty_3'''}{\vsty_3''''}{\vsty_4}{\vsty_1'}{\vsty_1''}.
							By symmetry with the previous case.
						\item \splitRfullprodexstA{\vsty_4}{\vsty_4'}{\vsty_4''}{\vsty_3'}{\vsty_1'}{\vsty_1''}.
							By \rulename{US:SplitRight},
								$\vstysplit{\vsty}{\vsprod{\vsty_3}{\isav}{\vsty_4'}{\isav}}{\vsprod{\vsty_3}{\isunav}{\vsty_4''}{\isav}}$.
							By \rulename{UT:ProdLeft},
								$\vstysubt{\vsprod{\vsty_3}{\isunav}{\vsty_4''}{\isav}}{\vsprod{\vsty_3'}{\isunav}{\vsty_4''}{\isav}}$.
							By two applications of \rulename{US:Subtype},
								$\vstysplit{\vsty}{\vsprod{\vsty_3}{\isav}{\vsty_4'}{\isav}}{\vsty_1''}$.
							By \rulename{US:SplitLeft},
								$\vstysplit{\vsprod{\vsty_3}{\isav}{\vsty_4'}{\isav}}{\vsprod{\vsty_3'}{\isav}{\vsty_4'}{\isav}}{\vsty_2}$.
							Apply \rulename{US:Subtype}.
						\item \splitRfullprodexstB{\vsty_4}{\vsty_4'}{\vsty_4''}{\vsty_3'}{\vsty_1'}{\vsty_1''}.
							By symmetry with the previous case.
						\item \splitbothprodexst{\vsty_3'}{\vsty_3'''}{\vsty_3''''}{\vsty_4}{\vsty_4'}{\vsty_4''}{\vsty_1'}{\vsty_1''}.
							By induction,
								\vstysplitexst{\vsty''}{\vsty_3}{\vsty_3'''}{\vsty_3''''}{\vsty_3''}.
							By \rulename{US:SplitBoth},
								$\vstysplit{\vsty}{\vsprod{\vsty''}{\isav}{\vsty_4'}{\isav}}{\vsprod{\vsty_3''''}{\isav}{\vsty_4''}{\isav}}$.
							By \rulename{US:Subtype},
								$\vstysplit{\vsty}{\vsprod{\vsty''}{\isav}{\vsty_4'}{\isav}}{\vsty_1''}$.
							By \rulename{US:SplitLeft},
								$\vstysplit{\vsprod{\vsty''}{\isav}{\vsty_4'}{\isav}}
									{\vsprod{\vsty_3'''}{\isav}{\vsty_4'}{\isav}}
									{\vsprod{\vsty_3''}{\isav}{\vsty_4'}{\isunav}}$.
							By \rulename{UT:ProdRight},
								$\vstysubt{\vsprod{\vsty_3''}{\isav}{\vsty_4'}{\isunav}}{\vsty_2}$.
							Apply \rulename{US:Subtype} twice.
					\end{enumerate}
			\end{enumerate}
		\item If~$\vstysplit{\vsty_2}{\vsty_2'}{\vsty_2''}$,
			by symmetry with the previous case when $\avail = \isunav$.

	\end{enumerate}

  \item \rulename{US:SplitRight}.
	By symmetry with the previous case.

  \item \rulename{US:SplitBoth}.
	Then $\vsty = \vsprod{\vsty_3}{\isav}{\vsty_4}{\isav}$
		and~$\vstysplit{\vsty_3}{\vsty_3'}{\vsty_3''}$
		and~$\vstysplit{\vsty_4}{\vsty_4'}{\vsty_4''}$
		and~$\vsty_1 = \vsprod{\vsty_3'}{\isav}{\vsty_4'}{\isav}$
		and~$\vsty_2 = \vsprod{\vsty_3''}{\isav}{\vsty_4''}{\isav}$.
	\begin{enumerate}
		\item If~$\vstysplit{\vsty_1}{\vsty_1'}{\vsty_1''}$,
			by Lemma \ref{lem:split-form}, there are seven cases:
			\begin{enumerate}
				\item $\vstysubt{\vsprod{\vsty_3'}{\isav}{\vsty_4'}{\isunav}}{\vsty_1'}$
						and $\vstysubt{\vsprod{\vsty_3'}{\isunav}{\vsty_4'}{\isav}}{\vsty_1''}$.
					By \rulename{US:Commutative},
						$\vstysplit{\vsty_4}{\vsty_4''}{\vsty_4'}$.
					By \rulename{US:SplitRight},
						$\vstysplit{\vsty}{\vsprod{\vsty_3}{\isav}{\vsty_4''}{\isav}}{\vsprod{\vsty_3}{\isunav}{\vsty_4'}{\isav}}$.
					By \rulename{UT:ProdLeft},
						$\vstysubt{\vsprod{\vsty_3}{\isunav}{\vsty_4'}{\isav}}{\vsprod{\vsty_3'}{\isunav}{\vsty_4'}{\isav}}$.
					By two applications of \rulename{US:Subtype},
						$\vstysplit{\vsty}{\vsprod{\vsty_3}{\isav}{\vsty_4''}{\isav}}{\vsty_1''}$.
					By \rulename{US:Commutative},
						$\vstysplit{\vsty_3}{\vsty_3''}{\vsty_3'}$.
					By \rulename{US:SplitLeft},
						$\vstysplit{\vsprod{\vsty_3}{\isav}{\vsty_4''}{\isav}}{\vsty_2}{\vsprod{\vsty_3'}{\isav}{\vsty_4''}{\isunav}}$.
					By \rulename{US:Commutative},
						$\vstysplit{\vsprod{\vsty_3}{\isav}{\vsty_4''}{\isav}}{\vsprod{\vsty_3'}{\isav}{\vsty_4''}{\isunav}}{\vsty_2}$.
					By \rulename{UT:ProdRight},
						$\vstysubt{\vsprod{\vsty_3'}{\isav}{\vsty_4''}{\isunav}}{\vsprod{\vsty_3'}{\isav}{\vsty_4'}{\isunav}}$.
					Apply \rulename{US:Subtype}.
				\item $\vstysubt{\vsprod{\vsty_3'}{\isunav}{\vsty_4'}{\isav}}{\vsty_1'}$
						and $\vstysubt{\vsprod{\vsty_3'}{\isav}{\vsty_4'}{\isunav}}{\vsty_1''}$.
					By symmetry with the previous case.
				\item \splitLfullprodexstA{\vsty_3'}{\vsty_3'''}{\vsty_3''''}{\vsty_4'}{\vsty_1'}{\vsty_1''}.
					By induction,
						\vstysplitexst{\vsty''}{\vsty_3}{\vsty_3'''}{\vsty_3''''}{\vsty_3''}.
					By \rulename{US:SplitLeft},
						$\vstysplit{\vsty}{\vsprod{\vsty''}{\isav}{\vsty_4}{\isav}}{\vsprod{\vsty_3''''}{\isav}{\vsty_4}{\isunav}}$.
					By \rulename{UT:ProdRight},
						$\vstysubt{\vsprod{\vsty_3''''}{\isav}{\vsty_4}{\isunav}}{\vsprod{\vsty_3''''}{\isav}{\vsty_4'}{\isunav}}$.
					By two applications of \rulename{US:Subtype},
						$\vstysplit{\vsty}{\vsprod{\vsty''}{\isav}{\vsty_4}{\isav}}{\vsty_1''}$.
					By \rulename{US:SplitBoth},
						$\vstysplit{\vsprod{\vsty''}{\isav}{\vsty_4}{\isav}}{\vsprod{\vsty_3'''}{\isav}{\vsty_4'}{\isav}}{\vsty_2}$.
					Apply \rulename{US:Subtype}.
				\item \splitLfullprodexstB{\vsty_3'}{\vsty_3'''}{\vsty_3''''}{\vsty_4'}{\vsty_1'}{\vsty_1''}.
					By symmetry with the previous case.
				\item \splitRfullprodexstA{\vsty_4'}{\vsty_4'''}{\vsty_4''''}{\vsty_3'}{\vsty_1'}{\vsty_1''}.
					By symmetry with the previous case.
				\item \splitRfullprodexstB{\vsty_4'}{\vsty_4'''}{\vsty_4''''}{\vsty_3'}{\vsty_1'}{\vsty_1''}.
					By symmetry with the previous case.
				\item \splitbothprodexst{\vsty_3'}{\vsty_3'''}{\vsty_3''''}{\vsty_4'}{\vsty_4'''}{\vsty_4''''}{\vsty_1'}{\vsty_1''}.
					By induction,
						\vstysplitexst{\vsty''}{\vsty_3}{\vsty_3'''}{\vsty_3''''}{\vsty_3''}
						and~\vstysplitexst{\vsty'''}{\vsty_4}{\vsty_4'''}{\vsty_4''''}{\vsty_4''}.
					By \rulename{US:SplitBoth},
						$\vstysplit{\vsty}{\vsprod{\vsty''}{\isav}{\vsty'''}{\isav}}{\vsprod{\vsty_3''''}{\isav}{\vsty_4''''}{\isav}}$.
					By \rulename{US:Subtype},
						$\vstysplit{\vsty}{\vsprod{\vsty''}{\isav}{\vsty'''}{\isav}}{\vsty_1''}$.
					By \rulename{US:SplitBoth},
						$\vstysplit{\vsprod{\vsty''}{\isav}{\vsty'''}{\isav}}{\vsprod{\vsty_3'''}{\isav}{\vsty_4'''}{\isav}}{\vsty_2}$.
					Apply \rulename{US:Subtype}.
			\end{enumerate}
		\item If~$\vstysplit{\vsty_2}{\vsty_2'}{\vsty_2''}$,
			by symmetry with the previous case.
	\end{enumerate}

  \item \rulename{US:Recursive}.
	Then $\vsty = \kwvsrec{\vstyvar}{\vsty''}$
		and~$\vstysplit{\vsub{\vsty''}{\kwvsrec{\vstyvar}{\vsty''}}{\vstyvar}}{\vsty_1}{\vsty_2}$.
	\begin{enumerate}
		\item If~$\vstysplit{\vsty_1}{\vsty_1'}{\vsty_1''}$,
			by induction,
				\vstysplitexst{\vsty'}{\vsub{\vsty''}{\kwvsrec{\vstyvar}{\vsty''}}{\vstyvar}}{\vsty_1'}{\vsty_1''}{\vsty_2}.
			Apply \rulename{US:Recursive}.
		\item If~$\vstysplit{\vsty_2}{\vsty_2'}{\vsty_2''}$,
			by symmetry with the previous case.
	\end{enumerate}

  \item \rulename{US:Corecursive}.
	By symmetry with the previous case.

  \item \rulename{US:Subtype}.
	Then $\vstysplit{\vsty}{\vsty_1'''}{\vsty_2}$
		and~$\vstysubt{\vsty_1'''}{\vsty_1}$.
	\begin{enumerate}
		\item If~$\vstysplit{\vsty_1}{\vsty_1'}{\vsty_1''}$,
			by Lemma \ref{lem:vsty-split-supertyping},
				$\vstysplit{\vsty_1'''}{\vsty_1'}{\vsty_1''}$.
			By induction.
		\item If~$\vstysplit{\vsty_2}{\vsty_2'}{\vsty_2''}$,
			by induction,
				\vstysplitexst{\vsty'}{\vsty}{\vsty_2'}{\vsty_2''}{\vsty_1'''}.
			By \rulename{US:Commutative} and \rulename{US:Subtype},
				$\vstysplit{\vsty'}{\vsty_1}{\vsty_2'}$.
			Apply \rulename{US:Commutative}.
	\end{enumerate}

  \item \rulename{US:Commutative}.
	By induction.
  \end{itemize}

\item By induction on the derivation of
  $\uspsplit{\uspctx}{\uspctx_1}{\uspctx_2}$.

  \begin{itemize}
  \item \rulename{OM:Empty}.
	Then $\uspctx = \uspctx_1 = \uspctx_2 = \ectx$.
	\begin{enumerate}
		\item If~$\uspsplit{\uspctx_1}{\uspctx_1'}{\uspctx_1''}$,
			by inversion on \rulename{OM:Empty},
				$\uspctx_1' = \uspctx_1'' = \ectx$.
			Let $\uspctx_3 = \ectx$.
			Apply \rulename{OM:Empty} twice.
		\item If~$\uspsplit{\uspctx_2}{\uspctx_2'}{\uspctx_2''}$,
			by symmetry with the previous case.
	\end{enumerate}

  \item \rulename{OM:Commutative}.
	By induction.

\item \rulename{OM:Var}.
	Then $\uspctx = \uspctx'', \hastype{\vvar}{\vsty}$
		and~$\uspctx_1 = \uspctx_3, \hastype{\vvar}{\vsty}$
		and~$\uspsplit{\uspctx''}{\uspctx_3}{\uspctx_2}$.
	\begin{enumerate}
		\item If~$\uspsplit{\uspctx_1}{\uspctx_1'}{\uspctx_1''}$,
			by Lemma \ref{lem:split-form}, there are four cases:
			\begin{enumerate}
				\item \splituspLexst{\uspctx_1'''}{\uspctx_3}{\uspctx_1'}{\uspctx_1''}{\vvar}{\vsty}.
					By induction,
						\uspsplitexst{\uspctx'''}{\uspctx''}{\uspctx_1'''}{\uspctx_1''}{\uspctx_2}.
					By \rulename{OM:Var},
						$\uspsplit{\uspctx}{\uspctx''', \hastype{\vvar}{\vsty}}{\uspctx_1''}$.
					By \rulename{OM:Var},
						$\uspsplit{\uspctx''', \hastype{\vvar}{\vsty}}{\uspctx_1'}{\uspctx_2}$.
				\item \splituspRexst{\uspctx_1''''}{\uspctx_3}{\uspctx_1'}{\uspctx_1''}{\vvar}{\vsty}.
					By symmetry with the previous case.
				\item \splituspbothexst{\uspctx_1'''}{\uspctx_1''''}{\uspctx_3}{\uspctx_1'}{\uspctx_1''}{\vvar}{\vsty}{\vsty_1}{\vsty_2}.
					By induction,
						\uspsplitexst{\uspctx'''}{\uspctx''}{\uspctx_1'''}{\uspctx_1''''}{\uspctx_2}.
					By \rulename{OM:VarTypeSplit},
						$\uspsplit{\uspctx}{\uspctx''', \hastype{\vvar}{\vsty_1}}{\uspctx_1''}$.
					By \rulename{OM:Var},
						$\uspsplit{\uspctx''', \hastype{\vvar}{\vsty}}{\uspctx_1'}{\uspctx_2}$.
				\item \splituspbothcrossexst{\uspctx_1'''}{\uspctx_1''''}{\uspctx_3}{\uspctx_1'}{\uspctx_1''}{\vvar}{\vsty}{\vsty_1}{\vsty_2}.
					By symmetry with the previous case.
			\end{enumerate}
		\item If~$\uspsplit{\uspctx_2}{\uspctx_2'}{\uspctx_2''}$,
			by induction,
				\uspsplitexst{\uspctx'''}{\uspctx''}{\uspctx_2'}{\uspctx_2''}{\uspctx_3}.
			By \rulename{OM:Var},
				$\uspsplit{\uspctx}{\uspctx''', \hastype{\vvar}{\vsty}}{\uspctx_2''}$.
			By \rulename{OM:Commutative} and \rulename{OM:Var},
						$\uspsplit{\uspctx''', \hastype{\vvar}{\vsty}}{\uspctx_1}{\uspctx_2'}$.
			Apply \rulename{OM:Commutative}.
	\end{enumerate}

\item \rulename{OM:Gen}.
	By symmetry with the previous case.

\item \rulename{OM:VarTypeSplit}.
	Then $\uspctx = \uspctx'', \hastype{\vvar}{\vsty}$
		and~$\vstysplit{\vsty}{\vsty_1}{\vsty_2}$
		and~$\uspctx_1 = \uspctx_3, \hastype{\vvar}{\vsty_1}$
		and~$\uspctx_2 = \uspctx_4, \hastype{\vvar}{\vsty_2}$
		and~$\uspsplit{\uspctx''}{\uspctx_3}{\uspctx_4}$.
	\begin{enumerate}
		\item If~$\uspsplit{\uspctx_1}{\uspctx_1'}{\uspctx_1''}$,
			by Lemma \ref{lem:split-form}, there are four cases:
			\begin{enumerate}
				\item \splituspLexst{\uspctx_1'''}{\uspctx_3}{\uspctx_1'}{\uspctx_1''}{\vvar}{\vsty_1}.
					By induction,
						\uspsplitexst{\uspctx'''}{\uspctx''}{\uspctx_1'''}{\uspctx_1''}{\uspctx_4}.
					By \rulename{OM:Var},
						$\uspsplit{\uspctx}{\uspctx''', \hastype{\vvar}{\vsty}}{\uspctx_1''}$.
					By \rulename{OM:VarTypeSplit},
						$\uspsplit{\uspctx''', \hastype{\vvar}{\vsty}}{\uspctx_1'}{\uspctx_2}$.
				\item \splituspRexst{\uspctx_1''''}{\uspctx_3}{\uspctx_1'}{\uspctx_1''}{\vvar}{\vsty}.
					By symmetry with the previous case.
				\item \splituspbothexst{\uspctx_1'''}{\uspctx_1''''}{\uspctx_3}{\uspctx_1'}{\uspctx_1''}{\vvar}{\vsty_1}{\vsty_1'}{\vsty_1''}.
					By part \ref{lem:split-move-vsty},
						\vstysplitexst{\vsty'}{\vsty}{\vsty_1'}{\vsty_1''}{\vsty_2}.
					By induction,
						\uspsplitexst{\uspctx'''}{\uspctx''}{\uspctx_1'''}{\uspctx_1''''}{\uspctx_4}.
					By \rulename{OM:VarTypeSplit},
						$\uspsplit{\uspctx}{\uspctx''', \hastype{\vvar}{\vsty'}}{\uspctx_1''}$.
					By \rulename{OM:VarTypeSplit},
						$\uspsplit{\uspctx''', \hastype{\vvar}{\vsty'}}{\uspctx_1'}{\uspctx_2}$.
				\item \splituspbothcrossexst{\uspctx_1'''}{\uspctx_1''''}{\uspctx_3}{\uspctx_1'}{\uspctx_1''}{\vvar}{\vsty_1}{\vsty_1'}{\vsty_1''}.
					By symmetry with the previous case.
			\end{enumerate}
		\item If~$\uspsplit{\uspctx_2}{\uspctx_2'}{\uspctx_2''}$,
			by symmetry with the previous case.
	\end{enumerate}

\item \rulename{OM:GenTypeSplit}.
	By symmetry with the previous case.
  \end{itemize}

%	If~$\uspsplit{\uspctx}{\uspctx_1}{\uspctx_2}$
%	and~$\uspsplit{\uspctx_1}{\uspctx_1'}{\uspctx_1''}$
%	and~$\uspsplit{\uspctx_2}{\uspctx_2'}{\uspctx_2''}$
%	then there exists an $\uspctx'$ and $\uspctx''$
%	such that~$\uspsplit{\uspctx}{\uspctx_3}{\uspctx_4}$
%	and~$\uspsplit{\uspctx'}{\uspctx_1'}{\uspctx_2'}$
%	and~$\uspsplit{\uspctx''}{\uspctx_1''}{\uspctx_2''}$.
\item
	By part \ref{lem:split-move-usp},
		\uspsplitexst{\uspctx'''}{\uspctx}{\uspctx_1'}{\uspctx_1''}{\uspctx_2}.
	By part \ref{lem:split-move-usp},
		\uspsplitexst{\uspctx'}{\uspctx'''}{\uspctx_2'}{\uspctx_2''}{\uspctx_1'}.
	By \rulename{OM:Commutative},
		$\uspsplit{\uspctx'}{\uspctx_1'}{\uspctx_2'}$.
	By \rulename{OM:Commutative},
		$\uspsplit{\uspctx'''}{\uspctx_2''}{\uspctx'}$.
	By part \ref{lem:split-move-usp},
		\uspsplitexst{\uspctx''}{\uspctx}{\uspctx_2''}{\uspctx'}{\uspctx_1''}.
	By \rulename{OM:Commutative},
		$\uspsplit{\uspctx}{\uspctx'}{\uspctx''}$.
	By \rulename{OM:Commutative},
		$\uspsplit{\uspctx''}{\uspctx_1''}{\uspctx_2''}$.

  \end{enumerate}
\end{proof}

With the help of Lemmas \ref{lem:vsty-split-subtypes} and \ref{lem:usp-split-upstream}, Lemma \ref{lem:usp-split-weakening} proves that $\uspctx$ context splitting maintains weakening in VS typing: if $\uspctx$ splits into $\uspctx'$ and $\vs$ is well-typed under $\uspctx'$, then $\vs$ is also well-typed under $\uspctx$. The notable complication is a result of rule \rulename{OM:VarTypeSplit}: a variable $\vvar$ may be mapped to $\vsty$ in $\uspctx$ while being mapped to $\vsty'$ in $\uspctx'$ if $\vsty$ splits into $\vsty'$. In order for $\uspctx'$ to be weakened to $\uspctx$, it must be valid for $\vvar$ to have type $\vsty$ when it was expected to have type $\vsty'$; in other words, $\vsty$ must be a subtype of $\vsty'$. This is accomplished by Lemma \ref{lem:vsty-split-subtypes}, which shows that VS types always split into supertypes. Lemma \ref{lem:usp-split-upstream} acts as intermidiate between \ref{lem:vsty-split-subtypes} and \ref{lem:usp-split-weakening}, stating that if $\uspctx$ splits into $\uspctx'$ and a variable $\vvar$ is mapped to $\vsty$ in $\uspctx'$, then $\vvar$ is mapped to a subtype of $\vsty$ in $\uspctx$. An additional consequence of Lemma \ref{lem:usp-split-upstream} is that $\uspctx$ context splitting cannot introduce new domain elements: if $\uspctx$ splits into $\uspctx'$, then only variables and generators within $\uspctx$ may appear within $\uspctx'$.
%\fr{I am now realizing that Lemma 6 implies Lemma 8. Before I remove Lemma 8, could someone double-check?}
%\skm{I believe Lemma 8 is (implied by) the contrapositive of Lemma 6. Constructively, you need to be careful when trying to use the contrapositives of things (they don't hold in general) but in this case it should be OK.}

\begin{lemma}\label{lem:vsty-split-subtypes}
	If $\vstysplit{\vsty}{\vsty_1}{\vsty_2}$
	then~$\vstysubt{\vsty}{\vsty_1}$
	and~$\vstysubt{\vsty}{\vsty_2}$.
\end{lemma}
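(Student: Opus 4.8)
The plan is to prove the statement by a straightforward structural induction on the derivation of $\vstysplit{\vsty}{\vsty_1}{\vsty_2}$, establishing both $\vstysubt{\vsty}{\vsty_1}$ and $\vstysubt{\vsty}{\vsty_2}$ at once. (Induction on the derivation rather than on the structure of $\vsty$ is necessary, since \rulename{US:Subtype} and \rulename{US:Commutative} do not shrink the type.) The two conclusions are symmetric through \rulename{US:Commutative}, so the bulk of the argument concerns the rules that actually produce a split.

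First I would handle the base rule \rulename{US:Prod}: here $\vsty = \vsprod{\vsty_1'}{\isav}{\vsty_2'}{\isav}$, $\vsty_1 = \vsprod{\vsty_1'}{\isav}{\vsty_2'}{\isunav}$, and $\vsty_2 = \vsprod{\vsty_1'}{\isunav}{\vsty_2'}{\isav}$, so $\vstysubt{\vsty}{\vsty_1}$ is immediate from \rulename{UT:ProdRight} and $\vstysubt{\vsty}{\vsty_2}$ from \rulename{UT:ProdLeft}. For the ``search'' rules \rulename{US:SplitBoth}, \rulename{US:SplitLeft}, \rulename{US:SplitRight}, I would apply the induction hypothesis to the premise(s) about the affected component(s) to obtain component subtypings, lift them through the product with \rulename{UT:Prod} (using \rulename{UT:Reflexive} on any untouched component), and then, for \rulename{US:SplitLeft}/\rulename{US:SplitRight}, compose once more with \rulename{UT:ProdRight} (resp.\ \rulename{UT:ProdLeft}) and \rulename{UT:Transitive} to account for the availability that these rules force to $\isunav$ on one of the two resulting types. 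For \rulename{US:Corecursive}, the induction hypothesis gives the subtypings out of the unrolled type $\vsub{\vsty'}{\kwvscorec{\vstyvar}{\vsty'}}{\vstyvar}$; prepending the rolling subtyping \rulename{UT:Corec1} via \rulename{UT:Transitive} yields the claim for $\kwvscorec{\vstyvar}{\vsty'}$. Rule \rulename{US:Subtype} follows from the induction hypothesis plus one application of \rulename{UT:Transitive}, and \rulename{US:Commutative} is immediate from the induction hypothesis with the two conclusions exchanged.

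I expect no real obstacle: the only bookkeeping is keeping the availability annotations aligned in the \rulename{US:SplitLeft}/\rulename{US:SplitRight} cases (where the preserved annotation $\avail$ may be either $\isav$ or $\isunav$, while the second split type forces $\isunav$), which is dispatched uniformly by combining \rulename{UT:Prod}, \rulename{UT:ProdRight}/\rulename{UT:ProdLeft}, and \rulename{UT:Transitive}. The lemma then serves, together with the canonical-forms results of Lemma~\ref{lem:split-form}, as an ingredient in the subsequent lemmas showing that $\uspctx$-context splitting only yields subtype refinements of the original bindings and hence preserves weakening of VS typing.
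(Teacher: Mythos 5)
Your proposal is correct and follows essentially the same route as the paper's proof: induction on the splitting derivation, with \rulename{UT:ProdLeft}/\rulename{UT:ProdRight} for the base case, \rulename{UT:Prod} plus a final \rulename{UT:ProdRight}/\rulename{UT:ProdLeft} and \rulename{UT:Transitive} for the search rules, \rulename{UT:Corec1} composed via \rulename{UT:Transitive} for the corecursive case, and transitivity/symmetry for the remaining rules. No gaps.
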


\begin{proof}
	By induction on the derivation of $\vstysplit{\vsty}{\vsty_1}{\vsty_2}$.
	\begin{itemize}
  	\item \rulename{US:Prod}.
		Then $\vsty = \vsprod{\vsty_1'}{\isav}{\vsty_2'}{\isav}$
			and $\vsty_1 = \vsprod{\vsty_1'}{\isav}{\vsty_2'}{\isunav}$
			and $\vsty_2 = \vsprod{\vsty_1'}{\isunav}{\vsty_2'}{\isav}$.
		Apply \rulename{UT:ProdLeft} and \rulename{UT:ProdRight}.

  	\item \rulename{US:SplitBoth}.
		Then $\vsty = \vsprod{\vsty_1'}{\isav}{\vsty_2'}{\isav}$
			and $\vsty_1 = \vsprod{\vsty_1''}{\isav}{\vsty_2''}{\isav}$
			and $\vsty_2 = \vsprod{\vsty_1'''}{\isav}{\vsty_2'''}{\isav}$
			and $\vstysplit{\vsty_1'}{\vsty_1''}{\vsty_1'''}$
			and $\vstysplit{\vsty_2'}{\vsty_2''}{\vsty_2'''}$.
		By induction,
			$\vstysubt{\vsty_1'}{\vsty_1''}$
			and $\vstysubt{\vsty_1'}{\vsty_1'''}$
			and $\vstysubt{\vsty_2'}{\vsty_2''}$
			and $\vstysubt{\vsty_2'}{\vsty_2'''}$.
		Apply \rulename{UT:Prod} twice.

  	\item \rulename{US:SplitLeft}.
		Then $\vsty = \vsprod{\vsty_1'}{\isav}{\vsty_2'}{\avail}$
			and $\vsty_1 = \vsprod{\vsty_1''}{\isav}{\vsty_2'}{\avail}$
			and $\vsty_2 = \vsprod{\vsty_1'''}{\isav}{\vsty_2'}{\isunav}$
			and $\vstysplit{\vsty_1'}{\vsty_1''}{\vsty_1'''}$.
		By induction,
			$\vstysubt{\vsty_1'}{\vsty_1''}$
			and $\vstysubt{\vsty_1'}{\vsty_1'''}$.
		By \rulename{UT:Prod},
			$\vstysubt{\vsty}{\vsty_1}$
			and $\vstysubt{\vsty}{\vsprod{\vsty_1'''}{\isav}{\vsty_2'}{\avail}}$.
		By \rulename{UT:ProdRight},
			$\vstysubt{\vsprod{\vsty_1'''}{\isav}{\vsty_2'}{\avail}}{\vsty_2}$.
		Apply \rulename{UT:Transitive}.

	\item \rulename{US:SplitRight}.
		By symmetry with the previous case.

	\item \rulename{US:Corecursive}.
		Then $\vsty = \kwvscorec{\vstyvar}{\vsty'}$
			and $\vstysplit{\vsub{\vsty'}{\kwvscorec{\vstyvar}{\vsty'}}{\vstyvar}}{\vsty_1}{\vsty_2}$.
		By induction,\\
			$\vstysubt{\vsub{\vsty'}{\kwvscorec{\vstyvar}{\vsty'}}{\vstyvar}}{\vsty_1}$
			and $\vstysubt{\vsub{\vsty'}{\kwvscorec{\vstyvar}{\vsty'}}{\vstyvar}}{\vsty_2}$.
		By \rulename{UT:Rec1},
			$\vstysubt{\vsty}{\vsub{\vsty'}{\kwvscorec{\vstyvar}{\vsty'}}{\vstyvar}}$.
		Apply \rulename{UT:Transitive} twice.

	\item \rulename{US:Subtype}.
		Then $\vstysubt{\vsty_1'}{\vsty_1}$.
			and $\vstysplit{\vsty}{\vsty_1'}{\vsty_2}$.
		By induction,
			$\vstysubt{\vsty}{\vsty_1'}$
			and $\vstysubt{\vsty}{\vsty_2}$.
		Apply \rulename{UT:Transitive}.

	\item \rulename{US:Commutative}.
		By induction.
	\end{itemize}

\end{proof}

\begin{lemma}\label{lem:usp-split-upstream}\strut
	If $\uspsplit{\uspctx}{\uspctx_1}{\uspctx_2}$ then:
	\begin{enumerate}
		\item If~$\hastype{\vvar}{\vsty} \in \uspctx_1$
			then~$\hastype{\vvar}{\vsty'} \in \uspctx$
			and~$\vstysubt{\vsty'}{\vsty}$.
		\item If~$\hastype{\vvar}{\vsty} \in \uspctx_2$
			then~$\hastype{\vvar}{\vsty'} \in \uspctx$
			and~$\vstysubt{\vsty'}{\vsty}$.
		\item If~$\hastype{\genseed}{\vsty} \in \uspctx_1$
			then~$\hastype{\genseed}{\vsty'} \in \uspctx$
			and~$\vstysubt{\vsty'}{\vsty}$.
		\item If~$\hastype{\genseed}{\vsty} \in \uspctx_2$
			then~$\hastype{\genseed}{\vsty'} \in \uspctx$
			and~$\vstysubt{\vsty'}{\vsty}$.
	\end{enumerate}
\end{lemma}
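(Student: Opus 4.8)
\textbf{Proof plan for Lemma~\ref{lem:usp-split-upstream}.}
The plan is to proceed by induction on the derivation of $\uspsplit{\uspctx}{\uspctx_1}{\uspctx_2}$, proving all four statements simultaneously. The four statements are symmetric in pairs (parts~1 and~2 swap $\uspctx_1$ and $\uspctx_2$, parts~3 and~4 do the same for generators, and parts~3--4 mirror parts~1--2 with $\genseed$ in place of $\vvar$), so in practice one only needs to argue one representative case and invoke symmetry for the others. The base case \rulename{OM:Empty} is vacuous, since $\uspctx_1 = \uspctx_2 = \ectx$ contain no bindings. The case \rulename{OM:Commutative} follows immediately by applying the induction hypothesis with the roles of $\uspctx_1$ and $\uspctx_2$ exchanged.

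The two interesting cases are \rulename{OM:Var} and \rulename{OM:VarTypeSplit} (and their generator counterparts \rulename{OM:Gen}, \rulename{OM:GenTypeSplit}, which are handled identically). For \rulename{OM:Var}, we have $\uspctx = \uspctx', \hastype{\vvar}{\vsty}$ with $\uspctx_1 = \uspctx_1', \hastype{\vvar}{\vsty}$ and $\uspsplit{\uspctx'}{\uspctx_1'}{\uspctx_2}$. A binding looked up in $\uspctx_1$ is either the newly added $\hastype{\vvar}{\vsty}$, in which case $\hastype{\vvar}{\vsty} \in \uspctx$ and $\vstysubt{\vsty}{\vsty}$ by \rulename{UT:Reflexive}; or it lies in $\uspctx_1'$, in which case the induction hypothesis gives a binding with a supertype in $\uspctx'$, hence also in $\uspctx$. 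Lookups in $\uspctx_2$ are handled entirely by the induction hypothesis. For \rulename{OM:VarTypeSplit}, we have $\uspctx = \uspctx', \hastype{\vvar}{\vsty}$ with $\uspctx_1 = \uspctx_1', \hastype{\vvar}{\vsty_1}$, $\uspctx_2 = \uspctx_2', \hastype{\vvar}{\vsty_2}$, $\uspsplit{\uspctx'}{\uspctx_1'}{\uspctx_2'}$, and $\vstysplit{\vsty}{\vsty_1}{\vsty_2}$. For the binding $\hastype{\vvar}{\vsty_1}$ in $\uspctx_1$, we have $\hastype{\vvar}{\vsty} \in \uspctx$, and Lemma~\ref{lem:vsty-split-subtypes} gives exactly $\vstysubt{\vsty}{\vsty_1}$; symmetrically $\vstysubt{\vsty}{\vsty_2}$ for the $\uspctx_2$ side. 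All remaining lookups fall to the induction hypothesis.

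The crucial external ingredient is therefore Lemma~\ref{lem:vsty-split-subtypes} (VS types split into supertypes), which has already been established; without it the \rulename{OM:VarTypeSplit} case would not go through, since there the type recorded in the split contexts genuinely differs from the original. I do not anticipate any real obstacle beyond bookkeeping: the statement is essentially a structural fact about how $\uspctx$ splitting manipulates bindings, and each rule either adds a binding verbatim (giving a reflexive subtyping), splits a binding's type (giving the subtyping from Lemma~\ref{lem:vsty-split-subtypes}), or leaves a binding untouched (giving the subtyping from the induction hypothesis). A minor point worth noting in the write-up is that the lemma also implicitly yields that $\uspctx$ splitting introduces no new domain elements, which is used elsewhere in the soundness development.
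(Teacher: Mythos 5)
Your proposal matches the paper's proof essentially exactly: the same induction on the splitting derivation, the same case analysis in \rulename{OM:Var} (reflexive subtyping for the newly added binding, induction hypothesis otherwise), and the same appeal to Lemma~\ref{lem:vsty-split-subtypes} in the \rulename{OM:VarTypeSplit} case, with generator cases by symmetry. No gaps.
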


\begin{proof}
	By induction on the derivation of $\uspsplit{\uspctx}{\uspctx_1}{\uspctx_2}$.
	\begin{itemize}
  	\item \rulename{OM:Commutative}.
		By induction.

	\item \rulename{OM:Var}.
		Then $\uspctx = \uspctx', \hastype{\vvar'}{\vsty''}$
			and $\uspctx_1 = \uspctx_1', \hastype{\vvar'}{\vsty''}$
			and $\uspsplit{\uspctx'}{\uspctx_1'}{\uspctx_2}$.
		\begin{enumerate}
			\item $\hastype{\vvar}{\vsty} \in \uspctx_1$.
				There are two cases:
					\begin{enumerate}
					\item $\vvar = \vvar'$.
						Then $\vsty = \vsty''$
							and $\hastype{\vvar}{\vsty} \in \uspctx$.
					\item $\vvar \neq \vvar'$.
						Then $\hastype{\vvar}{\vsty} \in \uspctx_1'$.
						By induction, $\hastype{\vvar}{\vsty'} \in \uspctx'$
							and~$\vstysubt{\vsty'}{\vsty}$.
						Therefore, $\hastype{\vvar}{\vsty'} \in \uspctx$.
					\end{enumerate}
			\item $\hastype{\vvar}{\vsty} \in \uspctx_2$.
				By induction, $\hastype{\vvar}{\vsty'} \in \uspctx'$
					and~$\vstysubt{\vsty'}{\vsty}$.
				Therefore, $\hastype{\vvar}{\vsty'} \in \uspctx$.
			\item $\hastype{\genseed}{\vsty} \in \uspctx_1$.
				Then $\hastype{\genseed}{\vsty} \in \uspctx_1'$.
				By induction, $\hastype{\genseed}{\vsty'} \in \uspctx'$
					and~$\vstysubt{\vsty'}{\vsty}$.
				Therefore, $\hastype{\genseed}{\vsty'} \in \uspctx$.
			\item $\hastype{\genseed}{\vsty} \in \uspctx_2$.
				By induction, $\hastype{\genseed}{\vsty'} \in \uspctx'$
					and~$\vstysubt{\vsty'}{\vsty}$.
				Therefore, $\hastype{\genseed}{\vsty'} \in \uspctx$.
		\end{enumerate}

	\item \rulename{OM:Gen}.
		By symmetry with the previous case.

	\item \rulename{OM:VarTypeSplit}.
		Then $\uspctx = \uspctx', \hastype{\vvar'}{\vsty''}$
			and $\vstysplit{\vsty''}{\vsty_1}{\vsty_2}$.
			and $\uspctx_1 = \uspctx_1', \hastype{\vvar'}{\vsty_1}$.
			and $\uspctx_2 = \uspctx_2', \hastype{\vvar'}{\vsty_2}$.
		\begin{enumerate}
			\item $\hastype{\vvar}{\vsty} \in \uspctx_1$.
				There are two cases:
					\begin{enumerate}
					\item $\vvar = \vvar'$.
						Then $\vsty = \vsty_1$
							and $\hastype{\vvar}{\vsty''} \in \uspctx$.
						By Lemma \ref{lem:vsty-split-subtypes},
							$\vstysubt{\vsty''}{\vsty}$.
					\item $\vvar \neq \vvar'$.
						Then $\hastype{\vvar}{\vsty} \in \uspctx_1'$.
						By induction, $\hastype{\vvar}{\vsty'} \in \uspctx'$
							and~$\vstysubt{\vsty'}{\vsty}$.
						Therefore, $\hastype{\vvar}{\vsty'} \in \uspctx$.
					\end{enumerate}
			\item $\hastype{\vvar}{\vsty} \in \uspctx_2$.
				By symmetry with the previous case.
			\item $\hastype{\genseed}{\vsty} \in \uspctx_1$.
				Then $\hastype{\genseed}{\vsty} \in \uspctx_1'$.
				By induction, $\hastype{\genseed}{\vsty'} \in \uspctx'$
					and~$\vstysubt{\vsty'}{\vsty}$.
				Therefore, $\hastype{\genseed}{\vsty'} \in \uspctx$.
			\item $\hastype{\genseed}{\vsty} \in \uspctx_2$.
				By symmetry with the previous case.
		\end{enumerate}

	\item \rulename{OM:GenTypeSplit}.
		By symmetry with the previous case.
	\end{itemize}

\end{proof}

\begin{lemma}\label{lem:usp-split-weakening}%\strut
%	\begin{enumerate}\item
	If $\uspsplit{\uspctx}{\uspctx_1}{\uspctx_0}$
		and~$\vsistype{\uspctx_1}{\utctx}{\vs}{\vsty}$
		then~$\vsistype{\uspctx}{\utctx}{\vs}{\vsty}$.
%		\label{lem:usp-split-weakening-vs}
%	\item If $\uspsplit{\uspctx}{\uspctx_1}{\uspctx_0}$
%		and~$\tywithdag{\gctx}{\uspctx_1}{\utctx}{\ctx}{e}{\tau}{\graph}$
%		then~$\tywithdag{\gctx}{\uspctx}{\utctx}{\ctx}{e}{\tau}{\graph}$.
%		\label{lem:usp-split-weakening-exp}
%	\end{enumerate}
\end{lemma}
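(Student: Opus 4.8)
The plan is to prove the lemma by induction on the derivation of $\vsistype{\uspctx_1}{\utctx}{\vs}{\vsty}$, case-analyzing on the last rule applied. Since only the affine context $\uspctx$ changes (and $\utctx$ is untouched), most cases reduce to a direct appeal to the induction hypothesis with the same split $\uspsplit{\uspctx}{\uspctx_1}{\uspctx_0}$ followed by reapplying the same rule.

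First, the base cases. For \rulename{U:PsiVar} (and \rulename{U:PsiGen}) the conclusion depends only on $\utctx$, so the very same rule applies with $\uspctx$ in place of $\uspctx_1$. For \rulename{U:OmegaVar} we have $\vs = \vvar$ with $\hastype{\vvar}{\vsty} \in \uspctx_1$; by Lemma~\ref{lem:usp-split-upstream} (part~1) there is a $\vsty'$ with $\hastype{\vvar}{\vsty'} \in \uspctx$ and $\vstysubt{\vsty'}{\vsty}$, so \rulename{U:OmegaVar} gives $\vsistype{\uspctx}{\utctx}{\vvar}{\vsty'}$ and then \rulename{U:Subtype} recovers $\vsistype{\uspctx}{\utctx}{\vvar}{\vsty}$. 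The \rulename{U:OmegaGen} case is symmetric, using part~3 of Lemma~\ref{lem:usp-split-upstream}.

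Next, the inductive cases. For \rulename{U:Fst}, \rulename{U:Snd}, and \rulename{U:Subtype}, $\uspctx_1$ is passed unchanged to the single premise, so the induction hypothesis (with the given split) plus the same rule suffices; likewise for \rulename{U:OnlyLeftPair} and \rulename{U:OnlyRightPair}, where the affine context feeds only one component of the pair, so we weaken that premise by the induction hypothesis and reapply the rule. The interesting case is \rulename{U:Pair}, where $\vs = \kwpair{\vs_1}{\vs_2}$, $\vsty = \vsprod{\vsty_1}{\isav}{\vsty_2}{\isav}$, and the premise splits $\uspsplit{\uspctx_1}{\uspctx_1^{(1)}}{\uspctx_1^{(2)}}$ with $\vs_1$ typed under $\uspctx_1^{(1)}$ and $\vs_2$ under $\uspctx_1^{(2)}$. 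Here I combine $\uspsplit{\uspctx}{\uspctx_1}{\uspctx_0}$ with the inner split using Lemma~\ref{lem:split-move} (part~\ref{lem:split-move-usp}) to obtain some $\uspctx'$ with $\uspsplit{\uspctx}{\uspctx'}{\uspctx_1^{(2)}}$ and $\uspsplit{\uspctx'}{\uspctx_1^{(1)}}{\uspctx_0}$; the induction hypothesis applied to $\vs_1$ then yields $\vsistype{\uspctx'}{\utctx}{\vs_1}{\vsty_1}$, and \rulename{U:Pair} with the split $\uspsplit{\uspctx}{\uspctx'}{\uspctx_1^{(2)}}$ (reusing the original derivation $\vsistype{\uspctx_1^{(2)}}{\utctx}{\vs_2}{\vsty_2}$ for the right component) concludes.

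The main obstacle is the \rulename{U:Pair} case: recursing on either component with the full $\uspctx$ would double-count the vertices in $\uspctx_0$, so the reassociation of the two splits via Lemma~\ref{lem:split-move} is exactly what lets us push the weakening entirely into one side while leaving the other side's context fixed. A secondary subtlety is that \rulename{OM:VarTypeSplit} permits a variable's VS type to differ between $\uspctx$ and $\uspctx_1$, so in the variable/generator cases we can only recover a subtype of $\vsty$ in $\uspctx$; this is why the argument must route through Lemma~\ref{lem:usp-split-upstream} and the subsumption rule \rulename{U:Subtype} rather than simply copying the typing derivation.
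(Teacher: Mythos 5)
Your proposal is correct and follows essentially the same route as the paper's proof: induction on the typing derivation, with the variable/generator cases handled via Lemma~\ref{lem:usp-split-upstream} followed by \rulename{U:Subtype}, and the \rulename{U:Pair} case resolved by reassociating the two splits through Lemma~\ref{lem:split-move} so that the weakening is absorbed into the left component's context while the right component's derivation is reused unchanged. The remaining cases are, as you note, immediate from the induction hypothesis and reapplication of the same rule.
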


\begin{proof}%\strut
%  \begin{enumerate}\item
	By induction on the derivation of $\vsistype{\uspctx_1}{\utctx}{\vs}{\vsty}$.
	\begin{itemize}
  	\item \rulename{U:OmegaVar}.
		Then $\vs = \vvar$
			and $\uspctx_1 = \uspctx_1', \hastype{\vvar}{\vsty}$.
		By Lemma \ref{lem:usp-split-upstream},
			$\hastype{\vvar}{\vsty'} \in \uspctx$
			and~$\vstysubt{\vsty'}{\vsty}$.
		By \rulename{U:OmegaVar},
			$\vsistype{\uspctx}{\utctx}{\vs}{\vsty'}$.
		Apply \rulename{U:Subtype}.

	\item \rulename{U:OmegaSeed}.
		By symmetry with the previous case.

  	\item \rulename{U:PsiVar}.
		Apply \rulename{U:PsiVar}.

	\item \rulename{U:PsiSeed}.
		By symmetry with the previous case.

  	\item \rulename{U:Pair}.
		Then $\vs = \kwpair{\vs_1}{\vs_2}$
			and $\vsty = \vsprod{\vsty_1}{\isav}{\vsty_2}{\isav}$
			and $\uspsplit{\uspctx_1}{\uspctx_1'}{\uspctx_1''}$
			and $\vsistype{\uspctx_1'}{\utctx}{\vs_1}{\vsty_1}$
			and $\vsistype{\uspctx_1''}{\utctx}{\vs_2}{\vsty_2}$.
		By Lemma \ref{lem:split-move},
			\uspsplitexst{\uspctx'}{\uspctx}{\uspctx_1'}{\uspctx_1''}{\uspctx_0}.
		By induction,
			$\vsistype{\uspctx'}{\utctx}{\vs_1}{\vsty_1}$.
		Apply \rulename{U:Pair}.

	\item All other cases are by induction and self-application.
	\end{itemize}

\end{proof}

\begin{figure*}
  \small
  \centering
  \def \MathparLineskip {\lineskip=0.43cm}
  \begin{mathpar}
    %\iffull
    \Rule{DW:Empty}
         {\strut}
         {\dagwf{\gctx}{\uspctx}{\utctx}{\emptygraph}{\kgraph}}
         \and
         %\fi
    \Rule{DW:Var}
         {\strut}
         {\dagwf{\gctx,\hastype{\gvar}{\graphkind}}
           {\uspctx}{\utctx}{\gvar}{\graphkind}}
    \and
    \Rule{DW:Seq}
         {\uspsplit{\uspctx}{\uspctx_1}{\uspctx_2}\\
		   \dagwf{\gctx}{\uspctx_1}{\utctx}{\graph_1}{\kgraph}\\
           \dagwf{\gctx}{\uspctx_2}{\utctx}{\graph_2}{\kgraph}\\
           %\vertices_1 \cap \vertices_2 = \emptyset\\
         }
         {\dagwf{\gctx}{\uspctx}{\utctx}
           {\graph_1 \seqcomp \graph_2}{\kgraph}}
    \and
    %\iffull
%%    \Rule{DW:Par}
%%         {\uspsplit{\uspctx}{\uspctx_1}{\uspctx_2}\\
%%		   \dagwf{\gctx}{\uspctx_1}{\utctx}{\graph_1}{\kgraph}\\
%%           \dagwf{\gctx}{\uspctx_2}{\utctx}{\graph_2}{\kgraph}\\
%%%           \vertices_1 \cap \vertices_2 = \emptyset\\
%%         }
%%         {\dagwf{\gctx}{\uspctx}{\utctx}
%%           {\graph_1 \parcomp \graph_2}{\kgraph}}
%%    \and
    %\fi
    \Rule{DW:Or}
         {\dagwf{\gctx}{\uspctx}{\utctx}{\graph_1}{\kgraph}\\
           \dagwf{\gctx}{\uspctx}{\utctx}{\graph_2}{\kgraph}\\
         }
         {\dagwf{\gctx}{\uspctx}{\utctx}{\graph_1 \dagor \graph_2}
           {\kgraph}}
    \and
    \Rule{DW:Spawn}
         {\uspsplit{\uspctx}{\uspctx_1}{\uspctx_2}\\
		  \dagwf{\gctx}{\uspctx_1}{\utctx}{\graph}{\kgraph}\\
           \vsistype{\uspctx_2}{\ectx}{\vs}{\kwvty}\\
%           \vertex \not\in \vertices
         }
         {\dagwf{\gctx}{\uspctx}{\utctx}
           {\leftcomp{\graph}{\vs}}{\kgraph}}
    \and
    \Rule{DW:Touch}
         {\vsistype{\ectx}{\utctx}{\vs}{\kwvty}}
         {\dagwf{\gctx}{\uspctx}{\utctx}{\touchcomp{\vs}}{\kgraph}}
    \and
    \Rule{DW:New}
         {\dagwf{\gctx}{\uspctx, \hastype{\vvar}{\vsty}}{\utctx, \hastype{\vvar}{\vsty}}
             {\graph}{\kgraph}\\
%           \vsttree{\ectx}{\vsty}
		}
         {\dagwf{\gctx}{\uspctx}{\utctx}{\dagnew{\hastycl{\vvar}{\vsty}}{\graph}}{\kgraph}}
    \and
    %\iffull
    \Rule{DW:Pi}
         {\dagwf{\gctx}{\uspctx, \hastype{\vvar_f}{\vsty_f}}{\utctx, \hastype{\vvar_f}{\vsty_f}, \hastype{\vvar_t}{\vsty_t}}
           {\graph}{\kgraph}}
         {\dagwf{\gctx}{\uspctx}{\utctx}
           {\dagpi{\hastycl{\vvar_f}{\vsty_f}}{\hastycl{\vvar_t}{\vsty_t}}{\graph}}
           {\dagpi{\hastycl{\vvar_f}{\vsty_f}}{\hastycl{\vvar_t}{\vsty_t}}{\kgraph}}}
    \and
    \Rule{DW:RecPi}
         {\dagwf{\gctx,\hastype{\gvar}
             {\dagpi{\hastycl{\vvar_f}{\vsty_f}}{\hastycl{\vvar_t}{\vsty_t}}{\kgraph}}}
           {\hastype{\vvar_f}{\vsty_f}}{\utctx, \hastype{\vvar_f}{\vsty_f}, \hastype{\vvar_t}{\vsty_t}}{\graph}
           {\kgraph}}
         {\dagwf{\gctx}{\uspctx}{\utctx}
           {\dagrec{\gvar}{\dagpi{\hastycl{\vvar_f}{\vsty_f}}{\hastycl{\vvar_t}{\vsty_t}}{\graph}}{}}
           {\dagpi{\hastycl{\vvar_f}{\vsty_f}}{\hastycl{\vvar_t}{\vsty_t}}{\kgraph}}}
    \and
    %\fi
    \Rule{DW:App}
         {\uspsplit{\uspctx}{\uspctx_1}{\uspctx_2}\\
		   \dagwf{\gctx}{\uspctx_1}{\utctx}{\graph}
           {\dagpi{\hastycl{\vvar_f}{\vsty_f}}{\hastycl{\vvar_t}{\vsty_t}}{\kgraph}}\\
             \vsistype{\uspctx_2}{\ectx}{\vs_f}{\vsty_f}\\
		\vsistype{\ectx}{\utctx}{\vs_f}{\vsty_f}\\
             \vsistype{\ectx}{\utctx}{\vs_t}{\vsty_t}}
         {\dagwf{\gctx}{\uspctx}{\utctx}
           {\kwtapp{\graph}{\vs_f}{\vs_t}}
           {\kgraph}
             %\tsub{\tsub{\graphkind}{\verts_f'}{\verts_f}}{\verts_t'}{\verts_t}}
         }
  \end{mathpar}
  \caption{Rules for graph type formation.}
  \label{fig:dag-wf}
\end{figure*}

\begin{figure*}
  \small
  \centering
  \def \MathparLineskip {\lineskip=0.43cm}
  \begin{mathpar}
    \Rule{UE:Reflexive}
         {\vsistype{\ectx}{\utctx}{\vs}{\vsty}}
         {\vseq{\utctx}{\vs}{\vs}{\vsty}}
	\and
    \Rule{UE:Commutative}
         {\vseq{\utctx}{\vs'}{\vs}{\vsty}}
         {\vseq{\utctx}{\vs}{\vs'}{\vsty}}
	\and
    \Rule{UE:Transitive}
         {\vseq{\utctx}{\vs}{\vs''}{\vsty}\\
			\vseq{\utctx}{\vs''}{\vs'}{\vsty}}
         {\vseq{\utctx}{\vs}{\vs'}{\vsty}}
    \and
    \Rule{UE:OnlyLeftPair}
         {\vseq{\utctx}{\vs_1}{\vs_1'}{\vsty_1}\\
			\vseq{\utctx'}{\vs_2}{\vs_2'}{\vsty_2}}
         {\vseq{\utctx}{\kwpair{\vs_1}{\vs_2}}{\kwpair{\vs_1'}{\vs_2'}}{\vsprod{\vsty_1}{\isav}{\vsty_2}{\isunav}}}
    \and
    \Rule{UE:OnlyRightPair}
         {\vseq{\utctx'}{\vs_1}{\vs_1'}{\vsty_1}\\
			\vseq{\utctx}{\vs_2}{\vs_2'}{\vsty_2}}
         {\vseq{\utctx}{\kwpair{\vs_1}{\vs_2}}{\kwpair{\vs_1}{\vs_2'}}{\vsprod{\vsty_1}{\isunav}{\vsty_2}{\isav}}}
    \and
    \Rule{UE:Pair}
         {\vseq{\utctx}{\vs_1}{\vs_1'}{\vsty_1}\\
           \vseq{\utctx}{\vs_2}{\vs_2'}{\vsty_2}}
         {\vseq{\utctx}{\kwpair{\vs_1}{\vs_2}}{\kwpair{\vs_1'}{\vs_2'}}{\vsprod{\vsty_1}{\isav}{\vsty_2}{\isav}}}
    \and
    \Rule{UE:Fst}
         {\vseq{\utctx}{\vs}{\vs'}{\vsprod{\vsty_1}{\isav}{\vsty_2}{\avail}}}
         {\vseq{\utctx}{\kwfst{\vs}}{\kwfst{\vs'}}{\vsty_1}}
    \and
    \Rule{UE:Snd}
         {\vseq{\utctx}{\vs}{\vs'}{\vsprod{\vsty_1}{\avail}{\vsty_2}{\isav}}}
         {\vseq{\utctx}{\kwsnd{\vs}}{\kwsnd{\vs'}}{\vsty_2}}
    \and
    \Rule{UE:FstPair}
         {\vseq{\utctx}{\vs}{\kwpair{\vs_1}{\vs_2}}{\vsprod{\vsty_1}{\isav}{\vsty_2}{\avail}}}
         {\vseq{\utctx}{\kwfst{\vs}}{\vs_1}{\vsty_1}}
    \and
    \Rule{UE:SndPair}
         {\vseq{\utctx}{\vs}{\kwpair{\vs_1}{\vs_2}}{\vsprod{\vsty_1}{\avail}{\vsty_2}{\isav}}}
         {\vseq{\utctx}{\kwsnd{\vs}}{\vs_2}{\vsty_2}}
    \and
    \Rule{UE:Subtype}
         {\vseq{\utctx}{\vs}{\vs'}{\vsty'}\\
			\vstysubt{\vsty'}{\vsty}}
         {\vseq{\utctx}{\vs}{\vs'}{\vsty}}
  \end{mathpar}
  \caption{Vertex structure equivalence.}
  \label{fig:vert-equiv}
\end{figure*}

\begin{figure*}
  \small
  \centering
  \def \MathparLineskip {\lineskip=0.43cm}
  \begin{mathpar}
%    \Rule{CE:Unit}
%         {\strut}
%         {\coneq{\gctx}{\utctx}{\vstctx}{\kwunit}{\kwunit}}
%    \and
%    \Rule{CE:Fun}
%         {\strut}
%         {\coneq{\gctx}{\utctx}{\vstctx}
%			{\kwpi{\hastycl{\vvar_f}{\vsty_f}}{\hastycl{\vvar_t}{\vsty_t}}
%            	{\kwarrow{\con_1}{\con_2}{\graph}}}
%			{\kwpi{\hastycl{\vvar_f}{\vsty_f}}{\hastycl{\vvar_t}{\vsty_t}}
%            	{\kwarrow{\con_1}{\con_2}{\graph}}}}
%    \and
    \Rule{CE:Reflexive}
         {\iskind{\gctx}{}{\utctx}{\vstctx}{\con}{\kind}}
         {\coneq{\gctx}{\utctx}{\vstctx}{\con}{\con}{\kind}}
    \and
    \Rule{CE:Commutative}
         {\coneq{\gctx}{\utctx}{\vstctx}{\con'}{\con}{\kind}}
         {\coneq{\gctx}{\utctx}{\vstctx}{\con}{\con'}{\kind}}
    \and
    \Rule{CE:Transitive}
         {\coneq{\gctx}{\utctx}{\vstctx}{\con}{\con''}{\kind}\\
			\coneq{\gctx}{\utctx}{\vstctx}{\con''}{\con'}{\kind}}
         {\coneq{\gctx}{\utctx}{\vstctx}{\con}{\con'}{\kind}}
    \and
    \Rule{CE:Prod}
         {\coneq{\gctx}{\utctx}{\vstctx}{\con_1}{\con_1'}{\kwtykind}\\
			\coneq{\gctx}{\utctx}{\vstctx}{\con_2}{\con_2'}{\kwtykind}}
         {\coneq{\gctx}{\utctx}{\vstctx}{\kwprod{\con_1}{\con_2}}{\kwprod{\con_1'}{\con_2'}}{\kwtykind}}
    \and
    \Rule{CE:Sum}
         {\coneq{\gctx}{\utctx}{\vstctx}{\con_1}{\con_1'}{\kwtykind}\\
			\coneq{\gctx}{\utctx}{\vstctx}{\con_2}{\con_2'}{\kwtykind}}
         {\coneq{\gctx}{\utctx}{\vstctx}{\kwsum{\con_1}{\con_2}}{\kwsum{\con_1'}{\con_2'}}{\kwtykind}}
    \and
    \Rule{CE:Fut}
         {\coneq{\gctx}{\utctx}{\vstctx}{\con}{\con'}{\kwtykind}\\
			\vseq{\utctx}{\vs}{\vs'}{\kwvty}}
         {\coneq{\gctx}{\utctx}{\vstctx}{\kwfutt{\con}{\vs}}{\kwfutt{\con'}{\vs'}}{\kwtykind}}
%    \and
%    \Rule{CE:Var}
%         {\strut}
%         {\coneq{\gctx}{\utctx}{\vstctx}{\convar}{\convar}}
    \and
    \Rule{CE:Lambda}
         {\coneq{\gctx}{\utctx, \hastype{\vvar}{\vsty}}
             {\vstctx}{\con}{\con'}{\kwtykind}}
         {\coneq{\gctx}{\utctx}{\vstctx}{\kwxi{\hastycl{\vvar}{\vsty}}{\con}}{\kwxi{\hastycl{\vvar}{\vsty}}{\con'}}{\kwkindarr{\vsty}{\kwtykind}}}
    \and
    \Rule{CE:App}
         {\coneq{\gctx}{\utctx}{\vstctx}{\con}{\con'}{\kwkindarr{\vsty}{\kwtykind}}\\
			\vseq{\utctx}{\vs}{\vs'}{\vsty}}
         {\coneq{\gctx}{\utctx}{\vstctx}{\kwvapp{\con}{\vs}}{\kwvapp{\con'}{\vs'}}{\kwtykind}}
    \and
    \Rule{CE:BetaEq}
         {\iskind{\gctx}{}{\utctx, \hastype{\vvar}{\vsty}}{\vstctx}{\con}{\kwtykind}\\
			\vsistype{\ectx}{\utctx}{\vs}{\vsty}}
         {\coneq{\gctx}{\utctx}{\vstctx}{\kwvapp{(\kwxi{\hastycl{\vvar}{\vsty}}{\con})}{\vs}}
           {\vsub{\con}{\vs}{\vvar}}{\kwtykind}}
    \and
    \Rule{CE:Rec}
         {\coneq{\gctx}{\utctx, \hastype{\vvar}{\vsty}}
             {\vstctx, \haskind{\convar}{\kwkindarr{\vsty}{\kwtykind}}}{\con}{\con'}{\kwtykind}\\
			\vseq{\utctx}{\vs}{\vs'}{\vsty}}
         {\coneq{\gctx}{\utctx}{\vstctx}{\kwprec{\convar}{\hastycl{\vvar}{\vsty}}{\con}{\vs}}
				{\kwprec{\convar}{\hastycl{\vvar}{\vsty}}{\con'}{\vs'}}{\kwtykind}}
%    \and
%    \Rule{CE:Let}
%         {\vseval{\kwfst{\vs}}{\vs_1}\\
%			\vseval{\kwsnd{\vs}}{\vs_2}}
%         {\coneq{\gctx}{\utctx}{\vstctx}{\kwletpair{\vvar_1}{\vvar_2}{\vs}{\con}}{\vsub{\vsub{\con}{\vs_1}{\vvar_1}}{\vs_2}{\vvar_2}}}
%    \and
%    \Rule{CE:Case-Left}
%         {\vseval{\vs}{\kwinl{\vs'}}}
%         {\coneq{\gctx}{\utctx}{\vstctx}{\kwvcase{\vs}{\vvar_1}{\con_1}{\vvar_2}{\con_2}}{\vsub{\con_1}{\vs'}{\vvar_1}}}
%    \and
%    \Rule{CE:Case-Right}
%         {\vseval{\vs}{\kwinr{\vs'}}}
%         {\coneq{\gctx}{\utctx}{\vstctx}{\kwvcase{\vs}{\vvar_1}{\con_1}{\vvar_2}{\con_2}}{\vsub{\con_2}{\vs'}{\vvar_2}}}
  \end{mathpar}
  \caption{Type constructor equivalence rules.}
  \label{fig:constructor evaluation}
\end{figure*}

\begin{figure}
  \centering
  \def \MathparLineskip {\lineskip=0.43cm}
  \begin{mathpar}
    \Rule{DW:Par}
         {\uspsplit{\uspctx}{\uspctx_1}{\uspctx_2}\\
		   \dagwf{\gctx}{\uspctx_1}{\utctx}{\graph_1}{\kgraph}\\
           \dagwf{\gctx}{\uspctx_2}{\utctx}{\graph_2}{\kgraph}
         }
         {\dagwf{\gctx}{\uspctx}{\utctx}
           {\graph_1 \parcomp \graph_2}{\kgraph}}
    \and
    \Rule{C:Par}
         {\eval{\ppts}{e_1}{v_1}{\cgraph_1}\\
           \eval{\ppts}{e_2}{v_2}{\cgraph_2}}
         {\eval{\ppts}{\kwpar{e_1}{e_2}}{\kwpair{v_1}{v_2}}
           {\cgraph_1 \parcomp \cgraph_2}}
    \and
    \Rule{S:Par}
         {\uspsplit{\uspctx}{\uspctx_1}{\uspctx_2}\\
		   \tywithdag{\gctx}{\uspctx_1}{\utctx}{\ctx}{e_1}{\tau_1}{\graph_1}\\
           \tywithdag{\gctx}{\uspctx_2}{\utctx}{\ctx}{e_2}{\tau_2}{\graph_2}}
         {\tywithdag{\gctx}{\uspctx}{\utctx}{\ctx}{\kwpar{e_1}{e_2}}
           {\kwprod{\tau_1}{\tau_2}}
           {\graph_1 \parcomp \graph_2}}
  \end{mathpar}

  \begin{mathpar}
  \begin{array}{l r l l}
\\
    \mathit{Graph~Types} & \graph & \bnfdef &
    \dots \bnfalt
    \graph \parcomp \graph
    \\
    
    \mathit{Expressions} & e & \bnfdef &
    \dots \bnfalt
    \kwpar{e}{e}
  \end{array}
  \end{mathpar}

  \[
  \begin{array}{r l l l}
    \bnnf{\graph_1 \parcomp \graph_2} & \defeq &
    \bnnf{\graph_1} \parcomp \bnnf{\graph_2}
	\\
  \getgraphs{\graph_1 \parcomp \graph_2} & \defeq &
  \{\cgraph_1' \parcomp \cgraph_2' \mid
  \cgraph_1' \in \getgraphs{\graph_1}, \cgraph_2' \in \getgraphs{\graph_2} &
	\\
  \dagq{\vertices_1}{\edges_1}{\startv_1}{\sinkv_1} \parcomp
  \dagq{\vertices_2}{\edges_2}{\startv_2}{\sinkv_2} & \defeq &
  (\vertices_1 \cup \vertices_2 \cup \{\vertex_1, \vertex_2\}, \edges_1 \cup \edges_2 \cup
    & \vertex_1, \vertex_2 \fresh,\\
    & & ~~\{(\vertex_1, \startv_1), (\vertex_1, \startv_2),
         (\sinkv_1, \vertex_2), (\sinkv_2, \vertex_2)\}, \vertex_1, \vertex_2)
         & \vertices_1 \cap \vertices_2 = \emptyset
       \\
  \end{array}
  \]
  \caption{Adding fork-join parallelism to {\langname}.}
  \label{fig:fork-join-par}
\end{figure}

Figures \ref{fig:dag-wf}, \ref{fig:vert-equiv}, and \ref{fig:constructor evaluation} give the full rules for graph type formation, 
VS equivalence, and type constructor equivalence respectively.
Figure \ref{fig:fork-join-par} adds the syntax and rules for a fork-join parallel expression and graph type.
Fork-join parallellism in $\langname$ functions similar to how it does in \citet{Muller22}.
The expression $\kwpar{e_1}{e_2}$ executes $e_1$ and $e_2$ in parallel, and this expression
is given the graph type $\graph_1 \parcomp \graph_2$ where $e_1$ has graph type $\graph_1$
and $e_2$ has graph type $\graph_2$.

Lemma~\ref{lem:subst} shows that well-formed VSs, expressions, type
constructors, and graph types maintain their types and kinds when substituting
any free variables appropriately. In addition, substituting variables in
equivalent VSs and type constructors maintains their equivalence. When
substituting VS variables found in the affine context $\uspctx$, we require that
the term undergoing the substitution and the VS being substituted
 are well-typed under two disjoint contexts $\uspctx_1$ and
$\uspctx_2$, and the resulting term is well-typed under their conjoinment (any
$\uspctx$ where $\uspsplit{\uspctx}{\uspctx_1}{\uspctx_2}$). When substituting a
VS variable $\vvar$ with a VS $\vs$ in expressions and graph types, there are
two distinct cases: $\vvar$ is in both $\uspctx$ and $\utctx$ and $\vs$ is
well-typed under a disjoint $\uspctx'$ and separately well-typed under $\utctx$,
or $\vvar$ is only in $\utctx$ and $\vs$ is well-typed under $\utctx$ (when
assigning types to expressions and graph kinds to graph types, it never occurs
that a variable is only in $\uspctx$). We make these cases separate so that
substituting variables in $\utctx$ does not cause vertices only found within
$\utctx$ to appear within future spawns, but variables in $\utctx$ that do not
appear within future spawns can be more freely substituted for. Since we always
require that VSs are either typed under an empty $\uspctx$ or an empty $\utctx$
in the rules for other judgements,
we have two cases for substituting VS variables that reflect that: either the
variable is in $\uspctx$ and $\utctx$ is empty, or vice-versa. 

This substitution lemma requires Lemmas \ref{lem:vert-split-typing} and \ref{lem:weaken-affine-restriction}. 
Lemma \ref{lem:vert-split-typing} states that if a VS
$\vs$ is given a type $\vsty$ under some $\uspctx$, and $\vsty$
splits into $\vsty_1$ and $\vsty_2$, then $\uspctx$ must split into some
$\uspctx_1$ and $\uspctx_2$ where $\vs$ has type $\vsty_1$ under $\uspctx_1$ and
has type $\vsty_2$ under $\uspctx_2$.
Lemma \ref{lem:weaken-affine-restriction} states that the affine restriction can be removed 
from a VS typing judgement and that judgement still holds.

\begin{lemma}\label{lem:vert-split-typing}
  If~$\vsistype{\uspctx}{\ectx}{\vs}{\vsty}$
  and~$\vstysplit{\vsty}{\vsty_1}{\vsty_2}$
  then there exists an $\uspctx_1$ and $\uspctx_2$
  such that~$\uspsplit{\uspctx}{\uspctx_1}{\uspctx_2}$
  and~$\vsistype{\uspctx_1}{\ectx}{\vs}{\vsty_1}$
  and~$\vsistype{\uspctx_2}{\ectx}{\vs}{\vsty_2}$.
\end{lemma}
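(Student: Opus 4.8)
The plan is to prove Lemma~\ref{lem:vert-split-typing} by induction on the derivation of $\vsistype{\uspctx}{\ectx}{\vs}{\vsty}$, following the structure of the VS typing rules in Figure~\ref{fig:vert-typing} (together with the generator rules of Figure~\ref{fig:generator-additions}). The key observation driving the induction is that $\vstysplit{\vsty}{\vsty_1}{\vsty_2}$ can only hold when $\vsty$ is one of the canonical forms given by Lemma~\ref{lem:split-form}, part~\ref{lem:split-vsty-form}: a product with at least one available side, or a corecursive type. This immediately restricts which typing rules could have produced the derivation, so many cases are vacuous or trivial.

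First I would handle the leaf rules. For \rulename{U:OmegaVar} (and symmetrically \rulename{U:PsiVar}, \rulename{U:OmegaGen}, \rulename{U:PsiGen}), we have $\vs = \vvar$ and $\uspctx = \uspctx', \hastype{\vvar}{\vsty}$. Since $\vstysplit{\vsty}{\vsty_1}{\vsty_2}$, I would split the context using \rulename{OM:VarTypeSplit}, putting $\hastype{\vvar}{\vsty_1}$ in $\uspctx_1$ (taking $\uspctx_1 = \uspctx', \hastype{\vvar}{\vsty_1}$, since $\uspctx'$ can be split as $\uspctx'$ and the empty part — using repeated \rulename{OM:Var} down to \rulename{OM:Empty}) and $\hastype{\vvar}{\vsty_2}$ in $\uspctx_2$. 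Then re-typing $\vvar$ with \rulename{U:OmegaVar} under each gives the result. For \rulename{U:Pair}, \rulename{U:OnlyLeftPair}, and \rulename{U:OnlyRightPair}, the type $\vsty$ is already a product; here I would invoke the detailed case analysis of Lemma~\ref{lem:split-form} (parts~\ref{lem:split-left-prod}, \ref{lem:split-right-prod}, \ref{lem:split-full-prod}) to decompose the split of the product into splits of the component types and/or components, then recurse on the premises typing $\vs_1$ and $\vs_2$, reassembling the context splits using the associativity/rearrangement results of Lemma~\ref{lem:split-move}. For the projection rules \rulename{U:Fst} and \rulename{U:Snd}: here $\vsty$ is a component type of the VS type of $\vs$, which itself is a product $\vsprod{\vsty_1'}{\isav}{\vsty_2'}{\avail}$; I would lift the split of $\vsty = \vsty_1'$ up to a split of the enclosing product (by \rulename{US:SplitLeft}, turning one side available-then-unavailable on the other component), apply the induction hypothesis to the premise, and re-apply \rulename{U:Fst}/\rulename{U:Snd} to the two resulting typings. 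Finally, \rulename{U:Subtype} gives $\vsistype{\uspctx}{\ectx}{\vs}{\vsty'}$ with $\vstysubt{\vsty'}{\vsty}$; by Lemma~\ref{lem:vsty-split-supertyping}, part~\ref{lem:vsty-split-supertyping-last}, the split of $\vsty$ lifts to a split of $\vsty'$, so the induction hypothesis applies directly.

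The main obstacle I anticipate is the \rulename{U:Pair} case with a fully-available product $\vsprod{\vsty_1}{\isav}{\vsty_2}{\isav}$: by Lemma~\ref{lem:split-form} part~\ref{lem:split-full-prod} there are seven distinct sub-cases for how the split $\vstysplit{\vsprod{\vsty_1}{\isav}{\vsty_2}{\isav}}{\vsty_3}{\vsty_4}$ can arise — pure separation of the two components, an internal split of $\vsty_1$ (possibly combined with one of the components moving wholesale), an internal split of $\vsty_2$, or simultaneous internal splits of both. In each sub-case I must (a) possibly recurse on the sub-typing of $\vs_1$ or $\vs_2$, and (b) carefully reconstruct a single split $\uspsplit{\uspctx}{\uspctx_1}{\uspctx_2}$ from the split $\uspsplit{\uspctx}{\uspctx_1'}{\uspctx_2'}$ furnished by the original \rulename{U:Pair} derivation together with the new splits produced by the induction hypotheses — this is exactly the kind of three-way associativity rearrangement handled by Lemma~\ref{lem:split-move}, part~\ref{lem:split-move-two-usp}. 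The bookkeeping is tedious but mechanical; the availability annotations carry all the necessary disjointness information, and since \rulename{US:Subtype} lets us freely weaken available components to unavailable, any slightly-too-strong split produced along the way can be coerced to the exact shape needed. No genuinely new idea beyond Lemmas~\ref{lem:split-form}, \ref{lem:vsty-split-supertyping}, and \ref{lem:split-move} should be required.
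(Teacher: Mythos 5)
Your proposal is correct and follows essentially the same route as the paper's proof: induction on the typing derivation, using Lemma~\ref{lem:split-form} to case-analyze the split in the pair cases, Lemma~\ref{lem:split-move} to rearrange the resulting context splits, lifting the split through \rulename{US:SplitLeft}/\rulename{US:SplitRight} in the projection cases, and Lemma~\ref{lem:vsty-split-supertyping} in the \rulename{U:Subtype} case. The only cosmetic remark is that \rulename{U:PsiVar} and \rulename{U:PsiGen} are vacuous here (the touch context is empty in the hypothesis) rather than ``symmetric,'' but this does not affect the argument.
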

\begin{proof}
  By induction on the derivation of
  $\vsistype{\uspctx}{\ectx}{\vs}{\vsty}$.
  \iffull
  We prove some representative cases.
  \begin{itemize}
  \item \rulename{U:OmegaVar}.
	Then $\vs = \vvar$
		and~$\uspctx = \uspctx', \hastype{\vvar}{\vsty}$.
	Since $\uspsplit{\uspctx'}{\uspctx'}{\ectx}$,
		by \rulename{OM:TypeSplit},
		$\uspsplit{\uspctx}{\uspctx', \hastype{\vvar}{\vsty_1}}{\hastype{\vvar}{\vsty_2}}$.
	Apply \rulename{U:Omega} twice.

  \item \rulename{U:Pair}.
	Then $\vs = \kwpair{\vs_1}{\vs_2}$
		and~$\vsty = \vsprod{\vsty_3}{\isav}{\vsty_4}{\isav}$
		and~$\uspsplit{\uspctx}{\uspctx_3}{\uspctx_4}$
		and~$\vsistype{\uspctx_3}{\ectx}{\vs_1}{\vsty_3}$
		and~$\vsistype{\uspctx_4}{\ectx}{\vs_2}{\vsty_4}$.
	By Lemma \ref{lem:split-form}, there are seven cases:
	\begin{enumerate}
		\item $\vstysubt{\vsprod{\vsty_3}{\isav}{\vsty_4}{\isunav}}{\vsty_1}$
				and $\vstysubt{\vsprod{\vsty_3}{\isunav}{\vsty_4}{\isav}}{\vsty_2}$.
			By \rulename{U:OnlyLeftPair},
				$\vsistype{\uspctx_3}{\ectx}{\vs}{\vsprod{\vsty_3}{\isav}{\vsty_4}{\isunav}}$.
			By \rulename{U:OnlyRightPair},
				$\vsistype{\uspctx_4}{\ectx}{\vs}{\vsprod{\vsty_3}{\isunav}{\vsty_4}{\isav}}$.
			Apply \rulename{U:Subtype} twice.
		\item $\vstysubt{\vsprod{\vsty_3}{\isunav}{\vsty_4}{\isav}}{\vsty_1}$
				and $\vstysubt{\vsprod{\vsty_3}{\isav}{\vsty_4}{\isunav}}{\vsty_2}$.
			By symmetry with the previous case.
		\item \splitLfullprodexstA{\vsty_3}{\vsty_3'}{\vsty_3''}{\vsty_4}{\vsty_1}{\vsty_2}.
			By induction,
				\vstytouspsplitext{\uspctx_3}{\uspctx_3'}{\uspctx_3''}{\vs_1}{\vsty_3'}{\vsty_3''}.
			By Lemma \ref{lem:split-move},
				\uspsplitexst{\uspctx'}{\uspctx}{\uspctx_3'}{\uspctx_3''}{\uspctx_4}.
			By \rulename{U:Pair},
				$\vsistype{\uspctx'}{\ectx}{\vs}{\vsprod{\vsty_3'}{\isav}{\vsty_4}{\isav}}$.
			Apply \rulename{U:Subtype}.
			By \rulename{U:OnlyLeftPair},
				$\vsistype{\uspctx_3''}{\ectx}{\vs}{\vsprod{\vsty_3''}{\isav}{\vsty_4}{\isunav}}$.
			Apply \rulename{U:Subtype}.
		\item \splitLfullprodexstB{\vsty_3}{\vsty_3'}{\vsty_3''}{\vsty_4}{\vsty_1}{\vsty_2}.
			By symmetry with the previous case.
		\item \splitRfullprodexstA{\vsty_4}{\vsty_4'}{\vsty_4''}{\vsty_3}{\vsty_1}{\vsty_2}.
			By symmetry with the previous case.
		\item \splitRfullprodexstB{\vsty_4}{\vsty_4'}{\vsty_4''}{\vsty_3}{\vsty_1}{\vsty_2}.
			By symmetry with the previous case.
		\item \splitbothprodexst{\vsty_3}{\vsty_3'}{\vsty_3''}{\vsty_4}{\vsty_4'}{\vsty_4''}{\vsty_1}{\vsty_2}.
			By induction,
				\vstytouspsplitext{\uspctx_3}{\uspctx_3'}{\uspctx_3''}{\vs_1}{\vsty_3'}{\vsty_3''}
				and \vstytouspsplitext{\uspctx_4}{\uspctx_4'}{\uspctx_4''}{\vs_2}{\vsty_4'}{\vsty_4''}.
			By Lemma \ref{lem:split-move},
				\uspsplitbothexst{\uspctx'}{\uspctx''}{\uspctx}{\uspctx_3'}{\uspctx_3''}{\uspctx_4'}{\uspctx_4''}.
			By \rulename{U:Pair},
				$\vsistype{\uspctx'}{\ectx}{\vs}{\vsprod{\vsty_3'}{\isav}{\vsty_4'}{\isav}}$.
			Apply \rulename{U:Subtype}.
			By \rulename{U:Pair},
				$\vsistype{\uspctx''}{\ectx}{\vs}{\vsprod{\vsty_3''}{\isav}{\vsty_4''}{\isav}}$.
			Apply \rulename{U:Subtype}.
	\end{enumerate}

  \item \rulename{U:OnlyLeftPair}.
	Then $\vs = \kwpair{\vs_1}{\vs_2}$
		and~$\vsty = \vsprod{\vsty_3}{\isav}{\vsty_4}{\isunav}$
		and~$\vsistype{\uspctx}{\ectx}{\vs_1}{\vsty_3}$
		and~$\vsistype{\ectx}{\utctx}{\vs_2}{\vsty_4}$.
	By Lemma \ref{lem:split-form},
		\splitLprodexst{\vsty_3}{\vsty_3'}{\vsty_3''}{\vsty_4}{\vsty_1}{\vsty_2}.
	By induction,
		\vstytouspsplitext{\uspctx}{\uspctx_1}{\uspctx_2}{\vs}{\vsty_3'}{\vsty_3''}.
	By \rulename{U:OnlyLeftPair},
		$\vsistype{\uspctx_1}{\ectx}{\vs}{\vsprod{\vsty_3'}{\isav}{\vsty_4}{\isunav}}$.
	Apply \rulename{U:Subtype}.
	By \rulename{U:OnlyLeftPair},
		$\vsistype{\uspctx_2}{\ectx}{\vs}{\vsprod{\vsty_3''}{\isav}{\vsty_4}{\isunav}}$.
	Apply \rulename{U:Subtype}.

  \item \rulename{U:Fst}.
	Then $\vs = \kwfst{\vs'}$
		and~$\vsistype{\uspctx}{\ectx}{\vs'}{\vsprod{\vsty}{\isav}{\vsty_3}{\avail}}$.
	By \rulename{US:SplitLeft},
		$\vstysplit{\vsprod{\vsty}{\isav}{\vsty_3}{\avail}}
			{\vsprod{\vsty_1}{\isav}{\vsty_3}{\avail}}{\vsprod{\vsty_2}{\isav}{\vsty_3}{\isunav}}$.
	By induction,
		\vstytouspsplitext{\uspctx}{\uspctx_1}{\uspctx_2}{\vs'}
			{\vsprod{\vsty_1}{\isav}{\vsty_3}{\avail}}{\vsprod{\vsty_2}{\isav}{\vsty_3}{\isunav}}.
	Apply \rulename{U:Fst} twice.

  \item \rulename{U:Subtype}.
	Then $\vstysubt{\vsty'}{\vsty}$
		and~$\vsistype{\uspctx}{\ectx}{\vs}{\vsty'}$.
	By Lemma \ref{lem:vsty-split-supertyping},
		$\vstysplit{\vsty'}{\vsty_1}{\vsty_2}$.
	Apply induction.
  \end{itemize}
  \fi
\end{proof}

%\begin{lemma}\label{lem:usp-split-notin}\strut
%	\begin{enumerate}
%	\item If $\uspsplit{\uspctx}{\uspctx_1}{\uspctx_2}$
%		and~$\vvar \notin \uspctx$
%		then~$\vvar \notin \uspctx_1$
%		and~$\vvar \notin \uspctx_2$.
%		\label{lem:usp-split-notin-var}
%	\item If $\uspsplit{\uspctx}{\uspctx_1}{\uspctx_2}$
%		and~$\genseed \notin \uspctx$
%		then~$\genseed \notin \uspctx_1$
%		and~$\genseed \notin \uspctx_2$.
%		\label{lem:usp-split-notin-seed}
%	\end{enumerate}
%\end{lemma}
%\iffull
%\begin{proof}\strut
%  \begin{enumerate}
%	\item By induction on the derivation of $\uspsplit{\uspctx}{\uspctx_1}{\uspctx_2}$.
%	\begin{itemize}
%  	\item \rulename{OM:Empty}.
%		True since $\uspctx_1 = \uspctx_2 = \ectx$.
%
%  	\item \rulename{OM:Commutative}.
%		By induction.
%
%	\item \rulename{OM:Var}.
%		Then $\uspctx = \uspctx', \hastype{\vvar'}{\vsty}$
%			and $\uspctx_1 = \uspctx_1', \hastype{\vvar'}{\vsty}$
%			and $\uspsplit{\uspctx'}{\uspctx_1'}{\uspctx_2}$.
%		By induction,
%			$\vvar \notin \uspctx_1'$
%			and~$\vvar \notin \uspctx_2$.
%		Since $\vvar \neq \vvar'$,
%			$\vvar \notin \uspctx_1$.
%
%	\item \rulename{OM:Gen}, \rulename{OM:VarTypeSplit}, \rulename{OM:GenTypeSplit}.
%		Similar to the previous case.
%	\end{itemize}
%
%	\item By symmetry with the previous part.
%  \end{enumerate}
%  \fi
%\end{proof}

The proof for Lemma \ref{lem:weaken-affine-restriction} is straightforward, so it is ommited.

\begin{lemma}\label{lem:weaken-affine-restriction}
	If $\vsistype{\uspctx}{\ectx}{\vs}{\vsty}$
	then $\vsistype{\ectx}{\uspctx}{\vs}{\vsty}$.
\end{lemma}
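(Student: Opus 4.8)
\textbf{Proof plan for Lemma~\ref{lem:weaken-affine-restriction}.}
The statement to prove is: if $\vsistype{\uspctx}{\ectx}{\vs}{\vsty}$ then $\vsistype{\ectx}{\uspctx}{\vs}{\vsty}$. In words, any derivation that types a vertex structure using $\uspctx$ as the affine (spawn) context and an empty touch context can be replayed with the two roles swapped: $\uspctx$ moved into the unrestricted (touch) context, leaving the affine context empty. The plan is a straightforward induction on the derivation of $\vsistype{\uspctx}{\ectx}{\vs}{\vsty}$, rebuilding the same derivation tree rule-by-rule but shifting each use of an $\uspctx$-variable/generator to a use of the corresponding $\utctx$-rule.

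First I would handle the leaf cases. For \rulename{U:OmegaVar} (resp.\ \rulename{U:OmegaGen}), we have $\vs = \vvar$ with $\uspctx = \uspctx', \hastype{\vvar}{\vsty}$; since $\vvar \in \uspctx$, after moving $\uspctx$ to the right we get $\vvar$ in the unrestricted context and conclude by \rulename{U:PsiVar} (resp.\ \rulename{U:PsiGen}). The case \rulename{U:PsiVar} cannot actually arise here because the touch context is empty ($\ectx$), but if one wanted to be uniform one notes there is nothing to do. Next come the structural cases. For \rulename{U:Pair}, inversion gives $\uspsplit{\uspctx}{\uspctx_1}{\uspctx_2}$ with $\vsistype{\uspctx_i}{\ectx}{\vs_i}{\vsty_i}$; by the induction hypothesis $\vsistype{\ectx}{\uspctx_i}{\vs_i}{\vsty_i}$, and since $\uspsplit{\uspctx}{\uspctx_1}{\uspctx_2}$ implies (by Lemma~\ref{lem:usp-split-upstream}) that everything in $\uspctx_i$ already appears in $\uspctx$, we can simply weaken each $\utctx$ to the full $\uspctx$ and reapply \rulename{U:Pair} with an empty affine context. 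For \rulename{U:OnlyLeftPair} and \rulename{U:OnlyRightPair} the "available" side is already typed under $\uspctx$ and the "unavailable" side under some unrestricted $\utctx'$: apply the IH to the available side to move it into the touch context, and leave the other side untouched. For \rulename{U:Fst}, \rulename{U:Snd}, and \rulename{U:Subtype}, the affine context is threaded through unchanged to the single premise, so these cases go through directly by the induction hypothesis followed by re-application of the same rule.

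The only mild subtlety—and the place I would be most careful—is the structural rules that split $\uspctx$ (\rulename{U:Pair} and the variants), because after swapping sides the pieces $\uspctx_1,\uspctx_2$ must be merged back into a single unrestricted context rather than split. This is sound precisely because $\utctx$ is unrestricted, so duplication is allowed, and because $\uspctx_1,\uspctx_2$ are sub-multisets of $\uspctx$ (an immediate consequence of Lemma~\ref{lem:usp-split-upstream}): thus weakening from $\uspctx_i$ up to $\uspctx$ on the touch side is a legitimate move. Everything else is bookkeeping. Since the lemma is stated with the remark that "the proof $\ldots$ is straightforward, so it is omitted," I do not expect any real obstacle; the entire argument is a mechanical replay of the derivation using the correspondence between the $\Omega$-rules and the $\Psi$-rules.
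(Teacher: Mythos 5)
Your proof is correct, and it is the intended one: the paper simply states that this proof is ``straightforward'' and omits it, and your rule-by-rule induction swapping each $\Omega$-rule for its $\Psi$-counterpart is exactly that straightforward argument. One small precision: in the \rulename{U:Pair} case the contexts $\uspctx_1,\uspctx_2$ are not literally sub-multisets of $\uspctx$ (rule \rulename{OM:VarTypeSplit} can change a variable's VS type), but Lemma~\ref{lem:usp-split-upstream} — which you cite — gives the correct statement that each binding in $\uspctx_i$ appears in $\uspctx$ at a \emph{subtype}, so the merge on the unrestricted side goes through via \rulename{U:PsiVar} followed by \rulename{U:Subtype}.
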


\begin{lemma}\label{lem:subst}\strut
  \begin{enumerate}
  \item If~$\tywithdag{\gctx}{\uspctx}{\utctx}{\ctx,\hastype{x}{\tau'}}
    {e}{\tau}{\graph}$
    and~$\tywithdag{\ectx}{\ectx}{\utctx}{\ectx}{v}{\tau'}{\emptygraph}$
    then~$\tywithdag{\gctx}{\uspctx}{\utctx}{\ctx}
    {\sub{e}{v}{x}}{\tau}{\graph}$.
    \label{lem:subst-val-exp}
  %% \item If~$\dagwf{\gctx}{\uspctx,\vertex}{\utctx}{\graph}{\graphkind}$
  %%   and~$\vertex' \notin \uspctx$
  %%   then~$\dagwf{\gctx}{\uspctx,\vertex'}{\utctx}{\gsub{\graph}{\vertex'}{\vertex}}{\gsub{\graphkind}{\vertex'}{\vertex}}$.
  %% \item If~$\istype{\gctx}{\uspctx,\vertex}{\utctx}{\gctx}{\tau}$
  %%   and~$\vertex' \notin \uspctx$
  %%   then~$\istype{\gctx}{\uspctx,\vertex}{\utctx}{\gctx}{\tsub{\vertex'}{\vertex}{\tau}}$.
  %% \item If~$\istype{\gctx}{\uspctx}{\utctx,\vertex}{\gctx}{\tau}$
    %%   then~$\istype{\gctx}{\uspctx}{\utctx,\vertex}{\gctx}{\tsub{\vertex'}{\vertex}{\tau}}$.
  \item If~$\iskind{\gctx}{}{\utctx}{\vstctx, \haskind{\convar}{\kind'}}{\con}{\kind}$
    and~$\iskind{\gctx}{}{\utctx}{\vstctx}{\con'}{\kind'}$
    then~$\iskind{\gctx}{}{\utctx}{\vstctx}{\tsub{\con}{\con'}{\convar}}{\kind}$.
    \label{lem:subst-con-con}
  \item If~$\dagwf{\gctx,\hastype{\gvar}{\graphkind'}}{\uspctx}{\utctx}{\graph}
    {\graphkind}$
    and~$\dagwf{\gctx}{\ectx}{\utctx}{\graph'}{\graphkind'}$
    then~$\dagwf{\gctx}{\uspctx}{\utctx}{\gsub{\graph}{\graph'}{\gvar}}
    {\graphkind}$.
    \label{lem:subst-graph-graph}
  \item If~$\iskind{\gctx,\hastype{\gvar}{\graphkind}}{}{\utctx}{\vstctx}{\con}{\kind}$
    and~$\dagwf{\gctx}{\ectx}{\utctx}{\graph}{\graphkind}$
    then~$\iskind{\gctx}{}{\utctx}{\vstctx}{\tsub{\con}{\graph}{\gvar}}{\kind}$.
    \label{lem:subst-graph-con}
  \item If~$\coneq{\gctx,\hastype{\gvar}{\graphkind}}{\utctx}{\vstctx}{\con_1}{\con_2}{\kind}$
    and~$\dagwf{\gctx}{\ectx}{\utctx}{\graph}{\graphkind}$
    then~$\coneq{\gctx}{\utctx}{\vstctx}{\tsub{\con_1}{\graph}{\gvar}}{\tsub{\con_2}{\graph}{\gvar}}{\kind}$.
    \label{lem:subst-con-equivalence-graph}
%type evaluation subst
  \item If~$\tywithdag{\gctx,\hastype{\gvar}{\graphkind}}{\uspctx}{\utctx}{\ctx}{e}{\tau}{\graph}$
    and~$\dagwf{\gctx}{\ectx}{\utctx}{\graph'}{\graphkind}$
    then~$\tywithdag{\gctx}{\uspctx}{\utctx}{\tsub{\ctx}{\graph'}{\gvar}}
    	{e}{\tsub{\tau}{\graph'}{\gvar}}{\tsub{\graph}{\graph'}{\gvar}}$.
    \label{lem:subst-graph-exp}
  \item If~$\vsistype{\ectx}{\utctx, \hastype{\vvar}{\vsty'}}{\vs}{\vsty}$
    and~$\vsistype{\ectx}{\utctx}{\vs'}{\vsty'}$
    then~$\vsistype{\ectx}{\utctx}{\vsub{\vs}{\vs'}{\vvar}}{\vsty}$.
    \label{lem:subst-touchvert-vert}
  \item If~$\vsistype{\uspctx, \hastype{\vvar}{\vsty'}}{\ectx}{\vs}{\vsty}$
    and~$\vsistype{\uspctx'}{\ectx}{\vs'}{\vsty'}$
	and~$\uspsplit{\uspctx''}{\uspctx}{\uspctx'}$
    then~$\vsistype{\uspctx''}{\ectx}{\vsub{\vs}{\vs'}{\vvar}}{\vsty}$.
    \label{lem:subst-spawnvert-vert}
  \item If~$\dagwf{\gctx}{\uspctx}{\utctx, \hastype{\vvar}{\vsty}}{\graph}{\graphkind}$
    and~$\vsistype{\ectx}{\utctx}{\vs}{\vsty}$
    and~$\vvar \notin \uspctx$
    then~$\dagwf{\gctx}{\uspctx}{\utctx}{\tsub{\graph}{\vs}{\vvar}}{\graphkind}$.
    \label{lem:subst-touchvert-graph}
  \item If~$\dagwf{\gctx}{\uspctx, \hastype{\vvar}{\vsty}}{\utctx, \hastype{\vvar}{\vsty'}}{\graph}{\graphkind}$
    and~$\vsistype{\uspctx'}{\ectx}{\vs}{\vsty}$
    and~$\vsistype{\ectx}{\utctx}{\vs}{\vsty'}$
	and~$\uspsplit{\uspctx''}{\uspctx}{\uspctx'}$
    then~$\dagwf{\gctx}{\uspctx''}{\utctx}{\tsub{\graph}{\vs}{\vvar}}{\graphkind}$.
    \label{lem:subst-allvert-graph}
  \item If~$\iskind{\gctx}{}{\utctx, \hastype{\vvar}{\vsty}}{\vstctx}{\con}{\kind}$
    and~$\vsistype{\ectx}{\utctx}{\vs}{\vsty}$
    then~$\iskind{\gctx}{}{\utctx}{\vstctx}
    {\tsub{\con}{\vs}{\vvar}}{\kind}$.
    \label{lem:subst-touchvert-con}
  \item If~$\vseq{\utctx, \hastype{\vvar}{\vsty'}}{\vs_1}{\vs_2}{\vsty}$
    and~$\vsistype{\ectx}{\utctx}{\vs}{\vsty'}$
    then~$\vseq{\utctx}{\tsub{\vs_1}{\vs}{\vvar}}{\tsub{\vs_2}{\vs}{\vvar}}{\vsty}$.
    \label{lem:subst-vs-equivalence}
  \item If~$\coneq{\gctx}{\utctx, \hastype{\vvar}{\vsty}}{\vstctx}{\con_1}{\con_2}{\kind}$
    and~$\vsistype{\ectx}{\utctx}{\vs}{\vsty}$
    then~$\coneq{\gctx}{\utctx}{\vstctx}{\tsub{\con_1}{\vs}{\vvar}}{\tsub{\con_2}{\vs}{\vvar}}{\kind}$.
    \label{lem:subst-con-equivalence-vs}
  \item If~$\tywithdag{\gctx}{\uspctx}{\utctx, \hastype{\vvar}{\vsty}}{\ctx}{e}{\tau}{\graph}$
    and~$\vsistype{\ectx}{\utctx}{\vs}{\vsty}$
    and~$\vvar \notin \uspctx$
    then~$\tywithdag{\gctx}{\uspctx}{\utctx}{\tsub{\ctx}{\vs}{\vvar}}{\tsub{e}{\vs}{\vvar}}{\tsub{\tau}{\vs}{\vvar}}{\tsub{\graph}{\vs}{\vvar}}$.
    \label{lem:subst-touchvert-exp}
  \item If~$\tywithdag{\gctx}{\uspctx, \hastype{\vvar}{\vsty}}{\utctx, \hastype{\vvar}{\vsty'}}{\ctx}{e}{\tau}{\graph}$
    and~$\vsistype{\uspctx'}{\ectx}{\vs}{\vsty}$
    and~$\vsistype{\ectx}{\utctx}{\vs}{\vsty'}$
	and~$\uspsplit{\uspctx''}{\uspctx}{\uspctx'}$
    then~$\tywithdag{\gctx}{\uspctx''}{\utctx}{\tsub{\ctx}{\vs}{\vvar}}{\tsub{e}{\vs}{\vvar}}{\tsub{\tau}{\vs}{\vvar}}{\tsub{\graph}{\vs}{\vvar}}$.
    \label{lem:subst-allvert-exp}
  \end{enumerate}
\end{lemma}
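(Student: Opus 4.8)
The plan is to prove all fourteen parts of Lemma~\ref{lem:subst} by a single simultaneous structural induction, where each part is an induction on the derivation of the first hypothesis (the typing, kinding, well-formedness, or equivalence judgment being substituted into). The parts are mutually dependent: expression typing appeals to type-constructor kinding and graph-type well-formedness (via \rulename{S:Fun}, \rulename{S:InL}, etc.), type-constructor kinding appeals to graph-type well-formedness (via \rulename{K:Fun}), and all of them bottom out in the VS-substitution parts (\ref{lem:subst-touchvert-vert})--(\ref{lem:subst-spawnvert-vert}), so a single well-founded induction measured on the size of the principal derivation, with the ``VS part'' cases treated as smallest, is the natural structure. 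For the graph-variable substitutions (\ref{lem:subst-graph-graph})--(\ref{lem:subst-graph-exp}) and type-constructor substitution (\ref{lem:subst-con-con}), (\ref{lem:subst-graph-con}), these are the classical substitution-lemma arguments: walk down the derivation, in each rule either the substituted variable does not occur in the principal position (and we just recurse on the premises and reapply the rule) or it does (and we appeal to a lower part, e.g. part~(\ref{lem:subst-graph-graph}) inside part~(\ref{lem:subst-graph-exp}) for the \rulename{S:Fun} case, and part~(\ref{lem:subst-graph-graph}) itself for \rulename{DW:Var}).

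First I would handle the two ``easy'' VS-substitution parts (\ref{lem:subst-touchvert-vert}) and (\ref{lem:subst-spawnvert-vert}), since everything else depends on them. Part~(\ref{lem:subst-touchvert-vert}) is substitution into a VS typed under an empty affine context: induct on $\vsistype{\ectx}{\utctx,\hastype{\vvar}{\vsty'}}{\vs}{\vsty}$; the only interesting case is \rulename{U:PsiVar} when the variable in question is $\vvar$ (return $\vs'$ directly, possibly after \rulename{U:Subtype}), and for \rulename{U:Pair}, \rulename{U:Fst}, etc.\ we recurse and reapply, using weakening/\rulename{U:PsiVar} freely since $\utctx$ is unrestricted. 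Part~(\ref{lem:subst-spawnvert-vert}) is the affine analogue and is the more delicate of the two: here $\vvar$ lives in $\uspctx$, and when we hit \rulename{U:Pair} with its premise $\uspsplit{\uspctx,\hastype{\vvar}{\vsty'}}{\uspctx_1}{\uspctx_2}$ we must invoke Lemma~\ref{lem:split-form}\eqref{lem:split-uspctx-var-form} to learn how $\vvar$ was distributed---it may go entirely into one side, or be split by type via \rulename{OM:VarTypeSplit}. In the split-by-type case we need Lemma~\ref{lem:vert-split-typing} to further split $\uspctx'$ (the context typing $\vs'$) along the matching VS-type split, and Lemma~\ref{lem:split-move} to reassemble a single $\uspctx''$ witnessing $\uspsplit{\uspctx''}{\uspctx}{\uspctx'}$; then we recurse on each half and rebuild with \rulename{U:Pair}. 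The \rulename{U:Subtype} case uses Lemma~\ref{lem:vsty-split-supertyping} to push the subtyping back through the split.

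With the VS parts in hand, the kinding parts (\ref{lem:subst-touchvert-con}) and equivalence parts (\ref{lem:subst-vs-equivalence}), (\ref{lem:subst-con-equivalence-vs}), (\ref{lem:subst-con-equivalence-graph}) follow by routine induction, each rule either leaving the relevant context position untouched or handing off to an earlier part. For graph-type well-formedness, part~(\ref{lem:subst-touchvert-graph}) (substituting a ``touch-only'' VS variable $\vvar\notin\uspctx$) is again routine because $\vvar$ never participates in an affine split; part~(\ref{lem:subst-allvert-graph}) is the hard one and mirrors part~(\ref{lem:subst-spawnvert-vert}) at the graph-type level: the \rulename{DW:Spawn}, \rulename{DW:Seq}, \rulename{DW:App}, and \rulename{DW:Par} cases all carry an affine split of $\uspctx,\hastype{\vvar}{\vsty}$ that must be analyzed with Lemma~\ref{lem:split-form}, re-split against $\uspctx'$ using Lemmas~\ref{lem:vert-split-typing} and~\ref{lem:split-move}, with Lemma~\ref{lem:weaken-affine-restriction} used to move $\vs$ between the affine and unrestricted roles where a rule demands both (as in \rulename{DW:Spawn}, which needs $\vsistype{\uspctx_2}{\ectx}{\vs}{\kwvty}$). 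The \rulename{DW:New} case requires alpha-renaming the freshly bound VS variable away from $\vvar$ and the free variables of $\vs$, and otherwise recurses with the extended contexts. Finally the expression-typing parts (\ref{lem:subst-val-exp}), (\ref{lem:subst-graph-exp}), (\ref{lem:subst-touchvert-exp}), (\ref{lem:subst-allvert-exp}) are assembled the same way, now also invoking the constructor-equivalence and kinding parts for \rulename{S:Type-Eq}, \rulename{S:InL}/\rulename{S:InR}, \rulename{S:Roll}/\rulename{S:Unroll} (where one must check the substitution commutes with the unrolling substitution $\sub{\sub{\tau}{\vs}{\vvar}}{\dots}{\convar}$, a matter of substitution-commutation bookkeeping), and part~(\ref{lem:subst-graph-graph}) or (\ref{lem:subst-allvert-graph}) for \rulename{S:Fun} and \rulename{S:New}. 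The main obstacle throughout is the affine bookkeeping in parts~(\ref{lem:subst-spawnvert-vert}), (\ref{lem:subst-allvert-graph}), and (\ref{lem:subst-allvert-exp}): getting the context splits to line up requires careful case analysis via Lemma~\ref{lem:split-form} together with the associativity/reassembly machinery of Lemma~\ref{lem:split-move}, and keeping straight which occurrences of $\vs$ need the affine typing versus the unrestricted typing (hence the two separate hypotheses $\vsistype{\uspctx'}{\ectx}{\vs}{\vsty}$ and $\vsistype{\ectx}{\utctx}{\vs}{\vsty'}$ in those parts).
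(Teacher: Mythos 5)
Your proposal is correct and follows essentially the same route as the paper: each part is proved by induction on its principal derivation, with the affine parts (8), (10), and (15) handled exactly as you describe via the case analysis of Lemma~\ref{lem:split-form}, the re-splitting of the substituted VS's context via Lemma~\ref{lem:vert-split-typing}, and reassembly via Lemma~\ref{lem:split-move}, while part (9) exploits $\vvar \notin \uspctx$ (the paper makes this precise via the contrapositive of Lemma~\ref{lem:usp-split-upstream}). The only cosmetic difference is that the paper's parts have an acyclic dependency order and so are proved sequentially rather than by a single simultaneous induction, but your framing subsumes that.
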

\iffull
\begin{proof}
  We show the interesting cases for all parts except (\ref{lem:subst-val-exp}),
  which is standard.
  \begin{enumerate}

  \item By induction on the derivation of
    $\tywithdag{\gctx}{\uspctx}{\utctx}{\ctx,\hastype{x}{\tau'}}{e}{\tau}{\graph}$.

\item By induction on the derivation of
    $\iskind{\gctx}{}{\utctx}{\vstctx, \haskind{\convar}{\kind'}}{\con}{\kind}$.
	\begin{itemize}
    \item \rulename{K:Rec}.
      Then~$\con = \kwprec{\convar'}{\hastycl{\vvar}{\vsty}}{\con''}{\vs}$
      and~$\kind = \kwtykind$
      and~$\iskind{\gctx}{\uspctx}{\utctx, \hastype{\vvar}{\vsty}}
             {\vstctx, \haskind{\convar'}{\kwkindarr{\vsty}{\kwtykind}}, \haskind{\convar}{\kind'}}{\con''}{\kwtykind}$.
		Since $\convar'$ is bound inside $\con$, $\convar \neq \convar'$.
      By induction,
      $\iskind{\gctx}{\uspctx}{\utctx, \hastype{\vvar}{\vsty}}
             {\vstctx, \haskind{\convar'}{\kwkindarr{\vsty}{\kwtykind}}}{\tsub{\con''}{\con'}{\convar}}{\kwtykind}$.
      Apply \rulename{K:Rec}.
    \end{itemize}

  \item
    By induction on the derivation
    of~$\dagwf{\gctx,\hastype{\gvar}{\graphkind'}}{\uspctx}{\utctx}{\graph}{\graphkind}$.
    \begin{itemize}
    \item \rulename{DW:RecPi}.
      Then~$\graph = \dagrec{\gvar'}{\dagpi{\vvar_f}{\vvar_t}{\graph''}}{}$
      and~$\graphkind = \dagpi{\hastycl{\vvar_f}{\vsty_f}}{\hastycl{\vvar_t}{\vsty_t}}{\kgraph}$
      and~$\dagwf{\gctx, \hastype{\gvar'}{\dagpi{\hastycl{\vvar_f}{\vsty_f}}{\hastycl{\vvar_t}{\vsty_t}}{\kgraph}}, \hastype{\gvar}{\graphkind'}}
           {\hastype{\vvar_f}{\vsty_f}}{\utctx, \hastype{\vvar_f}{\vsty_f}, \hastype{\vvar_t}{\vsty_t}}
           {\graph}{\kgraph}$.
      By induction,
      $\dagwf{\gctx,\hastype{\gvar'}
           {\dagpi{\hastycl{\vvar_f}{\vsty_f}}{\hastycl{\vvar_t}{\vsty_t}}{\kgraph}}}
           {\hastype{\vvar_f}{\vsty_f}}{\utctx, \hastype{\vvar_f}{\vsty_f}, \hastype{\vvar_t}{\vsty_t}}
           {\gsub{\graph''}{\graph'}{\gvar}}{\kgraph}$.
      Apply \rulename{DW:RecPi}.
    \end{itemize}

  \item
    By induction on the derivation
    of~$\iskind{\gctx,\hastype{\gvar}{\graphkind}}{}{\utctx}{\vstctx}{\con}{\kind}$.
    \begin{itemize}
    \item \rulename{K:Fun}.
      Then~$\con = \kwpi{\hastycl{\vvar_f}{\vsty_f}}{\hastycl{\vvar_t}{\vsty_t}}
				{\kwarrow{\con_1}{\con_2}{\kwtapp{\graph'}{\vvar_f}{\vvar_t}}}$
      	~and $\iskind{\gctx,\hastype{\gvar}{\graphkind}}{\uspctx_1}{\utctx, \hastype{\vvar_f}{\vsty_f}, \hastype{\vvar_t}{\vsty_t}}{\vstctx}
             {\con_1}{\kwtykind}$
      	~and $\iskind{\gctx,\hastype{\gvar}{\graphkind}}{\uspctx_2}{\utctx, \hastype{\vvar_f}{\vsty_f}, \hastype{\vvar_t}{\vsty_t}}{\vstctx}
             {\con_2}{\kwtykind}$
      	~and $\dagwf{\gctx,\hastype{\gvar}{\graphkind}}{\ectx}{\utctx}{\graph'}
             {\dagpi{\hastycl{\vvar_f}{\vsty_f}}{\hastycl{\vvar_t}{\vsty_t}}{\kgraph}}$
      	~and $\kind = \kwtykind$.
      By induction,
      $\iskind{\gctx}{\uspctx_1}{\utctx, \hastype{\vvar_f}{\vsty_f}, \hastype{\vvar_t}{\vsty_t}}{\vstctx}
             {\tsub{\con_1}{\graph}{\gvar}}{\kwtykind}$
      ~and $\iskind{\gctx}{\uspctx_2}{\utctx, \hastype{\vvar_f}{\vsty_f}, \hastype{\vvar_t}{\vsty_t}}{\vstctx}
             {\tsub{\con_2}{\graph}{\gvar}}{\kwtykind}$.
      By part~\ref{lem:subst-graph-graph},
      $\dagwf{\gctx}{\ectx}{\utctx}{\tsub{\graph'}{\graph}{\gvar}}
             {\dagpi{\hastycl{\vvar_f}{\vsty_f}}{\hastycl{\vvar_t}{\vsty_t}}{\kgraph}}$.
      Apply~\rulename{K:Fun}.
    \end{itemize}

	\item By induction on the derivation of
    $\coneq{\gctx,\hastype{\gvar}{\graphkind}}{\utctx}{\vstctx}{\con_1}{\con_2}{\kind}$.
    \begin{itemize}
	\item \rulename{CE:Reflexive}.
		Then $\con_1 = \con_2$,
			and $\iskind{\gctx,\hastype{\gvar}{\graphkind}}{}{\utctx}{\vstctx}{\con_1}{\kind}$.
		By part \ref{lem:subst-graph-con},
			$\iskind{\gctx}{}{\utctx}{\vstctx}{\tsub{\con_1}{\graph}{\gvar}}{\kind}$.
		Apply~\rulename{CE:Reflexive}.

%	\item \rulename{CE:Commutative}.
%		By induction and \rulename{CE:Commutative}.
%
%	\item \rulename{CE:Transitive}.
%		By induction and \rulename{CE:Transitive}.

	\item \rulename{CE:Prod}.
		Then $\con_1 = \kwprod{\con_3}{\con_4}$
			and $\con_2 = \kwprod{\con_3'}{\con_4'}$
			and $\kind = \kwtykind$
			and $\coneq{\gctx,\hastype{\gvar}{\graphkind}}{\utctx}{\vstctx}{\con_3}{\con_3'}{\kwtykind}$
			and $\coneq{\gctx,\hastype{\gvar}{\graphkind}}{\utctx}{\vstctx}{\con_4}{\con_4'}{\kwtykind}$.
		By induction,
			$\coneq{\gctx}{\utctx}{\vstctx}{\tsub{\con_3}{\graph}{\gvar}}{\tsub{\con_3'}{\graph}{\gvar}}{\kwtykind}$
			and~$\coneq{\gctx}{\utctx}{\vstctx}{\tsub{\con_4}{\graph}{\gvar}}{\tsub{\con_4'}{\graph}{\gvar}}{\kwtykind}$.
		Apply~\rulename{CE:Prod}.
    \end{itemize}

  \item By induction on the derivation
    of~$\tywithdag{\gctx,\hastype{\gvar}{\graphkind}}{\uspctx}
    {\utctx}{\ctx}{e}{\tau}{\graph}$.
    \begin{itemize}
	\item \rulename{S:Fun}.
		Then $\tau = \kwpi{\hastycl{\vvar_f}{\vsty_f}}{\hastycl{\vvar_t}{\vsty_t}}
					             {\kwarrow{\tau_1}{\tau_2}
					               {\kwtapp{(\dagrec{\gvar'}{\dagpi{\hastype{\vvar_f}{\vsty_f}}{\hastype{\vvar_t}{\vsty_t}}
					               {\graph''}}{})}{\vvar_f}{\vvar_t}}}$
		and~$e = \kwfun{\vvar_f}{\vvar_t}{f}{x}{e'}$
		and~$\graph = \emptygraph$
		and~$\tywithdag{\gctx,\hastype{\gvar}{\graphkind},\hastype{\gvar'}{\dagpi{\hastycl{\vvar_f}{\vsty_f}}{\hastycl{\vvar_t}{\vsty_t}}{\kgraph}}}
           {\hastype{\vvar_f}{\vsty_f}}{\utctx, \hastype{\vvar_f}{\vsty_f}, \hastype{\vvar_t}{\vsty_t}}
           {\ctx,\hastype{f}{\kwpi{\hastycl{\vvar_f}{\vsty_f}}{\hastycl{\vvar_t}{\vsty_t}}
               {\kwarrow{\tau_1}{\tau_2}{\kwtapp{\gvar'}{\vvar_f}{\vvar_t}}}}, \hastype{x}{\tau_1}}{e'}{\tau_2}{\graph''}$
		and~$\iskind{\gctx,\hastype{\gvar}{\graphkind}}{\ectx}
			{\utctx, \hastype{\vvar_f}{\vsty_f}, \hastype{\vvar_t}{\vsty_t}}{\ectx}{\tau_1}{\kwtykind}$
		and~$\iskind{\gctx,\hastype{\gvar}{\graphkind}}{\ectx}
			{\utctx, \hastype{\vvar_f}{\vsty_f}, \hastype{\vvar_t}{\vsty_t}}{\ectx}{\tau_2}{\kwtykind}$.
	By induction,
		$\tywithdag{\gctx,\hastype{\gvar'}{\dagpi{\hastycl{\vvar_f}{\vsty_f}}{\hastycl{\vvar_t}{\vsty_t}}{\kgraph}}}
           {\hastype{\vvar_f}{\vsty_f}}{\utctx, \hastype{\vvar_f}{\vsty_f}, \hastype{\vvar_t}{\vsty_t}}
           {\tsub{\ctx}{\graph'}{\gvar},\hastype{f}{\kwpi{\hastycl{\vvar_f}{\vsty_f}}{\hastycl{\vvar_t}{\vsty_t}}
               {\kwarrow{\gsub{\tau_1}{\graph'}{\gvar}}{\gsub{\tau_2}{\graph'}{\gvar}}{\kwtapp{\gvar'}{\vvar_f}{\vvar_t}}}},
             \hastype{x}{\gsub{\tau_1}{\graph'}{\gvar}}}{e'}{\gsub{\tau_2}{\graph'}{\gvar}}{\gsub{\graph''}{\graph'}{\gvar}}$.
	By part \ref{lem:subst-graph-con},
		$\iskind{\gctx}{\ectx}{\utctx, \hastype{\vvar_f}{\vsty_f}, \hastype{\vvar_t}{\vsty_t}}{\ectx}{\gsub{\tau_1}{\graph'}{\gvar}}{\kwtykind}$
		and~$\iskind{\gctx}{\ectx}{\utctx, \hastype{\vvar_f}{\vsty_f}, \hastype{\vvar_t}{\vsty_t}}{\ectx}{\gsub{\tau_2}{\graph'}{\gvar}}{\kwtykind}$.
	Apply~\rulename{S:Fun}.

	\item \rulename{S:Type-Eq}.
		Then $\tywithdag{\gctx,\hastype{\gvar}{\graphkind}}{\uspctx}{\utctx}{\ctx}{e}{\tau_1}{\graph}$
			and $\coneq{\gctx,\hastype{\gvar}{\graphkind}}{\utctx}{\ectx}{\tau_1}{\tau}{\kwtykind}$.
		By induction,
			$\tywithdag{\gctx}{\uspctx}{\utctx}{\gsub{\ctx}{\graph'}{\gvar}}{e}{\tsub{\tau_1}{\graph'}{\gvar}}{\gsub{\graph}{\graph'}{\gvar}}$.
		By part \ref{lem:subst-con-equivalence-graph},
			$\coneq{\gctx}{\utctx}{\ectx}{\gsub{\tau_1}{\graph'}{\gvar}}{\vsub{\tau}{\graph'}{\gvar}}{\kwtykind}$.
		Apply \rulename{S:Type-Eq}.

    \item \rulename{S:App}.
      Then $e = \kwapp{}{\kwtapp{e_1}{\vs_f}{\vs_t}}{e_2}$
		and~$\tau = \tsub{\tsub{\tau_2}{\vs_f}{\vvar_f}}{\vs_t}{\vvar_t}$
		and~$\uspsplit{\uspctx}{\uspctx_1}{\uspctx'}$
		and~$\uspsplit{\uspctx'}{\uspctx_2}{\uspctx_3}$
		and~$\graph_1 \seqcomp \graph_2 \seqcomp
             \kwtapp{\graph_3}{\vs_f}{\vs_t}$
		and~$\tywithdag{\gctx,\hastype{\gvar}{\graphkind}}{\uspctx_1}{\utctx}{\ctx}{e_1}
           {\kwpi{\hastycl{\vvar_f}{\vsty_f}}{\hastycl{\vvar_t}{\vsty_t}}
             {\kwarrow{\tau_1}{\tau_2}{\kwtapp{\graph_3}{\vvar_f}{\vvar_t}}}}
           {\graph_1}$
		and~$\tywithdag{\gctx,\hastype{\gvar}{\graphkind}}{\uspctx_2}{\utctx}{\ctx}{e_2}
                     {\tsub{\tsub{\tau_1}{\vs_f}{\vvar_f}}{\vs_t}{\vvar_t}}
                     {\graph_2}$
		and~$\vsistype{\uspctx_3}{\ectx}{\vs_f}{\vsty_f}$
		and~$\vsistype{\ectx}{\utctx}{\vs_f}{\vsty_f}$
		and~$\vsistype{\ectx}{\utctx}{\vs_t}{\vsty_t}$.
	By induction,
		$\tywithdag{\gctx}{\uspctx_1}{\utctx}{\tsub{\ctx}{\graph'}{\gvar}}{e_1}
           {\kwpi{\hastycl{\vvar_f}{\vsty_f}}{\hastycl{\vvar_t}{\vsty_t}}
             {\kwarrow{\tsub{\tau_1}{\graph'}{\gvar}}{\tsub{\tau_2}{\graph'}{\gvar}}
				{\kwtapp{\gsub{\graph_3}{\graph'}{\gvar}}{\vvar_f}{\vvar_t}}}}{\gsub{\graph_1}{\graph'}{\gvar}}$
		and~$\tywithdag{\gctx}{\uspctx_2}{\utctx}{\tsub{\ctx}{\graph'}{\gvar}}{e_2}
                     {\tsub{\tsub{\tsub{\tau_1}{\graph'}{\gvar}}{\vs_f}{\vvar_f}}{\vs_t}{\vvar_t}}
                     {\gsub{\graph_2}{\graph'}{\gvar}}$.\\
	Apply~\rulename{S:App}.
    \end{itemize}

  \item
%	If~$\vsistype{\ectx}{\utctx, \hastype{\vvar}{\vsty'}}{\vs}{\vsty}$
%	    and~$\vsistype{\ectx}{\utctx}{\vs'}{\vsty'}$
%	    then~$\vsistype{\ectx}{\utctx}{\vsub{\vs}{\vs'}{\vvar}}{\vsty}$.
    By induction on the derivation of
    $\vsistype{\ectx}{\utctx, \hastype{\vvar}{\vsty'}}{\vs}{\vsty}$.
    \begin{itemize}
	\item \rulename{U:PsiVar}.
		Then $\vs = \vvar'$
			and~$\utctx = \utctx', \hastype{\vvar'}{\vsty}$.
		Since $\vvar \neq \vvar'$,
			$\vsub{\vvar'}{\vs'}{\vvar} = \vvar'$.
		Apply \rulename{U:PsiVar}.

	\item \rulename{U:Pair}.
		Then $\vs = \kwpair{\vs_1}{\vs_2}$
			and~$\vsty = \vsprod{\vsty_1}{\isav}{\vsty_2}{\isav}$
			and~$\uspsplit{\ectx}{\uspctx_1}{\uspctx_2}$
			and~$\vsistype{\uspctx_1}{\utctx, \hastype{\vvar}{\vsty'}}{\vs_1}{\vsty_1}$
			and~$\vsistype{\uspctx_2}{\utctx, \hastype{\vvar}{\vsty'}}{\vs_2}{\vsty_2}$.
		By inversion on \rulename{OM:Empty},
			$\uspctx_1 = \uspctx_2 = \ectx$.
		By induction,
			$\vsistype{\ectx}{\utctx}{\vsub{\vs_1}{\vs'}{\vvar}}{\vsty_1}$
			and~$\vsistype{\ectx}{\utctx}{\vsub{\vs_2}{\vs'}{\vvar}}{\vsty_2}$.
		Apply \rulename{U:Pair}.

	\item \rulename{U:OnlyLeftPair}.
		Then $\vs = \kwpair{\vs_1}{\vs_2}$
			and~$\vsty = \vsprod{\vsty_1}{\isav}{\vsty_2}{\isunav}$
			and~$\vsistype{\ectx}{\utctx, \hastype{\vvar}{\vsty'}}{\vs_1}{\vsty_1}$
			and~$\vsistype{\ectx}{\utctx'}{\vs_2}{\vsty_2}$.
		By weakening and induction,
			$\vsistype{\ectx}{\utctx}{\vsub{\vs_1}{\vs'}{\vvar}}{\vsty_1}$
			and~$\vsistype{\ectx}{\utctx', \utctx}{\vsub{\vs_2}{\vs'}{\vvar}}{\vsty_2}$.
		Apply \rulename{U:OnlyLeftPair}.

	\item \rulename{U:Fst}.
		Then $\vs = \kwfst{\vs''}$
			and~$\vsistype{\ectx}{\utctx, \hastype{\vvar}{\vsty'}}{\vs''}{\vsprod{\vsty}{\isav}{\vsty_2}{\avail}}$.
		By induction,
			$\vsistype{\ectx}{\utctx}{\vsub{\vs''}{\vs'}{\vvar}}{\vsprod{\vsty}{\isav}{\vsty_2}{\avail}}$.
		Apply \rulename{U:Fst}.
    \end{itemize}

  \item
    By induction on the derivation of
    $\vsistype{\uspctx, \hastype{\vvar}{\vsty'}}{\ectx}{\vs}{\vsty}$.
    \begin{itemize}
	\item \rulename{U:OmegaVar}.
		Then $\vs = \vvar'$
			and~$\uspctx = \uspctx''', \hastype{\vvar'}{\vsty}$.
		Since $\vvar \neq \vvar'$,
			$\vsub{\vs}{\vs'}{\vvar} = \vvar'$.
		Apply Lemma \ref{lem:usp-split-weakening}.

	\item \rulename{U:Pair}.
		Then $\vs = \kwpair{\vs_1}{\vs_2}$
			and~$\vsty = \vsprod{\vsty_1}{\isav}{\vsty_2}{\isav}$
			and~$\uspsplit{\uspctx, \hastype{\vvar}{\vsty'}}{\uspctx_1}{\uspctx_2}$
			and~$\vsistype{\uspctx_1}{\ectx}{\vs_1}{\vsty_1}$
			and~$\vsistype{\uspctx_2}{\ectx}{\vs_2}{\vsty_2}$.
		By Lemma \ref{lem:split-form} there are four cases:
			\begin{enumerate}
			\item \splituspLexstfull{\uspctx_1'}{\uspctx}{\uspctx_1}{\uspctx_2}{\vvar}{\vsty'}.
				Therefore,
					$\vs_2$ does not contain $\vvar$.
				By Lemma \ref{lem:split-move},
					\uspsplitexst{\uspctx'''}{\uspctx''}{\uspctx_1'}{\uspctx_2}{\uspctx'}.
				By induction,
					$\vsistype{\uspctx'''}{\ectx}{\vsub{\vs_1}{\vs'}{\vvar}}{\vsty_1}$.
				Apply \rulename{U:Pair}.
			\item \splituspRexstfull{\uspctx_2'}{\uspctx}{\uspctx_1}{\uspctx_2}{\vvar}{\vsty'}.
				By symmetry with the previous case.
			\item \splituspbothexst{\uspctx_1'}{\uspctx_2'}{\uspctx}{\uspctx_1}{\uspctx_2}{\vvar}{\vsty'}{\vsty_1'}{\vsty_2'}.
				By Lemma \ref{lem:vert-split-typing},
					\vstytouspsplitext{\uspctx'}{\uspctx_1''}{\uspctx_2''}{\vs'}{\vsty_1'}{\vsty_2'}.
				By Lemma \ref{lem:split-move},
					\uspsplitbothexst{\uspctx'''}{\uspctx''''}{\uspctx''}{\uspctx_1'}{\uspctx_2'}{\uspctx_1''}{\uspctx_2''}.
				By induction,
					$\vsistype{\uspctx'''}{\ectx}{\vsub{\vs_1}{\vs'}{\vvar}}{\vsty_1}$
					and~$\vsistype{\uspctx''''}{\ectx}{\vsub{\vs_2}{\vs'}{\vvar}}{\vsty_2}$.
				Apply \rulename{U:Pair}.
			\item \splituspbothcrossexst{\uspctx_1'}{\uspctx_2'}{\uspctx}{\uspctx_1}{\uspctx_2}{\vvar}{\vsty'}{\vsty_1'}{\vsty_2'}.
				By symmetry with the previous case.
			\end{enumerate}

	\item \rulename{U:OnlyLeftPair}.
		Then $\vs = \kwpair{\vs_1}{\vs_2}$
			and~$\vsty = \vsprod{\vsty_1}{\isav}{\vsty_2}{\isunav}$
			and~$\vsistype{\uspctx, \hastype{\vvar}{\vsty'}}{\ectx}{\vs_1}{\vsty_1}$
			and~$\vsistype{\ectx}{\utctx'}{\vs_2}{\vsty_2}$.
		By induction,
			$\vsistype{\uspctx''}{\ectx}{\vsub{\vs_1}{\vs'}{\vvar}}{\vsty_1}$.
		By weakening,
			$\vsistype{\ectx}{\utctx', \hastype{\vvar}{\vsty'}}{\vs_2}{\vsty_2}$.
		By Lemma \ref{lem:weaken-affine-restriction},
			$\vsistype{\ectx}{\uspctx'}{\vs'}{\vsty'}$.
		By weakening and part \ref{lem:subst-touchvert-vert},
			$\vsistype{\ectx}{\utctx', \uspctx'}{\vsub{\vs_2}{\vs'}{\vvar}}{\vsty_2}$.
		Apply \rulename{U:OnlyLeftPair}.

	\item \rulename{U:Fst}.
		Then $\vs = \kwfst{\vs''}$
			and~$\vsistype{\uspctx, \hastype{\vvar}{\vsty'}}{\ectx}{\vs''}{\vsprod{\vsty}{\isav}{\vsty_2}{\avail}}$.
		By induction,
			$\vsistype{\uspctx''}{\ectx}{\vsub{\vs''}{\vs'}{\vvar}}{\vsprod{\vsty}{\isav}{\vsty_2}{\avail}}$.
		Apply \rulename{U:Fst}.
    \end{itemize}

  \item
    By induction on the derivation of
    $\dagwf{\gctx}{\uspctx}{\utctx, \hastype{\vvar}{\vsty}}{\graph}{\graphkind}$.
    \begin{itemize}
	\item \rulename{DW:App}
		Then $\graph = \kwtapp{\graph'}{\vs_f}{\vs_t}$
			and~$\graphkind = \kgraph$
			and~$\uspsplit{\uspctx}{\uspctx_1}{\uspctx_2}$
			and~$\dagwf{\gctx}{\uspctx_1}{\utctx, \hastype{\vvar}{\vsty}}{\graph'}{\dagpi{\hastycl{\vvar_f}{\vsty_f}}{\hastycl{\vvar_t}{\vsty_t}}{\kgraph}}$
			and~$\vsistype{\uspctx_2}{\ectx}{\vs_f}{\vsty_f}$
			and~$\vsistype{\ectx}{\utctx, \hastype{\vvar}{\vsty}}{\vs_f}{\vsty_f}$
			and~$\vsistype{\ectx}{\utctx, \hastype{\vvar}{\vsty}}{\vs_t}{\vsty_t}$.
		By the contrapositive of Lemma \ref{lem:usp-split-upstream},
			$\vvar \notin \uspctx_1$
			and~$\vvar \notin \uspctx_2$.
		Therefore,
			$\vs_f$ does not contain $\vvar$.
		By induction,
			$\dagwf{\gctx}{\uspctx_1}{\utctx}{\vsub{\graph'}{\vs}{\vvar}}{\dagpi{\hastycl{\vvar_f}{\vsty_f}}{\hastycl{\vvar_t}{\vsty_t}}{\kgraph}}$.
		By part \ref{lem:subst-touchvert-vert},
			$\vsistype{\ectx}{\utctx}{\vs_f}{\vsty_f}$
			and$\vsistype{\ectx}{\utctx}{\vsub{\vs_t}{\vs}{\vvar}}{\vsty_t}$.
		Apply \rulename{DW:App}.
    \end{itemize}

  \item
    By induction on the derivation of
    $\dagwf{\gctx}{\uspctx, \hastype{\vvar}{\vsty}}{\utctx, \hastype{\vvar}{\vsty'}}{\graph}{\graphkind}$.
    \begin{itemize}

	\item \rulename{DW:RecPi}
		Then $\graph = \dagrec{\gvar}{\dagpi{\hastycl{\vvar_f}{\vsty_f}}{\hastycl{\vvar_t}{\vsty_t}}{\graph'}}{}$
			and~$\graphkind = \dagpi{\hastycl{\vvar_f}{\vsty_f}}{\hastycl{\vvar_t}{\vsty_t}}{\kgraph}$
			and~$\dagwf{\gctx,\hastype{\gvar}{\dagpi{\hastycl{\vvar_f}{\vsty_f}}{\hastycl{\vvar_t}{\vsty_t}}{\kgraph}}}
           				{\hastype{\vvar_f}{\vsty_f}}{\utctx, \hastype{\vvar_f}{\vsty_f}, \hastype{\vvar_t}{\vsty_t}, \hastype{\vvar}{\vsty'}}{\graph'}{\kgraph}$.
		By part \ref{lem:subst-touchvert-graph},
			$\dagwf{\gctx,\hastype{\gvar}{\dagpi{\hastycl{\vvar_f}{\vsty_f}}{\hastycl{\vvar_t}{\vsty_t}}{\kgraph}}}
           				{\hastype{\vvar_f}{\vsty_f}}{\utctx, \hastype{\vvar_f}{\vsty_f}, \hastype{\vvar_t}{\vsty_t}}{\vsub{\graph'}{\vs}{\vvar}}{\kgraph}$.
		Apply \rulename{DW:RecPi}.

	\item \rulename{DW:New}
		Then $\graph = \dagnew{\hastycl{\vvar'}{\vsty'}}{\graph'}$
			and~$\graphkind = \kgraph$
			and~$\dagwf{\gctx}{\uspctx, \hastype{\vvar'}{\vsty'}, \hastype{\vvar}{\vsty}}
						{\utctx, \hastype{\vvar'}{\vsty'}, \hastype{\vvar}{\vsty}}{\graph'}{\kgraph}$.
		By \rulename{OM:Var},
			$\uspsplit{\uspctx'', \hastype{\vvar'}{\vsty'}}{\uspctx, \hastype{\vvar'}{\vsty'}}{\uspctx'}$.
		By induction,
			$\dagwf{\gctx}{\uspctx'', \hastype{\vvar'}{\vsty'}}{\utctx, \hastype{\vvar'}{\vsty'}}{\vsub{\graph'}{\vs}{\vvar}}{\kgraph}$.
		Apply \rulename{DW:New}.

	\item \rulename{DW:App}
		Then $\graph = \kwtapp{\graph'}{\vs_f}{\vs_t}$
			and~$\graphkind = \kgraph$
			and~$\uspsplit{\uspctx, \hastype{\vvar}{\vsty}}{\uspctx_1}{\uspctx_2}$
			and~$\dagwf{\gctx}{\uspctx_1}{\utctx, \hastype{\vvar}{\vsty'}}{\graph'}{\dagpi{\hastycl{\vvar_f}{\vsty_f}}{\hastycl{\vvar_t}{\vsty_t}}{\kgraph}}$
			and~$\vsistype{\uspctx_2}{\ectx}{\vs_f}{\vsty_f}$
			and~$\vsistype{\ectx}{\utctx, \hastype{\vvar}{\vsty}}{\vs_f}{\vsty_f}$
			and~$\vsistype{\ectx}{\utctx, \hastype{\vvar}{\vsty'}}{\vs_t}{\vsty_t}$.
		By part \ref{lem:subst-touchvert-vert},
			$\vsistype{\ectx}{\utctx}{\vsub{\vs_f}{\vs}{\vvar}}{\vsty_f}$
			and~$\vsistype{\ectx}{\utctx}{\vsub{\vs_t}{\vs}{\vvar}}{\vsty_t}$.
		By Lemma \ref{lem:split-form} there are four cases:
			\begin{enumerate}
			\item \splituspLexstfull{\uspctx_1'}{\uspctx}{\uspctx_1}{\uspctx_2}{\vvar}{\vsty}.
				Therefore,
					$\vs_f$ does not contain $\vvar$.
				By Lemma \ref{lem:split-move},
					\uspsplitexst{\uspctx'''}{\uspctx''}{\uspctx_1'}{\uspctx_2}{\uspctx'}.
				By induction,
					$\dagwf{\gctx}{\uspctx'''}{\utctx}{\vsub{\graph'}{\vs}{\vvar}}{\dagpi{\hastycl{\vvar_f}{\vsty_f}}{\hastycl{\vvar_t}{\vsty_t}}{\kgraph}}$.
				Apply \rulename{DW:App}.
			\item \splituspRexstfull{\uspctx_2'}{\uspctx}{\uspctx_1}{\uspctx_2}{\vvar}{\vsty}.
				By Lemma \ref{lem:split-move},
					\uspsplitexst{\uspctx'''}{\uspctx''}{\uspctx_2'}{\uspctx_1}{\uspctx'}.
				By part \ref{lem:subst-touchvert-graph},
					$\dagwf{\gctx}{\uspctx_1}{\utctx}{\vsub{\graph'}{\vs}{\vvar}}{\dagpi{\hastycl{\vvar_f}{\vsty_f}}{\hastycl{\vvar_t}{\vsty_t}}{\kgraph}}$.
				By part \ref{lem:subst-spawnvert-vert},
					$\vsistype{\uspctx'''}{\ectx}{\vsub{\vs_f}{\vs}{\vvar}}{\vsty_f}$.
				Apply \rulename{DW:App}.
			\item \splituspbothexst{\uspctx_1'}{\uspctx_2'}{\uspctx}{\uspctx_1}{\uspctx_2}{\vvar}{\vsty}{\vsty_1}{\vsty_2}.
				By Lemma \ref{lem:vert-split-typing},
					\vstytouspsplitext{\uspctx'}{\uspctx_1''}{\uspctx_2''}{\vs}{\vsty_1}{\vsty_2}.
				By Lemma \ref{lem:split-move},
					\uspsplitbothexst{\uspctx'''}{\uspctx''''}{\uspctx''}{\uspctx_1'}{\uspctx_2'}{\uspctx_1''}{\uspctx_2''}.
				By induction,
					$\dagwf{\gctx}{\uspctx'''}{\utctx}{\vsub{\graph'}{\vs}{\vvar}}{\dagpi{\hastycl{\vvar_f}{\vsty_f}}{\hastycl{\vvar_t}{\vsty_t}}{\kgraph}}$.
				Apply \rulename{DW:App}.
			\item \splituspbothcrossexst{\uspctx_1'}{\uspctx_2'}{\uspctx}{\uspctx_1}{\uspctx_2}{\vvar}{\vsty}{\vsty_1}{\vsty_2}.
				By symmetry with the previous case.
			\end{enumerate}
    \end{itemize}

  \item
    By induction on the derivation of
    $\iskind{\gctx}{}{\utctx, \hastype{\vvar}{\vsty}}{\vstctx}{\con}{\kind}$.
    \begin{itemize}
	\item \rulename{K:Fun}.
		Then $\con = \kwpi{\hastycl{\vvar_f}{\vsty_f}}{\hastycl{\vvar_t}{\vsty_t}}
             			{\kwarrow{\con_1}{\con_2}{\kwtapp{\graph}{\vvar_f}{\vvar_t}}}$
			and~$\kind = \kwtykind$
			and~$\iskind{\gctx}{}{\utctx, \hastype{\vvar_f}{\vsty_f}, \hastype{\vvar_t}{\vsty_t}, \hastype{\vvar}{\vsty}}{\vstctx}{\con_1}{\kwtykind}$
			and~$\iskind{\gctx}{}{\utctx, \hastype{\vvar_f}{\vsty_f}, \hastype{\vvar_t}{\vsty_t}, \hastype{\vvar}{\vsty}}{\vstctx}{\con_2}{\kwtykind}$
			and~$\dagwf{\gctx}{\ectx}{\utctx, \hastype{\vvar}{\vsty}}{\graph}{\dagpi{\hastycl{\vvar_f}{\vsty_f}}{\hastycl{\vvar_t}{\vsty_t}}{\kgraph}}$.
		By induction,
			$\iskind{\gctx}{}{\utctx, \hastype{\vvar_t}{\vsty_t}}{\vstctx}{\tsub{\con_1}{\vs}{\vvar}}{\kwtykind}$
			and~$\iskind{\gctx}{}{\utctx, \hastype{\vvar_t}{\vsty_t}}{\vstctx}{\tsub{\con_2}{\vs}{\vvar}}{\kwtykind}$.
		By part \ref{lem:subst-touchvert-graph},
			$\dagwf{\gctx}{\ectx}{\utctx}{\vsub{\graph}{\vs}{\vvar}}{\dagpi{\hastycl{\vvar_f}{\vsty_f}}{\hastycl{\vvar_t}{\vsty_t}}{\kgraph}}$.
		Apply \rulename{K:Fun}.
    \end{itemize}

\item
    By induction on the derivation of
    $\vseq{\utctx, \hastype{\vvar}{\vsty'}}{\vs_1}{\vs_2}{\vsty}$.
    \begin{itemize}
	\item \rulename{UE:Reflexive}.
		Then $\vs_1 = \vs_2$
			and $\vsistype{\ectx}{\utctx, \hastype{\vvar}{\vsty'}}{\vs_1}{\vsty}$.
		By part \ref{lem:subst-touchvert-vert},
			$\vsistype{\ectx}{\utctx}{\tsub{\vs_1}{\vs}{\vvar}}{\vsty}$.
		Apply~\rulename{UE:Reflexive}.

	\item \rulename{UE:Pair}.
		Then $\vs_1 = \kwpair{\vs_3, \vs_4}$
			and $\vs_2 = \kwpair{\vs_3', \vs_4'}$
			and $\vsty = \vsprod{\vsty_3}{\isav}{\vsty_4}{\isav}$
			and $\vseq{\utctx, \hastype{\vvar}{\vsty'}}{\vs_3}{\vs_3'}{\vsty_3}$
			and $\vseq{\utctx, \hastype{\vvar}{\vsty'}}{\vs_4}{\vs_4'}{\vsty_4}$.
		By induction,
			$\vseq{\utctx}{\tsub{\vs_3}{\vs}{\vvar}}{\tsub{\vs_3'}{\vs}{\vvar}}{\vsty_3}$
			and $\vseq{\utctx}{\tsub{\vs_4}{\vs}{\vvar}}{\tsub{\vs_4'}{\vs}{\vvar}}{\vsty_4}$.
		Apply~\rulename{UE:Pair}.

	\item \rulename{UE:OnlyLeftPair}.
		Then $\vs_1 = \kwpair{\vs_3, \vs_4}$
			and $\vs_2 = \kwpair{\vs_3', \vs_4'}$
			and $\vsty = \vsprod{\vsty_3}{\isav}{\vsty_4}{\isunav}$
			and $\vseq{\utctx, \hastype{\vvar}{\vsty'}}{\vs_3}{\vs_3'}{\vsty_3}$
			and $\vseq{\utctx'}{\vs_4}{\vs_4'}{\vsty_4}$.
		By weakening and induction,
			$\vseq{\utctx}{\tsub{\vs_3}{\vs}{\vvar}}{\tsub{\vs_3'}{\vs}{\vvar}}{\vsty_3}$
			and $\vseq{\utctx',\utctx}{\tsub{\vs_4}{\vs}{\vvar}}{\tsub{\vs_4'}{\vs}{\vvar}}{\vsty_4}$.
		Apply~\rulename{UE:OnlyLeftPair}.
    \end{itemize}

  \item
    By induction on the derivation of
    $\coneq{\gctx}{\utctx, \hastype{\vvar}{\vsty}}{\vstctx}{\con_1}{\con_2}{\kind}$.
    \begin{itemize}
	\item \rulename{CE:Reflexive}.
		Then $\con_1 = \con_2$
			and $\iskind{\gctx}{}{\utctx, \hastype{\vvar}{\vsty}}{\vstctx}{\con_1}{\kind}$.
		By part \ref{lem:subst-touchvert-con},
			$\iskind{\gctx}{}{\utctx}{\vstctx}{\tsub{\con_1}{\vs}{\vvar}}{\kind}$.
		Apply~\rulename{CE:Reflexive}.

%	\item \rulename{CE:Commutative}.
%		By induction and \rulename{CE:Commutative}.
%
%	\item \rulename{CE:Transitive}.
%		By induction and \rulename{CE:Transitive}.

	\item \rulename{CE:Fut}.
		Then $\con_1 = \kwfutt{\con}{\vs'}$
			and $\con_2 = \kwfutt{\con'}{\vs''}$
			and $\kind = \kwtykind$
			and $\vseq{\utctx, \hastype{\vvar}{\vsty}}{\vs'}{\vs''}{\kwvty}$.
			and $\coneq{\gctx}{\utctx, \hastype{\vvar}{\vsty}}{\vstctx}{\con}{\con'}{\kwtykind}$.
		By induction,
			$\coneq{\gctx}{\utctx}{\vstctx}{\tsub{\con}{\vs}{\vvar}}{\tsub{\con'}{\vs}{\vvar}}{\kwtykind}$.
		By part \ref{lem:subst-vs-equivalence},
			$\vseq{\utctx}{\tsub{\vs'}{\vs}{\vvar}}{\tsub{\vs''}{\vs}{\vvar}}{\kwvty}$.
		Apply~\rulename{CE:Fut}.
    \end{itemize}

  \item
    By induction on the derivation of
    $\tywithdag{\gctx}{\uspctx}{\utctx, \hastype{\vvar}{\vsty}}{\ctx}{e}{\tau}{\graph}$.
    \begin{itemize}
%	\item \rulename{S:Var}.
%		Then $e = x$
%			and $\graph = \emptygraph$
%			and $\ctx = \ctx', \hastype{x}{\tau}$.
%		Apply \rulename{S:Var}.

	\item \rulename{S:Fun}.
		Then $\tau = \kwpi{\hastycl{\vvar_f}{\vsty_f}}{\hastycl{\vvar_t}{\vsty_t}}
				{\kwarrow{\tau_1}{\tau_2}
				{\kwtapp{(\dagrec{\gvar}{\dagpi{\hastype{\vvar_f}{\vsty_f}}{\hastype{\vvar_t}{\vsty_t}}
				{\graph'}}{})}{\vvar_f}{\vvar_t}}}$
			and $\graph = \emptygraph$
			and $e = \kwfun{\vvar_f}{\vvar_t}{f}{x}{e'}$
			and $\tywithdag{\gctx, \hastype{\gvar}{\dagpi{\hastycl{\vvar_f}{\vsty_f}}{\hastycl{\vvar_t}{\vsty_t}}{\kgraph}}}
				{\hastype{\vvar_f}{\vsty_f}}{\utctx, \hastype{\vvar_f}{\vsty_f}, \hastype{\vvar_t}{\vsty_t}, \hastype{\vvar}{\vsty}}
				{\ctx, \hastype{f}{\kwpi{\hastycl{\vvar_f}{\vsty_f}}{\hastycl{\vvar_t}{\vsty_t}}
					{\kwarrow{\tau_1}{\tau_2}{\kwtapp{\gvar}{\vvar_f}{\vvar_t}}}},\hastype{x}{\tau_1}}
				{e'}{\tau_2}{\graph'}$
			and $\iskind{\gctx}{\ectx}{\utctx, \hastype{\vvar_f}{\vsty_f}, \hastycl{\vvar_t}{\vsty_t}, \hastype{\vvar}{\vsty}}{\ectx}{\tau_1}{\kwtykind}$
			and $\iskind{\gctx}{\ectx}{\utctx, \hastype{\vvar_f}{\vsty_f}, \hastycl{\vvar_t}{\vsty_t}, \hastype{\vvar}{\vsty}}{\ectx}{\tau_2}{\kwtykind}$.
		By induction,
			$\tywithdag{\gctx, \hastype{\gvar}{\dagpi{\hastycl{\vvar_f}{\vsty_f}}{\hastycl{\vvar_t}{\vsty_t}}{\kgraph}}}
				{\hastype{\vvar_f}{\vsty_f}}{\utctx, \hastype{\vvar_f}{\vsty_f}, \hastype{\vvar_t}{\vsty_t}}
				{\vsub{\ctx}{\vs}{\vvar}, \hastype{f}{\kwpi{\hastycl{\vvar_f}{\vsty_f}}{\hastycl{\vvar_t}{\vsty_t}}
					{\kwarrow{\vsub{\tau_1}{\vs}{\vvar}}{\vsub{\tau_2}{\vs}{\vvar}}{\kwtapp{\gvar}{\vvar_f}{\vvar_t}}}},
					\hastype{x}{\vsub{\tau_1}{\vs}{\vvar}}}
				{\vsub{e'}{\vs}{\vvar}}{\vsub{\tau_2}{\vs}{\vvar}}{\vsub{\graph'}{\vs}{\vvar}}$.
		By part \ref{lem:subst-touchvert-con},
			$\iskind{\gctx}{\ectx}{\utctx, \hastype{\vvar_f}{\vsty_f}, \hastycl{\vvar_t}{\vsty_t}}{\ectx}{\vsub{\tau_1}{\vs}{\vvar}}{\kwtykind}$
			and $\iskind{\gctx}{\ectx}{\utctx, \hastype{\vvar_f}{\vsty_f}, \hastycl{\vvar_t}{\vsty_t}}{\ectx}{\vsub{\tau_2}{\vs}{\vvar}}{\kwtykind}$.
		Apply \rulename{S:Fun}.

	\item \rulename{S:Roll}.
		Then $e = \kwroll{e'}$
			and $\tau = \kwprec{\convar}{\vvar'}{\tau'}{\vs'}$
			and $\iskind{\gctx}{}{\utctx, \hastype{\vvar}{\vsty}}{\ectx}{\kwprec{\convar}{\vvar'}{\tau'}{\vs'}}{\kwtykind}$
			and $\tywithdag{\gctx}{\uspctx}{\utctx, \hastype{\vvar}{\vsty}}{\ctx}{e'}{\sub{\sub{\tau'}{\vs'}{\vvar'}}
				{\kwxi{\vvar''}{\kwprec{\convar}{\vvar'}{\tau'}{\vvar''}}}{\convar}}{\graph}$.
		By induction,
			$\tywithdag{\gctx}{\uspctx}{\utctx}{\vsub{\ctx}{\vs}{\vvar}}{\vsub{e'}{\vs}{\vvar}}
				{\sub{\sub{\vsub{\tau'}{\vs}{\vvar}}{\vsub{\vs'}{\vs}{\vvar}}{\vvar'}}
				{\kwxi{\vvar''}{\kwprec{\convar}{\vvar'}{\vsub{\tau'}{\vs}{\vvar}}{\vvar''}}}{\convar}}{\vsub{\graph}{\vs}{\vvar}}$.
		By part \ref{lem:subst-touchvert-con},
			$\iskind{\gctx}{}{\utctx}{\ectx}{\kwprec{\convar}{\vvar'}{\vsub{\tau'}{\vs}{\vvar}}{\vsub{\vs'}{\vs}{\vvar}}}{\kwtykind}$.
		Apply \rulename{S:Roll}.

	\item \rulename{S:Future}.
		Then $e = \kwfuture{\vs'}{e'}$
			and $\tau = \kwfutt{\tau'}{\vs'}$
			and $\graph = \leftcomp{\graph'}{\vs'}$
			and $\uspsplit{\uspctx}{\uspctx_1}{\uspctx_2}$
			and $\tywithdag{\gctx}{\uspctx_1}{\utctx, \hastype{\vvar}{\vsty}}{\ctx}{e'}{\tau'}{\graph'}$
			and $\vsistype{\uspctx_2}{\ectx}{\vs'}{\kwvty}$
			and $\vsistype{\ectx}{\utctx, \hastype{\vvar}{\vsty}}{\vs'}{\kwvty}$.
		By the contrapositive of Lemma \ref{lem:usp-split-upstream},
			$\vvar \notin \uspctx_1$
			and $\vvar \notin \uspctx_2$.
		Therefore,
			$\vs'$ does not contain $\vvar$.
		By part \ref{lem:subst-touchvert-vert},
			$\vsistype{\ectx}{\utctx}{\vs'}{\kwvty}$.
		By induction,
			$\tywithdag{\gctx}{\uspctx_1}{\utctx}{\vsub{\ctx}{\vs}{\vvar}}
				{\vsub{e'}{\vs}{\vvar}}{\vsub{\tau'}{\vs}{\vvar}}{\vsub{\graph'}{\vs}{\vvar}}$.
		Apply \rulename{S:Future}.

	\item \rulename{S:Type-Eq}.
		Then $\tywithdag{\gctx}{\uspctx}{\utctx, \hastype{\vvar}{\vsty}}{\ctx}{e}{\tau_1}{\graph}$
			and $\coneq{\gctx}{\utctx, \hastype{\vvar}{\vsty}}{\ectx}{\tau_1}{\tau}{\kwtykind}$.
		By induction,
			$\tywithdag{\gctx}{\uspctx}{\utctx}{\vsub{\ctx}{\vs}{\vvar}}{\vsub{e}{\vs}{\vvar}}{\vsub{\tau_1}{\vs}{\vvar}}{\vsub{\graph}{\vs}{\vvar}}$.
		By part \ref{lem:subst-con-equivalence-vs},
			$\coneq{\gctx}{\utctx}{\ectx}{\vsub{\tau_1}{\vs}{\vvar}}{\vsub{\tau}{\vs}{\vvar}}{\kwtykind}$.
		Apply \rulename{S:Type-Eq}.
    \end{itemize}

  \item
    By induction on the derivation of
    $\tywithdag{\gctx}{\uspctx, \hastype{\vvar}{\vsty}}{\utctx, \hastype{\vvar}{\vsty'}}{\ctx}{e}{\tau}{\graph}$.
    \begin{itemize}
	\item \rulename{S:App}.
		Then $e = \kwapp{}{\kwtapp{e_1}{\vs_f}{\vs_t}}{e_2}$
			and $\tau = \tsub{\tsub{\tau_2}{\vs_f}{\vvar_f}}{\vs_t}{\vvar_t}$
			and $\uspsplit{\uspctx, \hastype{\vvar}{\vsty}}{\uspctx_1}{\uspctx_0}$
			and $\uspsplit{\uspctx_0}{\uspctx_2}{\uspctx_3}$
			and $\graph = \graph_1 \seqcomp \graph_2 \seqcomp \kwtapp{\graph_3}{\vs_f}{\vs_t}$
			and $\tywithdag{\gctx}{\uspctx_1}{\utctx, \hastype{\vvar}{\vsty'}}{\ctx}{e_1}{\kwpi{\hastycl{\vvar_f}{\vsty_f}}{\hastycl{\vvar_t}{\vsty_t}}
				{\kwarrow{\tau_1}{\tau_2}{\kwtapp{\graph_3}{\vvar_f}{\vvar_t}}}}{\graph_1}$
			and $\tywithdag{\gctx}{\uspctx_2}{\utctx, \hastype{\vvar}{\vsty'}}{\ctx}{e_2}
				{\tsub{\tsub{\tau_1}{\vs_f}{\vvar_f}}{\vs_t}{\vvar_t}}{\graph_2}$
			and $\vsistype{\uspctx_3}{\ectx}{\vs_f}{\vsty_f}$
			and $\vsistype{\ectx}{\utctx, \hastype{\vvar}{\vsty'}}{\vs_f}{\vsty_f}$
			and $\vsistype{\ectx}{\utctx, \hastype{\vvar}{\vsty'}}{\vs_t}{\vsty_t}$.
		By part \ref{lem:subst-touchvert-vert},
			$\vsistype{\ectx}{\utctx}{\vsub{\vs_t}{\vs}{\vvar}}{\vsty_t}$
			and $\vsistype{\ectx}{\utctx}{\vsub{\vs_f}{\vs}{\vvar}}{\vsty_f}$.
		By Lemma \ref{lem:split-form} there are four cases:
			\begin{enumerate}
			\item \splituspLexstfull{\uspctx_1'}{\uspctx}{\uspctx_1}{\uspctx_0}{\vvar}{\vsty}.
				By the contrapositive of Lemma \ref{lem:usp-split-upstream},
					$\vvar \notin \uspctx_2$
					and $\vvar \notin \uspctx_3$.
				Therefore,
					$\vs_f$ does not contain $\vvar$.
				By Lemma \ref{lem:split-move},
					\uspsplitexst{\uspctx'''}{\uspctx''}{\uspctx_1'}{\uspctx_0}{\uspctx'}.
				By induction,
					$\tywithdag{\gctx}{\uspctx'''}{\utctx}{\vsub{\ctx}{\vs}{\vvar}}{\vsub{e_1}{\vs}{\vvar}}
						{\kwpi{\hastycl{\vvar_f}{\vsty_f}}{\hastycl{\vvar_t}{\vsty_t}}{\kwarrow{\vsub{\tau_1}{\vs}{\vvar}}{\vsub{\tau_2}{\vs}{\vvar}}
							{\kwtapp{\vsub{\graph_3}{\vs}{\vvar}}{\vvar_f}{\vvar_t}}}}{\vsub{\graph_1}{\vs}{\vvar}}$.
				By part \ref{lem:subst-touchvert-exp},
					\[\tywithdag{\gctx}{\uspctx_2}{\utctx}{\vsub{\ctx}{\vs}{\vvar}}{\vsub{e_2}{\vs}{\vvar}}
						{\tsub{\tsub{\vsub{\tau_1}{\vs}{\vvar}}{\vsub{\vs_f}{\vs}{\vvar}}{\vvar_f}}{\vsub{\vs_t}{\vs}{\vvar}}{\vvar_t}}
						{\vsub{\graph_2}{\vs}{\vvar}}\]
				Apply \rulename{S:App}.
			\item \splituspRexstfull{\uspctx_0'}{\uspctx}{\uspctx_1}{\uspctx_0}{\vvar}{\vsty}.
				By Lemma \ref{lem:split-move},
					\uspsplitexst{\uspctx'''}{\uspctx''}{\uspctx_0'}{\uspctx_1}{\uspctx'}.
				By part \ref{lem:subst-touchvert-exp},
					$\tywithdag{\gctx}{\uspctx_1}{\utctx}{\vsub{\ctx}{\vs}{\vvar}}{\vsub{e_1}{\vs}{\vvar}}
						{\kwpi{\hastycl{\vvar_f}{\vsty_f}}{\hastycl{\vvar_t}{\vsty_t}}{\kwarrow{\vsub{\tau_1}{\vs}{\vvar}}{\vsub{\tau_2}{\vs}{\vvar}}
							{\kwtapp{\vsub{\graph_3}{\vs}{\vvar}}{\vvar_f}{\vvar_t}}}}{\vsub{\graph_1}{\vs}{\vvar}}$.
				By Lemma \ref{lem:split-form} there are four cases:
					\begin{enumerate}
					\item \splituspLexstfull{\uspctx_2'}{\uspctx_0'}{\uspctx_2}{\uspctx_3}{\vvar}{\vsty}.
						Therefore,
							$\vs_f$ does not contain $\vvar$.
				    	By Lemma \ref{lem:split-move},
							\uspsplitexst{\uspctx''''}{\uspctx'''}{\uspctx_2'}{\uspctx_3}{\uspctx'}.\\
						By induction,
							\[\tywithdag{\gctx}{\uspctx''''}{\utctx}{\vsub{\ctx}{\vs}{\vvar}}{\vsub{e_2}{\vs}{\vvar}}
								{\tsub{\tsub{\vsub{\tau_1}{\vs}{\vvar}}{\vsub{\vs_f}{\vs}{\vvar}}{\vvar_f}}{\vsub{\vs_t}{\vs}{\vvar}}{\vvar_t}}
								{\vsub{\graph_2}{\vs}{\vvar}}\]
						Apply \rulename{S:App}.
					\item \splituspRexstfull{\uspctx_3'}{\uspctx_0'}{\uspctx_2}{\uspctx_3}{\vvar}{\vsty}.
						By Lemma \ref{lem:split-move},
							\uspsplitexst{\uspctx''''}{\uspctx'''}{\uspctx_3'}{\uspctx_2}{\uspctx'}.
						By part \ref{lem:subst-touchvert-exp},
							$\tywithdag{\gctx}{\uspctx_2}{\utctx}{\vsub{\ctx}{\vs}{\vvar}}{\vsub{e_2}{\vs}{\vvar}}
								{\tsub{\tsub{\vsub{\tau_1}{\vs}{\vvar}}{\vsub{\vs_f}{\vs}{\vvar}}{\vvar_f}}{\vsub{\vs_t}{\vs}{\vvar}}{\vvar_t}}
								{\vsub{\graph_2}{\vs}{\vvar}}$.
						By part \ref{lem:subst-spawnvert-vert},
							$\vsistype{\uspctx''''}{\ectx}{\vsub{\vs_f}{\vs}{\vvar}}{\vsty_f}$.
						Apply \rulename{S:App}.
					\item \splituspbothexst{\uspctx_2'}{\uspctx_3'}{\uspctx_0'}{\uspctx_2}{\uspctx_3}{\vvar}{\vsty}{\vsty_2}{\vsty_3}.
						By Lemma \ref{lem:vert-split-typing},
							\vstytouspsplitext{\uspctx'}{\uspctx_2''}{\uspctx_3''}{\vs}{\vsty_2}{\vsty_3}.
						By Lemma \ref{lem:split-move},
							\uspsplitbothexst{\uspctx''''}{\uspctx'''''}{\uspctx'''}{\uspctx_2'}{\uspctx_3'}{\uspctx_2''}{\uspctx_3''}.
						By induction,
							$\tywithdag{\gctx}{\uspctx''''}{\utctx}{\vsub{\ctx}{\vs}{\vvar}}{\vsub{e_2}{\vs}{\vvar}}
								{\tsub{\tsub{\vsub{\tau_1}{\vs}{\vvar}}{\vsub{\vs_f}{\vs}{\vvar}}{\vvar_f}}{\vsub{\vs_t}{\vs}{\vvar}}{\vvar_t}}
								{\vsub{\graph_2}{\vs}{\vvar}}$.
						By part \ref{lem:subst-spawnvert-vert},
							$\vsistype{\uspctx'''''}{\ectx}{\vsub{\vs_f}{\vs}{\vvar}}{\vsty_f}$.
						Apply \rulename{S:App}.
					\item \splituspbothcrossexst{\uspctx_2'}{\uspctx_3'}{\uspctx_0'}{\uspctx_2}{\uspctx_3}{\vvar}{\vsty}{\vsty_2}{\vsty_3}.
						By symmetry with the previous case.
					\end{enumerate}

			\item \splituspbothexst{\uspctx_1'}{\uspctx_0'}{\uspctx}{\uspctx_1}{\uspctx_0}{\vvar}{\vsty}{\vsty_1}{\vsty_0}.
				By Lemma \ref{lem:vert-split-typing},
					\vstytouspsplitext{\uspctx'}{\uspctx_1''}{\uspctx_0''}{\vs}{\vsty_1}{\vsty_0}.
				By Lemma \ref{lem:split-move},
					\uspsplitbothexst{\uspctx'''}{\uspctx''''}{\uspctx''}{\uspctx_1'}{\uspctx_0'}{\uspctx_1''}{\uspctx_0''}.
				By induction,
					$\tywithdag{\gctx}{\uspctx'''}{\utctx}{\vsub{\ctx}{\vs}{\vvar}}{\vsub{e_1}{\vs}{\vvar}}
						{\kwpi{\hastycl{\vvar_f}{\vsty_f}}{\hastycl{\vvar_t}{\vsty_t}}{\kwarrow{\vsub{\tau_1}{\vs}{\vvar}}{\vsub{\tau_2}{\vs}{\vvar}}
							{\kwtapp{\vsub{\graph_3}{\vs}{\vvar}}{\vvar_f}{\vvar_t}}}}{\vsub{\graph_1}{\vs}{\vvar}}$.
				By Lemma \ref{lem:split-form} there are four cases:
					\begin{enumerate}
					\item \splituspLexstfull{\uspctx_2'}{\uspctx_0'}{\uspctx_2}{\uspctx_3}{\vvar}{\vsty_0}.
						Therefore,
							$\vs_f$ does not contain $\vvar$.
				    	By Lemma \ref{lem:split-move},
							\uspsplitexst{\uspctx'''''}{\uspctx''''}{\uspctx_2'}{\uspctx_3}{\uspctx_0''}.
						By induction,
							\[\tywithdag{\gctx}{\uspctx'''''}{\utctx}{\vsub{\ctx}{\vs}{\vvar}}{\vsub{e_2}{\vs}{\vvar}}
								{\tsub{\tsub{\vsub{\tau_1}{\vs}{\vvar}}{\vsub{\vs_f}{\vs}{\vvar}}{\vvar_f}}{\vsub{\vs_t}{\vs}{\vvar}}{\vvar_t}}
								{\vsub{\graph_2}{\vs}{\vvar}}\]
						Apply \rulename{S:App}.
					\item \splituspRexstfull{\uspctx_3'}{\uspctx_0'}{\uspctx_2}{\uspctx_3}{\vvar}{\vsty_0}.
						By Lemma \ref{lem:split-move},
							\uspsplitexst{\uspctx'''''}{\uspctx''''}{\uspctx_3'}{\uspctx_2}{\uspctx_0''}.
						By part \ref{lem:subst-touchvert-exp},
							$\tywithdag{\gctx}{\uspctx_2}{\utctx}{\vsub{\ctx}{\vs}{\vvar}}{\vsub{e_2}{\vs}{\vvar}}
								{\tsub{\tsub{\vsub{\tau_1}{\vs}{\vvar}}{\vsub{\vs_f}{\vs}{\vvar}}{\vvar_f}}{\vsub{\vs_t}{\vs}{\vvar}}{\vvar_t}}
								{\vsub{\graph_2}{\vs}{\vvar}}$.
						By part \ref{lem:subst-spawnvert-vert},
							$\vsistype{\uspctx'''''}{\ectx}{\vsub{\vs_f}{\vs}{\vvar}}{\vsty_f}$.
						Apply \rulename{S:App}.
					\item \splituspbothexst{\uspctx_2'}{\uspctx_3'}{\uspctx_0'}{\uspctx_2}{\uspctx_3}{\vvar}{\vsty_0}{\vsty_2}{\vsty_3}.
						By Lemma \ref{lem:vert-split-typing},
							\vstytouspsplitext{\uspctx_0''}{\uspctx_2''}{\uspctx_3''}{\vs}{\vsty_2}{\vsty_3}.
						By Lemma \ref{lem:split-move},
							\uspsplitbothexst{\uspctx'''''}{\uspctx''''''}{\uspctx''''}{\uspctx_2'}{\uspctx_3'}{\uspctx_2''}{\uspctx_3''}.
						By induction,
							$\tywithdag{\gctx}{\uspctx'''''}{\utctx}{\vsub{\ctx}{\vs}{\vvar}}{\vsub{e_2}{\vs}{\vvar}}
								{\tsub{\tsub{\vsub{\tau_1}{\vs}{\vvar}}{\vsub{\vs_f}{\vs}{\vvar}}{\vvar_f}}{\vsub{\vs_t}{\vs}{\vvar}}{\vvar_t}}
								{\vsub{\graph_2}{\vs}{\vvar}}$.
						By part \ref{lem:subst-spawnvert-vert},\\
							$\vsistype{\uspctx''''''}{\ectx}{\vsub{\vs_f}{\vs}{\vvar}}{\vsty_f}$.
						Apply \rulename{S:App}.
					\item \splituspbothcrossexst{\uspctx_2'}{\uspctx_3'}{\uspctx_0'}{\uspctx_2}{\uspctx_3}{\vvar}{\vsty_0}{\vsty_2}{\vsty_3}.
						By symmetry with the previous case.
					\end{enumerate}

			\item \splituspbothcrossexst{\uspctx_1'}{\uspctx_0'}{\uspctx}{\uspctx_1}{\uspctx_0}{\vvar}{\vsty}{\vsty_1}{\vsty_0}.
				By symmetry with the previous case.
			\end{enumerate}
    \end{itemize}
  \end{enumerate}
\end{proof}

Lemmas \ref{lem:vs-preservation} shows that VS type assignments in VS equivalence judgements are consistent with the VS typing judgements:
if $\vs_1$ and $\vs_2$ are equivalent VSs with type $\vsty$, then $\vs_1$ and $\vs_2$ can both be given type $\vsty$ with a VS typing judgement.
Lemma \ref{lem:con-preservation} shows the same property for type constructor equivalence with regards to kinding.

\begin{lemma}\label{lem:vs-preservation}
  If~$\vseq{\utctx}{\vs}{\vs'}{\vsty}$
  then~$\vsistype{\ectx}{\utctx}{\vs}{\vsty}$
  and~$\vsistype{\ectx}{\utctx}{\vs'}{\vsty}$.
\end{lemma}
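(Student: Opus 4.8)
The plan is to induct on the derivation of $\vseq{\utctx}{\vs}{\vs'}{\vsty}$, using the full set of VS equivalence rules of Figure~\ref{fig:vert-equiv}, and in each case to read off VS typing derivations for both $\vs$ and $\vs'$ at type $\vsty$ from the inductive hypotheses. The structural cases are routine. \rulename{UE:Reflexive} returns its premise verbatim (here $\vs' = \vs$). \rulename{UE:Commutative} and \rulename{UE:Transitive} just recombine the typings already produced by the inductive hypotheses. \rulename{UE:Pair}, \rulename{UE:OnlyLeftPair}, and \rulename{UE:OnlyRightPair} reassemble the component typings using the matching typing rules \rulename{U:Pair}, \rulename{U:OnlyLeftPair}, \rulename{U:OnlyRightPair}; since the affine context here is empty, the splitting premise of \rulename{U:Pair} is discharged by \rulename{OM:Empty}. \rulename{UE:Fst} and \rulename{UE:Snd} apply \rulename{U:Fst}/\rulename{U:Snd} to the inductively obtained product typings of $\vs$ and $\vs'$, and \rulename{UE:Subtype} applies \rulename{U:Subtype} to both inductive hypotheses.

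The delicate family of cases is \rulename{UE:FstPair} and \rulename{UE:SndPair}. For \rulename{UE:FstPair}, the premise is $\vseq{\utctx}{\vs}{\kwpair{\vs_1}{\vs_2}}{\vsprod{\vsty_1}{\isav}{\vsty_2}{\avail}}$, so the inductive hypothesis gives $\vsistype{\ectx}{\utctx}{\vs}{\vsprod{\vsty_1}{\isav}{\vsty_2}{\avail}}$ (hence $\vsistype{\ectx}{\utctx}{\kwfst{\vs}}{\vsty_1}$ by \rulename{U:Fst}) together with $\vsistype{\ectx}{\utctx}{\kwpair{\vs_1}{\vs_2}}{\vsprod{\vsty_1}{\isav}{\vsty_2}{\avail}}$, and it remains to derive $\vsistype{\ectx}{\utctx}{\vs_1}{\vsty_1}$. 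For this I would first prove an auxiliary inversion lemma for VS typing of pairs: if $\vsistype{\ectx}{\utctx}{\kwpair{\vs_1}{\vs_2}}{\vsprod{\vsty_1}{\isav}{\vsty_2'}{\avail_2}}$ then $\vsistype{\ectx}{\utctx}{\vs_1}{\vsty_1}$ (and symmetrically, with the roles of the two components swapped, when the second availability is $\isav$). That lemma proceeds by induction on the typing derivation: \rulename{U:Pair} and \rulename{U:OnlyLeftPair} deliver the required component typing immediately (using Lemma~\ref{lem:usp-split-upstream} to observe that any split of the empty affine context has empty pieces), and \rulename{U:OnlyRightPair} is vacuous since it forces an unavailable first component.

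The main obstacle is the \rulename{U:Subtype} case of that auxiliary inversion lemma, where the pair is typed at some $\vsty''$ with $\vstysubt{\vsty''}{\vsprod{\vsty_1}{\isav}{\vsty_2'}{\avail_2}}$, and one must know that $\vsty''$ is itself a product with available first component whose component type is a subtype of $\vsty_1$ before the induction can be applied. This needs a companion canonical-forms analysis of VS subtyping (Figure~\ref{fig:vert-subtyping}): inspecting \rulename{UT:ProdLeft}, \rulename{UT:ProdRight}, \rulename{UT:Prod}, \rulename{UT:Corec1}, \rulename{UT:Corec2}, \rulename{UT:Reflexive} and \rulename{UT:Transitive}, a subtype of a product with available first component is either a corecursive type whose unrolling is again such a subtype (handled by unrolling and recursing, in the style of Lemma~\ref{lem:vsty-split-supertyping}) or a product with available first component related componentwise to the target. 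Threading this characterization through nested uses of \rulename{UT:Transitive} and the corecursive roll/unroll rules is the part that needs care; once it is in hand, \rulename{U:Subtype} closes the inversion, and \rulename{UE:SndPair} follows by symmetry.
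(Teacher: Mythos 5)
Your proposal is correct and follows essentially the same route as the paper: induction on the equivalence derivation, with the structural cases read off directly and the \rulename{UE:FstPair}/\rulename{UE:SndPair} cases resolved by inverting the typing of the syntactic pair and then inverting VS subtyping on products to relate the component types. The only difference is presentational: you factor the pair-typing inversion and the canonical-forms analysis of subtyping into explicit auxiliary lemmas, whereas the paper performs the same inversions inline (``by inversion on \rulename{U:Subtype} and \rulename{U:Pair}\ldots by inversion on the VS subtyping rules''), so your version is, if anything, more careful about the \rulename{UT:Transitive} and corecursive roll/unroll complications.
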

\begin{proof}
  By induction on the derivation of
  $\vseq{\utctx}{\vs}{\vs'}{\vsty}$.
  \iffull
  \begin{itemize}

\item \rulename{UE:Reflexive}.
	Then $\vs = \vs'$
		and $\vsistype{\ectx}{\utctx}{\vs}{\vsty}$.

\item \rulename{UE:Commutative}.
	Then $\vseq{\utctx}{\vs'}{\vs}{\vsty}$.
	Apply induction.

\item \rulename{UE:Transitive}.
	Then $\vseq{\utctx}{\vs}{\vs''}{\vsty}$
		and $\vseq{\utctx}{\vs''}{\vs'}{\vsty}$.
	Apply induction.

\item \rulename{UE:Pair}.
	Then~$\vs = \kwpair{\vs_1}{\vs_2}$
		and~$\vs' = \kwpair{\vs_1'}{\vs_2'}$
		and $\vsty = \vsprod{\vsty_1}{\isav}{\vsty_2}{\isav}$
		and~$\vseq{\utctx}{\vs_1}{\vs_1'}{\vsty_1}$
		and~$\vseq{\utctx}{\vs_2}{\vs_2'}{\vsty_2}$.
    By induction,
		$\vsistype{\ectx}{\utctx}{\vs_1}{\vsty_1}$
		and $\vsistype{\ectx}{\utctx}{\vs_1'}{\vsty_1}$
		and~$\vsistype{\ectx}{\utctx}{\vs_2}{\vsty_2}$
		and~$\vsistype{\ectx}{\utctx}{\vs_2'}{\vsty_2}$.
	Apply \rulename{OM:Empty} and \rulename{U:Pair} twice.

\item \rulename{UE:OnlyLeftPair}.
	Then~$\vs = \kwpair{\vs_1}{\vs_2}$
		and~$\vs' = \kwpair{\vs_1'}{\vs_2'}$
		and $\vsty = \vsprod{\vsty_1}{\isav}{\vsty_2}{\isunav}$
		and~$\vseq{\utctx}{\vs_1}{\vs_1'}{\vsty_1}$
		and~$\vseq{\utctx'}{\vs_2}{\vs_2'}{\vsty_2}$.
    By induction,
		$\vsistype{\ectx}{\utctx}{\vs_1}{\vsty_1}$
		and $\vsistype{\ectx}{\utctx}{\vs_1'}{\vsty_1}$
		and $\vsistype{\ectx}{\utctx'}{\vs_2}{\vsty_2}$
		and $\vsistype{\ectx}{\utctx'}{\vs_2'}{\vsty_2}$.
	Apply \rulename{U:OnlyLeftPair} twice.

\item \rulename{UE:Fst}.
	Then~$\vs = \kwfst{\vs''}$
		and~$\vs' = \kwfst{\vs'''}$
		and~$\vseq{\utctx}{\vs''}{\vs'''}{\vsprod{\vsty}{\isav}{\vsty_2}{\avail}}$.
    By induction,
		$\vsistype{\ectx}{\utctx}{\vs''}{\vsprod{\vsty}{\isav}{\vsty_2}{\avail}}$
		and $\vsistype{\ectx}{\utctx}{\vs'''}{\vsprod{\vsty}{\isav}{\vsty_2}{\avail}}$.
	Apply~\rulename{U:Fst} twice.

\item \rulename{UE:FstPair}.
	Then~$\vs = \kwfst{\vs''}$
		and~$\vseq{\utctx}{\vs''}{\kwpair{\vs'}{\vs_2}}{\vsprod{\vsty}{\isav}{\vsty_2}{\avail}}$.
    By induction,
		$\vsistype{\ectx}{\utctx}{\vs''}{\vsprod{\vsty}{\isav}{\vsty_2}{\avail}}$
		and $\vsistype{\ectx}{\utctx}{\kwpair{\vs'}{\vs_2}}{\vsprod{\vsty}{\isav}{\vsty_2}{\avail}}$.
	Apply~\rulename{U:Fst}.

	By inversion on the VS typing rules, there are two cases:
		\begin{enumerate}
		\item By inversion on \rulename{U:Subtype} and~\rulename{U:Pair},
				$\vstysubt{\vsprod{\vsty'}{\isav}{\vsty_2'}{\isav}}{\vsprod{\vsty}{\isav}{\vsty_2}{\avail}}$
				and~$\uspsplit{\ectx}{\uspctx_1}{\uspctx_2}$
				and~$\vsistype{\uspctx_1}{\utctx}{\vs'}{\vsty'}$.
			By inversion on \rulename{OM:Empty},
				$\uspctx_1 = \ectx$.
		    By inversion on the VS subtyping rules,
				$\vstysubt{\vsty'}{\vsty}$.
			Apply \rulename{U:Subtype}.
		\item By inversion on \rulename{U:Subtype} and~\rulename{U:OnlyLeftPair},
				$\vstysubt{\vsprod{\vsty'}{\isav}{\vsty_2'}{\isunav}}{\vsprod{\vsty}{\isav}{\vsty_2}{\avail}}$
				and~$\vsistype{\ectx}{\utctx}{\vs'}{\vsty'}$.
		    By inversion on the VS subtyping rules,
				$\vstysubt{\vsty'}{\vsty}$.
			Apply \rulename{U:Subtype}.
		\end{enumerate}

\item \rulename{UE:Subtype}.
	Then $\vseq{\utctx}{\vs}{\vs'}{\vsty'}$
		and $\vstysubt{\vsty'}{\vsty}$.
	By induction,
		$\vsistype{\ectx}{\utctx}{\vs}{\vsty'}$
  		and $\vsistype{\ectx}{\utctx}{\vs'}{\vsty'}$.
	Apply \rulename{U:Subtype} twice.

\item All other cases are by symmetry with another case of the proof.
  \end{itemize}
  \fi
\end{proof}

\begin{lemma}\label{lem:con-preservation}
  If~$\coneq{\gctx}{\utctx}{\vstctx}{\con}{\con'}{\kind}$
  then~$\iskind{\gctx}{}{\utctx}{\vstctx}{\con}{\kind}$
  and~$\iskind{\gctx}{}{\utctx}{\vstctx}{\con'}{\kind}$.
\end{lemma}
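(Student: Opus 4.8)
The plan is to prove Lemma~\ref{lem:con-preservation} by induction on the derivation of $\coneq{\gctx}{\utctx}{\vstctx}{\con}{\con'}{\kind}$, mirroring the structure of the proof of Lemma~\ref{lem:vs-preservation}. The base case \rulename{CE:Reflexive} is immediate: there $\con = \con'$ and the rule's premise is exactly $\iskind{\gctx}{}{\utctx}{\vstctx}{\con}{\kind}$. The cases \rulename{CE:Commutative} and \rulename{CE:Transitive} follow directly from the induction hypothesis. For each congruence rule --- \rulename{CE:Prod}, \rulename{CE:Sum}, \rulename{CE:Fut}, \rulename{CE:Lambda}, \rulename{CE:App}, and \rulename{CE:Rec} --- the induction hypothesis supplies well-kindedness of the immediate subconstructors, and re-applying the matching kinding rule (\rulename{K:Prod}, \rulename{K:Sum}, \rulename{K:Fut}, \rulename{K:Lambda}, \rulename{K:App}, \rulename{K:Rec}, respectively) yields the goal for both $\con$ and $\con'$. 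In the three cases whose rule also carries a VS-equivalence premise $\vseq{\utctx}{\vs}{\vs'}{\vsty}$ (namely \rulename{CE:Fut}, \rulename{CE:App}, and \rulename{CE:Rec}), I invoke Lemma~\ref{lem:vs-preservation} to turn that premise into the two VS-typing facts $\vsistype{\ectx}{\utctx}{\vs}{\vsty}$ and $\vsistype{\ectx}{\utctx}{\vs'}{\vsty}$ required by the corresponding kinding rule.

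The one case that is not a mechanical re-application of a kinding rule is \rulename{CE:BetaEq}, where $\con = \kwvapp{(\kwxi{\hastycl{\vvar}{\vsty}}{\con_1})}{\vs}$, $\con' = \vsub{\con_1}{\vs}{\vvar}$, $\kind = \kwtykind$, and the premises are $\iskind{\gctx}{}{\utctx, \hastype{\vvar}{\vsty}}{\vstctx}{\con_1}{\kwtykind}$ together with $\vsistype{\ectx}{\utctx}{\vs}{\vsty}$. For $\con$, I apply \rulename{K:Lambda} to the first premise to obtain $\iskind{\gctx}{}{\utctx}{\vstctx}{\kwxi{\hastycl{\vvar}{\vsty}}{\con_1}}{\kwkindarr{\vsty}{\kwtykind}}$, and then \rulename{K:App} together with the VS-typing premise to conclude $\iskind{\gctx}{}{\utctx}{\vstctx}{\con}{\kwtykind}$. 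For $\con'$, I appeal to the VS-substitution result for type constructors (part~\ref{lem:subst-touchvert-con} of Lemma~\ref{lem:subst}), which from the two premises directly yields $\iskind{\gctx}{}{\utctx}{\vstctx}{\vsub{\con_1}{\vs}{\vvar}}{\kwtykind}$.

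The main obstacle --- really just a matter of what has to be established first --- is the \rulename{CE:BetaEq} case, which rests on the type-constructor substitution lemma (part~\ref{lem:subst-touchvert-con} of Lemma~\ref{lem:subst}), and the \rulename{CE:Fut}/\rulename{CE:App}/\rulename{CE:Rec} cases, which rest on Lemma~\ref{lem:vs-preservation}. Once those are in hand the induction is entirely routine: the kinding judgment tracks only $\gctx$, $\utctx$, and $\vstctx$, so no context splitting or weakening is needed beyond what the cited lemmas already provide.
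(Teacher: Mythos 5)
Your proposal is correct and matches the paper's proof essentially step for step: induction on the equivalence derivation, Lemma~\ref{lem:vs-preservation} to discharge the VS-equivalence premises in the \rulename{CE:Fut}, \rulename{CE:App}, and \rulename{CE:Rec} cases, and, for \rulename{CE:BetaEq}, \rulename{K:Lambda} followed by \rulename{K:App} on one side and the VS-substitution part of Lemma~\ref{lem:subst} on the other. Nothing further is needed.
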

\begin{proof}
  By induction on the derivation of
  $\coneq{\gctx}{\utctx}{\vstctx}{\con}{\con'}{\kind}$.
  \iffull
  \begin{itemize}

\item \rulename{CE:Reflexive}.
	Then $\con = \con'$
		and $\iskind{\gctx}{}{\utctx}{\vstctx}{\con}{\kind}$.

\item \rulename{CE:Commutative}.
	By induction.

\item \rulename{CE:Transitive}.
	By induction.

\item \rulename{CE:Prod}.
	Then $\con = \kwprod{\con_1}{\con_2}$
		and $\con' = \kwprod{\con_1'}{\con_2'}$
		and $\kind = \kwtykind$
		and $\coneq{\gctx}{\utctx}{\vstctx}{\con_1}{\con_1'}{\kwtykind}$
		and $\coneq{\gctx}{\utctx}{\vstctx}{\con_2}{\con_2'}{\kwtykind}$.
	By induction,
		$\iskind{\gctx}{}{\utctx}{\vstctx}{\con_1}{\kwtykind}$
		and $\iskind{\gctx}{}{\utctx}{\vstctx}{\con_1'}{\kwtykind}$
		and $\iskind{\gctx}{}{\utctx}{\vstctx}{\con_2}{\kwtykind}$
		and $\iskind{\gctx}{}{\utctx}{\vstctx}{\con_2'}{\kwtykind}$.
	Apply \rulename{K:Prod} twice.

\item \rulename{CE:Sum}.
	Similar to the previous case.

\item \rulename{CE:Fut}.
	Then~$\con = \kwfutt{\con''}{\vs}$
		and~$\con' = \kwfutt{\con'''}{\vs'}$
		and $\kind = \kwtykind$
		and~$\coneq{\gctx}{\utctx}{\vstctx}{\con''}{\con'''}{\kwtykind}$
		and~$\vseq{\utctx}{\vs}{\vs'}{\kwvty}$.
	By induction,
		$\iskind{\gctx}{}{\utctx}{\vstctx}{\con''}{\kwtykind}$
		and $\iskind{\gctx}{}{\utctx}{\vstctx}{\con'''}{\kwtykind}$.
	By Lemma~\ref{lem:vs-preservation},
		$\vsistype{\ectx}{\utctx}{\vs}{\kwvty}$
		and $\vsistype{\ectx}{\utctx}{\vs'}{\kwvty}$.
	Apply~\rulename{K:Fut} twice.

\item \rulename{CE:Lambda}.
	Then~$\con = \kwxi{\hastycl{\vvar}{\vsty}}{\con''}$
		and~$\con' = \kwxi{\hastycl{\vvar}{\vsty}}{\con'''}$
		and $\kind = \kwkindarr{\vsty}{\kwtykind}$
		and~$\coneq{\gctx}{\utctx, \hastype{\vvar}{\vsty}}{\vstctx}{\con''}{\con'''}{\kwtykind}$.
	By induction,
		$\iskind{\gctx}{}{\utctx, \hastype{\vvar}{\vsty}}{\vstctx}{\con''}{\kwtykind}$
		and $\iskind{\gctx}{}{\utctx, \hastype{\vvar}{\vsty}}{\vstctx}{\con'''}{\kwtykind}$.
	Apply \rulename{K:Lambda} twice.

\item \rulename{CE:App}.
	Then~$\con = \kwvapp{\con''}{\vs}$
		and~$\con' = \kwvapp{\con'''}{\vs'}$
		and $\kind = \kwtykind$
		and~$\coneq{\gctx}{\utctx}{\vstctx}{\con''}{\con'''}{\kwkindarr{\vsty}{\kwtykind}}$.
		and $\vseq{\utctx}{\vs}{\vs'}{\vsty}$.
	By induction,
		$\iskind{\gctx}{}{\utctx}{\vstctx}{\con''}{\kwkindarr{\vsty}{\kwtykind}}$
		and $\iskind{\gctx}{}{\utctx}{\vstctx}{\con'''}{\kwkindarr{\vsty}{\kwtykind}}$.
	By Lemma~\ref{lem:vs-preservation},
		$\vsistype{\ectx}{\utctx}{\vs}{\vsty}$
		and $\vsistype{\ectx}{\utctx}{\vs'}{\vsty}$.
	Apply \rulename{K:App} twice.

\item \rulename{CE:BetaEq}.
	Then~$\con = \kwvapp{(\kwxi{\hastycl{\vvar}{\vsty}}{\con''})}{\vs}$
		and~$\con' = \vsub{\con''}{\vs}{\vvar}$
		and $\kind = \kwtykind$
		and~$\iskind{\gctx}{}{\utctx, \hastype{\vvar}{\vsty}}{\vstctx}{\con''}{\kwtykind}$
		and $\vsistype{\ectx}{\utctx}{\vs}{\vsty}$.
	By \rulename{K:Lambda},
		$\iskind{\gctx}{}{\utctx}{\vstctx}{\kwxi{\hastycl{\vvar}{\vsty}}{\con''}}{\kwkindarr{\vsty}{\kwtykind}}$.
	Apply \rulename{K:App}.
	Apply Lemma \ref{lem:subst}.

\item \rulename{CE:Rec}.
	Then~$\con = \kwprec{\convar}{\hastycl{\vvar}{\vsty}}{\con''}{\vs}$
		and~$\con' = \kwprec{\convar}{\hastycl{\vvar}{\vsty}}{\con'''}{\vs'}$
		and~$\coneq{\gctx}{\utctx, \hastype{\vvar}{\vsty}}
             {\vstctx, \haskind{\convar}{\kwkindarr{\vsty}{\kwtykind}}}{\con''}{\con'''}{\kwtykind}$
		and $\kind = \kwtykind$
		and $\vseq{\utctx}{\vs}{\vs'}{\vsty}$.
	By induction,
		$\iskind{\gctx}{}{\utctx, \hastype{\vvar}{\vsty}}
             {\vstctx, \haskind{\convar}{\kwkindarr{\vsty}{\kwtykind}}}{\con''}{\kwtykind}$
		and $\iskind{\gctx}{}{\utctx, \hastype{\vvar}{\vsty}}
             {\vstctx, \haskind{\convar}{\kwkindarr{\vsty}{\kwtykind}}}{\con'''}{\kwtykind}$.
	By Lemma~\ref{lem:vs-preservation},
		$\vsistype{\ectx}{\utctx}{\vs}{\vsty}$
		and $\vsistype{\ectx}{\utctx}{\vs'}{\vsty}$.
	Apply~\rulename{K:Rec} twice.
  \end{itemize}
  \fi
\end{proof}

Lemma \ref{lem:graph-formed} establishes that the type system for \langname{} only gives well-formed types and graph types to expressions. However, this is only true for expressions typed under well-formed contexts. To formalize what it means for a context to be well-formed, we extend the definition of type formation to $\ctx$ contexts in the natural way: $\ctx$ is well-formed if all of the types in it are. The contexts $\gctx$, $\uspctx$, and $\utctx$ are always well-formed. The judgment~$\isctx{\gctx}{\uspctx}{\utctx}{\ctx}$ denotes that $\ctx$ is well-formed under $\gctx$ and $\utctx$. Using the rules for this judgement, we can now prove that if an expression has type~$\tau$ and
graph type~$\graph$ under well-formed contexts, then~$\tau$ has kind $\kwtykind$ and $\graph$ has graph kind $\kgraph$.

\iffull

\begin{mathpar}
  \Rule{CN:Empty}
    {\strut}
              {\isctx{\gctx}{}{\utctx}{\ectx}}
              \and
  \Rule{CN:Elem}
    {\isctx{\gctx}{}{\utctx}{\ctx}\\
    \iskind{\gctx}{}{\utctx}{\ectx}{\tau}{\kwtykind}}
              {\isctx{\gctx}{}{\utctx}{\ctx,\hastype{x}{\tau}}}
\end{mathpar}
\fi

%We can now show that the various static semantics are consistent:
%assuming a well-formed context,

\begin{lemma}\label{lem:graph-formed}
  If~$\isctx{\gctx}{\uspctx}{\utctx}{\ctx}$
  and~$\tywithdag{\gctx}{\uspctx}{\utctx}{\ctx}{e}{\tau}{\graph}$
  then~$\iskind{\gctx}{}{\utctx}{\ectx}{\tau}{\kwtykind}$
  and~$\dagwf{\gctx}{\uspctx}{\utctx}{\graph}{\kgraph}$.
\end{lemma}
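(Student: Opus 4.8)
The statement is a standard regularity (``validity'') property, so the plan is to prove it by induction on the derivation of $\tywithdag{\gctx}{\uspctx}{\utctx}{\ctx}{e}{\tau}{\graph}$, case-analyzing the last typing rule and establishing the two conjuncts---well-kindedness of $\tau$ and well-formedness of $\graph$---simultaneously. The ingredients used throughout are: the substitution lemma (Lemma~\ref{lem:subst}, in particular parts~\ref{lem:subst-touchvert-con} and~\ref{lem:subst-con-con}), the equivalence-preservation lemma for type constructors (Lemma~\ref{lem:con-preservation}), inversion of the kinding rules of Figure~\ref{fig:kinds}, the graph-formation rules of Figure~\ref{fig:dag-wf} (\rulename{DW:Empty}, \rulename{DW:Seq}, \rulename{DW:Or}, \rulename{DW:Spawn}, \rulename{DW:Touch}, \rulename{DW:New}, \rulename{DW:App}, \rulename{DW:RecPi}), and routine weakening properties of the kinding judgment (enlarging $\utctx$ or $\gctx$ preserves kinding).

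First I would dispatch the base and purely structural cases. \rulename{S:Var} reads $\tau$ off $\ctx$, so its kinding comes from inverting the context-formation hypothesis $\isctx{\gctx}{\uspctx}{\utctx}{\ctx}$, and $\emptygraph$ is handled by \rulename{DW:Empty}; \rulename{S:Unit} and \rulename{S:Handle} (Figure~\ref{fig:handle-statics}) are analogous, using \rulename{K:Unit} and \rulename{K:Fut} respectively. For \rulename{S:Pair}, \rulename{S:Fst}, \rulename{S:Snd}, \rulename{S:InL}, \rulename{S:InR}, \rulename{S:Case}, \rulename{S:Future} and \rulename{S:Touch} I would apply the IH to the subexpressions, invert \rulename{K:Prod}/\rulename{K:Sum}/\rulename{K:Fut} wherever a component type has to be extracted, supply the missing injection component from the explicit kinding premise already present in the rule, and reassemble the graph type with \rulename{DW:Seq}, \rulename{DW:Or}, \rulename{DW:Spawn} or \rulename{DW:Touch}---the affine-context splits these graph rules require are exactly the splits already named in the corresponding typing rule (with at most one appeal to Lemma~\ref{lem:split-move} to reassociate a chain of two splits). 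Whenever the IH is applied to a subexpression typed under an enlarged term context (e.g.\ $\ctx,\hastype{x}{\tau_1}$ in \rulename{S:Case}) or enlarged $\utctx$ (e.g.\ in \rulename{S:New}), I first check the enlarged context is well-formed: the new term-variable binding is well-kinded either directly from a premise or by inverting the kinding we just extracted, and enlarging $\utctx$ preserves kinding by weakening. \rulename{S:New} then passes its type's kinding through from its premise and reconstructs $\dagnew{\hastycl{\vvar}{\vsty}}{\graph}$ by \rulename{DW:New} on the IH; \rulename{S:Type-Eq} turns its equivalence premise into the required kinding via Lemma~\ref{lem:con-preservation} and leaves $\graph$ unchanged.

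The \emph{delicate} cases are \rulename{S:App}, \rulename{S:Fun}, and \rulename{S:Roll}/\rulename{S:Unroll}, because they introduce types by substituting vertex structures. In \rulename{S:App}, the IH on $e_1$ gives the function type $\kwpi{\hastycl{\vvar_f}{\vsty_f}}{\hastycl{\vvar_t}{\vsty_t}}{\kwarrow{\tau_1}{\tau_2}{\kwtapp{\graph_3}{\vvar_f}{\vvar_t}}}$ well-kinded; inverting \rulename{K:Fun} yields $\tau_2$ well-kinded under $\utctx,\hastype{\vvar_f}{\vsty_f},\hastype{\vvar_t}{\vsty_t}$ and $\graph_3$ well-formed with graph kind $\dagpi{\hastycl{\vvar_f}{\vsty_f}}{\hastycl{\vvar_t}{\vsty_t}}{\kgraph}$ under the empty affine context, after which two applications of Lemma~\ref{lem:subst}(\ref{lem:subst-touchvert-con}), using the premises $\vsistype{\ectx}{\utctx}{\vs_f}{\vsty_f}$ and $\vsistype{\ectx}{\utctx}{\vs_t}{\vsty_t}$, give $\tsub{\tsub{\tau_2}{\vs_f}{\vvar_f}}{\vs_t}{\vvar_t}$ well-kinded; \rulename{DW:App} instantiates $\graph_3$ and \rulename{DW:Seq} (with the splits $\uspsplit{\uspctx}{\uspctx_1}{\uspctx'}$, $\uspsplit{\uspctx'}{\uspctx_2}{\uspctx_3}$ from the rule) glues it to $\graph_1,\graph_2$. \rulename{S:Fun} is handled by first observing its typing premise only uses the graph \emph{variable} $\gvar$ (so $\ctx'$ is well-formed via \rulename{K:Fun} with \rulename{DW:Var} for the $\gvar$ component plus the explicit $\tau_1$ premise, no circularity), applying the IH to the body to get $\tau_2$ well-kinded and $\graph$ well-formed, then assembling the function type by \rulename{K:Fun} whose graph-type component is discharged by \rulename{DW:RecPi} on that IH output; the function expression's own graph type is $\emptygraph$. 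For \rulename{S:Unroll}, the IH gives $\kwprec{\convar}{\hastycl{\vvar}{\vsty}}{\tau}{\vs}$ well-kinded; inverting \rulename{K:Rec}, substituting the type-level function $\kwxi{\hastycl{\vvar'}{\vsty}}{\kwprec{\convar}{\hastycl{\vvar}{\vsty}}{\tau}{\vvar'}}$ for $\convar$ (Lemma~\ref{lem:subst}(\ref{lem:subst-con-con}), after deriving that function's kinding from \rulename{K:Lambda} and \rulename{K:Rec}) and then $\vs$ for $\vvar$ (Lemma~\ref{lem:subst}(\ref{lem:subst-touchvert-con})) gives the unrolled type; \rulename{S:Roll} reads its type's kinding off its own premise. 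The main obstacle I anticipate is the affine-context bookkeeping concentrated in \rulename{S:App}: lining up the $\uspctx$-splittings with the shapes demanded by \rulename{DW:Seq} and \rulename{DW:App}, and tracking which vertex variables live only in $\utctx$ versus in both $\uspctx$ and $\utctx$ so that the correct clause of the substitution lemma is invoked.
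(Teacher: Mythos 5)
Your proposal is correct and follows essentially the same route as the paper's proof: induction on the typing derivation, using inversion of the kinding rules, the substitution lemma (in particular for vertex structures and type constructors) in the \rulename{S:App} and \rulename{S:Unroll} cases, Lemma~\ref{lem:con-preservation} for \rulename{S:Type-Eq}, and \rulename{DW:RecPi} to discharge the graph-type component of the function type in \rulename{S:Fun}. The only cosmetic difference is your anticipated appeal to Lemma~\ref{lem:split-move} for reassociating affine splits, which turns out to be unnecessary since the splits named in the typing rules already match the shapes required by \rulename{DW:Seq} and \rulename{DW:App}.
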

\begin{proof}
  By induction on the derivation of
  $\tywithdag{\gctx}{\uspctx}{\utctx}{\ctx}{e}{\tau}{\graph}$.
  \iffull
  We prove some representative cases.
  \begin{itemize}

 \item \rulename{S:Var}.
   Then~$e = x$
		and $\graph = \emptygraph$
		and $\ctx = \ctx', \hastype{x}{\tau}$.
   By inversion on~\rulename{CN:Elem},
   $\iskind{\gctx}{}{\utctx}{\ectx}{\tau}{\kwtykind}$.
   Apply \rulename{DW:Empty}.

  \item \rulename{S:Fun}.
    Then~$\tau = \kwpi{\hastycl{\vvar_f}{\vsty_f}}{\hastycl{\vvar_t}{\vsty_t}}{\kwarrow{\tau_1}{\tau_2}
		{\kwtapp{(\dagrec{\gvar}{\dagpi{\hastype{\vvar_f}{\vsty_f}}{\hastype{\vvar_t}{\vsty_t}}{\graph'}}{})}{\vvar_f}{\vvar_t}}}$
		and $\graph = \emptygraph$
		and $e = \kwfun{\vvar_f}{\vvar_t}{f}{x}{e'}$
    	~and $\tywithdag{\gctx,\hastype{\gvar}{\dagpi{\hastycl{\vvar_f}{\vsty_f}}{\hastycl{\vvar_t}{\vsty_t}}{\kgraph}}}
           {\hastype{\vvar_f}{\vsty_f}}{\utctx, \hastype{\vvar_f}{\vsty_f},\hastype{\vvar_t}{\vsty_t}}
           {\ctx,\hastype{f}{\kwpi{\hastycl{\vvar_f}{\vsty_f}}{\hastycl{\vvar_t}{\vsty_t}}
               {\kwarrow{\tau_1}{\tau_2}{\kwtapp{\gvar}{\vvar_f}{\vvar_t}}}}, \hastype{x}{\tau_1}}
           {e'}{\tau_2}{\graph'}$
    	~and $\iskind{\gctx}{}{\utctx, \hastype{\vvar_f}{\vsty_f}, \hastycl{\vvar_t}{\vsty_t}}{\ectx}{\tau_1}{\kwtykind}$
    	~and $\iskind{\gctx}{}{\utctx, \hastype{\vvar_f}{\vsty_f}, \hastycl{\vvar_t}{\vsty_t}}{\ectx}{\tau_2}{\kwtykind}$.
    By~\rulename{DW:Empty},
    $\dagwf{\gctx}{\uspctx}{\utctx}{\emptygraph}{\kgraph}$.
	By~\rulename{DW:Var},
	$\dagwf{\gctx,\hastype{\gvar}{\dagpi{\hastycl{\vvar_f}{\vsty_f}}
		{\hastycl{\vvar_t}{\vsty_t}}{\kgraph}}}
	{\ectx}{\utctx}{\gvar}{\dagpi{\hastycl{\vvar_f}{\vsty_f}}
		{\hastycl{\vvar_t}{\vsty_t}}{\kgraph}}$.
    By~\rulename{K:Fun},
    $\iskind{\gctx,\hastype{\gvar}{\dagpi{\hastycl{\vvar_f}{\vsty_f}}{\hastycl{\vvar_t}{\vsty_t}}{\kgraph}}}
           {}{\utctx}{\ectx}{\kwarrow{\tau_1}{\tau_2}{\kwtapp{\gvar}{\vvar_f}{\vvar_t}}}{\kwtykind}$.
    By weakening and two applications of~\rulename{CN:Elem},
    $\isctx{\gctx,\hastype{\gvar}{\dagpi{\hastycl{\vvar_f}{\vsty_f}}{\hastycl{\vvar_t}{\vsty_t}}{\kgraph}}}
           {}{\utctx, \hastype{\vvar_f}{\vsty_f}, \hastype{\vvar_t}{\vsty_t}}
           {\ctx,\hastype{f}{\kwpi{\hastycl{\vvar_f}{\vsty_f}}{\hastycl{\vvar_t}{\vsty_t}}
               {\kwarrow{\tau_1}{\tau_2}{\kwtapp{\gvar}{\vvar_f}{\vvar_t}}}}, \hastype{x}{\tau_1}}$.
    By induction,
    $\dagwf{\gctx,\hastype{\gvar}{\dagpi{\hastycl{\vvar_f}{\vsty_f}}{\hastycl{\vvar_t}{\vsty_t}}{\kgraph}}}
           {\hastype{\vvar_f}{\vsty_f}}{\utctx, \hastype{\vvar_f}{\vsty_f}, \hastype{\vvar_t}{\vsty_t}}{\graph'}{\kgraph}$.
    By~\rulename{DW:RecPi},
    $\dagwf{\gctx}{\uspctx}{\utctx}
        {\dagrec{\gvar}{\dagpi{\hastycl{\vvar_f}{\vsty_f}}{\hastycl{\vvar_t}{\vsty_t}}{\graph'}}{}}
           {\dagpi{\hastycl{\vvar_f}{\vsty_f}}{\hastycl{\vvar_t}{\vsty_t}}{\kgraph}}$.
    Apply \rulename{K:Fun}.

  \item \rulename{S:App}.
    Then~$e = \kwapp{}{\kwtapp{e_1}{\vs_f}{\vs_t}}{e_2}$
        and~$\tau = \tsub{\tsub{\tau_2}{\vs_f}{\vvar_f}}{\vs_t}{\vvar_t}$
		and~$\uspsplit{\uspctx}{\uspctx_1}{\uspctx'}$
		and~$\uspsplit{\uspctx'}{\uspctx_2}{\uspctx_3}$
        and~$\graph = \graph_1 \seqcomp \graph_2 \seqcomp
    		 \kwtapp{\graph_3}{\vs_f}{\vs_t}$
        and~$\tywithdag{\gctx}{\uspctx_1}{\utctx}{\ctx}{e_1}{\kwpi{\hastycl{\vvar_f}{\vsty_f}}{\hastycl{\vvar_t}{\vsty_t}}
            {\kwarrow{\tau_1}{\tau_2}{\kwtapp{\graph_3}{\vvar_f}{\vvar_t}}}}{\graph_1}$
        and~$\tywithdag{\gctx}{\uspctx_2}{\utctx}{\ctx}{e_2}{\tsub{\tsub{\tau_1}{\vs_f}{\vvar_f}}{\vs_t}{\vvar_t}}{\graph_2}$
        and~$\vsistype{\uspctx_3}{\ectx}{\vs_f}{\vsty_f}$
        and~$\vsistype{\ectx}{\utctx}{\vs_f}{\vsty_f}$
        and~$\vsistype{\ectx}{\utctx}{\vs_t}{\vsty_t}$.
   By induction,
        $\iskind{\gctx}{}{\utctx}{\ectx}{\kwpi{\hastycl{\vvar_f}{\vsty_f}}{\hastycl{\vvar_t}{\vsty_t}}
            {\kwarrow{\tau_1}{\tau_2}{\kwtapp{\graph_3}{\vvar_f}{\vvar_t}}}}{\kwtykind}$
        and~$\dagwf{\gctx}{\uspctx_1}{\utctx}{\graph_1}{\kgraph}$
        and~$\dagwf{\gctx}{\uspctx_2}{\utctx}{\graph_2}{\kgraph}$.
   By inversion on~\rulename{K:Fun},
        $\iskind{\gctx}{}{\utctx, \hastype{\vvar_f}{\vsty_f}, \hastype{\vvar_t}{\vsty_t}}{\ectx}{\tau_2}{\kwtykind}$
        and~$\dagwf{\gctx}{\ectx}{\utctx}{\graph_3}{\dagpi{\hastycl{\vvar_f}{\vsty_f}}{\hastycl{\vvar_t}{\vsty_t}}{\kgraph}}$.
   By two applications of Lemma~\ref{lem:subst},
        $\iskind{\gctx}{}{\utctx}{\ectx}{\tau}{\kwtykind}$.
%%  By inversion on~\rulename{DW:RecPi},
%%	$\graph_3 = \dagpi{\hastycl{\vvar_f}{\vsty_f}}{\hastycl{\vvar_t}{\vsty_t}}{\graph_4}$
%%	and~$\dagwf{\gctx,\hastype{\gvar}{\dagpi{\hastycl{\vvar_f}{\vsty_f}}{\hastycl{\vvar_t}{\vsty_t}}{\kgraph}}}
%%           {\hastype{\vvar_f}{\vsty_f}}{\utctx, \hastype{\vvar_f}{\vsty_f}, \hastype{\vvar_t}{\vsty_t}}{\graph_4}{\kgraph}$.
%%  By~\rulename{DW:Pi},
%%	$\dagwf{\gctx,\hastype{\gvar}
%%	{\dagpi{\hastycl{\vvar_f}{\vsty_f}}{\hastycl{\vvar_t}{\vsty_t}}{\kgraph}}}
%%           {\ectx}{\utctx}{\graph_3}
%%			{\dagpi{\hastycl{\vvar_f}{\vsty_f}}{\hastycl{\vvar_t}{\vsty_t}}{\kgraph}}$.
%%   By Lemma~\ref{lem:subst},
%%        $\dagwf{\gctx}{\ectx}{\utctx}
%%		{\gsub{\graph_3}{\dagrec{\gvar}{\graph_3}{}}{\gvar}}
%%			{\dagpi{\hastycl{\vvar_f}{\vsty_f}}{\hastycl{\vvar_t}{\vsty_t}}{\kgraph}}$.
   By~\rulename{DW:App},
        $\dagwf{\gctx}{\uspctx_3}{\utctx}
		{\kwtapp{\graph_3}{\vs_f}{\vs_t}}{\kgraph}$.
   Apply~\rulename{DW:Seq} twice.

  \item \rulename{S:Roll}.
	Then~$e = \kwroll{e'}$
        and~$\tau = \kwprec{\convar}{\hastycl{\vvar}{\vsty}}{\tau'}{\vs}$
        and~$\tywithdag{\gctx}{\uspctx}{\utctx}{\ctx}{e}
           {\sub{\sub{\tau'}{\vs}{\vvar}}
             {\kwxi{\hastycl{\vvar'}{\vsty}}{\kwprec{\convar}{\hastycl{\vvar}{\vsty}}{\tau'}{\vvar'}}}
				{\convar}}{\graph}$
		and~$\iskind{\gctx}{}{\utctx}{\ectx}{\tau}{\kwtykind}$.
	By induction.

  \item \rulename{S:Unroll}.
    Then~$e = \kwunroll{e'}$
        and~$\tau = \sub{\sub{\tau'}{\vs}{\vvar}}{\kwxi{\hastycl{\vvar'}{\vsty}}{\kwprec{\convar}{\hastycl{\vvar}{\vsty}}{\tau'}{\vvar'}}}{\convar}$
        and~$\tywithdag{\gctx}{\uspctx}{\utctx}{\ctx}{e'}{\kwprec{\convar}{\hastycl{\vvar}{\vsty}}{\tau'}{\vs}}{\graph}$.
    By induction,
        $\iskind{\gctx}{}{\utctx}{\ectx}{\kwprec{\convar}{\hastycl{\vvar}{\vsty}}{\tau'}{\vs}}{\kwtykind}$
        ~and $\dagwf{\gctx}{\uspctx}{\utctx}{\graph}{\kgraph}$.
    By inversion on~\rulename{K:Rec},
        $\iskind{\gctx}{\uspctx}{\utctx, \hastype{\vvar}{\vsty}}
             {\haskind{\convar}{\kwkindarr{\vsty}{\kwtykind}}}{\tau'}{\kwtykind}$
           and~$\vsistype{\ectx}{\utctx}{\vs}{\vsty}$.
	By \rulename{U:PsiVar},
		$\vsistype{\ectx}{\utctx, \hastype{\vvar'}{\vsty}}{\vvar'}{\vsty}$.
	By weakening and \rulename{K:Rec},
		$\iskind{\gctx}{}{\utctx, \hastype{\vvar'}{\vsty}}{\ectx}{\kwprec{\convar}{\hastycl{\vvar}{\vsty}}{\tau'}{\vvar'}}{\kwtykind}$.
	By \rulename{K:Lambda},
		$\iskind{\gctx}{}{\utctx}{\ectx}{\kwxi{\hastycl{\vvar'}{\vsty}}{\kwprec{\convar}{\hastycl{\vvar}{\vsty}}{\tau'}{\vvar'}}}{\kwkindarr{\vsty}{\kwtykind}}$.
   By two applications of Lemma~\ref{lem:subst}.

  \item \rulename{S:Future}.
    Then $e = \kwfuture{\vs}{e'}$
    ~and $\tau = \kwfutt{\tau'}{\vs}$
    ~and $\graph = \leftcomp{\graph'}{\vs}$
    ~and $\uspsplit{\uspctx}{\uspctx_1}{\uspctx_2}$
    ~and $\tywithdag{\gctx}{\uspctx_1}{\utctx}{\ctx}{e'}{\tau'}{\graph'}$
    ~and $\vsistype{\uspctx_2}{\ectx}{\vs}{\kwvty}$
    ~and $\vsistype{\ectx}{\utctx}{\vs}{\kwvty}$.
    By induction,
        $\iskind{\gctx}{}{\utctx}{\ectx}{\tau'}{\kwtykind}$
        ~and $\dagwf{\gctx}{\uspctx_1}{\utctx}{\graph'}{\kgraph}$.
    Apply~\rulename{K:Fut}.
    Apply~\rulename{DW:Spawn}.

  \item \rulename{S:Touch}.
    Then $e = \kwforce{e'}$
    ~and $\graph = \graph' \seqcomp \touchcomp{\vs}$
    ~and $\tywithdag{\gctx}{\uspctx}{\utctx}{\ctx}{e'}{\kwfutt{\tau}{\vs}}{\graph'}$.
    By induction,
        $\iskind{\gctx}{}{\utctx}{\ectx}{\kwfutt{\tau}{\vs}}{\kwtykind}$
        ~and $\dagwf{\gctx}{\uspctx}{\utctx}{\graph'}{\kgraph}$.
    By inversion on~\rulename{K:Fut},
        $\iskind{\gctx}{}{\utctx}{\ectx}{\tau}{\kwtykind}$
        ~and $\vsistype{\ectx}{\utctx}{\vs}{\kwvty}$.
    By~\rulename{DW:Touch},
        $\dagwf{\gctx}{\ectx}{\utctx}{\touchcomp{\vs}}{\kgraph}$.
    Apply~\rulename{DW:Seq}.

  \item \rulename{S:New}.
    Then $e = \kwnewf{\vvar}{\vsty}{e'}$
    ~and $\graph = \dagnew{\hastycl{\vvar}{\vsty}}{\graph'}$
    ~and $\tywithdag{\gctx}{\uspctx, \hastype{\vvar}{\vsty}}{\utctx, \hastype{\vvar}{\vsty}}{\ctx}{e'}{\tau}{\graph'}$
    ~and $\iskind{\gctx}{}{\utctx}{\ectx}{\tau}{\kwtykind}$.
    By induction,
        $\dagwf{\gctx}{\uspctx, \hastype{\vvar}{\vsty}}{\utctx, \hastype{\vvar}{\vsty}}{\graph'}{\kgraph}$.
    Apply~\rulename{DW:New}.

  \item \rulename{S:Type-Eq}.
    Then $\coneq{\gctx}{\utctx}{\ectx}{\tau_1}{\tau}{\kwtykind}$
    	and $\tywithdag{\gctx}{\uspctx}{\utctx}{\ctx}{e}{\tau_1}{\graph}$.
    By Lemma~\ref{lem:con-preservation},
		$\iskind{\gctx}{}{\utctx}{\ectx}{\tau}{\kwtykind}$.
    By induction.
  \end{itemize}
  \fi
\end{proof}

\section{From Section~\ref{sec:soundness}}\label{app:soundness-proofs}

\begin{figure}
  \centering
  \def \MathparLineskip {\lineskip=0.43cm}
  \begin{mathpar}
    \Rule{UV:SndPair}
         {\vseval{\vs}{\kwpair{\vs_1}{\vs_2}}}
         {\vseval{\kwsnd{\vs}}{\vs_2}}
    \and
    \Rule{UV:SndNotPair}
         {\vseval{\vs}{\vs'}\\
		\vs' \neq \kwpair{\vs_1}{\vs_2}}
         {\vseval{\kwsnd{\vs}}{\kwsnd{\vs'}}}
  \end{mathpar}
  \caption{Rules for evaluating $\kwsnd{\vs}$.}
  \label{fig:eval-snd-vs}
\end{figure}

Figure \ref{fig:eval-snd-vs} gives the rules for evaluating $\kw{snd}$ projections of VS
which are symmetric to those for $\kw{fst}$ projections in Figure \ref{fig:vs-eval-abbr}.
Together, Figures \ref{fig:vs-eval-abbr} and \ref{fig:eval-snd-vs} form the complete ruleset for VS evaluation. 

We now prove some properties involving the interactions between VS typing, VS evaluation, and VS equivalence.
Lemma \ref{lem:vs-progress} proves a combined progress and preservation property for VS evaluation.
Lemma \ref{lem:vs-eval-equiv} shows that evaluation implies equivalence: 
a well-formed VS $\vs$ is equivalent to any VS that $\vs$ evaluates to.
Lemma \ref{lem:vs-eval-transitive} shows that equivalent VSs evaluate to the same VSs: 
if $\vs$ evaluates to $\vs'$, then any VSs equivalent to $\vs$ also evaluate to $\vs'$.

\begin{lemma}\label{lem:vs-progress}
  If~$\vsistype{\uspctx}{\utctx}{\vs}{\vsty}$
  then there exists a $\vs'$ such that $\vseval{\vs}{\vs'}$
  and $\vsistype{\uspctx}{\utctx}{\vs'}{\vsty}$.
\end{lemma}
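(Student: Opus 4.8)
The plan is to prove Lemma~\ref{lem:vs-progress} by induction on the derivation of $\vsistype{\uspctx}{\utctx}{\vs}{\vsty}$, establishing simultaneously that $\vs$ evaluates (progress) and that the result has the same VS type (preservation). The base cases are the variable, generator, and vertex-path rules: for $\rulename{U:OmegaVar}$, $\rulename{U:PsiVar}$ (and the generator analogues $\rulename{U:OmegaGen}$, $\rulename{U:PsiGen}$ from Figure~\ref{fig:generator-additions}), the VS is a variable $\vvar$ or a path $\vertgen{\vsty}{\genseed}$, which normalizes to itself via $\rulename{UV:Var}$ or $\rulename{UV:Path}$, and preservation is immediate since the type is unchanged. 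The pair case $\rulename{U:Pair}$ (and the one-sided pair rules $\rulename{U:OnlyLeftPair}$, $\rulename{U:OnlyRightPair}$) follows by applying the induction hypothesis to each component $\vs_1$, $\vs_2$, obtaining normal forms $\vs_1'$, $\vs_2'$ with the appropriate types, then using $\rulename{UV:Pair}$ to conclude $\vseval{\kwpair{\vs_1}{\vs_2}}{\kwpair{\vs_1'}{\vs_2'}}$ and re-applying $\rulename{U:Pair}$ (resp.\ the one-sided rule) to re-derive the type.

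The interesting cases are the projections $\rulename{U:Fst}$ and $\rulename{U:Snd}$. Take $\rulename{U:Fst}$: we have $\vs = \kwfst{\vs''}$ with $\vsistype{\uspctx}{\utctx}{\vs''}{\vsprod{\vsty_1}{\isav}{\vsty_2}{\avail}}$ and goal type $\vsty_1$. By the induction hypothesis, $\vseval{\vs''}{\vs'''}$ with $\vsistype{\uspctx}{\utctx}{\vs'''}{\vsprod{\vsty_1}{\isav}{\vsty_2}{\avail}}$. Now split on whether $\vs'''$ is syntactically a pair. If $\vs''' = \kwpair{\vs_a}{\vs_b}$, then $\rulename{UV:FstPair}$ gives $\vseval{\kwfst{\vs''}}{\vs_a}$, and I need $\vsistype{\uspctx}{\utctx}{\vs_a}{\vsty_1}$; this requires an inversion argument on the typing derivation of $\kwpair{\vs_a}{\vs_b}$ at a product type whose left component is available, peeling off any $\rulename{U:Subtype}$ uses (appealing to the VS subtyping rules, which only weaken availabilities and roll/unroll corecursive types) and then either $\rulename{U:Pair}$ or $\rulename{U:OnlyLeftPair}$, both of which directly furnish a typing of $\vs_a$ at (a subtype of) $\vsty_1$; an application of $\rulename{U:Subtype}$ closes it. If $\vs'''$ is not a pair, $\rulename{UV:FstNotPair}$ gives $\vseval{\kwfst{\vs''}}{\kwfst{\vs'''}}$ and $\rulename{U:Fst}$ re-derives the type $\vsty_1$. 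The $\rulename{U:Snd}$ case is symmetric. Finally, $\rulename{U:Subtype}$ itself: from $\vsistype{\uspctx}{\utctx}{\vs}{\vsty'}$ and $\vstysubt{\vsty'}{\vsty}$, the induction hypothesis yields $\vseval{\vs}{\vs'}$ with $\vsistype{\uspctx}{\utctx}{\vs'}{\vsty'}$, and $\rulename{U:Subtype}$ lifts the type to $\vsty$.

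I expect the main obstacle to be the inversion lemma hidden inside the $\rulename{U:Fst}$/$\rulename{U:Snd}$ pair subcase: one must show that if a syntactic pair $\kwpair{\vs_a}{\vs_b}$ has a product type with available left component (after absorbing $\rulename{U:Subtype}$ and the reflexive/transitive closure of VS subtyping), then $\vs_a$ can be typed at the left component type. Because subtyping on product VS types can only turn available components unavailable and can rewrite the types of unavailable components arbitrarily (rules $\rulename{UT:ProdLeft}$, $\rulename{UT:ProdRight}$, $\rulename{UT:Prod}$, plus corecursive roll/unroll), an available left component in the supertype forces an available left component in the subtype with a subtype relationship on the component types, so the typing of $\vs_a$ survives; this is essentially the same canonical-forms reasoning already used in Lemma~\ref{lem:vs-preservation} for rules $\rulename{UE:Fst}$/$\rulename{UE:FstPair}$, and I would either cite that argument or extract a small helper lemma. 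The progress direction is routine once the evaluation rules are seen to be exhaustive: every syntactic form of $\vs$ (variable, generator/path, pair, $\kwfst{\cdot}$, $\kwsnd{\cdot}$) has at least one applicable $\rulename{UV}$ rule, with the pair-versus-not-pair dichotomy covered by the paired $\rulename{UV:FstPair}$/$\rulename{UV:FstNotPair}$ (and symmetric) rules.
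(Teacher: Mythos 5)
Your proposal matches the paper's proof essentially step for step: the same induction on the typing derivation, the same pair-versus-not-pair case split for the projection rules, and the same inversion-through-subtyping argument in the $\rulename{UV:FstPair}$ subcase (including your correct identification of the canonical-forms reasoning on availability as the crux). The only detail you elide is that when the inversion lands on $\rulename{U:Pair}$, the component $\vs_a$ is typed under a split fragment $\uspctx_1$ of $\uspctx$, so you additionally need the splitting-weakening lemma (Lemma~\ref{lem:usp-split-weakening}) to recover a typing under the full $\uspctx$ before the closing application of $\rulename{U:Subtype}$.
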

\begin{proof}
  By induction on the derivation of
  $\vsistype{\uspctx}{\utctx}{\vs}{\vsty}$.
  \iffull
  \begin{itemize}

\item \rulename{U:OmegaVar}, \rulename{U:PsiVar}.
  Then~$\vs = \vvar$. By \rulename{UV:Var}, $\vseval{\vs}{\vs}$. True by premise.

\item \rulename{U:OmegaGen}, \rulename{U:PsiGen}.
  Then~$\vs = \vertgen{\vsty}{\genseed}$. By \rulename{UV:Path}, $\vseval{\vs}{\vs}$. True by premise.

\item \rulename{U:Pair}.
	Then~$\vs = \kwpair{\vs_1}{\vs_2}$
		and $\vsty = \vsprod{\vsty_1}{\isav}{\vsty_2}{\isav}$
		and~$\uspsplit{\uspctx}{\uspctx_1}{\uspctx_2}$
		and~$\vsistype{\uspctx_1}{\utctx}{\vs_1}{\vsty_1}$
		and~$\vsistype{\uspctx_2}{\utctx}{\vs_2}{\vsty_2}$.
	By induction,
		$\vseval{\vs_1}{\vs'_1}$
		and~$\vseval{\vs_2}{\vs'_2}$
		and $\vsistype{\uspctx_1}{\utctx}{\vs_1'}{\vsty_1}$
		and~$\vsistype{\uspctx_2}{\utctx}{\vs_2'}{\vsty_2}$.
	By \rulename{UV:Pair},
		$\vseval{\vs}{\kwpair{\vs_1'}{\vs_2'}}$.
	Apply \rulename{U:Pair}.

\item \rulename{U:OnlyLeftPair}.
	Then~$\vs = \kwpair{\vs_1}{\vs_2}$
		and $\vsty = \vsprod{\vsty_1}{\isav}{\vsty_2}{\isunav}$
		and~$\vsistype{\uspctx}{\utctx}{\vs_1}{\vsty_1}$
		and~$\vsistype{\ectx}{\utctx'}{\vs_2}{\vsty_2}$.
	By induction,
		$\vseval{\vs_1}{\vs'_1}$
		and~$\vseval{\vs_2}{\vs'_2}$
		and $\vsistype{\uspctx}{\utctx}{\vs_1'}{\vsty_1}$
		and~$\vsistype{\ectx}{\utctx'}{\vs_2'}{\vsty_2}$.
	By \rulename{UV:Pair},
		$\vseval{\vs}{\kwpair{\vs_1'}{\vs_2'}}$.
	Apply \rulename{U:OnlyLeftPair}.

\item \rulename{U:OnlyRightPair}.
	By symmetry with the previous case.

\item \rulename{U:Fst}.
	Then~$\vs = \kwfst{\vs_0}$
		and~$\vsistype{\uspctx}{\utctx}{\vs_0}{\vsprod{\vsty}{\isav}{\vsty_2}{\avail}}$.
	By induction,
		$\vseval{\vs_0}{\vs_0'}$ and~$\vsistype{\uspctx}{\utctx}{\vs_0'}{\vsprod{\vsty}{\isav}{\vsty_2}{\avail}}$.
	There are two cases:
	\begin{enumerate}
	\item $\vs_0' \neq \kwpair{\vs_1}{\vs_2}$.
          By~\rulename{UV:FstNotPair},
			$\vseval{\vs}{\kwfst{\vs_0'}}$.
		Apply \rulename{U:Fst}.
	\item $\vs_0' = \kwpair{\vs_1}{\vs_2}$.
          By~\rulename{UV:FstPair},
			$\vseval{\vs}{\vs_1}$.
		By inversion on the VS typing rules, there are two cases:
		\begin{enumerate}
		\item By inversion on \rulename{U:Subtype} and~\rulename{U:Pair},
				$\vstysubt{\vsprod{\vsty'}{\isav}{\vsty_2'}{\isav}}{\vsprod{\vsty}{\isav}{\vsty_2}{\avail}}$
				and~$\uspsplit{\uspctx}{\uspctx_1}{\uspctx_2}$
				and~$\vsistype{\uspctx_1}{\utctx}{\vs_1}{\vsty'}$.
			By Lemma \ref{lem:usp-split-weakening},
				$\vsistype{\uspctx}{\utctx}{\vs_1}{\vsty'}$.
		    By inversion on the VS subtyping rules,
				$\vstysubt{\vsty'}{\vsty}$.
			Apply \rulename{U:Subtype}.
		\item By inversion on \rulename{U:Subtype} and~\rulename{U:OnlyLeftPair},
				$\vstysubt{\vsprod{\vsty'}{\isav}{\vsty_2'}{\isunav}}{\vsprod{\vsty}{\isav}{\vsty_2}{\avail}}$
				and~$\vsistype{\uspctx}{\utctx}{\vs_1}{\vsty'}$.
		    By inversion on the VS subtyping rules,
				$\vstysubt{\vsty'}{\vsty}$.
			Apply \rulename{U:Subtype}.
		\end{enumerate}
	\end{enumerate}

\item \rulename{U:Snd}. 
	By symmetry with the previous case.

\item \rulename{U:Subtype}.
	Then $\vsistype{\uspctx}{\utctx}{\vs}{\vsty'}$
		and $\vstysubt{\vsty'}{\vsty}$.
	By induction,
		$\vseval{\vs}{\vs'}$
		and $\vsistype{\uspctx}{\utctx}{\vs'}{\vsty'}$.
	Apply \rulename{U:Subtype}.
        
%%\item \rulename{U:Inl}.
%%	Then~$\vs = \kwinl{\vs'}$
%%		and~$\vsistype{\uspctx}{\utctx}{\vs'}{\vsty'}$.
%%	By induction,
%%		$\vseval{\vs'}{\vs''}$
%%		and $\vs''\vsnormal$.
%%	Apply~\rulename{UE:Inl} and \rulename{VN:Inl}.
%%
%%\item \rulename{U:InR}. Similar to above.
%
%%\item \rulename{U:Roll}.
%%	Then~$\vs = \kwroll{\vs_0}$
%%		and~$\vsistype{\uspctx}{\utctx}{\vs_0}{\vsty'}$.
%%	By induction,
%%	$\vseval{\vs_0}{\vs_0'}$
%%        and~$\vs_0' \vsnormal$.
%%	Apply~\rulename{UE:Roll} and~\rulename{VN:Roll}.
%%        
%%\item \rulename{U:UnrollRec}.
%%	Then~$\vs = \kwunroll{\vs_0}$
%%		and~$\vsistype{\uspctx}{\utctx}{\vs_0}{\kwvsrec{\vstyvar}{\vsty'}}$.
%%	By induction,
%%		$\vseval{\vs_0}{\vs_0'}$ and~$\vs_0'\vsnormal$.
%%	By Lemma \ref{lem:vs-preservation},
%%		$\vsistype{\uspctx}{\utctx}{\vs_0'}{\kwvsrec{\vstyvar}{\vsty'}}$.
%%	By inversion, there are two cases:
%%	\begin{enumerate}
%%	\item $\vs_0'\vsnormalneg$.
%%          By~\rulename{UE:Unroll},~$\vseval{\vs}{\kwunroll{\vs_0'}}$.
%%          By~\rulename{VNN:Unroll} and~\rulename{VN:VNN},
%%          $\kwunroll{\vs_0'}\vsnormal$.
%%	\item $\vs_0' = \kwroll{\vs_0''}$ and~$\vs_0''\vsnormal$.
%%	  By~\rulename{UE:UnrollRoll},~$\vseval{\vs}{\vs_0''}$.
%%	\end{enumerate}
%%
%%\item \rulename{U:UnrollCoRec}.
%%	Similar to above.
%
%\item \rulename{U:OmegaGen}, \rulename{U:PsiGen}.
%  Then~$\vs\vsnormal$ by rules~\rulename{VNN:Gen}
%  and~\rulename{VN:VNN}.

\end{itemize}
\fi
\end{proof}

\begin{lemma}\label{lem:vs-eval-equiv}
	If $\vsistype{\ectx}{\utctx}{\vs}{\vsty}$
	and~$\vseval{\vs}{\vs'}$
	then~$\vseq{\utctx}{\vs}{\vs'}{\vsty}$.
\end{lemma}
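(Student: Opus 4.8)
The plan is to proceed by induction on the derivation of $\vsistype{\ectx}{\utctx}{\vs}{\vsty}$, doing a case analysis on the last rule used and, in the structural cases, inverting the evaluation derivation $\vseval{\vs}{\vs'}$, which is essentially syntax-directed up to the split between \rulename{UV:FstPair}/\rulename{UV:FstNotPair} (and their $\kw{snd}$ counterparts). Inducting on the typing derivation rather than on the evaluation has the advantage that \rulename{U:Subtype} is dispatched uniformly: if the last rule is \rulename{U:Subtype}, with $\vsistype{\ectx}{\utctx}{\vs}{\vsty'}$ and $\vstysubt{\vsty'}{\vsty}$, then the induction hypothesis gives $\vseq{\utctx}{\vs}{\vs'}{\vsty'}$, and one application of \rulename{UE:Subtype} closes the case.

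For the variable and generator cases (\rulename{U:OmegaVar}, \rulename{U:PsiVar}, and the generator variants of Figure~\ref{fig:generator-additions}), evaluation can only use \rulename{UV:Var} or \rulename{UV:Path}, so $\vs' = \vs$ and the result is immediate from \rulename{UE:Reflexive} applied to the typing premise. For \rulename{U:Pair}, I would first observe that, since the affine context is $\ectx$, inversion on the splitting premise $\uspsplit{\ectx}{\uspctx_1}{\uspctx_2}$ forces $\uspctx_1 = \uspctx_2 = \ectx$, so the subderivations for $\vs_1$ and $\vs_2$ again have an empty affine context; evaluation must then use \rulename{UV:Pair}, giving $\vseval{\vs_1}{\vs_1'}$ and $\vseval{\vs_2}{\vs_2'}$, and the two induction hypotheses together with \rulename{UE:Pair} yield the conclusion. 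The cases \rulename{U:OnlyLeftPair} and \rulename{U:OnlyRightPair} are handled analogously, using \rulename{UE:OnlyLeftPair}/\rulename{UE:OnlyRightPair} and noting that the component whose availability is $\isunav$ is typed under some arbitrary unrestricted context $\utctx'$, which is exactly what those equivalence rules permit.

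Finally, for \rulename{U:Fst} --- and symmetrically \rulename{U:Snd} --- the typing premise gives $\vsistype{\ectx}{\utctx}{\vs_0}{\vsprod{\vsty}{\isav}{\vsty_2}{\avail}}$ where $\vs = \kwfst{\vs_0}$, and inverting $\vseval{\kwfst{\vs_0}}{\vs'}$ splits into two subcases. If \rulename{UV:FstPair} applies, then $\vseval{\vs_0}{\kwpair{\vs_1}{\vs_2}}$ and $\vs' = \vs_1$; the induction hypothesis gives $\vseq{\utctx}{\vs_0}{\kwpair{\vs_1}{\vs_2}}{\vsprod{\vsty}{\isav}{\vsty_2}{\avail}}$, and \rulename{UE:FstPair} gives $\vseq{\utctx}{\kwfst{\vs_0}}{\vs_1}{\vsty}$. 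If \rulename{UV:FstNotPair} applies, then $\vseval{\vs_0}{\vs_0'}$ with $\vs' = \kwfst{\vs_0'}$; the induction hypothesis gives $\vseq{\utctx}{\vs_0}{\vs_0'}{\vsprod{\vsty}{\isav}{\vsty_2}{\avail}}$, and \rulename{UE:Fst} concludes (the side condition $\vs_0' \neq \kwpair{\vs_1}{\vs_2}$ in \rulename{UV:FstNotPair} plays no role). No real obstacle is expected; the only points requiring care are the context bookkeeping in the pair cases (confirming the affine context stays empty) and folding the subtyping case into the induction rather than treating it after the fact, so this is essentially a routine structural induction mirroring the proof of Lemma~\ref{lem:vs-progress}.
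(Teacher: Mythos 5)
Your proposal is correct and follows essentially the same route as the paper's proof: induction on the typing derivation, inverting the evaluation judgment in each case, dispatching \rulename{U:Subtype} via \rulename{UE:Subtype}, using \rulename{UE:Reflexive} for variables and generators, \rulename{UE:Pair}/\rulename{UE:OnlyLeftPair} for the pair cases after noting the affine context stays empty, and splitting \rulename{U:Fst} into the \rulename{UE:FstPair} and \rulename{UE:Fst} subcases. No gaps.
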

\begin{proof}
  By induction on the derivation of
  $\vsistype{\ectx}{\utctx}{\vs}{\vsty}$.
  \iffull
  \begin{itemize}

\item \rulename{U:OmegaVar}, \rulename{U:PsiVar}.
  Then~$\vs = \vvar$. By inversion on \rulename{UV:Var}, $\vs = \vs'$. Apply \rulename{UE:Reflexive}.

\item \rulename{U:OmegaGen}, \rulename{U:PsiGen}.
  Then~$\vs = \vertgen{\vsty}{\genseed}$. By inversion on \rulename{UV:Path}, $\vs = \vs'$. Apply \rulename{UE:Reflexive}.

\item \rulename{U:Pair}.
	Then~$\vs = \kwpair{\vs_1}{\vs_2}$
		and $\vsty = \vsprod{\vsty_1}{\isav}{\vsty_2}{\isav}$
		and~$\uspsplit{\ectx}{\uspctx_1}{\uspctx_2}$
		and~$\vsistype{\uspctx_1}{\utctx}{\vs_1}{\vsty_1}$
		and~$\vsistype{\uspctx_2}{\utctx}{\vs_2}{\vsty_2}$.
	By inversion on \rulename{OM:Empty},
		$\uspctx_1 = \uspctx_2 = \ectx$.
	By inversion on \rulename{UV:Pair}, 
		$\vs' = \kwpair{\vs_1'}{\vs_2'}$
		and $\vseval{\vs_1}{\vs_1'}$
		and $\vseval{\vs_2}{\vs_2'}$.
	By induction,
		$\vseq{\utctx}{\vs_1}{\vs_1'}{\vsty_1}$
		and $\vseq{\utctx}{\vs_2}{\vs_2'}{\vsty_2}$.
	Apply \rulename{UE:Pair}.

\item \rulename{U:OnlyLeftPair}.
	Then~$\vs = \kwpair{\vs_1}{\vs_2}$
		and $\vsty = \vsprod{\vsty_1}{\isav}{\vsty_2}{\isunav}$
		and~$\vsistype{\ectx}{\utctx}{\vs_1}{\vsty_1}$
		and~$\vsistype{\ectx}{\utctx'}{\vs_2}{\vsty_2}$.
	By inversion on \rulename{UV:Pair}, 
		$\vs' = \kwpair{\vs_1'}{\vs_2'}$
		and $\vseval{\vs_1}{\vs_1'}$
		and $\vseval{\vs_2}{\vs_2'}$.
	By induction,
		$\vseq{\utctx}{\vs_1}{\vs_1'}{\vsty_1}$
		and $\vseq{\utctx'}{\vs_2}{\vs_2'}{\vsty_2}$.
	Apply \rulename{UE:OnlyLeftPair}.

\item \rulename{U:OnlyRightPair}.
	By symmetry with the previous case.

\item \rulename{U:Fst}.
	Then~$\vs = \kwfst{\vs_0}$
		and~$\vsistype{\ectx}{\utctx}{\vs_0}{\vsprod{\vsty}{\isav}{\vsty_2}{\avail}}$.
	By inversion on the VS evaluation rules, there are two cases:
	\begin{enumerate}
	\item By inversion on \rulename{UV:FstNotPair},
		$\vs' = \kwfst{\vs_0'}$
			$\vseval{\vs_0}{\vs_0'}$.
		By induction,
			$\vseq{\utctx}{\vs_0}{\vs_0'}{\vsprod{\vsty}{\isav}{\vsty_2}{\avail}}$.
		Apply \rulename{UE:Fst}.
	\item By inversion on \rulename{UV:FstPair},
			$\vseval{\vs_0}{\kwpair{\vs'}{\vs_2}}$.
		By induction,
			$\vseq{\utctx}{\vs_0}{\kwpair{\vs'}{\vs_2}}{\vsprod{\vsty}{\isav}{\vsty_2}{\avail}}$.
		Apply \rulename{UE:FstPair}.
	\end{enumerate}

\item \rulename{U:Snd}. 
	By symmetry with the previous case.

\item \rulename{U:Subtype}.
	Then $\vsistype{\ectx}{\utctx}{\vs}{\vsty'}$
		and $\vstysubt{\vsty'}{\vsty}$.
	By induction,
		$\vseq{\utctx}{\vs}{\vs'}{\vsty'}$.
	Apply \rulename{UE:Subtype}.
\end{itemize}
\fi
\end{proof}

\begin{lemma}\label{lem:vs-eval-transitive}
	If~$\vseval{\vs}{\vs''}$
		and $\vseq{\utctx}{\vs}{\vs'}{\vsty}$
  	then~$\vseval{\vs'}{\vs''}$.
\end{lemma}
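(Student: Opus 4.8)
The plan is to prove Lemma~\ref{lem:vs-eval-transitive} by induction on the derivation of the equivalence judgment $\vseq{\utctx}{\vs}{\vs'}{\vsty}$, with a side induction (or appeal to determinism of $\Downarrow$) where needed. Determinism of VS normalization is the natural companion fact: the rules in Figures~\ref{fig:vs-eval-abbr} and~\ref{fig:eval-snd-vs} are syntax-directed on the shape of $\vs$ and, for projections, on whether the normal form of the subterm is a pair, so $\vseval{\vs}{\vs_1}$ and $\vseval{\vs}{\vs_2}$ force $\vs_1 = \vs_2$. I would either state this as a small preliminary observation or inline it.

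First I would handle the structural cases of equivalence. For \rulename{UE:Reflexive}, $\vs = \vs'$ and the conclusion is immediate. For \rulename{UE:Commutative}, we have $\vseq{\utctx}{\vs'}{\vs}{\vsty}$, and we need to go the other direction: here I would use Lemma~\ref{lem:vs-eval-equiv} plus \rulename{UE:Commutative} to get $\vseq{\utctx}{\vs'}{\vs''}{\vsty}$ from the hypotheses, but that is not quite enough on its own since $\vs''$ need not be a normal form a priori — however $\vs''$ \emph{is} the output of $\Downarrow$ and the normalization rules produce normal forms, so $\vseval{\vs''}{\vs''}$, and then I would need the converse direction of the induction, so it is cleaner to prove the statement simultaneously with its symmetric version, or simply note that $\Downarrow$ is idempotent and use Lemmas~\ref{lem:vs-progress} and~\ref{lem:vs-eval-equiv} together with determinism. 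For \rulename{UE:Transitive}, apply the IH to the first premise to push $\vseval{\vs}{\vs''}$ through to $\vseval{\vs_{\mathrm{mid}}}{\vs''}$... wait, more carefully: from $\vseval{\vs}{\vs''}$ and $\vseq{\utctx}{\vs}{\vs_{\mathrm{mid}}}{\vsty}$ the IH gives $\vseval{\vs_{\mathrm{mid}}}{\vs''}$, then from that and $\vseq{\utctx}{\vs_{\mathrm{mid}}}{\vs'}{\vsty}$ the IH gives $\vseval{\vs'}{\vs''}$. For the congruence cases \rulename{UE:Pair}, \rulename{UE:OnlyLeftPair}, \rulename{UE:OnlyRightPair}, \rulename{UE:Fst}, \rulename{UE:Snd}: invert the evaluation of $\vs$ (e.g., for \rulename{UV:Pair} it decomposes into evaluations of the components), apply the IH componentwise, and reapply the matching evaluation rule to assemble $\vseval{\vs'}{\vs''}$. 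For \rulename{UE:Subtype}, the statement does not mention the type in a way that obstructs, so just apply the IH at the smaller type $\vsty'$.

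The interesting cases are \rulename{UE:FstPair} and \rulename{UE:SndPair}, which is where $\beta$-reduction of projections happens. Take \rulename{UE:FstPair}: we have $\vs = \kwfst{\vs_0}$, $\vs' = \vs_1$, and $\vseq{\utctx}{\vs_0}{\kwpair{\vs_1}{\vs_2}}{\vsprod{\vsty_1}{\isav}{\vsty_2}{\avail}}$. From the hypothesis $\vseval{\kwfst{\vs_0}}{\vs''}$, invert the evaluation rules: either $\vseval{\vs_0}{\kwpair{\vs_1'}{\vs_2'}}$ and $\vs'' = \vs_1'$ (via \rulename{UV:FstPair}), or $\vseval{\vs_0}{\vs_0'}$ with $\vs_0'$ not a pair and $\vs'' = \kwfst{\vs_0'}$ (via \rulename{UV:FstNotPair}). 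In the first subcase, apply the IH to $\vseval{\vs_0}{\kwpair{\vs_1'}{\vs_2'}}$ and $\vseq{\utctx}{\vs_0}{\kwpair{\vs_1}{\vs_2}}{\cdots}$ — but that goes the wrong direction; instead use the IH in the form we are proving on the second argument, i.e., apply the IH with $\vseval{\vs_0}{\kwpair{\vs_1'}{\vs_2'}}$ together with $\vseq{\utctx}{\vs_0}{\kwpair{\vs_1}{\vs_2}}{\cdots}$ to get $\vseval{\kwpair{\vs_1}{\vs_2}}{\kwpair{\vs_1'}{\vs_2'}}$, hence by \rulename{UV:Pair} inversion $\vseval{\vs_1}{\vs_1'}$, i.e., $\vseval{\vs'}{\vs''}$. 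The second subcase (where $\vs_0$ normalizes to a non-pair) cannot actually co-occur with $\vs_0$ being equivalent to a pair and still be consistent: I would rule it out using Lemma~\ref{lem:vs-eval-equiv} (so $\vseq{\utctx}{\vs_0}{\vs_0'}{\cdots}$ with $\vs_0'$ not syntactically a pair) combined with the hypothesis $\vseq{\utctx}{\vs_0}{\kwpair{\vs_1}{\vs_2}}{\cdots}$ and transitivity, then invoke the IH again to derive $\vseval{\kwpair{\vs_1}{\vs_2}}{\vs_0'}$, contradicting \rulename{UV:Pair} since that rule forces the normal form of a pair to be a pair. So the main obstacle is organizing the mutual/bidirectional use of the IH cleanly — I expect the cleanest fix is to prove the lemma together with Lemma~\ref{lem:vs-eval-equiv}-style facts already in hand and to be careful that every IH appeal has $\Downarrow$ on one side and $\equiv$ on the other, exactly matching the statement. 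The \rulename{UE:SndPair} case is symmetric.
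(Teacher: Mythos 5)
Your overall strategy is the same as the paper's: induct on the equivalence derivation, strengthened to cover both orientations $\vseq{\utctx}{\vs}{\vs'}{\vsty}$ and $\vseq{\utctx}{\vs'}{\vs}{\vsty}$ simultaneously. That is exactly how the paper dispatches \rulename{UE:Commutative}, so your instinct that the simultaneous formulation is the cleanest fix is correct, and the detour you contemplate through determinism and idempotence of $\Downarrow$ is unnecessary. Your treatment of \rulename{UE:Transitive}, the congruence cases, and the forward orientation of \rulename{UE:FstPair} --- including ruling out the \rulename{UV:FstNotPair} subcase by deriving an evaluation of $\kwpair{\vs_1}{\vs_2}$ to a non-pair and contradicting inversion on \rulename{UV:Pair} --- matches the paper's proof step for step.

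The one obligation you have not discharged is the \emph{reverse} orientation of \rulename{UE:FstPair} (and symmetrically \rulename{UE:SndPair}), which your own simultaneous induction commits you to. There $\vs' = \kwfst{\vs_0'}$ with $\vseq{\utctx}{\vs_0'}{\kwpair{\vs}{\vs_2}}{\vsprod{\vsty}{\isav}{\vsty_2}{\avail}}$, and the hypothesis $\vseval{\vs}{\vs''}$ evaluates only the \emph{first component} of that pair; to apply the IH to $\vs_0'$ you need an evaluation of the whole pair $\kwpair{\vs}{\vs_2}$, but nothing in your hypotheses evaluates $\vs_2$. The paper manufactures one: by Lemma~\ref{lem:vs-preservation} the pair is well-typed, hence so is $\vs_2$, hence by Lemma~\ref{lem:vs-progress} there is some $\vs_2'$ with $\vseval{\vs_2}{\vs_2'}$; then \rulename{UV:Pair} gives $\vseval{\kwpair{\vs}{\vs_2}}{\kwpair{\vs''}{\vs_2'}}$, the IH transports this to $\vseval{\vs_0'}{\kwpair{\vs''}{\vs_2'}}$, and \rulename{UV:FstPair} closes the case. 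This appeal to preservation-plus-progress is the one genuinely missing ingredient in your sketch; the rest is sound.
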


\begin{proof}
	By induction on the derivation of~$\vseq{\utctx}{\vs}{\vs'}{\vsty}$ or $\vseq{\utctx}{\vs'}{\vs}{\vsty}$.
\iffull
  \begin{itemize}
  \item \rulename{UE:Reflexive}. 
	Then~$\vs = \vs'$.
	True by premise of the lemma.

  \item \rulename{UE:Commutative}. 
	Apply induction.

\item \rulename{UE:Transitive}.
	\begin{itemize}
	\item $\vseq{\utctx}{\vs}{\vs'}{\vsty}$.
	    Then~$\vseq{\utctx}{\vs}{\vs'''}{\vsty}$
	    	and $\vseq{\utctx}{\vs'''}{\vs'}{\vsty}$.
		By induction,
			$\vseval{\vs'''}{\vs''}$.
		Apply induction.
	\item $\vseq{\utctx}{\vs'}{\vs}{\vsty}$.
	    By symmetry with the previous case.
	\end{itemize}

\item \rulename{UE:Pair}.
	\begin{itemize}
	\item $\vseq{\utctx}{\vs}{\vs'}{\vsty}$.
	    Then~$\vs = \kwpair{\vs_1}{\vs_2}$
		    and $\vs' = \kwpair{\vs_1'}{\vs_2'}$
	    	and $\vseq{\utctx}{\vs_1}{\vs_1'}{\vsty_1}$
	    	and $\vseq{\utctx}{\vs_2}{\vs_2'}{\vsty_2}$.
		By inversion on \rulename{UV:Pair},
			$\vs'' = \kwpair{\vs_1''}{\vs_2''}$
			and $\vseval{\vs_1}{\vs_1''}$
			and $\vseval{\vs_2}{\vs_2''}$.
		By induction,
			$\vseval{\vs_1'}{\vs_1''}$
			and $\vseval{\vs_2'}{\vs_2''}$.
		Apply \rulename{UV:Pair}.
	\item $\vseq{\utctx}{\vs'}{\vs}{\vsty}$.
	    By symmetry with the previous case.
	\end{itemize}

  \item \rulename{UE:Fst}.
	\begin{itemize}
	\item $\vseq{\utctx}{\vs}{\vs'}{\vsty}$.
	    Then~$\vs = \kwfst{\vs_0}$ 
			and~$\vs' = \kwfst{\vs_0'}$
	    	and~$\vseq{\utctx}{\vs_0}{\vs_0'}{\vsprod{\vsty}{\isav}{\vsty_2}{\avail}}$.
	    By inversion on the VS evaluation rules, there are two cases:
			\begin{itemize}
			\item By inversion on \rulename{UV:FstPair},
					$\vseval{\vs_0}{\kwpair{\vs''}{\vs_2}}$.
				By induction,
					$\vseval{\vs_0'}{\kwpair{\vs''}{\vs_2}}$.
			    Apply \rulename{UV:FstPair}.
			\item By inversion on \rulename{UV:FstNotPair},
				    $\vs'' = \kwfst{\vs_0''}$
					and $\vseval{\vs_0}{\vs_0''}$
					and $\vs_0'' \neq \kwpair{\vs_1}{\vs_2}$.
				By induction,
					$\vseval{\vs_0'}{\vs_0''}$.
				Apply \rulename{UV:FstNotPair}.
			\end{itemize}
	\item $\vseq{\utctx}{\vs'}{\vs}{\vsty}$.
	    By symmetry with the previous case.
	\end{itemize}

  \item \rulename{UE:FstPair}.
	\begin{itemize}
	\item $\vseq{\utctx}{\vs}{\vs'}{\vsty}$.
	    Then~$\vs = \kwfst{\vs_0}$ 
	    	and~$\vseq{\utctx}{\vs_0}{\kwpair{\vs'}{\vs_2}}{\vsprod{\vsty}{\isav}{\vsty_2}{\avail}}$.
	    By inversion on the VS evaluation rules, there are two cases:
			\begin{itemize}
			\item By inversion on \rulename{UV:FstPair},
					$\vseval{\vs_0}{\kwpair{\vs''}{\vs_2'}}$.
				By induction,
					$\vseval{\kwpair{\vs'}{\vs_2}}{\kwpair{\vs''}{\vs_2'}}$.
			    By inversion on \rulename{UV:Pair},
					$\vseval{\vs'}{\vs''}$.
			\item By inversion on \rulename{UV:FstNotPair},
				    $\vs'' = \kwfst{\vs_0''}$
					and $\vseval{\vs_0}{\vs_0''}$
					and $\vs_0'' \neq \kwpair{\vs_1}{\vs_2}$.
				By induction,
					$\vseval{\kwpair{\vs'}{\vs_2}}{\vs_0''}$.
			    By inversion on \rulename{UV:Pair},
					$\vs_0'' = \kwpair{\vs_1}{\vs_2}$,
					which contradicts $\vs_0'' \neq \kwpair{\vs_1}{\vs_2}$,
					so this case does not apply.
			\end{itemize}
	\item $\vseq{\utctx}{\vs'}{\vs}{\vsty}$.
	    Then~$\vs' = \kwfst{\vs_0'}$ 
	    	and~$\vseq{\utctx}{\vs_0'}{\kwpair{\vs}{\vs_2}}{\vsprod{\vsty}{\isav}{\vsty_2}{\avail}}$.
		By Lemma \ref{lem:vs-preservation},
			$\vsistype{\ectx}{\utctx}{\kwpair{\vs}{\vs_2}}{\vsprod{\vsty}{\isav}{\vsty_2}{\avail}}$.
		By inversion on \rulename{U:Subtype} and either \rulename{U:Pair} or \rulename{U:OnlyLeftPair},
			$\vsistype{\ectx}{\utctx}{\vs_2}{\vsty_2'}$.
		By Lemma \ref{lem:vs-progress},
			there exists a $\vs_2'$ such that $\vseval{\vs_2}{\vs_2'}$.
		By \rulename{UV:Pair},
			$\vseval{\kwpair{\vs}{\vs_2}}{\kwpair{\vs''}{\vs_2'}}$.
		By induction,
			$\vseval{\vs_0'}{\kwpair{\vs''}{\vs_2'}}$.
		Apply \rulename{UV:FstPair}.
	\end{itemize}

  \item \rulename{UE:Subtype}.
	Apply induction.

\item All other cases are by symmetry with another case of the proof.
  \end{itemize}
\fi
\end{proof}

We now show that the three-step normalization
process can be performed on any well-formed graph type and results in a set
of well-formed graphs.
The first step is to show that graph type unrolling and conversion to
NBNF both preserve well-formedness of graphs.
This in particular ensures that affine restrictions on vertices are maintained,
which is necessary to ensure that when a graph is unrolled and converted to
NBNF, it can still be expanded into a well-formed graph (recall that duplicate
vertices will make a graph ill-formed).
Lemma~\ref{lem:unr-preserves-typing} shows that graph type unrolling preserves
well-formedness.
Lemma~\ref{lem:nbnf-exists-typing} shows that for a well-formed graph type,
there exists an NBNF, and the NBNF of the graph type is still well-formed.
Finally, we show (Lemma~\ref{lem:exp-well-formed})
that if a well-formed graph type in NBNF is expanded, all of
the graphs in the resulting set are well-formed graphs (in particular, have no
duplicate vertices).

\begin{lemma}\label{lem:unr-preserves-typing}
  If~$\graph \unrstep \graph'$
  and~$\dagwf{\gctx}{\uspctx}{\utctx}{\graph}{\graphkind}$
  then~$\dagwf{\gctx}{\uspctx}{\utctx}{\graph'}{\graphkind}$.
\end{lemma}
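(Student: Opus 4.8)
The plan is to induct on the derivation of the unrolling step $\graph \unrstep \graph'$, following the structure of the rules in Figure~\ref{fig:unroll-graph-abbr} (and the remaining ``search'' rules in Figure~\ref{fig:unroll-graph}). For each rule we invert the kinding derivation $\dagwf{\gctx}{\uspctx}{\utctx}{\graph}{\graphkind}$ to expose the kinding of the subterm being stepped, apply the induction hypothesis to that subterm, and reassemble the kinding derivation for $\graph'$ using the same formation rule. For the congruence rules this is entirely mechanical: e.g.\ for \rulename{UR:Seq1}, inverting \rulename{DW:Seq} gives $\uspsplit{\uspctx}{\uspctx_1}{\uspctx_2}$ with $\dagwf{\gctx}{\uspctx_1}{\utctx}{\graph_1}{\kgraph}$ and $\dagwf{\gctx}{\uspctx_2}{\utctx}{\graph_2}{\kgraph}$; the IH on $\graph_1 \unrstep \graph_1'$ yields $\dagwf{\gctx}{\uspctx_1}{\utctx}{\graph_1'}{\kgraph}$, and re-applying \rulename{DW:Seq} closes the case. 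The analogous search rules under $\seqcomp$, $\dagor$, $\leftcomp{\cdot}{\vs}$, $\dagpi{}{}{\cdot}$, $\dagnew{}{\cdot}$, and $\kwtapp{\cdot}{\vs_f}{\vs_t}$ all go through the same way.

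The only genuinely substantive case is \rulename{UR:Rec}, where $\graph = \dagrec{\gvar}{\graph''}{}$ steps to $\gsub{\graph''}{\dagrec{\gvar}{\graph''}{}}{\gvar}$. Here I would invert the kinding of $\graph$, which must be by \rulename{DW:RecPi}: so $\graph'' = \dagpi{\hastycl{\vvar_f}{\vsty_f}}{\hastycl{\vvar_t}{\vsty_t}}{\graph_0}$, the kind is $\graphkind = \dagpi{\hastycl{\vvar_f}{\vsty_f}}{\hastycl{\vvar_t}{\vsty_t}}{\kgraph}$, and we have $\dagwf{\gctx,\hastype{\gvar}{\graphkind}}{\hastype{\vvar_f}{\vsty_f}}{\utctx, \hastype{\vvar_f}{\vsty_f}, \hastype{\vvar_t}{\vsty_t}}{\graph_0}{\kgraph}$, which by \rulename{DW:Pi} gives $\dagwf{\gctx,\hastype{\gvar}{\graphkind}}{\uspctx}{\utctx}{\graph''}{\graphkind}$. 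Since the whole term $\dagrec{\gvar}{\graph''}{}$ itself has kind $\graphkind$ under $\gctx$ with \emph{empty} affine context (the conclusion of \rulename{DW:RecPi} allows an arbitrary $\uspctx$, but the recursive occurrence $\gvar$ is always used with no affine resources in the body---note $\uspctx$ is empty in the hypothesis of \rulename{DW:RecPi} except for $\vvar_f$), I can instantiate the graph-type substitution lemma, Lemma~\ref{lem:subst}, part~(\ref{lem:subst-graph-graph}): substituting $\graph' \defeq \dagrec{\gvar}{\graph''}{}$ for $\gvar$ in $\graph''$ preserves the kind $\graphkind$, yielding $\dagwf{\gctx}{\uspctx}{\utctx}{\gsub{\graph''}{\dagrec{\gvar}{\graph''}{}}{\gvar}}{\graphkind}$, as required.

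The point that needs the most care---and the step I expect to be the main obstacle---is lining up the affine contexts in the \rulename{UR:Rec} case so that Lemma~\ref{lem:subst}(\ref{lem:subst-graph-graph}) actually applies. That lemma requires the graph type being substituted in, namely $\dagrec{\gvar}{\graph''}{}$, to be well-formed under the \emph{empty} affine context ($\dagwf{\gctx}{\ectx}{\utctx}{\graph'}{\graphkind'}$), whereas \rulename{DW:RecPi} produces a conclusion with an arbitrary $\uspctx$. The resolution is that \rulename{DW:RecPi}'s conclusion $\uspctx$ is unconstrained precisely because the rule's premise never consults $\uspctx$ (the body is checked with $\hastype{\vvar_f}{\vsty_f}$ as the affine context, not $\uspctx$), so we may freely rederive the same judgement with $\uspctx = \ectx$; alternatively one invokes a trivial weakening/strengthening observation for the affine context on recursive graph types. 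Once that alignment is made, the substitution lemma discharges the case. I would also note at the outset that the statement is closed under the reflexive-transitive closure $\unrstep^*$ by an immediate induction on the number of steps, which is what the later normalization results (Theorem~\ref{thm:norm-correct}) actually use.
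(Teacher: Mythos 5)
Your proof follows the same route as the paper's: induction on the unrolling derivation, with the congruence cases handled by inversion and reassembly, and the \rulename{UR:Rec} case discharged by Lemma~\ref{lem:subst}(\ref{lem:subst-graph-graph}) after rederiving $\dagrec{\gvar}{\graph''}{}$ under the empty affine context via \rulename{DW:RecPi}. The affine-context alignment you flag as the delicate point is exactly what the paper handles with $\uspsplit{\uspctx}{\uspctx}{\ectx}$ and weakening, so the proposal is correct and essentially identical to the paper's argument.
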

\begin{proof}
  By induction on the derivation of~$\graph \unrstep \graph'$.
  \iffull
  \begin{itemize}
  \item \rulename{UR:Seq1}.
    Then~$\graph = \graph_1 \seqcomp \graph_2$
		and $\graph' = \graph_1' \seqcomp \graph_2$
		and $\graph_1 \unrstep \graph_1'$.
    By inversion on~\rulename{DW:Seq},
		$\graphkind = \kgraph$
		and $\uspsplit{\uspctx}{\uspctx_1}{\uspctx_2}$
    	and $\dagwf{\gctx}{\uspctx_1}{\utctx}{\graph_1}{\kgraph}$
    	and $\dagwf{\gctx}{\uspctx_2}{\utctx}{\graph_2}{\kgraph}$.
    By induction,
    	$\dagwf{\gctx}{\uspctx_1}{\utctx}{\graph_1'}{\kgraph}$.
    Apply rule~\rulename{DW:Seq}.
  \item \rulename{UR:Seq2}. Symmetric to above.
  \item \rulename{UR:Par1}. Then~$\graph = \graph_1 \parcomp \graph_2$
		and $\graph' = \graph_1' \parcomp \graph_2$
		and $\graph_1 \unrstep \graph_1'$.
    By inversion on~\rulename{DW:Par},
		$\graphkind = \kgraph$
		and $\uspsplit{\uspctx}{\uspctx_1}{\uspctx_2}$
    	and $\dagwf{\gctx}{\uspctx_1}{\utctx}{\graph_1}{\kgraph}$
    	and $\dagwf{\gctx}{\uspctx_2}{\utctx}{\graph_2}{\kgraph}$.
    By induction,
    	$\dagwf{\gctx}{\uspctx_1}{\utctx}{\graph_1'}{\kgraph}$.
    Apply rule~\rulename{DW:Par}.
  \item \rulename{UR:Par2}. Symmetric to above.
  \item \rulename{UR:Or1}. Then~$\graph = \graph_1 \dagor \graph_2$
		and $\graph' = \graph_1' \dagor \graph_2$
		and $\graph_1 \unrstep \graph_1'$.
    By inversion on~\rulename{DW:Or},
		$\graphkind = \kgraph$
    	and $\dagwf{\gctx}{\uspctx}{\utctx}{\graph_1}{\kgraph}$
    	and $\dagwf{\gctx}{\uspctx}{\utctx}{\graph_2}{\kgraph}$.
    By induction,
    	$\dagwf{\gctx}{\uspctx}{\utctx}{\graph_1'}{\kgraph}$.
    Apply rule~\rulename{DW:Or}.
  \item \rulename{UR:Or2}. Symmetric to above.
  \item \rulename{UR:Future}.
    Then~$\graph = \leftcomp{\graph_0}{\vs}$
    and~$\graph' = \leftcomp{\graph_0'}{\vs}$
    and~$\graph_0 \unrstep \graph_0'$.
    By inversion on~\rulename{DW:Spawn},
    $\uspsplit{\uspctx}{\uspctx_1}{\uspctx_2}$
    and~$\dagwf{\gctx}{\uspctx_1}{\utctx}{\graph_0}{\kgraph}$
    and~$\vsistype{\uspctx_2}{\ectx}{\vs}{\kwvty}$.
    By induction,~$\dagwf{\gctx}{\uspctx_1}{\utctx}{\graph_0'}{\kgraph}$.
    Apply~\rulename{DW:Spawn}.
  \item \rulename{UR:Rec}.
    Then~$\graph = \dagrec{\gvar}{\graph_0}{}$
    	and~$\graph' = \gsub{\graph_0}{\dagrec{\gvar}{\graph_0}{}}{\gvar}$.
    By inversion on~\rulename{DW:RecPi},
    	$\graphkind = \dagpi{\hastycl{\vvar_f}{\vsty_f}}{\hastycl{\vvar_t}{\vsty_t}}{\kgraph}$
		and $\graph_0 = \dagpi{\hastycl{\vvar_f}{\vsty_f}}{\hastycl{\vvar_t}{\vsty_t}}{\graph_0'}$
		and $\dagwf{\gctx,\hastype{\gvar}{\dagpi{\hastycl{\vvar_f}{\vsty_f}}{\hastycl{\vvar_t}{\vsty_t}}{\kgraph}}}
           {\hastype{\vvar_f}{\vsty_f}}{\utctx, \hastype{\vvar_f}{\vsty_f}, \hastype{\vvar_t}{\vsty_t}}{\graph_0'}{\kgraph}$.
    By \rulename{DW:RecPi},
		$\dagwf{\gctx}{\ectx}{\utctx}{\dagrec{\gvar}{\graph_0}{}}{\dagpi{\hastycl{\vvar_f}{\vsty_f}}{\hastycl{\vvar_t}{\vsty_t}}{\kgraph}}$.
	By \rulename{DW:Pi},
		$\dagwf{\gctx,\hastype{\gvar}{\dagpi{\hastycl{\vvar_f}{\vsty_f}}{\hastycl{\vvar_t}{\vsty_t}}{\kgraph}}}
           {\ectx}{\utctx}{\graph_0}{\dagpi{\hastycl{\vvar_f}{\vsty_f}}{\hastycl{\vvar_t}{\vsty_t}}{\kgraph}}$.
	By multiple applications of \rulename{OM:Var} and \rulename{OM:Gen},
		$\uspsplit{\uspctx}{\uspctx}{\ectx}$.
	Apply Lemma~\ref{lem:subst} and Lemma~\ref{lem:usp-split-weakening}.
    \item \rulename{UR:Pi}.
      Then~$\graph = \dagpi{\hastycl{\vvar_f}{\vsty_f}}{\hastycl{\vvar_t}{\vsty_t}}{\graph_0}$
      and~$\graph' = \dagpi{\hastycl{\vvar_f}{\vsty_f}}{\hastycl{\vvar_t}{\vsty_t}}{\graph_0'}$
      and~$\graph_0 \unrstep \graph_0'$.
       By inversion on~\rulename{DW:Pi},
     $\dagwf{\gctx}
           {\hastype{\vvar_f}{\vsty_f}}{\utctx, \hastype{\vvar_f}{\vsty_f}, \hastype{\vvar_t}{\vsty_t}}{\graph_0}{\kgraph}$.
           By induction,
           $\dagwf{\gctx}
           {\hastype{\vvar_f}{\vsty_f}}{\utctx, \hastype{\vvar_f}{\vsty_f}, \hastype{\vvar_t}{\vsty_t}}{\graph_0'}{\kgraph}$.
           Apply~\rulename{DW:Pi}.
      \item \rulename{UR:App}.
        Then~$\graph = \kwtapp{\graph_0}{\vs_f}{\vs_t}$
        and~$\graph' = \kwtapp{\graph_0'}{\vs_f}{\vs_t}$
        and~$\graph_0 \unrstep \graph_0'$.
        By inversion on~\rulename{DW:App},
        $\uspsplit{\uspctx}{\uspctx_1}{\uspctx_2}$
        and~$\dagwf{\gctx}{\uspctx_1}{\utctx}{\graph_0}
        {\dagpi{\hastycl{\vvar_f}{\vsty_f}}{\hastycl{\vvar_t}{\vsty_t}}{\kgraph}}$
   	     and~$\vsistype{\uspctx_2}{\ectx}{\vs_f}{\vsty_f}$
   	     and~$\vsistype{\ectx}{\utctx}{\vs_t}{\vsty_t}$.
        By induction,~$\dagwf{\gctx}{\uspctx_1}{\utctx}{\graph_0'}
        {\dagpi{\hastycl{\vvar_f}{\vsty_f}}{\hastycl{\vvar_t}{\vsty_t}}{\kgraph}}$.
        Apply~\rulename{DW:App}.
      \item \rulename{UR:New}.
        Then~$\graph = \dagnew{\hastycl{\vvar}{\vsty}}{\graph_0}$
        and~$\graph' = \dagnew{\hastycl{\vvar}{\vsty}}{\graph_0'}$
        and~$\graph_0 \unrstep \graph_0'$.
        By inversion on~\rulename{DW:New},
        $\dagwf{\gctx}{\uspctx, \hastycl{\vvar}{\vsty}}
        {\utctx, \hastycl{\vvar}{\vsty}}{\graph_0}{\kgraph}$.
        By induction,~$\dagwf{\gctx}{\uspctx, \hastycl{\vvar}{\vsty}}
        {\utctx, \hastycl{\vvar}{\vsty}}{\graph_0'}{\kgraph}$.
        Apply~\rulename{DW:New}.
  \end{itemize}
  \fi
\end{proof}

\begin{lemma}\label{lem:nbnf-exists-typing}
  If~$\dagwf{\ectx}{\guspctx}{\gutctx}{\graph}{\graphkind}$
  then there exist~$\graph'$,~$\guspctx_0$, and~$\gutctx_0$
  such that~$\graph' = \bnnf{\graph}$
  and~$\guspctx_0$ and~$\gutctx_0$ consist of only fresh generators,
  and~$\dagwf{\ectx}{\guspctx, \guspctx_0}{\gutctx, \gutctx_0}{\graph'}{\graphkind}$.
  %Furthermore, if~$\graphkind = \dagpi{\hastycl{\vvar_f}{\vsty_f}}
  %{\hastycl{\vvar_t}{\vsty_t}}{\graph_0}$,
  %then~$\graph' = \dagpi{\hastycl{\vvar_f}{\vsty_f}}
  %{\hastycl{\vvar_t}{\vsty_t}}{\graph_0'}$ for some~$\graph_0'$.
\end{lemma}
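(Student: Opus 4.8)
The plan is to prove this by structural induction on the graph type $\graph$, following exactly the recursive structure of the definition of $\bnnf{\cdot}$ in Figure~\ref{fig:bnnf}, since each clause of that definition tells us both what $\graph'$ looks like and which new generators are introduced. Throughout, I would carry along the invariant that the output contexts $\guspctx_0$ and $\gutctx_0$ consist only of fresh generators (disjoint from everything in sight), so that the splitting rules \rulename{OM:Gen} and \rulename{OM:GenTypeSplit} can always be used to redistribute them, and so that weakening of VS typing (Lemma~\ref{lem:weaken-affine-restriction} and the $\uspctx$-weakening result, Lemma~\ref{lem:usp-split-weakening}) lets us push these extra generators into sub-derivations without harm. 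Note that because $\graph$ is closed (typed under generator-only contexts $\guspctx, \gutctx$), the $\gvar$-variable case and the bare $\dagpi$/$\dagrec$ cases do not recur into the bodies of binders, and $\bnnf{\cdot}$ simply returns them unchanged, so those cases are immediate with $\guspctx_0 = \gutctx_0 = \ectx$.

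First I would dispatch the base and ``search'' cases. For $\emptygraph$, $\leftcomp{\graph}{\vs}$, $\touchcomp{\vs}$, $\dagrec{\gvar}{\graph}{}$, and $\dagpi{\ldots}{\graph}$: in the first the result is immediate; in $\leftcomp{\graph}{\vs}$ and $\touchcomp{\vs}$ we invert \rulename{DW:Spawn}/\rulename{DW:Touch}, apply Lemma~\ref{lem:vs-progress} to get that $\vseval{\vs}{\vs'}$ exists with $\vs'$ of the same VS type, recurse on the subgraph (for $\leftcomp{}{}$), and reassemble with the same rule — here I need the $\uspctx$-splitting premise and the fact that fresh generators from the recursive call can be added to both halves using Lemma~\ref{lem:split-move} or simply \rulename{OM:Gen}. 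For $\graph_1 \seqcomp \graph_2$ and $\graph_1 \dagor \graph_2$ I invert \rulename{DW:Seq}/\rulename{DW:Or}, apply the IH to each component obtaining fresh $\guspctx_0^{(i)}, \gutctx_0^{(i)}$ (chosen disjoint), weaken each sub-derivation by the other's generators using Lemma~\ref{lem:usp-split-weakening} (and \rulename{OM:Gen} to reconstruct a splitting of the combined $\uspctx$), and recombine; the output contexts are the (disjoint) unions.

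The two genuinely new cases are $\dagnew{\hastycl{\vvar}{\vsty}}{\graph}$ and $\kwtapp{\graph}{\vs_f}{\vs_t}$. For $\dagnew{}{}$: invert \rulename{DW:New} to get $\dagwf{\ectx}{\guspctx,\hastycl{\vvar}{\vsty}}{\gutctx,\hastycl{\vvar}{\vsty}}{\graph}{\kgraph}$; substitute the fresh generator $\genseed$ for $\vvar$ using the substitution lemma (Lemma~\ref{lem:subst}, part~\ref{lem:subst-allvert-graph}), where the VS $\vertgen{\vsty}{\genseed}$ is given type $\vsty$ in both the $\uspctx$-side and $\utctx$-side by \rulename{U:OmegaGen}/\rulename{U:PsiGen} under a singleton context $\hastype{\genseed}{\vsty}$ that splits off cleanly; this yields $\dagwf{\ectx}{\guspctx,\hastype{\genseed}{\vsty}}{\gutctx,\hastype{\genseed}{\vsty}}{\gsub{\graph}{\vertgen{\vsty}{\genseed}}{\vvar}}{\kgraph}$, to which I apply the IH, and prepend $\hastype{\genseed}{\vsty}$ to the resulting $\guspctx_0, \gutctx_0$. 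For the application case $\kwtapp{\graph}{\vs_f}{\vs_t}$, I split on whether $\bnnf{\graph}$ is a $\dagpi$: if not, the result is $\kwtapp{\graph}{\vs_f}{\vs_t}$ unchanged and, using the IH on $\graph$ (which has a $\dagpi$-graph-kind) plus Lemma~\ref{lem:vs-progress} on $\vs_f, \vs_t$, we reassemble via \rulename{DW:App}; if it is $\dagpi{\hastycl{\vvar_f}{\vsty_f}}{\hastycl{\vvar_t}{\vsty_t}}{\graph''}$, then inverting \rulename{DW:Pi} on the well-formed $\bnnf{\graph}$ gives $\dagwf{}{\hastype{\vvar_f}{\vsty_f}}{\ldots,\hastype{\vvar_f}{\vsty_f},\hastype{\vvar_t}{\vsty_t}}{\graph''}{\kgraph}$, and I substitute $\vs_f$ for $\vvar_f$ and $\vs_t$ for $\vvar_t$ using Lemma~\ref{lem:subst} parts~\ref{lem:subst-allvert-graph} and~\ref{lem:subst-touchvert-graph} respectively (relying on \rulename{DW:App}'s premises that $\vs_f$ is well-typed in both $\uspctx$ and $\utctx$ while $\vs_t$ is well-typed only in $\utctx$, and that $\vvar_t \notin \uspctx$), then recurse on the substituted body since $\bnnf{\cdot}$ is applied recursively there.

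The main obstacle I anticipate is bookkeeping the affine $\uspctx$-contexts across the recursive/branching cases: each invocation of the IH produces its own batch of fresh generators, and to recombine two sub-results (in $\seqcomp$, $\dagor$, and also when a substituted application body must itself be normalized) I must exhibit a single $\uspctx$ that splits appropriately into the augmented contexts of both branches. This is where Lemma~\ref{lem:split-move} (associativity of $\uspctx$-splitting) and the freshness of the new generators are essential — since the generator batches are pairwise disjoint and disjoint from $\guspctx$, I can always build the needed splitting by repeated \rulename{OM:Gen}, and weaken each branch's derivation to tolerate the other branch's generators via Lemma~\ref{lem:usp-split-weakening}. A secondary subtlety is that in the $\kwtapp{}{}{}$-is-a-$\dagpi$ case the IH is applied twice (once to get $\bnnf{\graph}$ in $\dagpi$ form, once to the substituted body), so I must be careful that the second application is on a structurally smaller object — it is, because normalization recurses strictly inside the result of one reduction step, and the statement should really be phrased (or the induction carried out) with this well-foundedness in mind, matching how $\bnnf{\cdot}$ is defined as a terminating recursion.
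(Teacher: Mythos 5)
Your proposal is correct and follows essentially the same route as the paper's proof: induction tracking the clauses of $\bnnf{\cdot}$, with Lemma~\ref{lem:vs-progress} for the VS annotations, the substitution lemma for the $\kw{new}$ and application cases, and \rulename{OM:Gen} plus weakening to recombine the disjoint batches of fresh generators. The well-foundedness subtlety you flag for the nested recursive calls on substituted bodies is real --- the paper silently applies ``induction'' there --- and your remark that the induction must be justified by the termination of $\bnnf{\cdot}$ itself is exactly the right way to discharge it.
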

\begin{proof}
  By induction on the derivation
  of~$\dagwf{\ectx}{\guspctx}{\gutctx}{\graph}{\graphkind}$.
  \begin{itemize}
  \item \rulename{DW:Empty}.
    Then~$\graph = \emptygraph$.
    We have~$\bnnf{\graph} = \graph$.
  \item \rulename{DW:Var}. This case cannot occur since~$\gctx$ is empty.
  \item \rulename{DW:Seq}.
    Then~$\graph = \graph_1 \seqcomp \graph_2$
    and~$\uspsplit{\guspctx}{\guspctx_1}{\guspctx_2}$
    and~$\dagwf{\ectx}{\guspctx_1}{\gutctx}{\graph_1}{\kgraph}$
    and~$\dagwf{\ectx}{\guspctx_2}{\gutctx}{\graph_2}{\kgraph}$.
    By induction,
    there exist~$\graph_1'$, $\graph_2'$, $\guspctx_3$, $\guspctx_4$, $\gutctx_3$, and $\gutctx_4$
    such that~$\graph_1' = \bnnf{\graph_1}$
    and~$\graph_2' = \bnnf{\graph_2}$,
	and $\guspctx_3$, $\guspctx_4$, $\gutctx_3$, and $\gutctx_4$ consist of only fresh generators,
    and~$\dagwf{\ectx}{\guspctx_1, \guspctx_3}{\gutctx, \gutctx_3}{\graph_1'}{\kgraph}$
    and~$\dagwf{\ectx}{\guspctx_2, \guspctx_4}{\gutctx, \gutctx_4}{\graph_2'}{\kgraph}$.
    Let~$\graph' = \graph_1' \seqcomp \graph_2'$
    and~$\guspctx_0 = \guspctx_3,\guspctx_4$
    and~$\gutctx_0 = \gutctx_3,\gutctx_4$.
    We have~$\graph' = \bnnf{\graph}$.
    By multiple applications of \rulename{OM:Gen} and \rulename{OM:Commutative}, 
    $\uspsplit{\guspctx, \guspctx_0}{\guspctx_1, \guspctx_3}{\guspctx_2, \guspctx_4}$.
    Apply weakening and~\rulename{DW:Seq}.
  \item \rulename{DW:Par}, \rulename{DW:OR}. Similar to above.
  \item \rulename{DW:RecPi}.
    Then~$\graph = \dagrec{\gvar}{\dagpi{\hastycl{\vvar_f}{\vsty_f}}{\hastycl{\vvar_t}{\vsty_t}}{\graph'}}{}$.
    We have~$\bnnf{\graph} = \graph$.
  \item \rulename{DW:Spawn}.
    Then~$\graph = \leftcomp{\graph''}{\vs}$
    and~$\uspsplit{\guspctx}{\guspctx_1}{\guspctx_2}$
    and~$\dagwf{\gctx}{\guspctx_1}{\gutctx}{\graph''}{\kgraph}$
    and~$\vsistype{\guspctx_2}{\ectx}{\vs}{\kwvty}$.
    By induction, there exists~$\graph'''$, $\guspctx_3$, and $\gutctx_3$ such
    that~$\graph''' = \bnnf{\graph''}$,
	and $\guspctx_3$ and $\gutctx_3$ consist of only fresh generators,
    and~$\dagwf{\ectx}{\guspctx_1,\guspctx_3}{\gutctx, \gutctx_3}{\graph'''}{\kgraph}$.
    By Lemma~\ref{lem:vs-progress}, there exists~$\vs'$ such that
    $\vseval{\vs}{\vs'}$ and~$\vsistype{\guspctx_2}{\ectx}{\vs'}{\kwvty}$.
    We have~$\leftcomp{\graph'''}{\vs'} = \bnnf{\graph}$.
    By multiple applications of \rulename{OM:Gen},
	$\uspsplit{\guspctx, \guspctx_3}{\guspctx_1, \guspctx_3}{\guspctx_2}$.
    Apply~\rulename{DW:Spawn}.
  \item \rulename{DW:Touch}. 
	Then~$\vsistype{\ectx}{\gutctx}{\vs}{\kwvty}$.
    By Lemma~\ref{lem:vs-progress}, there exists~$\vs'$ such that
    $\vseval{\vs}{\vs'}$ and~$\vsistype{\ectx}{\gutctx}{\vs'}{\kwvty}$.
    We have~$\bnnf{\touchcomp{\vs}} = \touchcomp{\vs'}$.
    Apply~\rulename{DW:Touch}.
  \item \rulename{DW:New}.
    Then~$\graph = \dagnew{\hastycl{\vvar}{\vsty}}{\graph''}$
    and~$\graphkind = \kgraph$
    and~$\dagwf{\ectx}{\guspctx, \hastype{\vvar}{\vsty}}{\gutctx, \hastype{\vvar}{\vsty}}
      {\graph''}{\kgraph}$.
    By Lemma~\ref{lem:subst},
    $\dagwf{\ectx}{\guspctx, \hastype{\genseed}{\vsty}}{\gutctx, \hastype{\genseed}{\vsty}}{\graph''[\vertgen{\vsty}{\genseed}/\vvar]}
    {\kgraph}$.
    By induction, there exists~$\graph'$
    such that~$\graph' = \bnnf{\graph''[\vertgen{\vsty}{\genseed}/\vvar]}$
    and~$\dagwf{\ectx}{\guspctx, \hastype{\genseed}{\vsty}, \guspctx_0}{\gutctx, \hastype{\genseed}{\vsty}, \gutctx_0}{\graph'}{\kgraph}$.
  \item \rulename{DW:Pi}.
    Then~$\graph = \dagpi{\hastycl{\vvar_f}{\vsty_f}}{\hastycl{\vvar_t}{\vsty_t}}{\graph''}$.
	We have~$\bnnf{\graph} = \graph$.
  \item \rulename{DW:App}.
    Then~$\graph = \kwtapp{\graph''}{\vs_f}{\vs_t}$
    and~$\uspsplit{\guspctx}{\guspctx_1}{\guspctx_2}$
    and~$\dagwf{\ectx}{\guspctx_1}{\gutctx}{\graph''}
    {\dagpi{\hastycl{\vvar_f}{\vsty_f}}{\hastycl{\vvar_t}{\vsty_t}}{\kgraph}}$
    and~$\vsistype{\guspctx_2}{\ectx}{\vs_f}{\vsty_f}$
    and~$\vsistype{\ectx}{\gutctx}{\vs_f}{\vsty_f}$
    and~$\vsistype{\ectx}{\gutctx}{\vs_t}{\vsty_t}$.
    By induction, there exists~$\graph'''$, $\guspctx_3$, and $\gutctx_3$
    such that~$\graph''' = \bnnf{\graph''}$,
	and $\guspctx_3$ and $\gutctx_3$ consist of only fresh generators,
    and~$\dagwf{\ectx}{\guspctx_1,\guspctx_3}{\gutctx,\gutctx_3}{\graph'''}
    {\dagpi{\hastycl{\vvar_f}{\vsty_f}}{\hastycl{\vvar_t}{\vsty_t}}{\kgraph}}$.
	By inversion on the graph type formation rules, there are two cases:
		\begin{enumerate}
	    \item By inversion on~\rulename{DW:RecPi},
		    	$\graph''' = \dagrec{\gvar}{\dagpi{\hastycl{\vvar_f}{\vsty_f}}{\hastycl{\vvar_t}{\vsty_t}}{\graph_0}}{}$.
			We have~$\bnnf{\graph} = \graph$.
	    \item By inversion on~\rulename{DW:Pi},
				$\graph''' = \dagpi{\hastycl{\vvar_f}{\vsty_f}}{\hastycl{\vvar_t}{\vsty_t}}{\graph_0}$
				and $\dagwf{\ectx}{\guspctx_1,\guspctx_3, \hastype{\vvar_f}{\vsty_f}}
					{\gutctx,\gutctx_3, \hastype{\vvar_f}{\vsty_f}, \hastype{\vvar_t}{\vsty_t}}{\graph_0}{\kgraph}$.
		    By multiple applications of \rulename{OM:Gen}, 
		    $\uspsplit{\guspctx, \guspctx_3}{\guspctx_1, \guspctx_3}{\guspctx_2}$.
		    By Lemma~\ref{lem:subst},
		    $\dagwf{\ectx}{\guspctx, \guspctx_3}{\gutctx, \gutctx_3}{\graph_0[\vs_f,\vs_t/\vvar_f,\vvar_t]}{\kgraph}$. 
		    We have~$\bnnf{\graph} = \bnnf{\graph_0''[\vs_f,\vs_t/\vvar_f,\vvar_t]}$.
			Apply induction.
		\end{enumerate}
%%   \item \rulename{DW:Case-Vert-1}.
%%     Then~$\graph = \kwcase{\vs}{\vvar_1}{\graph_1}{\vvar_2}{\graph_2}$
%% 	    and~$\graphkind = \kgraph$
%% 	    and~$\uspsplit{\guspctx}{\guspctx_1}{\guspctx_2}$
%% 	    and~$\dagwf{\gctx}{\guspctx_1, \hastype{\vvar_1}{\vsty_1}}{\gutctx, \hastype{\vvar_1}{\vsty_1}}{\graph_1}{\kgraph}$
%% 	    and~$\dagwf{\gctx}{\guspctx_2, \hastype{\vvar_2}{\vsty_2}}{\gutctx, \hastype{\vvar_2}{\vsty_2}}{\graph_2}{\kgraph}$
%% 		and $\vsistype{\guspctx_2}{\ectx}{\vs}{\kwsum{\vsty_1}{\vsty_2}}$.
%% %    By Lemma~\ref{lem:vs-progress}
%% %    and Lemma~\ref{lem:vs-preservation},
%% %	there exists $\vs'$ such that
%% %    $\vseval{\vs}{\vs'}$
%% %	and $\vs'\normal$
%% %    and~$\vsistype{\guspctx_2}{\ectx}{\vs'}{\kwsum{\vsty_1}{\vsty_2}}$.
%%     By canonical forms, there are two cases:
%% 		\begin{enumerate}
%% 	    \item $\vs = \kwinl{\vs'}$.
%% 			We have~$\bnnf{\graph} = \bnnf{\vsub{\graph_1}{\vs'}{\vvar_1}}$.
%% 		    By inversion on \rulename{U:Subtype} and the VS subtyping rules and \rulename{U:Inl},
%% 				$\vstysubt{\kwsum{\vsty_1'}{\vsty_2'}}{\kwsum{\vsty_1}{\vsty_2}}$
%% 				and $\vstysubt{\vsty_1'}{\vsty_1}$
%% 				and $\vsistype{\guspctx_2}{\ectx}{\vs'}{\vsty_1'}$.
%% 			By \rulename{U:Subtype},
%% 				$\vsistype{\guspctx_2}{\ectx}{\vs'}{\vsty_1}$.
%% 			By Lemma \ref{lem:subst},
%% 				$\dagwf{\gctx}{\guspctx}{\gutctx}{\vsub{\graph_1}{\vs'}{\vvar_1}}{\kgraph}$.
%% 			By induction.
%% 	    \item $\vs = \kwinr{\vs'}$.
%% 			By symmetry with the previous case.
%% 		\end{enumerate}
%%   \item \rulename{DW:Case-Vert-2}. Similar to above.
  \end{itemize}
\end{proof}

Corollary \ref{lem:split-verts} proves an especially important
property of $\uspctx$ context splitting: it maintains the affine restriction
on well-typed vertices under $\guspctx$ contexts. If $\guspctx$ splits
into $\guspctx_1$ and $\guspctx_2$ (and $\guspctx$ contains at most one
mapping per generator), then the set of VSs $\vs$ where
$\vsistype{\guspctx_1}{\ectx}{\vs}{\kwvty}$ is completely distinct from
that where $\vsistype{\guspctx_2}{\ectx}{\vs}{\kwvty}$, meaning it is
impossible for a future spawned under $\guspctx_1$ to have the same
vertex as a future spawned under $\guspctx_2$.
In order to prove this, we first prove weaker lemmas that apply only to
VPs (Lemmas~\ref{lem:split-types-no-dups} and~\ref{lem:split-verts-vp}).
We then extend the results to general VSs by formally defining the relation
between VSs and VPs.
Figure~\ref{fig:vert-normal} defines the judgment~$\vs \vsnormal$
(by way of an auxiliary judgment~$\vs \vsnormalneg$), and
Lemma~\ref{lem:progress-norm} shows that a VS resulting from evaluation is
normal.
Furthermore (Lemma~\ref{lem:norm-vert-vp}), any normal VS of VS type~$\kwvty$
is a VP.
Because evaluation preserves typing and any well-formed VS under a context $\guspctx$ may be evaluated to
an equivalent VP, this completes the proof of the corollary.

\begin{figure*}
  \centering
  \def \MathparLineskip {\lineskip=0.43cm}
  \begin{mathpar}
    \Rule{VNN:Var}
         {\strut}
         {\vvar \vsnormalneg}
    \and
    \Rule{VNN:Gen}
         {\strut}
         {\vertgen{\vsty}{\genseed} \vsnormalneg}
    \and
    \Rule{VNN:Fst}
         {\vs \vsnormalneg}
         {\kwfst{\vs} \vsnormalneg}
    \and
    \Rule{VNN:Snd}
         {\vs \vsnormalneg}
         {\kwsnd{\vs} \vsnormalneg}
%    \and
%    \Rule{VNN:Unroll}
%         {\vs \vsnormalneg}
%         {\kwunroll{\vs} \vsnormalneg}
    \and
    \Rule{VN:VNN}
         {\vs \vsnormalneg}
         {\vs \vsnormal}
    \and
    \Rule{VN:Pair}
         {\vs_1 \vsnormal\\
           \vs_2 \vsnormal}
         {\kwpair{\vs_1}{\vs_2} \vsnormal}
%    \and
%    \Rule{VN:Inl}
%         {\vs \vsnormal}
%         {\kwinl{\vs} \vsnormal}
%    \and
%    \Rule{VN:Inr}
%         {\vs \vsnormal}
%         {\kwinr{\vs} \vsnormal}
%    \and
%    \Rule{VN:Roll}
%         {\vs \vsnormal}
%         {\kwvsroll{\vs} \vsnormal}
  \end{mathpar}
  \caption{Definition of normal vertex structures.}
  \label{fig:vert-normal}
\end{figure*}

\begin{lemma}\label{lem:progress-norm}
  If~$\vseval{\vs}{\vs'}$
	then $\vs'\vsnormal$.
\end{lemma}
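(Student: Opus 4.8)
The statement is that VS evaluation always produces a normal VS, i.e., if $\vseval{\vs}{\vs'}$ then $\vs' \vsnormal$. The natural approach is structural induction on the derivation of $\vseval{\vs}{\vs'}$, following the case structure of the evaluation rules in Figures~\ref{fig:vs-eval-abbr} and~\ref{fig:eval-snd-vs}. For each rule, I would invert to obtain the evaluation premises, apply the induction hypothesis to the sub-evaluations to conclude the sub-results are normal, and then reassemble using the constructors of the $\vsnormal$/$\vsnormalneg$ judgments.

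\textbf{Key cases.} For \rulename{UV:Var} the result is $\vvar$, which is $\vsnormalneg$ by \rulename{VNN:Var}, hence $\vsnormal$ by \rulename{VN:VNN}. For \rulename{UV:Path} the result is $\vertgen{\vsty}{\genseed}$, handled by \rulename{VNN:Gen} then \rulename{VN:VNN}. For \rulename{UV:Pair}, the premises give $\vseval{\vs_1}{\vs_1'}$ and $\vseval{\vs_2}{\vs_2'}$; by induction $\vs_1' \vsnormal$ and $\vs_2' \vsnormal$, so $\kwpair{\vs_1'}{\vs_2'} \vsnormal$ by \rulename{VN:Pair}. For \rulename{UV:FstPair} the result $\vs_1$ is a component of a pair $\kwpair{\vs_1}{\vs_2}$ which, by the induction hypothesis applied to the premise $\vseval{\vs}{\kwpair{\vs_1}{\vs_2}}$, is $\vsnormal$; I then need to extract that $\vs_1$ itself is $\vsnormal$ by inversion on the $\vsnormal$ derivation of the pair (it can only be derived by \rulename{VN:Pair}, since \rulename{VN:VNN} requires a $\vsnormalneg$ form and no $\vsnormalneg$ rule produces a pair). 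The symmetric \rulename{UV:SndPair} case is identical. The \rulename{UV:FstNotPair} case is the one requiring slightly more care: the premise is $\vseval{\vs}{\vs'}$ with $\vs' \neq \kwpair{\vs_1}{\vs_2}$, and the result is $\kwfst{\vs'}$. By induction $\vs' \vsnormal$; since $\vs'$ is not a pair, its $\vsnormal$ derivation must be \rulename{VN:VNN}, so $\vs' \vsnormalneg$; then $\kwfst{\vs'} \vsnormalneg$ by \rulename{VNN:Fst}, hence $\vsnormal$ by \rulename{VN:VNN}. Again \rulename{UV:SndNotPair} is symmetric, using \rulename{VNN:Snd}.

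\textbf{Main obstacle.} There is no serious obstacle here; this is a routine structural induction. The only point that requires a moment's thought is the inversion argument in the projection cases: one must observe that the grammar of normal VSs is stratified so that a pair can only be $\vsnormal$ via \rulename{VN:Pair} (giving normality of both components) and a non-pair can only be $\vsnormal$ via \rulename{VN:VNN} (giving the stronger $\vsnormalneg$ judgment, which is exactly what is needed to extend a projection). This stratification is precisely why $\vsnormalneg$ is introduced as a separate judgment, so the cases line up cleanly.
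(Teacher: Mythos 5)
Your proposal is correct and follows essentially the same route as the paper's proof: structural induction on the evaluation derivation, with the same inversion arguments in the projection cases (a pair can only be $\vsnormal$ via \rulename{VN:Pair}, and a non-pair only via \rulename{VN:VNN}, which yields the $\vsnormalneg$ judgment needed to apply \rulename{VNN:Fst}/\rulename{VNN:Snd}). No gaps.
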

\begin{proof}
  By induction on the derivation of
  $\vseval{\vs}{\vs'}$.
  \iffull
  \begin{itemize}

\item \rulename{UV:Var}.
	Then $\vs' = \vvar$.
	Apply \rulename{VNN:Var} and \rulename{VN:VNN}.

\item \rulename{UV:Path}.
	Then $\vs' = \vertgen{}{\genseed}$.
	Apply \rulename{VNN:Gen} and \rulename{VN:VNN}.

\item \rulename{UV:Pair}.
	Then $\vs = \kwpair{\vs_1}{\vs_2}$
		and $\vs' = \kwpair{\vs_1'}{\vs_2'}$
		and $\vseval{\vs_1}{\vs_1'}$
		and $\vseval{\vs_2}{\vs_2'}$.
	By induction,
		$\vs_1'\vsnormal$
		and $\vs_2'\vsnormal$.
	Apply \rulename{VN:Pair}.

\item \rulename{UV:FstPair}.
	Then $\vs = \kwfst{\vs''}$
		and $\vseval{\vs''}{\kwpair{\vs'}{\vs_0}}$.
	By induction,
		$\kwpair{\vs'}{\vs_0}\vsnormal$.
	By inversion on \rulename{VN:Pair}
		(inversion on \rulename{VN:VNN} does not apply since a pair cannot be $\vsnormalneg$).

\item \rulename{UV:SndPair}.
	Similar to previous case.

\item \rulename{UV:FstNotPair}.
	Then $\vs = \kwfst{\vs''}$
	     and $\vs' = \kwfst{\vs'''}$
		and $\vseval{\vs''}{\vs'''}$
		and $\vs''' \neq \kwpair{\vs_1}{\vs_2}$.
	By induction,
		$\vs'''\vsnormal$.
	By inversion on \rulename{VN:VNN},
		$\vs'''\vsnormalneg$.
	Apply \rulename{VNN:Fst} and \rulename{VN:VNN}.

\item \rulename{UV:SndNotPair}.
	Similar to previous case.
\end{itemize}
\fi
\end{proof}

\begin{lemma}\label{lem:norm-vert-vp}
  If~$\vsistype{\guspctx}{\ectx}{\vs}{\kwvty}$
	and either $\vs\vsnormal$ or $\vs\vsnormalneg$
	then $\vs = \vspath$.
\end{lemma}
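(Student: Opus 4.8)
The plan is to prove Lemma~\ref{lem:norm-vert-vp} by induction on the derivation of the typing judgment $\vsistype{\guspctx}{\ectx}{\vs}{\kwvty}$, carrying along the hypothesis that $\vs$ is normal (either $\vs\vsnormal$ or $\vs\vsnormalneg$). The goal is to show $\vs$ is a vertex path $\vspath$, i.e., built only from generator roots $\vertgen{\vsty}{\genseed}$ and projections $\kwfst{\cdot}$, $\kwsnd{\cdot}$ (note that, crucially, $\guspctx$ contains only generators, so there are no VS \emph{variables} to worry about).

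First I would dispense with the impossible cases. The rule \rulename{U:OmegaVar} (and \rulename{U:PsiVar}, though $\utctx = \ectx$ here) cannot apply because $\guspctx$ contains no variables; the generator rules \rulename{U:OmegaGen}/\rulename{U:PsiGen} give $\vs = \vertgen{\kwvty}{\genseed}$, which is directly a $\vspath$. The rules \rulename{U:Pair}, \rulename{U:OnlyLeftPair}, \rulename{U:OnlyRightPair} all assign $\vs$ a product VS type of the form $\vsprod{\vsty_1}{\avail_1}{\vsty_2}{\avail_2}$, which is not $\kwvty$ — so these cannot produce type $\kwvty$ unless preceded by \rulename{U:Subtype}, but by inspecting the subtyping rules in Figure~\ref{fig:vert-subtyping}, no product type is a subtype of $\kwvty$ (all subtyping rules relate products to products, or corecursive types to their unrollings, and $\kwvty$ only relates to itself via \rulename{UT:Reflexive}/\rulename{UT:Transitive}). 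Hence a pair $\kwpair{\vs_1}{\vs_2}$ can never have type $\kwvty$, which also means the case \rulename{VN:Pair} of normality is vacuous here. For \rulename{U:Subtype}, inversion on $\vstysubt{\vsty'}{\kwvty}$ forces $\vsty' = \kwvty$ (again because no non-trivial subtyping rule has $\kwvty$ on the right), so we simply appeal to the induction hypothesis.

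The remaining cases are \rulename{U:Fst} and \rulename{U:Snd}. For \rulename{U:Fst}, $\vs = \kwfst{\vs_0}$ with $\vsistype{\guspctx}{\ectx}{\vs_0}{\vsprod{\kwvty}{\isav}{\vsty_2}{\avail}}$. Since $\vs = \kwfst{\vs_0}$ is normal, inversion on the normality rules shows $\vs$ can only be normal via \rulename{VN:VNN} from \rulename{VNN:Fst}, giving $\vs_0 \vsnormalneg$; in particular $\vs_0$ is normal. Applying the induction hypothesis to $\vs_0$ — but here is the subtlety — the IH as stated produces a $\vspath$ only when the type is $\kwvty$, whereas $\vs_0$ has a product type. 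This is the main obstacle, and I expect to resolve it by strengthening the induction: instead of requiring type $\kwvty$, prove the statement for \emph{every} VS type $\vsty$, concluding that a well-typed normal/normal$^-$ VS under a generator-only context is built purely from generator roots and projections (the $\vspath$ grammar), regardless of its type. With this stronger statement, the \rulename{U:Fst} and \rulename{U:Snd} cases go through immediately by the IH plus the $\vspath$ production rules, and the original lemma is the special case $\vsty = \kwvty$. I would therefore restate and prove: \emph{if $\vsistype{\guspctx}{\ectx}{\vs}{\vsty}$ and ($\vs\vsnormal$ or $\vs\vsnormalneg$), then $\vs = \vspath$} — noting that when $\vs\vsnormalneg$ the pair case is structurally excluded, and when $\vs\vsnormal$ the pair case survives only at product types, but the pair's components are themselves normal by inversion on \rulename{VN:Pair}, so an easy secondary induction (or folding it into the same induction) handles pairs of vertex paths. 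Since the paper only needs the $\kwvty$ instance, where the pair case is vacuous as argued above, the clean argument is exactly the type-generalized induction with the observation that no product type subtypes $\kwvty$.
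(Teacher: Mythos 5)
Your proof is correct, and it is essentially the same structural argument as the paper's, with two differences worth noting. First, the paper inducts on the derivation of $\vs\vsnormal$ (or $\vs\vsnormalneg$) rather than on the typing derivation; the cases line up one-to-one with yours (generators are paths, \rulename{VNN:Fst}/\rulename{VNN:Snd} recurse on the subterm, variables are impossible under $\guspctx$, and pairs are dismissed because ``a pair cannot be given the type $\kwvty$''). Second — and this is where your version is actually more careful — the paper applies its induction hypothesis to the subterm $\vs'$ of $\kwfst{\vs'}$ even though $\vs'$ has a product type, not $\kwvty$, which is exactly the type mismatch you flag; the paper glosses over it, while you resolve it explicitly by generalizing the statement to arbitrary $\vsty$ and observing that the pair case only survives at product types (and then decomposes by inversion on \rulename{VN:Pair}). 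Your observation that no product type is a subtype of $\kwvty$ is the same fact the paper invokes informally to kill the pair case and to make \rulename{U:Subtype} harmless. So the two proofs buy the same thing; yours makes the induction hypothesis honest at the cost of proving a slightly stronger statement.
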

\begin{proof}
  By induction on the derivation of
  $\vs\vsnormal$ or $\vs\vsnormalneg$.
  \iffull
  \begin{itemize}

\item \rulename{VNN:Var}. 
	This case does not apply since it is not possible to type a variable under $\guspctx$.

\item \rulename{VNN:Gen}. 
	Then $\vs = \vertgen{}{\genseed}$.

\item \rulename{VNN:Fst}.
	Then $\vs = \kwfst{\vs'}$
		and $\vs'\vsnormalneg$.
	By induction,
		$\vs' = \vspath'$.

\item \rulename{VNN:Snd}.
	Similar to the previous case.

\item \rulename{VN:VNN}.
	Apply induction.

\item \rulename{VN:Pair}.
	By inversion on the VS typing rules,
		a pair cannot be given the type $\kwvty$,
		so this case does not apply.

\end{itemize}
\fi
\end{proof}

\begin{lemma}\label{lem:one-gen}
For every~$\vspath$, there exists a~$\genseed$ such that
if~$\vsistype{\guspctx}{\ectx}{\vspath}{\vsty}$, then
there exists a~$\vsty'$ such
that~$\hastycl{\genseed}{\vsty'} \in \guspctx$
and~$\vsistype{\hastycl{\genseed}{\vsty'}}{\ectx}{\vspath}{\vsty}$.
\end{lemma}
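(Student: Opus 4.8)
The plan is to prove Lemma~\ref{lem:one-gen} by structural induction on the vertex path~$\vspath$, which (per Figure~\ref{fig:ext-syntax}) is either a root~$\vertgen{\vsty''}{\genseed}$ or a projection~$\kwfst{\vspath'}$ or~$\kwsnd{\vspath'}$ of a smaller vertex path. The generator~$\genseed$ that witnesses the lemma will simply be ``the'' generator appearing at the root of~$\vspath$: in the base case it is the $\genseed$ in~$\vertgen{\vsty''}{\genseed}$, and in the inductive cases it is whatever generator the induction hypothesis supplies for~$\vspath'$. The content of the lemma is that a closed normal VS of VS type~$\vsty$ mentions exactly one generator, and that its type can already be derived from just that one binding in~$\guspctx$.

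For the base case $\vspath = \vertgen{\vsty''}{\genseed}$, I would invert the VS typing derivation of~$\vsistype{\guspctx}{\ectx}{\vspath}{\vsty}$. The only rules that can conclude with a generator at the root are \rulename{U:OmegaGen} (from Figure~\ref{fig:generator-additions}) and \rulename{U:Subtype}; note \rulename{U:PsiGen} cannot fire since the $\utctx$ context is empty here. By induction on the structure of that derivation, stripping off any \rulename{U:Subtype} steps, one obtains $\hastycl{\genseed}{\vsty'} \in \guspctx$ with $\vstysubt{\vsty'}{\vsty''}$ for the original~$\vsty''$, and then $\vstysubt{\vsty'}{\vsty}$ by transitivity with the accumulated subtyping. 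Re-deriving via \rulename{U:OmegaGen} with just the singleton context $\hastycl{\genseed}{\vsty'}$ and one application of \rulename{U:Subtype} gives $\vsistype{\hastycl{\genseed}{\vsty'}}{\ectx}{\vspath}{\vsty}$, as required.

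For the inductive case $\vspath = \kwfst{\vspath'}$ (and symmetrically $\kwsnd{\vspath'}$), I would again invert the typing derivation. The rules whose conclusion has the form $\kwfst{\cdot}$ are \rulename{U:Fst} and \rulename{U:Subtype}; \rulename{U:FstPair}-style equivalence rules are not typing rules so they do not apply. Peeling off the \rulename{U:Subtype} steps and then applying \rulename{U:Fst}, we get $\vsistype{\guspctx}{\ectx}{\vspath'}{\vsprod{\vsty_1}{\isav}{\vsty_2}{\avail}}$ with $\vstysubt{\vsty_1}{\vsty}$. Applying the induction hypothesis to~$\vspath'$ yields the witnessing generator~$\genseed$, a $\vsty'$ with $\hastycl{\genseed}{\vsty'}\in\guspctx$, and a derivation $\vsistype{\hastycl{\genseed}{\vsty'}}{\ectx}{\vspath'}{\vsprod{\vsty_1}{\isav}{\vsty_2}{\avail}}$; then \rulename{U:Fst} gives $\vsistype{\hastycl{\genseed}{\vsty'}}{\ectx}{\kwfst{\vspath'}}{\vsty_1}$ and \rulename{U:Subtype} upgrades the result type to~$\vsty$. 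Since the witnessing $\genseed$ is determined by $\vspath'$ (hence by $\vspath$) before seeing $\guspctx$, the quantifier order ``for every $\vspath$ there exists $\genseed$ such that for all typing derivations\ldots'' is respected.

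The main obstacle I expect is bookkeeping around \rulename{U:Subtype}: it can be interleaved anywhere in the derivation, so the cleanest route is to phrase an auxiliary observation (or just a sub-induction within each case) stating that any derivation of $\vsistype{\guspctx}{\ectx}{\vspath}{\vsty}$ with $\vspath$ a vertex path can be put in the canonical shape of a non-subtyping rule followed by a chain of \rulename{U:Subtype} applications, using \rulename{UT:Transitive} to collapse the chain. A secondary subtlety is ensuring the induction hypothesis is stated with $\genseed$ existentially quantified \emph{outside} the universal over the typing derivation so that the same $\genseed$ works for every derivation of every result type; this is already how the lemma is phrased, so one just has to invoke it carefully. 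Everything else is routine inversion, and the $\kwsnd{}$ case is handled by literal symmetry with the $\kwfst{}$ case.
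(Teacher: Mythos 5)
Your proposal is correct and follows essentially the same route as the paper's proof: structural induction on~$\vspath$, choosing the root generator as the witness, inverting through \rulename{U:Subtype} down to \rulename{U:OmegaGen} (base case) or \rulename{U:Fst}/\rulename{U:Snd} (inductive cases), and re-deriving under the singleton context with one final \rulename{U:Subtype}. Your explicit attention to collapsing interleaved \rulename{U:Subtype} steps and to the quantifier order of~$\genseed$ makes precise two points the paper's proof leaves implicit.
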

\begin{proof}
By induction on the structure of $\vspath$.
\begin{itemize}
	\item $\vspath = \vertgen{}{\genseed'}$. 
	Let $\genseed$ = $\genseed'$.
	If $\vsistype{\guspctx}{\ectx}{\vertgen{}{\genseed}}{\vsty}$,
		then by inversion on \rulename{U:Subtype} and \rulename{U:OmegaGen},
		$\hastycl{\genseed}{\vsty'} \in \guspctx$
		and $\vstysubt{\vsty'}{\vsty}$.
	Apply \rulename{U:OmegaGen} and \rulename{U:Subtype}.

	\item $\vspath = \kwfst{\vspath'}$.
	By induction, there exists a $\genseed$ such that
		if~$\vsistype{\guspctx}{\ectx}{\vspath'}{\vsty''}$, then
		there exists a~$\vsty'$ such
		that~$\hastycl{\genseed}{\vsty'} \in \guspctx$
		and~$\vsistype{\hastycl{\genseed}{\vsty'}}{\ectx}{\vspath'}{\vsty''}$.
	If $\vsistype{\guspctx}{\ectx}{\kwfst{\vspath'}}{\vsty}$,
		then by inversion on \rulename{U:Subtype} and \rulename{U:Fst},
		$\vsistype{\guspctx}{\ectx}{\vspath'}{\vsprod{\vsty_1}{\isav}{\vsty_2}{\avail}}$
		and $\vstysubt{\vsty_1}{\vsty}$.
	Therefore,
		there exists a~$\vsty'$ such
		that~$\hastycl{\genseed}{\vsty'} \in \guspctx$
		and~$\vsistype{\hastycl{\genseed}{\vsty'}}{\ectx}{\vspath'}{\vsprod{\vsty_1}{\isav}{\vsty_2}{\avail}}$.
	Apply \rulename{U:Fst} and \rulename{U:Subtype}.

	\item $\vspath = \kwfst{\vspath'}$.
		Similar to the previous case.
\end{itemize}
\end{proof}

%% \begin{lemma}{\label{lem:split-types-diamond}
%% If~$\vstysplit{\vsty}{\vsty_1}{\vsty_2}$
%% and~$\vstysubt{\vsty_1}{\vsty}$ and~$\vstysubt{\vsty_2}{\vsty}$
%% then~$\vsty$ is empty.
%% \end{lemma}
%% By induction on the derivation of~$\vstysplit{\vsty}{\vsty_1}{\vsty_2}$.
%% \begin{itemize}
%% \item \rulename{US:Prod}.
%% Then~$\vsty_1 = \vsprod{\vsty_a}{\isav}{\vsty_b}{\isunav}$
%% and~$\vsty_2 = \vsprod{\vsty_a'}{\isunav}{\vsty_b'}{\isav}$.
%% Proceed by inversion on~$\vstysubt{\vsty_1}{\vsty}$
%% and~$\vstysubt{\vsty_2}{\vsty}$
%% \begin{itemize}
%% \item \rulename{UT:ProdLeft}, \rulename{UT:ProdLeft}.
%% Then~$\vsty = \vsprod{\vsty_c}{\isunav}{\vsty_d}{\avail}$
%% where~$\avail = \isunav = \isav$, a contradiction.
%% \item \rulename{UT:ProdRight}, \rulename{UT:ProdRight}. Symmetric to above.
%% \item \rulename{UT:ProdLeft}, \rulename{UT:ProdRight}.
%% Then~$\vsty$ is empty.
%% \item \rulename{UT:ProdRight}, \rulename{UT:ProdLeft}.
%% Then~$\vsty$ is empty.
%% \item \rulename{UT:Prod}, \rulename{UT:ProdLeft}.
%% Then~$\vsty = \vsprod{\vsty_c}{\isav}{\vsty_d}{\isunav}$
%% and~$\isav = \isunav$.
%% \item \rulename{UT:Prod}, \rulename{UT:ProdRight}.
%% Then~$\vsty = \vsprod{\vsty_c}{\isav}{\vsty_d}{\isunav}$
%% and~$\isav = \isunav$.
%% \item \rulename{UT:ProdLeft}, \rulename{UT:Prod}. Symmetric
%% \item \rulename{UT:ProdRight}, \rulename{UT:Prod}. Symmetric.
%% \item \rulename{UT:Prod}, \rulename{UT:Prod}.
%% This is a contradiction because it requires~$\isav = \isunav$.
%% \end{itemize}
%% \end{itemize}
%% \end{proof}

\begin{lemma}\label{lem:split-types-no-dups}
If~$\vstysplit{\vsty}{\vsty_1}{\vsty_2}$,
then there is no~$\vspath$ such that
$\vsistype{\hastycl{\genseed}{\vsty_1}}{\ectx}{\vspath}{\kwvty}$ and
$\vsistype{\hastycl{\genseed}{\vsty_2}}{\ectx}{\vspath}{\kwvty}$.
\end{lemma}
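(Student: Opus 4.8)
The plan is to read \emph{availability} of a vertex path geometrically. A vertex path $\vspath$ rooted at $\genseed$ is a string of $\kwfst{}/\kwsnd{}$ projections applied to $\vertgen{\vsty_1}{\genseed}$, and $\vsistype{\hastycl{\genseed}{\vsty_1}}{\ectx}{\vspath}{\kwvty}$ holds exactly when, starting from $\vsty_1$, each successive projection can be taken through an \emph{available} component of a (sub)product and the string bottoms out at a position of type $\kwvty$. Splitting $\vsty$ into $\vsty_1$ and $\vsty_2$ should distribute the available $\kwvty$-positions so that no position is reachable in both. I would first record two auxiliary facts by inspection of Figures~\ref{fig:vert-typing}, \ref{fig:generator-additions} and~\ref{fig:vert-subtyping}: (i) no product type is a subtype of $\kwvty$ --- proved by a short induction on $\vstysubt{\sigma}{\kwvty}$, strengthened to ``$\sigma$ is $\kwvty$ or a corecursive type whose unrolling is a subtype of $\kwvty$'' so that the \rulename{UT:Transitive} case closes; and (ii) subtyping only removes availability --- if $\vstysubt{\vsty'}{\vsty}$ and $\vsistype{\hastycl{\genseed}{\vsty}}{\ectx}{\vspath}{\kwvty}$, then $\vsistype{\hastycl{\genseed}{\vsty'}}{\ectx}{\vspath}{\kwvty}$ --- proved by induction on the subtyping derivation, using that \rulename{UT:ProdLeft} and \rulename{UT:ProdRight} turn $\isav$ into $\isunav$ but never the reverse. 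To keep the path-inversion arguments clean I would bake ``up to subtyping'' directly into the availability predicate, so that \rulename{U:Subtype} is absorbed throughout.

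Then I would prove, by induction on the derivation of $\vstysplit{\vsty}{\vsty_1}{\vsty_2}$ and generalizing over all $\vspath$, that $\vspath$ cannot be typed at $\kwvty$ under both $\hastycl{\genseed}{\vsty_1}$ and $\hastycl{\genseed}{\vsty_2}$. Base observation: since $\vsty_1$ and $\vsty_2$ are, by Lemma~\ref{lem:vsty-split-subtypes}, supertypes of $\vsty$, and $\vsty$ is a product or corecursive type by part~(\ref{lem:split-vsty-form}) of Lemma~\ref{lem:split-form}, fact~(i) shows neither $\vsty_1$ nor $\vsty_2$ is a subtype of $\kwvty$, so any $\vsty_i$-typeable path has at least one projection; hence $\vspath = \kwfst{\vspath'}$ or $\kwsnd{\vspath'}$. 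For the inductive cases I use the canonical forms of Lemma~\ref{lem:split-form}: in \rulename{US:Prod} the left component is unavailable in one result and the right in the other, so a common path is already impossible at its first projection; in \rulename{US:SplitLeft}/\rulename{US:SplitRight} the unavailable side rules out one direction outright (here the inversion facts of Figure~\ref{fig:vert-typing} together with (ii) ensure the head projection really forces the remaining path into a fixed component), while the other direction reduces, via the inner splitting $\vstysplit{\vsty_a}{\vsty_a'}{\vsty_a''}$, to the induction hypothesis on $\vspath'$; \rulename{US:SplitBoth} is the same with the recursion available on whichever side $\vspath'$ descends into; \rulename{US:Corecursive} reduces to the induction hypothesis on the unrolling, using part~(\ref{lem:split-corec}) of Lemma~\ref{lem:vsty-split-supertyping}; \rulename{US:Subtype} is discharged by fact~(ii); and \rulename{US:Commutative} is immediate.

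The main obstacle I anticipate is the interaction of subtyping --- in particular corecursive roll/unroll and the availability-weakening of \rulename{UT:ProdLeft}/\rulename{UT:ProdRight} --- with the path-inversion reasoning, which is exactly why I would fold subtyping into the availability predicate rather than invert typing derivations ad hoc. A secondary wrinkle, internal to fact~(i), is the corecursive subcase: turning ``$\vstysubt{\sigma}{\kwvty}$'' into concrete structural information about $\sigma$ when $\sigma$ is corecursive requires chasing \rulename{UT:Corec1}/\rulename{UT:Corec2} through \rulename{UT:Transitive}, and termination of that chase relies on the corecursion in VS types being guarded (each $\mu$ unrolls to a product), which holds for the VS types arising in this development; if one wanted a fully syntactic argument, an alternative is to phrase (i) coinductively or to carry a size measure that counts guarded unrollings.
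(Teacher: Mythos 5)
Your proposal is correct and follows essentially the same route as the paper's proof: induction on the splitting derivation with a case analysis on the first projection of $\vspath$, stripping that projection to apply the induction hypothesis in the \rulename{US:SplitLeft}/\rulename{US:SplitRight}/\rulename{US:SplitBoth} cases, reducing \rulename{US:Corecursive} to the unrolling, and pushing the typing down from $\vsty_1$ to $\vsty_1'$ in the \rulename{US:Subtype} case (your fact (ii) is exactly the paper's inlined inversion-plus-transitivity argument there). The only difference is that you state explicitly two facts the paper uses silently --- that no product type is a subtype of $\kwvty$ and that subtyping never restores availability --- which is a reasonable tightening rather than a change of approach.
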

\begin{proof}
By induction on the derivation of~$\vstysplit{\vsty}{\vsty_1}{\vsty_2}$.
\begin{itemize}
\item \rulename{US:Prod}.
Then~$\vsty = \vsprod{\vsty_a}{\isav}{\vsty_b}{\isav}$
and~$\vsty_1 = \vsprod{\vsty_a}{\isav}{\vsty_b}{\isunav}$
and~$\vsty_2 = \vsprod{\vsty_a}{\isunav}{\vsty_b}{\isav}$.
Consider possible prefixes of~$\vspath$.
\begin{itemize}
\item $\vspath = \vertgen{}{\genseed}$. This would not have the type~$\kwvty$.
\item $\vspath$ begins with~$\kwfst{\vertgen{}{\genseed}}$.
This is not possible because it is not possible to derive
$\vsistype{\hastycl{\genseed}{\vsty_2}}{\ectx}{\vspath}{\kwvty}$.
\item $\vspath$ begins with~$\kwsnd{\vertgen{}{\genseed}}$.
Symmetric to above.
\end{itemize}

\item \rulename{US:SplitBoth}.
Then~$\vsty = \vsprod{\vsty_a}{\isav}{\vsty_b}{\isav}$
and~$\vsty_1 = \vsprod{\vsty_a'}{\isav}{\vsty_b'}{\isav}$
and~$\vsty_2 = \vsprod{\vsty_a''}{\isav}{\vsty_b''}{\isav}$
where~$\vstysplit{\vsty_a}{\vsty_a'}{\vsty_a''}$
and~$\vstysplit{\vsty_b}{\vsty_b'}{\vsty_b''}$.
Consider possible prefixes of~$\vspath$.
\begin{itemize}
\item $\vspath = \vertgen{}{\genseed}$. This would not have the type~$\kwvty$.
\item $\vspath$ begins with $\kwfst{\vertgen{}{\genseed}}$.
We have~$\vsistype{\hastycl{\genseed}{\vsty_1}}{\ectx}{\kwfst{\vertgen{}{\genseed}}}{\vsty_a'}$
and~$\vsistype{\hastycl{\genseed}{\vsty_2}}{\ectx}{\kwfst{\vertgen{}{\genseed}}}{\vsty_a''}$.
Replacing~$\kwfst{\vertgen{}{\genseed}}$ in~$\vspath$
with~$\vertgen{}{\genseed}$ gives us derivations
$\vsistype{\hastycl{\genseed}{\vsty_a'}}{\ectx}{\vspath}{\kwvty}$
and~$\vsistype{\hastycl{\genseed}{\vsty_a''}}{\ectx}{\vspath}{\kwvty}$,
which is a contradiction by induction.
\item $\vspath$ begins with~$\kwsnd{\vertgen{}{\genseed}}$.
Symmetric to above.
\end{itemize}

\item \rulename{US:SplitLeft}.
Then~$\vsty = \vsprod{\vsty_a}{\isav}{\vsty_b}{\avail}$
and~$\vsty_1 = \vsprod{\vsty_a'}{\isav}{\vsty_b}{\avail}$
and~$\vsty_2 = \vsprod{\vsty_a''}{\isav}{\vsty_b}{\isunav}$
where~$\vstysplit{\vsty_a}{\vsty_a'}{\vsty_a''}$.
Consider possible prefixes of~$\vspath$.
\begin{itemize}
\item $\vspath = \vertgen{}{\genseed}$. This would not have the type~$\kwvty$.
\item $\vspath$ begins with $\kwfst{\vertgen{}{\genseed}}$.
We have~$\vsistype{\hastycl{\genseed}{\vsty_1}}{\ectx}{\kwfst{\vertgen{}{\genseed}}}{\vsty_a'}$
and~$\vsistype{\hastycl{\genseed}{\vsty_2}}{\ectx}{\kwfst{\vertgen{}{\genseed}}}{\vsty_a''}$.
Replacing~$\kwfst{\vertgen{}{\genseed}}$ in~$\vspath$
with~$\vertgen{}{\genseed}$ gives us derivations
$\vsistype{\hastycl{\genseed}{\vsty_a'}}{\ectx}{\vspath}{\kwvty}$
and~$\vsistype{\hastycl{\genseed}{\vsty_a''}}{\ectx}{\vspath}{\kwvty}$,
which is a contradiction by induction.
\item $\vspath$ begins with~$\kwsnd{\vertgen{}{\genseed}}$.
This is not possible because it is not possible to derive
$\vsistype{\hastycl{\genseed}{\vsty_1}}{\ectx}{\vspath}{\kwvty}$.
\end{itemize}

\item \rulename{US:SplitRight}. Symmetric to above.
\item \rulename{US:Corecursive}. By induction.

\item \rulename{US:Subtype}. 
	Then $\vstysplit{\vsty}{\vsty_1'}{\vsty_2}$
		and $\vstysubt{\vsty_1'}{\vsty_1}$.
	If a VP existed such that
		$\vsistype{\hastycl{\genseed}{\vsty_1}}{\ectx}{\vspath}{\kwvty}$
		and $\vsistype{\hastycl{\genseed}{\vsty_2}}{\ectx}{\vspath}{\kwvty}$,
		then $\vsistype{\hastycl{\genseed}{\vsty_1}}{\ectx}{\vertgen{}{\genseed}}{\vsty_1''}$ must be true
		in order to derive $\vsistype{\hastycl{\genseed}{\vsty_1}}{\ectx}{\vspath}{\kwvty}$.
		By inversion, $\vstysubt{\vsty_1}{\vsty_1''}$, 
		so $\vstysubt{\vsty_1'}{\vsty_1''}$ by \rulename{UT:Transitive}.
		By \rulename{U:Subtype}, $\vsistype{\hastycl{\genseed}{\vsty_1'}}{\ectx}{\vertgen{}{\genseed}}{\vsty_1''}$,
		so $\vsistype{\hastycl{\genseed}{\vsty_1'}}{\ectx}{\vspath}{\kwvty}$,
		which is a contradiction by induction.

\item \rulename{US:Commutative}. By induction.
\end{itemize}
\end{proof}

\begin{lemma}\label{lem:split-verts-vp}
If~$\uspsplit{\guspctx}{\guspctx_1}{\guspctx_2}$
%and~$\guspctx$ contains no duplicate generators,
then there is no~$\vspath$ such that
$\vsistype{\guspctx_1}{\ectx}{\vspath}{\kwvty}$ and
$\vsistype{\guspctx_2}{\ectx}{\vspath}{\kwvty}$.
\end{lemma}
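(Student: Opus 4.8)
The plan is to prove Lemma~\ref{lem:split-verts-vp} by induction on the derivation of $\uspsplit{\guspctx}{\guspctx_1}{\guspctx_2}$, using the already-established Lemma~\ref{lem:one-gen} to localize any offending vertex path to a single generator binding, and then Lemma~\ref{lem:split-types-no-dups} to derive the contradiction in the one nontrivial case. First I would suppose, for contradiction, that some $\vspath$ satisfies both $\vsistype{\guspctx_1}{\ectx}{\vspath}{\kwvty}$ and $\vsistype{\guspctx_2}{\ectx}{\vspath}{\kwvty}$. By Lemma~\ref{lem:one-gen} there is a single generator $\genseed$ such that whenever $\vspath$ is typed under a generator context, it is already typed under the restriction of that context to $\hastycl{\genseed}{\cdot}$. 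Hence $\vsistype{\hastycl{\genseed}{\vsty_1}}{\ectx}{\vspath}{\kwvty}$ and $\vsistype{\hastycl{\genseed}{\vsty_2}}{\ectx}{\vspath}{\kwvty}$ for the types $\vsty_1, \vsty_2$ that $\genseed$ is bound to in $\guspctx_1, \guspctx_2$ respectively (one of these bindings may be absent, in which case there is no contradiction-producing path at all, since a path typed under $\hastycl{\genseed}{\cdot}$ must use $\genseed$; I would make this "$\genseed$ must occur" observation explicit, deriving it by a small induction on $\vspath$ from the typing rules for vertex paths).

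Then I would proceed by cases on the last rule of the splitting derivation. For \rulename{OM:Empty} both contexts are empty, so no $\vspath$ can be typed under either, contradiction. For \rulename{OM:Commutative} the claim is immediate by the induction hypothesis (swapping $\guspctx_1$ and $\guspctx_2$). For \rulename{OM:Var}, the split binding $\hastype{\vvar'}{\vsty''}$ (or generator binding) appears verbatim in $\guspctx_1$ but not at all in $\guspctx_2$; since the single generator $\genseed$ associated with $\vspath$ must be bound in both $\guspctx_1$ and $\guspctx_2$ for the two typing judgments to hold, $\genseed$ must be the generator of the smaller split contexts $\uspctx_1, \uspctx_2$ (note \rulename{OM:Var} only moves variable bindings around; the generator case \rulename{OM:Gen} is symmetric), and the induction hypothesis on $\uspsplit{\uspctx}{\uspctx_1}{\uspctx_2}$ gives the contradiction. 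The one genuinely interesting case is \rulename{OM:VarTypeSplit} (and its generator twin \rulename{OM:GenTypeSplit}): here $\guspctx = \uspctx, \hastype{\genseed'}{\vsty'''}$, $\guspctx_1 = \uspctx_1, \hastype{\genseed'}{\vsty_1'''}$, $\guspctx_2 = \uspctx_2, \hastype{\genseed'}{\vsty_2'''}$ with $\vstysplit{\vsty'''}{\vsty_1'''}{\vsty_2'''}$ and $\uspsplit{\uspctx}{\uspctx_1}{\uspctx_2}$. Now I would split on whether the generator $\genseed$ of $\vspath$ is $\genseed'$ or occurs in $\uspctx_1, \uspctx_2$: in the former case, $\vsistype{\hastycl{\genseed'}{\vsty_1'''}}{\ectx}{\vspath}{\kwvty}$ and $\vsistype{\hastycl{\genseed'}{\vsty_2'''}}{\ectx}{\vspath}{\kwvty}$ directly contradict Lemma~\ref{lem:split-types-no-dups}; in the latter case, the induction hypothesis applies.

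The main obstacle I anticipate is the bookkeeping around "which generator does $\vspath$ actually use," since the lemma statement and \rulename{OM:VarTypeSplit} both allow a generator to appear with different (split) types in the two subcontexts while other bindings move wholesale. I would handle this cleanly by first proving the auxiliary fact that a vertex path typed under a context containing only generator bindings must have, at its root, exactly the generator that Lemma~\ref{lem:one-gen} picks out, and that this generator's binding is essential — removing it makes the path untypable. With that in hand, each inductive case reduces to either an immediate appeal to Lemma~\ref{lem:split-types-no-dups} (when the essential generator is precisely the one whose type was split) or to the induction hypothesis (when it lies in the residual split). A final remark: Corollary~\ref{lem:split-verts} then follows by combining this lemma with Lemma~\ref{lem:vs-progress} and Lemma~\ref{lem:vs-eval-transitive} — any VS well-typed under a $\guspctx$ evaluates to an equivalent, normal VS, which by Lemma~\ref{lem:progress-norm} and Lemma~\ref{lem:norm-vert-vp} is a vertex path — so the VP-only statement proved here suffices.
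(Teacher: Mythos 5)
Your proof is correct and lands on the same two pillars as the paper's: Lemma~\ref{lem:one-gen} to localize the offending $\vspath$ to a single generator binding in each context, and Lemma~\ref{lem:split-types-no-dups} for the final contradiction. The one real difference is in the middle: you establish that the two types carried by that generator in $\guspctx_1$ and $\guspctx_2$ arise from a split of a common type by an explicit induction on the derivation of $\uspsplit{\guspctx}{\guspctx_1}{\guspctx_2}$, whereas the paper gets this in one step by combining Lemma~\ref{lem:usp-split-upstream} (the generator must be bound in $\guspctx$) with the canonical-forms Lemma~\ref{lem:split-form} (item on splits of $\uspctx, \hastype{\genseed}{\vsty}$: either the binding goes wholesale to one side, which is excluded since the generator appears in both halves, or its type is split, modulo \rulename{US:Commutative}). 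Your induction essentially re-proves that canonical-forms fact inline, which costs you the bookkeeping you correctly anticipate (the \rulename{OM:Commutative} and \rulename{OM:Var}/\rulename{OM:Gen} cases, plus the weakening step needed to push typability under the singleton $\hastycl{\genseed}{\vsty_i}$ back up to the residual contexts $\uspctx_i$ before invoking the induction hypothesis); the paper's version buys a four-line proof by reusing lemmas it already paid for. Both are sound, and your closing remark about how Corollary~\ref{lem:split-verts} follows matches the paper exactly.
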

\begin{proof}
Suppose that
$\vsistype{\guspctx_1}{\ectx}{\vspath}{\kwvty}$ and
$\vsistype{\guspctx_2}{\ectx}{\vspath}{\kwvty}$.
By Lemma~\ref{lem:one-gen}, there exists~$\genseed$ such that
$\hastycl{\genseed}{\vsty_1} \in \guspctx_1$
and~$\hastycl{\genseed}{\vsty_2} \in \guspctx_2$
and~$\vsistype{\hastycl{\genseed}{\vsty_1}}{\ectx}{\vspath}{\kwvty}$ and
$\vsistype{\hastycl{\genseed}{\vsty_2}}{\ectx}{\vspath}{\kwvty}$.
By Lemmas~\ref{lem:split-form} and~\ref{lem:usp-split-upstream},
there exists~$\vsty$
such that~$\vstysplit{\vsty}{\vsty_1}{\vsty_2}$
(one case requires an application of \rulename{US:Commutative}).
This contradicts Lemma~\ref{lem:split-types-no-dups}.
\end{proof}

%\begin{lemma}\label{lem:progress-to-vspath}
%  If~$\vsistype{\guspctx}{\ectx}{\vs}{\kwvty}$
%	and $\vseval{\vs}{\vs'}$
%	then there exists a $\vspath$ such that $\vs' = \vspath$.
%\end{lemma}
%\begin{proof}
%  By induction on the derivation of
%  $\vseval{\vs}{\vs'}$.
%  \iffull
%  \begin{itemize}
%
%\item \rulename{UV:Var}.
%	By inversion, this case does not apply since $\vvar$ cannot be typed.
%
%\item \rulename{UV:Path}.
%	Then $\vs' = \vertgen{}{\genseed}$.
%
%\item \rulename{UV:Pair}.
%	By inversion, this case does not apply since $\kwpair{\vs_1}{\vs_2}$ cannot be typed with VS type $\kwvty$.
%
%\item \rulename{UV:FstPair}.
%	Then $\vs = \kwfst{\vs''}$
%		and $\vseval{\vs''}{\kwpair{\vs'}{\vs_0}}$.
%	By inversion on \rulename{U:Fst} and \rulename{U:Subtype},
%		$\vsistype{\guspctx}{\ectx}{\vs''}{\vsprod{\vsty_1}{\isav}{\vsty_2}{\avail}}$
%		and $\vstysubt{\vsty_1}{\kwvty}$.
%
%\item \rulename{UV:SndPair}.
%	Similar to previous case.
%
%\item \rulename{UV:FstNotPair}.
%
%\item \rulename{UV:SndNotPair}.
%	Similar to previous case.
%\end{itemize}
%\fi
%\end{proof}

\begin{corollary}\label{lem:split-verts}
If~$\uspsplit{\guspctx}{\guspctx_1}{\guspctx_2}$
	  and~$\vsistype{\guspctx_1}{\ectx}{\vs}{\kwvty}$
	  and~$\vsistype{\guspctx_2}{\ectx}{\vs}{\kwvty}$
	  this contradicts our assumption that the domain of $\guspctx$ contains no duplicates.
\end{corollary}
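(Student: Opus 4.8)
The plan is to derive a contradiction directly from Lemma~\ref{lem:split-verts-vp}, which is exactly the version of this statement restricted to vertex paths. The only gap between that lemma and Corollary~\ref{lem:split-verts} is that the latter speaks of an arbitrary well-typed vertex structure $\vs$, not a vertex path. So the first step is to normalize $\vs$: since $\vsistype{\guspctx_1}{\ectx}{\vs}{\kwvty}$, by Lemma~\ref{lem:vs-progress} there is a $\vs'$ with $\vseval{\vs}{\vs'}$ and $\vsistype{\guspctx_1}{\ectx}{\vs'}{\kwvty}$. Normalization is deterministic (it is a total function on well-formed VSs), so applying Lemma~\ref{lem:vs-progress} again to $\vsistype{\guspctx_2}{\ectx}{\vs}{\kwvty}$ gives the \emph{same} $\vs'$ with $\vsistype{\guspctx_2}{\ectx}{\vs'}{\kwvty}$.

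Next I would show $\vs'$ is actually a vertex path. By Lemma~\ref{lem:progress-norm}, $\vs'$ is normal, i.e.\ $\vs' \vsnormal$. Since $\vsistype{\guspctx_1}{\ectx}{\vs'}{\kwvty}$ and $\guspctx_1$ is a generator-only context (it is a sub-context obtained by splitting $\guspctx$, which by hypothesis contains only generators, and splitting cannot introduce variables by Lemma~\ref{lem:usp-split-upstream}), Lemma~\ref{lem:norm-vert-vp} applies and yields $\vs' = \vspath$ for some vertex path $\vspath$. Thus we have $\vsistype{\guspctx_1}{\ectx}{\vspath}{\kwvty}$ and $\vsistype{\guspctx_2}{\ectx}{\vspath}{\kwvty}$ with $\uspsplit{\guspctx}{\guspctx_1}{\guspctx_2}$, which directly contradicts Lemma~\ref{lem:split-verts-vp}. (The phrasing of the corollary's conclusion --- ``this contradicts our assumption that the domain of $\guspctx$ contains no duplicates'' --- is slightly loose; the real content is that such a situation is impossible, which is what Lemma~\ref{lem:split-verts-vp} asserts outright, since $\uspsplit{\guspctx}{\guspctx_1}{\guspctx_2}$ already carries the no-duplicates precondition through the splitting rules.)

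I do not expect any genuine obstacle here: this is purely a ``lift from vertex paths to vertex structures'' argument, and every ingredient --- preservation and determinism of normalization (Lemma~\ref{lem:vs-progress}), normality of the result (Lemma~\ref{lem:progress-norm}), the normal-form-is-a-VP fact under generator contexts (Lemma~\ref{lem:norm-vert-vp}), and the core no-duplicates lemma for VPs (Lemma~\ref{lem:split-verts-vp}) --- is already established earlier in the development. The only point requiring a word of care is the determinism of $\vseval{\cdot}{\cdot}$, so that the $\vs'$ obtained from the two typing derivations is literally the same term; this is immediate from inspection of the evaluation rules in Figures~\ref{fig:vs-eval-abbr} and~\ref{fig:eval-snd-vs}, which are syntax-directed and mutually exclusive.

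\begin{proof}
  Suppose, for contradiction, that $\uspsplit{\guspctx}{\guspctx_1}{\guspctx_2}$, $\vsistype{\guspctx_1}{\ectx}{\vs}{\kwvty}$, and $\vsistype{\guspctx_2}{\ectx}{\vs}{\kwvty}$, where the domain of $\guspctx$ contains no duplicate generators.
  By Lemma~\ref{lem:vs-progress} applied to $\vsistype{\guspctx_1}{\ectx}{\vs}{\kwvty}$, there is a $\vs'$ such that $\vseval{\vs}{\vs'}$ and $\vsistype{\guspctx_1}{\ectx}{\vs'}{\kwvty}$.
  Since $\vseval{\cdot}{\cdot}$ is deterministic (the rules of Figures~\ref{fig:vs-eval-abbr} and~\ref{fig:eval-snd-vs} are syntax-directed and mutually exclusive), the $\vs'$ obtained from Lemma~\ref{lem:vs-progress} applied to $\vsistype{\guspctx_2}{\ectx}{\vs}{\kwvty}$ is the same term, so $\vsistype{\guspctx_2}{\ectx}{\vs'}{\kwvty}$ as well.
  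By Lemma~\ref{lem:progress-norm}, $\vs' \vsnormal$.
  By Lemma~\ref{lem:usp-split-upstream}, splitting $\guspctx$ introduces no new domain elements, so $\guspctx_1$ and $\guspctx_2$ contain only generators.
  Hence by Lemma~\ref{lem:norm-vert-vp} (using $\vsistype{\guspctx_1}{\ectx}{\vs'}{\kwvty}$ and $\vs' \vsnormal$), $\vs' = \vspath$ for some vertex path $\vspath$.
  We therefore have $\vsistype{\guspctx_1}{\ectx}{\vspath}{\kwvty}$ and $\vsistype{\guspctx_2}{\ectx}{\vspath}{\kwvty}$ with $\uspsplit{\guspctx}{\guspctx_1}{\guspctx_2}$, contradicting Lemma~\ref{lem:split-verts-vp}.
\end{proof}
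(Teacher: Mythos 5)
Your proof is correct and follows essentially the same route as the paper: normalize $\vs$ to a normal form via Lemma~\ref{lem:vs-progress}, identify it as a vertex path via Lemmas~\ref{lem:progress-norm} and~\ref{lem:norm-vert-vp}, and contradict Lemma~\ref{lem:split-verts-vp}. The only (harmless) divergence is that you justify the coincidence of the two evaluation results by appealing directly to determinism of $\vseval{\cdot}{\cdot}$, whereas the paper obtains the same fact through Lemmas~\ref{lem:vs-eval-equiv} and~\ref{lem:vs-eval-transitive}, which package that determinism via the VS equivalence judgment.
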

\begin{proof}
By Lemmas~\ref{lem:vs-progress}--\ref{lem:vs-eval-transitive}, there
exists~$\vs'$ such that~$\vseval{\vs}{\vs'}$
	  and~$\vsistype{\guspctx_1}{\ectx}{\vs'}{\kwvty}$
	  and~$\vsistype{\guspctx_2}{\ectx}{\vs'}{\kwvty}$.
By Lemmas~\ref{lem:progress-norm} and \ref{lem:norm-vert-vp}, $\vs' = \vspath$.
This contradicts Lemma~\ref{lem:split-verts-vp},
given our assumption that~$\guspctx$ contains
no duplicate generators.
\end{proof}

In the statement of Lemma~\ref{lem:exp-well-formed}, which is the soundness
result for graph type expansion, we use the function~$\vertsof{\cdot}$
to get the set of vertices of a graph.
Formally,~$\vertsof{\dagq{\vertices}{\edges}{\startv}{\sinkv}} = \vertices$.
 
\begin{lemma}\label{lem:exp-well-formed}
  If~$\dagwf{\ectx}{\guspctx}{\gutctx}{\graph}{\kgraph}$
  and~$\graph$ is in NBNF
  and~$\cgraph \in \getgraphs{\graph}$,
  then~$\cgraph$ is a well-formed graph
  and for all~$\vs \in \vertsof{\cgraph}$,
  either~$\vsistype{\guspctx}{\ectx}{\vs}{\kwvty}$
  or~$\vs$ is a fresh vertex.
\end{lemma}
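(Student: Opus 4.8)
The plan is to proceed by induction on the structure of $\graph$ in NBNF, using the characterization of $\getgraphs{\cdot}$ in Figure~\ref{fig:mk-graphs}. Because $\graph$ is in NBNF, its top-level form is restricted: it is one of $\emptygraph$, $\graph_1 \seqcomp \graph_2$, $\graph_1 \parcomp \graph_2$, $\graph_1 \dagor \graph_2$, $\leftcomp{\graph_1}{\vspath}$, $\touchcomp{\vspath}$, a recursive binding $\dagrec{\gvar}{\graph_1}{}$, a $\Pi$-binding, or an application $\kwtapp{\graph_1}{\vs_f}{\vs_t}$ that did not reduce. For the last four, $\getgraphs{\graph} = \emptyset$ (for $\dagrec{}{}{}$ and the irreducible $\kwtapp{}{}{}$ this is immediate; for a $\Pi$-binding there is simply no expansion rule, so the set is empty as well), so the statement is vacuous. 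For $\emptygraph$ and $\touchcomp{\vspath}$, the result follows directly from the definitions in Figure~\ref{fig:dag} together with inversion on \rulename{DW:Empty} and \rulename{DW:Touch}: in the touch case $\touchcomp{\vspath}$ expands to the single graph $(\{\vertex'\}, \{(\vspath, \vertex')\}, \vertex', \vertex')$ where $\vertex'$ is fresh, and inversion gives $\vsistype{\ectx}{\gutctx}{\vspath}{\kwvty}$, which by weakening (Lemma~\ref{lem:weaken-affine-restriction} applied to the $\guspctx$/$\gutctx$ convention) gives the required typing for $\vspath$ and freshness for $\vertex'$.

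For the binary cases I would use inversion on the corresponding graph type formation rule. For $\graph_1 \seqcomp \graph_2$, inversion on \rulename{DW:Seq} gives $\uspsplit{\guspctx}{\guspctx_1}{\guspctx_2}$ with each $\graph_i$ well-formed under $\guspctx_i$, and both subterms are again in NBNF (as noted in the text, ``closed sub-graph-types of NBNF graph types are themselves NBNF''). By the induction hypothesis each $\cgraph_i \in \getgraphs{\graph_i}$ is well-formed with every vertex either fresh or typed under $\guspctx_i$. The definition of sequential composition in Figure~\ref{fig:dag} requires $\vertices_1 \cap \vertices_2 = \emptyset$; this is where the crucial argument lies. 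A vertex shared between $\cgraph_1$ and $\cgraph_2$ would have to be either fresh on both sides (impossible, since the two expansions draw fresh vertices from disjoint pools — formally, fresh vertices can be chosen disjointly, and the structural vertices introduced by composition are chosen fresh as well) or typed under both $\guspctx_1$ and $\guspctx_2$. The latter is ruled out by Corollary~\ref{lem:split-verts}, which says that if $\uspsplit{\guspctx}{\guspctx_1}{\guspctx_2}$ then no VS can be typed with $\kwvty$ under both $\guspctx_1$ and $\guspctx_2$ without contradicting the no-duplicates assumption on $\guspctx$. Hence the composition is well-defined, the resulting graph is well-formed, and its vertex set is $\vertices_1 \cup \vertices_2 \cup (\text{fresh structural vertices})$, each element of which is either fresh or typed under $\guspctx_1 \subseteq \guspctx$ or $\guspctx_2 \subseteq \guspctx$ (the inclusion following from Lemma~\ref{lem:usp-split-upstream}, possibly with a subtyping adjustment via \rulename{U:Subtype}). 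The $\parcomp$ case is essentially identical, using \rulename{DW:Par} and the parallel-composition shorthand from Figure~\ref{fig:fork-join-par}. The $\dagor$ case is easier: $\getgraphs{\graph_1 \dagor \graph_2} = \getgraphs{\graph_1} \cup \getgraphs{\graph_2}$, inversion on \rulename{DW:Or} gives both subterms well-formed under the same $\guspctx$, and we apply the induction hypothesis directly.

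The remaining case is the spawn $\leftcomp{\graph_1}{\vspath}$. Here inversion on \rulename{DW:Spawn} gives $\uspsplit{\guspctx}{\guspctx_1}{\guspctx_2}$ with $\dagwf{\ectx}{\guspctx_1}{\gutctx}{\graph_1}{\kgraph}$ and $\vsistype{\guspctx_2}{\ectx}{\vspath}{\kwvty}$; note that in NBNF the annotation on a spawn is already a vertex path (by the $\bnnf{\leftcomp{\graph}{\vs}}$ clause, which applies $\vseval{}{}$), so $\vspath$ is genuinely a VP. Each $\cgraph' \in \getgraphs{\graph_1}$ is well-formed by induction, and the expansion produces $\leftcomp{\cgraph'}{\vspath}$, whose definition in Figure~\ref{fig:dag} adds a fresh $\vertex'$ and the vertex $\vspath$, requiring $\vspath \notin \vertsof{\cgraph'}$. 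This side condition is the other place the affine discipline does work: every vertex of $\cgraph'$ is either fresh (hence distinct from $\vspath$, which is chosen from the generator pool) or typed under $\guspctx_1$, and $\vspath$ being typed under $\guspctx_2$ means, again by Corollary~\ref{lem:split-verts}, it cannot also be typed under $\guspctx_1$. So the left-composition is well-formed, and its vertex set is $\vertsof{\cgraph'} \cup \{\vertex', \vspath\}$ — the first part handled by induction (each vertex fresh or typed under $\guspctx_1 \subseteq \guspctx$), $\vertex'$ fresh, and $\vspath$ typed under $\guspctx_2 \subseteq \guspctx$ via Lemma~\ref{lem:usp-split-upstream} and \rulename{U:Subtype}. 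I expect the main obstacle to be bookkeeping the freshness of the structural vertices introduced by the composition operators so that they are provably disjoint from each other and from all ``real'' vertices; this is routine in spirit but fiddly, and it is the only part where one must be careful that the choices of fresh vertices across subderivations can be made globally consistent. The genuinely conceptual content — that no real vertex is ever duplicated — is entirely delegated to Corollary~\ref{lem:split-verts}, so no new ideas are needed beyond careful case analysis.
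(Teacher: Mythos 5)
Your proposal is correct and follows essentially the same route as the paper's proof: induction over the (syntax-directed) formation derivation, vacuous dispatch of the $\mu$/$\Pi$/irreducible-application cases, and delegation of the crucial vertex-disjointness obligations in the $\seqcomp$, $\parcomp$, and $\leftcomp{}{}$ cases to Corollary~\ref{lem:split-verts}, with the split-weakening lemma recovering typing under the full $\guspctx$. The only (harmless) wrinkle is in the touch case: by the definition in Figure~\ref{fig:dag} the touched vertex is not in $\vertsof{\touchcomp{\vspath}}$ (only the fresh $\vertex'$ is), so your attempt to establish $\vsistype{\guspctx}{\ectx}{\vspath}{\kwvty}$ there is unnecessary — and in general not derivable, since $\vspath$ is only typed in the touch context $\gutctx$ — but nothing in the lemma requires it.
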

\begin{proof}
  By induction on the derivation
  of~$\dagwf{\ectx}{\guspctx}{\gutctx}{\graph}{\kgraph}$.
  \iffull
  \begin{itemize}
  \item \rulename{DW:Empty}.
    Then~$\graph = \emptygraph$.
    By inversion,~$\cgraph = \emptygraph$
    and~$\vertsof{\cgraph} = \{u\}$, where~$u\fresh$.

  \item \rulename{DW:Seq}.
    Then~$\graph = \graph_1 \seqcomp \graph_2$
    and~$\uspsplit{\guspctx}{\guspctx_1}{\guspctx_2}$
    and~$\dagwf{\ectx}{\guspctx_1}{\gutctx}{\graph_1}{\kgraph}$
    and~$\dagwf{\ectx}{\guspctx_2}{\gutctx}{\graph_2}{\kgraph}$.
	By inversion, $\graph_1$ and $\graph_2$ are in NBNF.
    By inversion,~$\cgraph = \cgraph_1 \seqcomp \cgraph_2$
    where~$\cgraph_1 \in \getgraphs{\graph_1}$
    and~$\cgraph_2 \in \getgraphs{\graph_2}$.
    By induction,~$\cgraph_1$ and~$\cgraph_2$ are well-formed graphs
    and for all non-fresh vertex paths~$\vs \in \vertsof{\cgraph_1}$,
    we have~$\vsistype{\guspctx_1}{\ectx}{\vs}{\kwvty}$
    and for all non-fresh vertex paths~$\vs \in \vertsof{\cgraph_2}$,
    we have~$\vsistype{\guspctx_2}{\ectx}{\vs}{\kwvty}$.
    By Corollary~\ref{lem:split-verts},
    there is no $\vs$ such that $\vsistype{\guspctx_1}{\ectx}{\vs}{\kwvty}$
    and $\vsistype{\guspctx_2}{\ectx}{\vs}{\kwvty}$.
    Without loss of generality, the sets of fresh vertices in~$\cgraph_1$
    and~$\cgraph_2$ are disjoint from each other and the
    contexts, so~$\vertsof{\cgraph_1}$
    and~$\vertsof{\cgraph_2}$ are disjoint,
    and so~$\cgraph$ is a well-formed graph.
    For all non-fresh vertex paths~$\vs \in \vertsof{\cgraph}$, 
	either $\vs \in \vertsof{\cgraph_1}$ or~$\vs \in \vertsof{\cgraph_2}$ 
	and so by Lemma~\ref{lem:usp-split-weakening},
    $\vsistype{\guspctx}{\ectx}{\vs}{\kwvty}$.

  %% \item \rulename{DW:Par}.
  %%   Then~$\graph = \graph_1 \parcomp \graph_2$
  %%   and~$\dagwf{\gctx}{\guspctx_1}{\gutctx}{\graph_1}{\kgraph}$
  %%   and~$\dagwf{\gctx}{\guspctx_2}{\gutctx}{\graph_2}{\kgraph}$
  %%   and~$\uspsplit{\guspctx}{\guspctx_1}{\guspctx_2}$.
  %%   By inversion,~$\cgraph = \cgraph_1 \parcomp \cgraph_2$
  %%   where~$\cgraph_1 \in \getgraphs{\graph_1}$
  %%   and~$\cgraph_2 \in \getgraphs{\graph_2}$.
  %%   By assumption,~$\guspctx_1$ and~$\guspctx_2$
  %%   contain no vertex path more than once.
  %%   By induction,~$\cgraph_1$ and~$\cgraph_2$ are well-formed graphs
  %%   and for all non-fresh vertex paths~$\vs \in \vertsof{\cgraph_1}$,
  %%   we have~$\vsistype{\guspctx_1}{\gutctx}{\vs}{\kwvty}$
  %%   and for all non-fresh vertex paths~$\vs \in \vertsof{\cgraph_2}$,
  %%   we have~$\vsistype{\guspctx_2}{\gutctx}{\vs}{\kwvty}$.
  %%   By Lemma~\ref{lem:split-verts},
  %%   if~$\vsistype{\guspctx_1}{\gutctx}{\vs}{\kwvty}$
  %%   and~$\vsistype{\guspctx_2}{\gutctx}{\vs}{\kwvty}$,
  %%   this would contradict~$\guspctx$ not containing generators more than once.
  %%   Without loss of generality, the sets of fresh vertices in~$\cgraph_1$
  %%   and~$\cgraph_2$ are disjoint from each other and the
  %%   contexts, so~$\vertsof{\cgraph_1}$
  %%   and~$\vertsof{\cgraph_2}$ are disjoint,
  %%   and so~$\cgraph$ is a well-formed graph.
  %%   For all vertex paths~$\vs \in \vertsof{\cgraph}$,
  %%   if~$\vs$ was not fresh, then~$\vs$ was either in~$\cgraph_1$
  %%   or~$\cgraph_2$ and so by XXX,
  %%   $\vsistype{\guspctx_1}{\gutctx}{\vs}{\kwvty}$.

  \item \rulename{DW:Or}.
    Then~$\graph = \graph_1 \dagor \graph_2$
    and~$\dagwf{\gctx}{\guspctx}{\gutctx}{\graph_1}{\kgraph}$
    and~$\dagwf{\gctx}{\guspctx}{\gutctx}{\graph_2}{\kgraph}$.
	By inversion, $\graph_1$ and $\graph_2$ are in NBNF.
    By inversion,~$\cgraph \in \getgraphs{\graph_1}$
    or~$\cgraph \in \getgraphs{\graph_2}$. 
	Apply induction in both cases.

  \item \rulename{DW:Spawn}.
    Then~$\graph = \leftcomp{\graph'}{\vs'}$
    and~$\dagwf{\gctx}{\guspctx_1}{\gutctx}{\graph'}{\kgraph}$
    and~$\vsistype{\guspctx_2}{\ectx}{\vs'}{\kwvty}$
    and~$\uspsplit{\guspctx}{\guspctx_1}{\guspctx_2}$.
	By inversion, $\graph'$ is in NBNF.
    By inversion,~$\cgraph = \leftcomp{\cgraph'}{\vs'}$
    where~$\cgraph' \in \getgraphs{\graph'}$.
    By induction,~$\cgraph'$ is a well-formed graph
    and for all non-fresh vertex paths~$\vs \in \vertsof{\cgraph'}$,
    we have~$\vsistype{\guspctx_1}{\ectx}{\vs}{\kwvty}$.
    By Corollary~\ref{lem:split-verts},
    there is no $\vs$ such that $\vsistype{\guspctx_1}{\ectx}{\vs}{\kwvty}$
    and $\vsistype{\guspctx_2}{\ectx}{\vs}{\kwvty}$.
    Therefore, $\vs' \not\in \vertsof{\cgraph'}$.
    Thus, $\cgraph$ is a well-formed graph.
    For all non-fresh vertex paths~$\vs \in \vertsof{\cgraph}$,
    either $\vs \in \vertsof{\cgraph'}$
    or $\vs = \vs'$ and so by Lemma~\ref{lem:usp-split-weakening},
    $\vsistype{\guspctx}{\ectx}{\vs}{\kwvty}$.

  \item \rulename{DW:Touch}.
    Then~$\graph = \touchcomp{\vs}$.
    By inversion,~$\cgraph = \touchcomp{\vs}$
    and~$\vertsof{\cgraph} = \{u\}$, where~$u\fresh$.

  \item \rulename{DW:New}. 
	Then $\graph = \dagnew{\hastycl{\vvar}{\vsty}}{\graph'}$. 
	By inversion, $\graph$ is not in NBNF, so this case does not apply.
    
  \item \rulename{DW:Pi}, \rulename{DW:RecPi}. 
	These cases do not apply because the graph kind is~$\kgraph$.

  \item \rulename{DW:App}.
	Then $\graph = \kwtapp{\graph'}{\vs_f}{\vs_t}$.
    This contradicts~$\cgraph \in \getgraphs{\graph}$
    since~$\getgraphs{\graph} = \emptyset$,
	so this case does not apply.
    
  \end{itemize}
  \fi
\end{proof}

Proof of Theorem~\ref{thm:norm-correct}:
\begin{proof}
  Follows directly from Lemmas~\ref{lem:unr-preserves-typing},
  \ref{lem:nbnf-exists-typing}
  and \ref{lem:exp-well-formed}.
\end{proof}

\begin{figure*}
  \centering
  \def \MathparLineskip {\lineskip=0.43cm}
  \begin{mathpar}
    \Rule{UR:Seq1}
         {\graph_1 \unrstep \graph_1'}
         {\graph_1 \seqcomp \graph_2
           \unrstep
           \graph_1' \seqcomp \graph_2}
    \and
    \Rule{UR:Seq2}
         {\graph_2 \unrstep \graph_2'}
         {\graph_1 \seqcomp \graph_2
           \unrstep
           \graph_1 \seqcomp \graph_2'}
    \and
    \Rule{UR:Par1}
         {\graph_1 \unrstep \graph_1'}
         {\graph_1 \parcomp \graph_2
           \unrstep
           \graph_1' \parcomp \graph_2}
    \and
    \Rule{UR:Par2}
         {\graph_2 \unrstep \graph_2'}
         {\graph_1 \parcomp \graph_2
           \unrstep
           \graph_1 \parcomp \graph_2'}
    \and
    \Rule{UR:Or1}
         {\graph_1 \unrstep \graph_1'}
         {\graph_1 \dagor \graph_2
           \unrstep
           \graph_1' \dagor \graph_2}
    \and
    \Rule{UR:Or2}
         {\graph_2 \unrstep \graph_2'}
         {\graph_1 \dagor \graph_2
           \unrstep
           \graph_1 \dagor \graph_2'}
    \and
    \Rule{UR:Future}
         {\graph \unrstep \graph'}
         {\leftcomp{\graph}{\vs} \unrstep \leftcomp{\graph'}{\vs}}
    \and
    \Rule{UR:Rec}
         {\strut}
         {\dagrec{\gvar}{\graph}{} \unrstep
           \gsub{\graph}{\dagrec{\gvar}{\graph}{}}{\gvar}
         }
     \and
     \Rule{UR:Pi}
         {\graph \unrstep \graph'}
         {\dagpi{\vvar_f}{\vvar_t}{\graph} \unrstep
           \dagpi{\vvar_f}{\vvar_t}{\graph'}
         }
     \and
     \Rule{UR:App}
         {\graph \unrstep \graph'}
         {\kwtapp{\graph}{\vs_f}{\vs_t} \unrstep
           \kwtapp{\graph'}{\vs_f}{\vs_t}
         }
     \and
     \Rule{UR:New}
         {\graph \unrstep \graph'}
         {\dagnew{\vertex}{\graph} \unrstep
           \dagnew{\vertex}{\graph'}
         }
%     \and
%     \Rule{UR:Let}
%          {\graph \unrstep \graph'}
%          {\kwletpair{\vvar_1}{\vvar_2}{\vs}{\graph}
%            \unrstep
%            \kwletpair{\vvar_1}{\vvar_2}{\vs}{\graph'}
%          }
%     \and
%     \Rule{UR:Case1}
%          {\graph_1 \unrstep \graph_1'}
%          {\kwcase{\vs}{\vvar_1}{\graph_1}{\vvar_2}{\graph_2}
%            \unrstep
%            \kwcase{\vs}{\vvar_1}{\graph_1'}{\vvar_2}{\graph_2}
%          }
%     \and
%     \Rule{UR:Case2}
%          {\graph_2 \unrstep \graph_2'}
%          {\kwcase{\vs}{\vvar_1}{\graph_1}{\vvar_2}{\graph_2}
%            \unrstep
%            \kwcase{\vs}{\vvar_1}{\graph_1}{\vvar_2}{\graph_2'}
%          }
  \end{mathpar}
  \caption{Graph type unrolling.}
  \label{fig:unroll-graph}
\end{figure*}

Proof of Theorem~\ref{thm:soundness}:

\iffull
To prove the theorem, we must first state and prove two technical lemmas about
graph type unrolling, whose full rules are given in Figure \ref{fig:unroll-graph}.
The first, Lemma \ref{lem:unroll-gluing}, is a ``framing'' property that allows us to unroll part of a graph
type multiple times while keeping the rest of the graph type the same.
The second, Lemma \ref{lem:graph-sub-unroll}, allows us to commute unrolling and substitution of VS variables.

\begin{lemma}\label{lem:unroll-gluing}\strut
  \begin{enumerate}
  \item If~$\graph_1 \unrstep^* \graph_1'$
    then~$\graph_1 \seqcomp \graph_2 \unrstep^* \graph_1' \seqcomp \graph_2$.
    \label{lem:glue-seq1}

  \item If~$\graph_2 \unrstep^* \graph_2'$
    then~$\graph_1 \seqcomp \graph_2 \unrstep^* \graph_1 \seqcomp \graph_2'$.
    \label{lem:glue-seq2}

  \item If~$\graph_1 \unrstep^* \graph_1'$
    then~$\graph_1 \parcomp \graph_2 \unrstep^* \graph_1' \parcomp \graph_2$.
    \label{lem:glue-par1}

  \item If~$\graph_2 \unrstep^* \graph_2'$
    then~$\graph_1 \parcomp \graph_2 \unrstep^* \graph_1 \parcomp \graph_2'$.
    \label{lem:glue-par2}

  \item If~$\graph_1 \unrstep^* \graph_1'$
    then~$\graph_1 \dagor \graph_2 \unrstep^* \graph_1' \dagor \graph_2$.
    \label{lem:glue-or1}

  \item If~$\graph_2 \unrstep^* \graph_2'$
    then~$\graph_1 \dagor \graph_2 \unrstep^* \graph_1 \dagor \graph_2'$.
    \label{lem:glue-or2}

  \item If~$\graph \unrstep^* \graph'$
    then~$\leftcomp{\graph}{\vs} \unrstep^* \leftcomp{\graph'}{\vs}$.
    \label{lem:glue-spawn}

  \item If~$\graph \unrstep^* \graph'$
    then~$\dagpi{\hastycl{\vvar_f}{\vsty}}{\hastycl{\vvar_t}{\vsty}}{\graph} \unrstep^* 
			\dagpi{\hastycl{\vvar_f}{\vsty}}{\hastycl{\vvar_t}{\vsty}}{\graph'}$.
    \label{lem:glue-pi}

  \item If~$\graph \unrstep^* \graph'$
    then~$\kwtapp{\graph}{\vs_f}{\vs_t} \unrstep^* \kwtapp{\graph'}{\vs_f}{\vs_t}$.
    \label{lem:glue-app}

  \item If~$\graph \unrstep^* \graph'$
    then~$\dagnew{\hastycl{\vvar}{\vsty}}{\graph} \unrstep^* 
			\dagnew{\hastycl{\vvar}{\vsty}}{\graph'}$.
    \label{lem:glue-new}
  \end{enumerate}
\end{lemma}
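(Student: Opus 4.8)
\textbf{Proof plan for Lemma~\ref{lem:unroll-gluing}.}
The plan is to prove each of the ten parts by induction on the number of steps in the derivation $\graph_i \unrstep^* \graph_i'$, using the single-step unrolling rules of Figure~\ref{fig:unroll-graph} as congruence lemmas. Every part has exactly the same shape: we must show that if we can reach $\graph_i'$ from $\graph_i$ by some number of single steps, then the compound graph type containing $\graph_i$ as a subterm reaches the corresponding compound graph type. Since $\unrstep^*$ is the reflexive-transitive closure of $\unrstep$, the base case (zero steps, so $\graph_i = \graph_i'$) is immediate by reflexivity, and in the inductive step we have $\graph_i \unrstep \graph_i'' \unrstep^* \graph_i'$; we apply the appropriate congruence rule to the first step and the induction hypothesis to the tail.

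Concretely, for part~\ref{lem:glue-seq1} I would first prove the one-step statement ``if $\graph_1 \unrstep \graph_1'$ then $\graph_1 \seqcomp \graph_2 \unrstep \graph_1' \seqcomp \graph_2$'', which is literally rule \rulename{UR:Seq1}. Then, given $\graph_1 \unrstep^* \graph_1'$, induct on the length of this reduction sequence: for zero steps use reflexivity of $\unrstep^*$; for $\graph_1 \unrstep \graph_1'' \unrstep^* \graph_1'$, apply \rulename{UR:Seq1} to get $\graph_1 \seqcomp \graph_2 \unrstep \graph_1'' \seqcomp \graph_2$, apply the induction hypothesis to get $\graph_1'' \seqcomp \graph_2 \unrstep^* \graph_1' \seqcomp \graph_2$, and concatenate the two. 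Parts~\ref{lem:glue-seq2} through~\ref{lem:glue-new} are handled identically, each invoking the matching single-step congruence rule: \rulename{UR:Seq2}, \rulename{UR:Par1}, \rulename{UR:Par2}, \rulename{UR:Or1}, \rulename{UR:Or2}, \rulename{UR:Future}, \rulename{UR:Pi}, \rulename{UR:App}, and \rulename{UR:New} respectively. The only mild care needed is that the binders in parts~\ref{lem:glue-pi} and~\ref{lem:glue-new} (and the vertex-structure annotations $\vs$, $\vs_f$, $\vs_t$ in parts~\ref{lem:glue-spawn} and~\ref{lem:glue-app}) are untouched by the rules, so the framing is genuinely transparent; no substitution or freshness side-conditions intervene.

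There is essentially no hard part here: this is a routine congruence-closure argument, and the bulk of the ``work'' is simply observing that Figure~\ref{fig:unroll-graph} provides exactly one congruence rule for each syntactic position in which a graph type can appear as a subterm. If anything is a potential pitfall, it is remembering to state and use the fact that $\unrstep^*$ is closed under composition (so that a first step followed by a sequence of steps is again a sequence of steps) — but this is immediate from the definition of reflexive-transitive closure. One could also observe that all ten parts follow uniformly from a single meta-lemma: if $C[\cdot]$ is any one-hole context built from the grammar of graph types such that $C[\graph] \unrstep C[\graph']$ whenever $\graph \unrstep \graph'$, then $C[\graph] \unrstep^* C[\graph']$ whenever $\graph \unrstep^* \graph'$; instantiating $C$ to each of the ten contexts $\cdot \seqcomp \graph_2$, $\graph_1 \seqcomp \cdot$, etc., and checking the single-step premise against Figure~\ref{fig:unroll-graph} discharges the lemma. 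I would likely present it in this compressed form, proving the general context version once and then listing the instantiations.
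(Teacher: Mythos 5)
Your proposal matches the paper's proof exactly: the paper proves part 1 by induction on the derivation of $\graph_1 \unrstep^* \graph_1'$, with the reflexive case immediate and the step case applying \rulename{UR:Seq1} to the first step and the induction hypothesis to the tail, then declares all other parts similar. Your suggested uniform one-hole-context packaging is a harmless presentational variant of the same argument.
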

\begin{proof}\strut
  \begin{enumerate}
  	\item By induction on $\graph_1 \unrstep^* \graph_1'$.
		\begin{itemize}
		\item $\graph_1 = \graph_1'$.
			Therefore,
				$\graph_1 \seqcomp \graph_2 \unrstep^0 \graph_1' \seqcomp \graph_2$.
		\item $\graph_1 \unrstep \graph_1''$ and $\graph_1'' \unrstep^* \graph_1'$.
			By induction, 
				$\graph_1'' \seqcomp \graph_2 \unrstep^* \graph_1' \seqcomp \graph_2$.
			By \rulename{UR:Seq1}, 
				$\graph_1 \seqcomp \graph_2 \unrstep \graph_1'' \seqcomp \graph_2$.
			Therefore,
				$\graph_1 \seqcomp \graph_2 \unrstep^* \graph_1' \seqcomp \graph_2$.
		\end{itemize}
  \item All cases are similar.
  \end{enumerate}
\end{proof}

\begin{lemma}\label{lem:graph-sub-unroll}\strut
  \begin{enumerate}
  \item If~$\vsub{\graph}{\vs}{\vvar} \unrstep \graph'$
 		then there exists a~$\graph''$
		such that~$\graph \unrstep \graph''$
		and~$\vsub{\graph''}{\vs}{\vvar} = \graph'$.
		\label{lem:graph-sub-unroll-one}

  \item If~$\vsub{\graph}{\vs}{\vvar} \unrstep^* \graph'$
 		then there exists a~$\graph''$
		such that~$\graph \unrstep^* \graph''$
		and~$\vsub{\graph''}{\vs}{\vvar} = \graph'$.
		\label{lem:graph-sub-unroll-any}
  \end{enumerate}
\end{lemma}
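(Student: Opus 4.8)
The statement to prove is Lemma~\ref{lem:graph-sub-unroll}, which has two parts: a single-step version (part~\ref{lem:graph-sub-unroll-one}) and its iterated closure (part~\ref{lem:graph-sub-unroll-any}). The plan is to prove part~\ref{lem:graph-sub-unroll-one} by induction on the derivation of the unrolling step $\vsub{\graph}{\vs}{\vvar} \unrstep \graph'$, and then derive part~\ref{lem:graph-sub-unroll-any} by a routine induction on the length of the reduction sequence, repeatedly applying part~\ref{lem:graph-sub-unroll-one} and composing the results.

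For part~\ref{lem:graph-sub-unroll-one}, I would first do a case analysis on the shape of $\graph$. Since the unrolling rules in Figure~\ref{fig:unroll-graph} only fire on graph types whose head connective is $\seqcomp$, $\parcomp$, $\dagor$, $\leftcomp{}{}$, $\dagpi{}{}{}$, $\kwtapp{}{}{}$, $\dagnew{}{}$, or a recursive binding $\dagrec{\gvar}{}{}$, and none of these head connectives are erased or created by the VS substitution $[\vs/\vvar]$ (which only touches vertex structures appearing inside $\graph$, not graph-type variables $\gvar$ or the graph-type-level structure), the head connective of $\vsub{\graph}{\vs}{\vvar}$ is determined by the head connective of $\graph$. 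This lets me match the unrolling rule applied to $\vsub{\graph}{\vs}{\vvar}$ with a corresponding structural case of $\graph$. For the ``search'' rules (\rulename{UR:Seq1}, \rulename{UR:Seq2}, \rulename{UR:Par1}, \rulename{UR:Par2}, \rulename{UR:Or1}, \rulename{UR:Or2}, \rulename{UR:Future}, \rulename{UR:Pi}, \rulename{UR:App}, \rulename{UR:New}), I would apply the induction hypothesis to the subcomponent being unrolled and then reassemble using the same rule, relying on the fact that substitution commutes with all these constructors. The only genuinely interesting case is \rulename{UR:Rec}: here $\graph = \dagrec{\gvar}{\graph_0}{}$ with $\gvar \neq \vvar$ (as $\gvar$ is a graph-type variable and $\vvar$ is a VS variable, occupying disjoint syntactic categories), so $\vsub{\graph}{\vs}{\vvar} = \dagrec{\gvar}{\vsub{\graph_0}{\vs}{\vvar}}{}$, which steps to $\gsub{\vsub{\graph_0}{\vs}{\vvar}}{\dagrec{\gvar}{\vsub{\graph_0}{\vs}{\vvar}}{}}{\gvar}$; I would take $\graph'' = \gsub{\graph_0}{\dagrec{\gvar}{\graph_0}{}}{\gvar}$ and verify that VS substitution commutes with the graph-type substitution $[\dagrec{\gvar}{\graph_0}{}/\gvar]$, i.e. that $\vsub{(\gsub{\graph_0}{\dagrec{\gvar}{\graph_0}{}}{\gvar})}{\vs}{\vvar} = \gsub{\vsub{\graph_0}{\vs}{\vvar}}{\dagrec{\gvar}{\vsub{\graph_0}{\vs}{\vvar}}{}}{\gvar}$; this holds because the two substitutions act on disjoint variable categories (a standard substitution-commutation fact, using that $\vvar$ does not occur free in $\dagrec{\gvar}{\graph_0}{}$ in a conflicting way after suitable $\alpha$-renaming).

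For part~\ref{lem:graph-sub-unroll-any}, I would induct on the number of steps in $\vsub{\graph}{\vs}{\vvar} \unrstep^* \graph'$. The zero-step case is immediate with $\graph'' = \graph$. For the inductive step, write $\vsub{\graph}{\vs}{\vvar} \unrstep \graph_1 \unrstep^* \graph'$; by part~\ref{lem:graph-sub-unroll-one} there is $\graph_1'$ with $\graph \unrstep \graph_1'$ and $\vsub{\graph_1'}{\vs}{\vvar} = \graph_1$; then apply the induction hypothesis to $\vsub{\graph_1'}{\vs}{\vvar} \unrstep^* \graph'$ to obtain $\graph''$ with $\graph_1' \unrstep^* \graph''$ and $\vsub{\graph''}{\vs}{\vvar} = \graph'$; concatenating the steps gives $\graph \unrstep^* \graph''$ as required.

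The main obstacle I anticipate is purely bookkeeping in the \rulename{UR:Rec} case of part~\ref{lem:graph-sub-unroll-one}: getting the commutation of the two distinct substitutions exactly right, including attention to variable capture and $\alpha$-equivalence conventions (ensuring the bound $\gvar$ is chosen fresh for $\vs$). All other cases follow mechanically from the induction hypothesis together with the observation that VS substitution distributes over every graph-type constructor, so no deep insight is needed there.
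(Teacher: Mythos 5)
Your proposal matches the paper's proof: part 1 is by induction on the derivation of the step, with the search rules handled by the induction hypothesis and reassembly, and \rulename{UR:Rec} handled by commuting the VS substitution with the graph-type substitution (the paper likewise takes $\graph'' = \gsub{\graph_0}{\dagrec{\gvar}{\graph_0}{}}{\gvar}$); part 2 is the same routine induction on the reduction sequence using part 1. No gaps.
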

\begin{proof}\strut
  \begin{enumerate}
  \item By induction on $\vsub{\graph}{\vs}{\vvar} \unrstep \graph'$.
	\begin{itemize}
	\item \rulename{UR:Rec}.
		Then~$\graph' = \gsub{\vsub{\graph_1}{\vs}{\vvar}}{\dagrec{\gvar}{\vsub{\graph_1}{\vs}{\vvar}}{}}{\gvar}
				= \vsub{\gsub{\graph_1}{\dagrec{\gvar}{\graph_1}{}}{\gvar}}{\vs}{\vvar}$
			and~$\vsub{\graph}{\vs}{\vvar} = \dagrec{\gvar}{\vsub{\graph_1}{\vs}{\vvar}}{}$
			and~$\graph = \dagrec{\gvar}{\graph_1}{}$.
		By \rulename{UR:Rec},
			$\graph \unrstep \gsub{\graph_1}{\dagrec{\gvar}{\graph_1}{}}{\gvar}$.

	\item \rulename{UR:Seq1}.
		Then~$\vsub{\graph}{\vs}{\vvar} = \vsub{\graph_1}{\vs}{\vvar} \seqcomp \vsub{\graph_2}{\vs}{\vvar}$
			and~$\graph = \graph_1 \seqcomp \graph_2$
			and~$\graph' = \graph_1' \seqcomp \vsub{\graph_2}{\vs}{\vvar}$
			and~$\vsub{\graph_1}{\vs}{\vvar} \unrstep \graph_1'$.
		By induction, 
			there exists a~$\graph_1''$ 
			such that~$\graph_1 \unrstep \graph_1''$
			and~$\vsub{\graph_1''}{\vs}{\vvar} = \graph_1'$.
		Therefore,
			$\graph' = \vsub{\graph_1''}{\vs}{\vvar} \seqcomp \vsub{\graph_2}{\vs}{\vvar}
				= \vsub{(\graph_1'' \seqcomp \graph_2)}{\vs}{\vvar}$.
		By \rulename{UR:Seq1},
			$\graph \unrstep \graph_1'' \seqcomp \graph_2$.

	\item \rulename{UR:Future}.
		Then~$\vsub{\graph}{\vs}{\vvar} = \leftcomp{(\vsub{\graph_1}{\vs}{\vvar})}{\vsub{\vs'}{\vs}{\vvar}}$
			and~$\graph = \leftcomp{\graph_1}{\vs'}$
			and~$\graph' = \leftcomp{\graph_1'}{\vsub{\vs'}{\vs}{\vvar}}$
			and~$\vsub{\graph_1}{\vs}{\vvar} \unrstep \graph_1'$.
		By induction, 
			there exists a~$\graph_1''$ 
			such that~$\graph_1 \unrstep \graph_1''$
			and~$\vsub{\graph_1''}{\vs}{\vvar} \\= \graph_1'$.
		Therefore,
			$\graph' = \leftcomp{\vsub{\graph_1''}{\vs}{\vvar}}{\vsub{\vs'}{\vs}{\vvar}}
				= \vsub{(\leftcomp{\graph_1''}{\vs'})}{\vs}{\vvar}$.
		By \rulename{UR:Future},
			$\graph \unrstep \leftcomp{\graph_1''}{\vs'}$.
	\end{itemize}

  \item By induction on $\vsub{\graph}{\vs}{\vvar} \unrstep^* \graph'$.
	\begin{itemize}
	\item $\vsub{\graph}{\vs}{\vvar} = \graph'$. 
		Therefore $\graph'' = \graph$.
	
	\item $\vsub{\graph}{\vs}{\vvar} \unrstep \graph_1$ and $\graph_1 \unrstep^* \graph'$.
		By part \ref{lem:graph-sub-unroll-one},
			there exists a~$\graph_1'$ 
			such that~$\graph \unrstep \graph_1'$
			and~$\vsub{\graph_1'}{\vs}{\vvar} = \graph_1$.
		Therefore
			$\vsub{\graph_1'}{\vs}{\vvar} \unrstep^* \graph'$.
		By induction,
			there exists a~$\graph_1''$ 
			such that~$\graph_1' \unrstep^* \graph_1''$
			and~$\vsub{\graph_1''}{\vs}{\vvar} = \graph'$.
		Therefore
			$\graph \unrstep^* \graph_1''$.
	\end{itemize}
  \end{enumerate}
\end{proof}

We now prove the soundness theorem.
\fi

\iffull \begin{proof}[Proof of Theorem~\ref{thm:soundness}]
  \else \begin{proof}
    \fi
  By induction on the derivation of
  $\eval{}{e}{v}{\cgraph}$.
  \iffull
  \begin{itemize}
\item \rulename{C:App}.
	Let $F = \kwfun{\vvar_f}{\vvar_t}{f}{x}{e'}$.
	Then~$e =\kwapp{}{\kwtapp{e_1}{\vs_f}{\vs_t}}{e_2}$
	and~$\cgraph = \cgraph_1 \seqcomp \cgraph_2 \seqcomp \cgraph_3$
	and~$\eval{}{e_1}{F}{\cgraph_1}$
	and~$\eval{}{e_2}{v'}{\cgraph_2}$
	and
        \[\eval{}{\sub{\vsub{\vsub{\sub{e'}{F}{f}}{\vs_f}{\vvar_f}}{\vs_t}{\vvar_t}}{v'}{x}}{v}{\cgraph_3}\]
	By inversion on \rulename{S:Type-Eq} and \rulename{S:App},
	$\coneq{\ectx}{\utctx}{\ectx}{\tau}{\tsub{\tsub{\tau_2}{\vs_f}{\vvar_f}}{\vs_t}{\vvar_t}}{\kwtykind}$
	and
        \[\graph = \graph_1 \seqcomp \graph_2 \seqcomp
			\kwtapp{\graph_3}{\vs_f}{\vs_t}\]
	and~$\uspsplit{\uspctx}{\uspctx_1}{\uspctx'}$
	and~$\uspsplit{\uspctx'}{\uspctx_2}{\uspctx_3}$
	and
        \[\tywithdag{\ectx}{\uspctx_1}{\utctx}{\ectx}{e_1}{\kwpi{\hastycl{\vvar_f}{\vsty_f}}{\hastycl{\vvar_t}{\vsty_t}}
             {\kwarrow{\tau_1}{\tau_2}{\kwtapp{\graph_3}{\vvar_f}{\vvar_t}}}}{\graph_1}\]
	and~$\tywithdag{\ectx}{\uspctx_2}{\utctx}{\ectx}{e_2}{\tsub{\tsub{\tau_1}{\vs_f}{\vvar_f}}{\vs_t}{\vvar_t}}{\graph_2}$
	and~$\vsistype{\uspctx_3}{\ectx}{\vs_f}{\vsty_f}$
	and~$\vsistype{\ectx}{\utctx}{\vs_f}{\vsty_f}$
	and~$\vsistype{\ectx}{\utctx}{\vs_t}{\vsty_t}$.
	By induction,
		\uctxexistsst{\utctx}{\gutctx_1}{F}{\kwpi{\hastycl{\vvar_f}{\vsty_f}}{\hastycl{\vvar_t}{\vsty_t}}
            {\kwarrow{\tau_1}{\tau_2}{\kwtapp{\graph_3}{\vvar_f}{\vvar_t}}}}
	and~\uctxexistsst{\utctx}{\gutctx_2}{v'}{\tsub{\tsub{\tau_1}{\vs_f}{\vvar_f}}{\vs_t}{\vvar_t}}
	and~\gexistsst{\graph_1'}{\graph_1}{\cgraph_1}
	and~\gexistsst{\graph_2'}{\graph_2}{\cgraph_2}.
	By inversion on \rulename{S:Fun},
	\[\graph_3 = \dagrec{\gvar}{\dagpi{\hastype{\vvar_f}{\vsty_f}}{\hastype{\vvar_t}{\vsty_t}}{\graph_4}}{}\]
	and
        \[\tywithdag{\gctx}
           {\hastype{\vvar_f}{\vsty_f}}{\utctx, \gutctx_1, \hastype{\vvar_f}{\vsty_f}, \hastype{\vvar_t}{\vsty_t}}
           {\hastype{f}{\kwpi{\hastycl{\vvar_f}{\vsty_f}}{\hastycl{\vvar_t}{\vsty_t}}
			{\kwarrow{\tau_1}{\tau_2}{\kwtapp{\gvar}{\vvar_f}{\vvar_t}}}}, \hastype{x}{\tau_1}}{e'}{\tau_2}{\graph_4}\]
       where~$\gctx = \hastype{\gvar}{\dagpi{\hastycl{\vvar_f}{\vsty_f}}{\hastycl{\vvar_t}{\vsty_t}}{\kgraph}}$
	and \[\iskind{\ectx}{\ectx}{\utctx, \gutctx_1, \hastype{\vvar_f}{\vsty_f}, \hastype{\vvar_t}{\vsty_t}}{\ectx}{\tau_1}{\kwtykind}\]
	and \[\iskind{\ectx}{\ectx}{\utctx, \gutctx_1, \hastype{\vvar_f}{\vsty_f}, \hastype{\vvar_t}{\vsty_t}}{\ectx}{\tau_2}{\kwtykind}\]
	By Lemma~\ref{lem:graph-formed},
		\[\iskind{\ectx}{}{\utctx, \gutctx_1}{\ectx}{\kwpi{\hastycl{\vvar_f}{\vsty_f}}{\hastycl{\vvar_t}{\vsty_t}}
			{\kwarrow{\tau_1}{\tau_2}{\kwtapp{\graph_3}{\vvar_f}{\vvar_t}}}}{\kwtykind}\]
	By inversion on~\rulename{K:Fun},
		$\dagwf{\ectx}{\ectx}{\utctx, \gutctx_1}{\graph_3}
			{\dagpi{\hastycl{\vvar_f}{\vsty_f}}{\hastycl{\vvar_t}{\vsty_t}}{\kgraph}}$.
%	By inversion on~\rulename{DW:RecPi},
%		$\dagwf{\hastype{\gvar}{\dagpi{\hastycl{\vvar_f}{\vsty_f}}{\hastycl{\vvar_t}{\vsty_t}}{\kgraph}}}
%		{\hastype{\vvar_f}{\vsty_f}}{\utctx_1', \hastype{\vvar_f}{\vsty_f}, \hastype{\vvar_t}{\vsty_t}}{\graph_4}{\kgraph}$.
	By weakening and 5 applications of Lemma~\ref{lem:subst},
		$\tywithdag{\ectx}{\ectx}{\utctx, \gutctx_1, \gutctx_2}{\ectx}{\sub{\vsub{\vsub{\sub{e'}{F}{f}}{\vs_f}{\vvar_f}}{\vs_t}{\vvar_t}}{v'}{x}}
			{\tsub{\tsub{\tau_2}{\vs_f}{\vvar_f}}{\vs_t}{\vvar_t}}
			{\gsub{\gsub{\gsub{\graph_4}{\graph_3}{\gvar}}{\vs_f}{\vvar_f}}{\vs_t}{\vvar_t}}$.
	(Since $\iskind{\ectx}{\ectx}{\utctx, \gutctx_1, \hastype{\vvar_f}{\vsty_f}, \hastype{\vvar_t}{\vsty_t}}{\ectx}{\tau_1}{\kwtykind}$,
			then $\tsub{\tau_1}{\graph_3}{\gvar} \\= \tau_1$, and similar for $\tau_2$.)
	\\By induction,
		\uctxexistsst{\utctx, \gutctx_1, \gutctx_2}{\gutctx_0}
			{v}{\tsub{\tsub{\tau_2}{\vs_f}{\vvar_f}}{\vs_t}{\vvar_t}}
		and~\gexistsst{\graph_4'}{\gsub{\gsub{\gsub{\graph_4}{\graph_3}{\gvar}}
			{\vs_f}{\vvar_f}}{\vs_t}{\vvar_t}}{\cgraph_3}.
	Apply weakening and \rulename{S:Type-Eq}.
	By 2 applications of Lemma \ref{lem:graph-sub-unroll},
        \[\gsubvsunrollexstTwo{\gsub{\graph_4}{\graph_3}{\gvar}}
			{\graph_4'}{\graph_4''}{\vs_f}{\vvar_f}{\vs_t}{\vvar_t}\]
	By 2 applications of Lemma \ref{lem:unroll-gluing},
	\[\kwtapp{(\dagpi{\hastycl{\vvar_f}{\vsty_f}}{\hastycl{\vvar_t}{\vsty_t}}       
				{\gsub{\graph_4}{\graph_3}{\gvar}})}{\vs_f}{\vs_t} \unrstep^* 
			\kwtapp{(\dagpi{\hastycl{\vvar_f}{\vsty_f}}{\hastycl{\vvar_t}{\vsty_t}}{\graph_4''})}{\vs_f}{\vs_t}\]
	By \rulename{UR:Rec},
		$\kwtapp{\graph_3}{\vs_f}{\vs_t} \unrstep \kwtapp{(\dagpi{\hastycl{\vvar_f}{\vsty_f}}{\hastycl{\vvar_t}{\vsty_t}}       
				{\gsub{\graph_4}{\graph_3}{\gvar}})}{\vs_f}{\vs_t}$, so
		\[\kwtapp{\graph_3}{\vs_f}{\vs_t} \unrstep^* 
			\kwtapp{(\dagpi{\hastycl{\vvar_f}{\vsty_f}}{\hastycl{\vvar_t}{\vsty_t}}{\graph_4''})}{\vs_f}{\vs_t}\]
	By 3 applications of Lemma \ref{lem:unroll-gluing},
	\[\graph_1 \seqcomp \graph_2 \seqcomp (\kwtapp{\graph_3}{\vs_f}{\vs_t})
			\unrstep^* \graph_1' \seqcomp \graph_2' \seqcomp 
					\kwtapp{(\dagpi{\hastycl{\vvar_f}{\vsty_f}}{\hastycl{\vvar_t}{\vsty_t}}{\graph_4''})}{\vs_f}{\vs_t}\]
	We have $\bnnf{\kwtapp{(\dagpi{\hastycl{\vvar_f}{\vsty_f}}{\hastycl{\vvar_t}{\vsty_t}}{\graph_4''})}{\vs_f}{\vs_t}} = 
			\bnnf{\gsub{\gsub{\graph_4''}{\vs_f}{\vvar_f}}{\vs_t}{\vvar_t}} = \bnnf{\graph_4'}$.	
	Therefore,
	\[
        \begin{array}{l l }
        & \bnnf{\graph_1' \seqcomp \graph_2' \seqcomp \kwtapp{(\dagpi{\hastycl{\vvar_f}{\vsty_f}}{\hastycl{\vvar_t}{\vsty_t}}{\graph_4''})}{\vs_f}{\vs_t}}\\
			= &\bnnf{\graph_1'} \seqcomp \bnnf{\graph_2'} \seqcomp 
							\bnnf{\kwtapp{(\dagpi{\hastycl{\vvar_f}{\vsty_f}}{\hastycl{\vvar_t}{\vsty_t}}{\graph_4''})}{\vs_f}{\vs_t}}\\
			= & \bnnf{\graph_1'} \seqcomp \bnnf{\graph_2'} \seqcomp \bnnf{\graph_4'}
    \end{array}\]
       We have  
\[
\begin{array}{l l}
& \getgraphs{\bnnf{\graph_1'} \seqcomp \bnnf{\graph_2'} \seqcomp \bnnf{\graph_4'}}\\
		= & \{\cgraph_1' \seqcomp \cgraph_2' \seqcomp \cgraph_3' \mid \cgraph_1' \in \getgraphs{\bnnf{\graph_1'}}, 
				\cgraph_2' \in \getgraphs{\bnnf{\graph_2'}}, \cgraph_3' \in \getgraphs{\bnnf{\graph_4'}}\}
                                \end{array}
                                \]
	Therefore, $\cgraph \in \getgraphs{\bnnf{\graph_1' \seqcomp \graph_2' \seqcomp 
			\kwtapp{(\dagpi{\hastycl{\vvar_f}{\vsty_f}}{\hastycl{\vvar_t}{\vsty_t}}{\graph_4''})}{\vs_f}{\vs_t}}}$.

\item \rulename{C:Pair}.
	Then~$e = \kwpair{e_1}{e_2}$
		and~$v = \kwpair{v_1}{v_2}$
		and~$\cgraph = \cgraph_1 \seqcomp \cgraph_2$
		and~$\eval{}{e_1}{v_1}{\cgraph_1}$
		and~$\eval{}{e_2}{v_2}{\cgraph_2}$.
	By inversion on \rulename{S:Type-Eq} and \rulename{S:Pair},
		$\coneq{\ectx}{\utctx}{\ectx}{\tau}{\kwprod{\tau_1}{\tau_2}}{\kwtykind}$
		and~$\graph = \graph_1 \seqcomp \graph_2$
		and~$\uspsplit{\uspctx}{\uspctx_1}{\uspctx_2}$
		and~$\tywithdag{\ectx}{\uspctx_1}{\utctx}{\ectx}{e_1}{\tau_1}{\graph_1}$
		and~$\tywithdag{\ectx}{\uspctx_2}{\utctx}{\ectx}{e_2}{\tau_2}{\graph_2}$.

	By induction,
		\uctxexistsst{\utctx}{\gutctx_1}{v_1}{\tau_1}
		and~\uctxexistsst{\utctx}{\gutctx_2}{v_2}{\tau_2}
		and~\gexistsst{\graph_1'}{\graph_1}{\cgraph_1}
		and~\gexistsst{\graph_2'}{\graph_2}{\cgraph_2}.
                
	By \rulename{OM:Empty},
		$\uspsplit{\ectx}{\ectx}{\ectx}$.
	By \rulename{S:Pair} and weakening,
		$\tywithdag{\ectx}{\ectx}{\utctx, \gutctx_1, \gutctx_2}{\ectx}{v}{\kwprod{\tau_1}{\tau_2}}{\emptygraph}$
		(Since $\emptygraph \seqcomp \emptygraph = \emptygraph$).
	Apply weakening and \rulename{S:Type-Eq}.
	By 2 applications of Lemma \ref{lem:unroll-gluing},
		$\graph \unrstep^* \graph_1' \seqcomp \graph_2'$.
	We have
        \[\bnnf{\graph_1' \seqcomp \graph_2'} = \bnnf{\graph_1'} \seqcomp \bnnf{\graph_2'}\]
        and
	\[\getgraphs{\bnnf{\graph_1'} \seqcomp \bnnf{\graph_2'}}
		= \{\cgraph_1' \seqcomp \cgraph_2' \mid \cgraph_1' \in \getgraphs{\bnnf{\graph_1'}}, 
				\cgraph_2' \in \getgraphs{\bnnf{\graph_2'}}\}\]
	Therefore, $\cgraph \in \getgraphs{\bnnf{\graph_1' \seqcomp \graph_2'}}$.

\item \rulename{C:Future}.
	Then~$e = \kwfuture{\vs}{e'}$
		and~$v = \kwhandle{\vspath}{v'}$
		and~$\cgraph = \leftcomp{\cgraph'}{\vspath}$
		and~$\eval{}{e'}{v'}{\cgraph'}$
		and~$\vseval{\vs}{\vspath}$.
                
	By inversion on \rulename{S:Type-Eq} and \rulename{S:Future},
		$\coneq{\ectx}{\utctx}{\ectx}{\tau}{\kwfutt{\tau'}{\vs}}{\kwtykind}$
		and~$\graph = \leftcomp{\graph''}{\vs}$
		and~$\uspsplit{\uspctx}{\uspctx_1}{\uspctx_2}$
		and~$\tywithdag{\ectx}{\uspctx_1}{\utctx}{\ectx}{e'}{\tau'}{\graph''}$
		and~$\vsistype{\uspctx_2}{\ectx}{\vs}{\kwvty}$
		and~$\vsistype{\ectx}{\utctx}{\vs}{\kwvty}$.
                
	By induction,
		\uctxexistsst{\utctx}{\gutctx}{v'}{\tau'}
		and~\gexistsst{\graph'''}{\graph''}{\cgraph'}.

	By Lemma \ref{lem:vs-eval-equiv},
		$\vseq{\utctx}{\vs}{\vspath}{\kwvty}$.
	By Lemma \ref{lem:vs-preservation} and weakening,
		$\vsistype{\ectx}{\utctx, \gutctx}{\vspath}{\kwvty}$.
	By \rulename{S:Handle},
		$\tywithdag{\ectx}{\ectx}{\utctx, \gutctx}{\ectx}{v}{\kwfutt{\tau'}{\vspath}}{\emptygraph}$.
                
	By \rulename{CE:Reflexive} and \rulename{CE:Fut},
		$\coneq{\ectx}{\utctx, \gutctx}{\ectx}{\kwfutt{\tau'}{\vs}}{\kwfutt{\tau'}{\vspath}}{\kwtykind}$.
	Apply weakening and \rulename{S:Type-Eq} twice.
        
	By Lemma \ref{lem:unroll-gluing},
		$\graph \unrstep^* \leftcomp{\graph'''}{\vs}$.

We have
\[\bnnf{\leftcomp{\graph'''}{\vs}} = \leftcomp{\bnnf{\graph'''}}{\vspath}
\]
and
\[\getgraphs{\leftcomp{\bnnf{\graph'''}}{\vspath}}
		= \{\leftcomp{\cgraph''}{\vspath} \mid \cgraph'' \in \getgraphs{\bnnf{\graph'''}}\}\]
	Therefore, $\cgraph \in \getgraphs{\bnnf{\leftcomp{\graph'''}{\vs}}}$.

\item \rulename{C:Touch}.
	Then~$e = \kwforce{e'}$
		and~$\cgraph = \cgraph' \seqcomp \touchcomp{\vspath}$
		and~$\eval{}{e'}{\kwhandle{\vspath}{v}}{\cgraph'}$.

	By inversion on \rulename{S:Type-Eq} and \rulename{S:Touch},
		$\coneq{\ectx}{\utctx}{\ectx}{\tau}{\tau'}{\kwtykind}$
		and $\graph = \graph'' \seqcomp \touchcomp{\vs}$
		and
               $\tywithdag{\ectx}{\uspctx}{\utctx}{\ectx}{e'}{\kwfutt{\tau'}{\vs}}{\graph''}$.
	By induction,
		\uctxexistsst{\utctx}{\gutctx}{\kwhandle{\vspath}{v}}{\kwfutt{\tau'}{\vs}}
		and~\gexistsst{\graph'''}{\graph''}{\cgraph'}.
	By inversion on \rulename{S:Type-Eq} and \rulename{S:Handle},
	\[\coneq{\ectx}{\utctx, \gutctx}{\ectx}{\kwfutt{\tau'}{\vs}}{\kwfutt{\tau''}{\vspath}}{\kwtykind}\]
	and
        \[\coneq{\ectx}{\utctx, \gutctx}{\ectx}{\tau'}{\tau''}{\kwtykind}\]
        and
        \[\tywithdag{\ectx}{\ectx}{\utctx, \gutctx}{\ectx}{v}{\tau''}{\emptygraph}\]
	Apply weakening and \rulename{S:Type-Eq} twice.

	By inversion on \rulename{CE:Fut},
		$\vseq{\utctx, \gutctx}{\vs}{\vspath}{\kwvty}$.
	By \rulename{UV:Path} and Lemma \ref{lem:vs-eval-transitive},
		$\vseval{\vs}{\vspath}$.
                
	By Lemma \ref{lem:unroll-gluing},
		$\graph \unrstep^* \graph''' \seqcomp \touchcomp{\vs}$.
        We have
	\[\bnnf{\graph''' \seqcomp \touchcomp{\vs}} = \bnnf{\graph'''} \seqcomp \touchcomp{\vspath}\]
        and
	\[\getgraphs{\bnnf{\graph'''} \seqcomp \touchcomp{\vspath}}
		= \{\cgraph'' \seqcomp \touchcomp{\vspath} \mid \cgraph'' \in \getgraphs{\bnnf{\graph'''}}\}\]
	Therefore, $\cgraph \in \getgraphs{\bnnf{\graph''' \seqcomp \touchcomp{\vs}}}$.

\item \rulename{C:New}.
	Then~$e = \kwnewf{\vvar}{\vsty}{e'}$
		and~$\eval{}{\sub{e'}{\vertgen{\vsty}{\genseed}}{\vvar}}{v}{\cgraph}$
		and~$\genseed' \fresh$.
                
	By inversion on \rulename{S:Type-Eq} and \rulename{S:New},
		$\coneq{\ectx}{\utctx}{\ectx}{\tau}{\tau'}{\kwtykind}$
		and $\graph = \dagnew{\hastycl{\vvar}{\vsty}}{\graph''}$
		and
        \[\tywithdag{\ectx}{\uspctx, \hastype{\vvar}{\vsty}}{\utctx, \hastype{\vvar}{\vsty}}{\ectx}{e'}{\tau'}{\graph''}\]
        
	By \rulename{U:OmegaGen},
		$\vsistype{\hastype{\genseed}{\vsty}}{\ectx}{\vertgen{\vsty}{\genseed}}{\vsty}$.
                
	By \rulename{U:PsiGen},
		$\vsistype{\ectx}{\utctx, \hastype{\genseed}{\vsty}}{\vertgen{\vsty}{\genseed}}{\vsty}$.
                
	By \rulename{OM:Gen},
		$\uspsplit{\uspctx, \hastype{\genseed}{\vsty}}{\uspctx}{\hastype{\genseed'}{\vsty}}$.
                
	By weakening and Lemma \ref{lem:subst},
		$\tywithdag{\ectx}{\uspctx, \hastype{\genseed}{\vsty}}{\utctx, \hastype{\genseed}{\vsty}}{\ectx}
			{\vsub{e'}{\vertgen{\vsty}{\genseed}}{\vvar}}{\tau'}{\vsub{\graph''}{\vertgen{\vsty}{\genseed}}{\vvar}}$.
		(Since $\vvar$ is local to $e$ and $\graph$, 
			$\vsub{\tau'}{\vertgen{\vsty}{\genseed}}{\vvar} = \tau'$).
                        
	By induction,
		\uctxexistsst{\utctx, \hastype{\genseed}{\vsty}}{\gutctx}{v}{\tau'}
		and \gexistsst{\graph'''}{\vsub{\graph''}{\vertgen{\vsty}{\genseed}}{\vvar}}{\cgraph}.
                
	Apply weakening and \rulename{S:Type-Eq}.
        
	By Lemma \ref{lem:graph-sub-unroll},
		$\gsubvsunrollexst{\graph''}{\graph'''}{\graph''''}{\vertgen{\vsty}{\genseed}}{\vvar}$.
                
	By Lemma \ref{lem:unroll-gluing},
		$\graph \unrstep^* \dagnew{\hastycl{\vvar}{\vsty}}{\graph''''}$.

We have \[\bnnf{\dagnew{\hastycl{\vvar}{\vsty}}{\graph''''}} = \bnnf{\vsub{\graph''''}{\vertgen{\vsty}{\genseed}}{\vvar}} = \bnnf{\graph'''}\]
	Therefore, $\cgraph \in \getgraphs{\bnnf{\dagnew{\hastycl{\vvar}{\vsty}}{\graph''''}}}$.
  \end{itemize}
  \fi
\end{proof}

\section{From Section~\ref{sec:infer}}\label{app:infer-proofs}

\begin{figure*}
  \small
  \centering
  \def \MathparLineskip {\lineskip=0.43cm}
  \begin{mathpar}
    \Rule{SV:Type-Eq}
         {\coneq{\ectx}{\uspctx}{\ectx}{\tau_1}{\tau_2}{\kwtykind}\\
            \affinetyped{\uspctx}{v}{\tau_1}}
         {\affinetyped{\uspctx}{v}{\tau_2}}
    \and
    \Rule{SV:Unit}
         {\strut}
         {\affinetyped{\uspctx}{\kwtriv}{\kwunit}}
    \and
    \Rule{SV:Fun}
         {\gctx = \hastype{\gvar}{\dagpi{\hastycl{\vvar_f}{\vsty_f}}{\hastycl{\vvar_t}{\vsty_t}}{\kgraph}}\\
		\ctx = \hastype{f}
	             {\kwpi{\hastycl{\vvar_f}{\vsty_f}}{\hastycl{\vvar_t}{\vsty_t}}
	               {\kwarrow{\tau_1}{\tau_2}{\kwtapp{\gvar}{\vvar_f}{\vvar_t}}}},
	             \hastype{x}{\tau_1}\\
           \gvar\fresh\\
		\utctx = \hastype{\vvar_f}{\vsty_f}, \hastype{\vvar_t}{\vsty_t}\\
           \tywithdag{\gctx}{\hastype{\vvar_f}{\vsty_f}}{\utctx}{\ctx}{e}{\tau_2}{\graph}\\
           \iskind{\ectx}{}{\utctx}{\ectx}{\tau_1}{\kwtykind}\\
           \iskind{\ectx}{}{\utctx}{\ectx}{\tau_2}{\kwtykind}}
         {\affinetyped{\uspctx}
           {\kwfun{\vvar_f}{\vvar_t}{f}{x}{e}}
           {\kwpi{\hastycl{\vvar_f}{\vsty_f}}{\hastycl{\vvar_t}{\vsty_t}}
             {\kwarrow{\tau_1}{\tau_2}
               {\kwtapp{(\dagrec{\gvar}{\dagpi{\hastype{\vvar_f}{\vsty_f}}{\hastype{\vvar_t}{\vsty_t}}
               {\graph}}{})}{\vvar_f}{\vvar_t}}}}
         }
    \and
    \Rule{SV:Pair}
         {\uspsplit{\uspctx}{\uspctx_1}{\uspctx_2}\\
	\affinetyped{\uspctx_1}{v_1}{\tau_1}\\
           \affinetyped{\uspctx_2}{v_2}{\tau_2}}
         {\affinetyped{\uspctx}{\kwpair{v_1}{v_2}}{\kwprod{\tau_1}{\tau_2}}}
    \and
    \Rule{SV:InL}
         {\affinetyped{\uspctx}{v}{\tau_1}
         }
         {\affinetyped{\uspctx}{\kwinl{v}}{\kwsum{\tau_1}{\tau_2}}
         }
    \and
    \Rule{SV:InR}
         {\affinetyped{\uspctx}{v}{\tau_2}
         }
         {\affinetyped{\uspctx}{\kwinr{v}}{\kwsum{\tau_1}{\tau_2}}}
    \and
    \Rule{SV:Roll}
         {\affinetyped{\uspctx}{v}
           {\sub{\sub{\tau}{\vs}{\vvar}}
             {\kwxi{\hastycl{\vvar'}{\vsty}}{\kwprec{\convar}{\hastycl{\vvar}{\vsty}}{\tau}{\vvar'}}}{\convar}}%\\
           %\iskind{\gctx}{}{\utctx}{\ectx}{\kwprec{\convar}{\hastycl{\vvar}{\vsty}}{\tau}{\vs}}{\kwtykind}
         }
         {\affinetyped{\uspctx}{\kwroll{v}}
           {\kwprec{\convar}{\hastycl{\vvar}{\vsty}}{\tau}{\vs}}}
    \and
    \Rule{SV:Handle}
         {\uspsplit{\uspctx}{\uspctx_1}{\uspctx_2}\\
           \affinetyped{\uspctx_1}{v}{\tau}\\
           \vsistype{\uspctx_2}{\ectx}{\vspath}{\kwvty}}
         {\affinetyped{\uspctx}{\kwhandle{\vspath}{v}}
           {\kwfutt{\tau}{\vspath}}}
  \end{mathpar}
  \caption{Affine typing rules for values.}
  \label{fig:affine-statics}
\end{figure*}

\begin{figure*}
  \small
  \centering
  \def \MathparLineskip {\lineskip=0.43cm}
  \begin{mathpar}
    \Rule{KU:Unit}
         {\strut}
         {\unannkind{\vstctx}{\kwunit}{\kwtykind}}
    \and
    \Rule{KU:Var}
         {\strut}
         {\unannkind{\vstctx, \haskind{\convar}{\kwtykind}}
           {\convar}{\kwtykind}}
    \and
    \Rule{KU:Fun}
         {\iskind{\ectx}{}{\hastype{\vvar_f}{\vsty_f}, \hastype{\vvar_t}{\vsty_t}}{\ectx}
             {\con_1}{\kwtykind}\\
           \iskind{\ectx}{}{\hastype{\vvar_f}{\vsty_f}, \hastype{\vvar_t}{\vsty_t}}{\ectx}
             {\con_2}{\kwtykind}\\
           \dagwf{\ectx}{\ectx}{\ectx}{\graph}
                 {\dagpi{\hastycl{\vvar_f}{\vsty_f}}{\hastycl{\vvar_t}{\vsty_t}}{\kgraph}}}
         {\unannkind{\vstctx}{\kwpi{\hastycl{\vvar_f}{\vsty_f}}{\hastycl{\vvar_t}{\vsty_t}}
             {\kwarrow{\con_1}{\con_2}{\kwtapp{\graph}{\vvar_f}{\vvar_t}}}}{\kwtykind}}
    \and
    \Rule{KU:Sum}
		{\unannkind{\vstctx}{\undecty_1}{\kwtykind}\\
           \unannkind{\vstctx}{\undecty_2}{\kwtykind}}
         {\unannkind{\vstctx}{\kwsum{\undecty_1}{\undecty_2}}{\kwtykind}}
    \and
    \Rule{KU:Prod}
         {\unannkind{\vstctx}{\undecty_1}{\kwtykind}\\
           \unannkind{\vstctx}{\undecty_2}{\kwtykind}}
         {\unannkind{\vstctx}{\kwprod{\undecty_1}{\undecty_2}}{\kwtykind}}
    \and
    \Rule{KU:Fut}
         {\unannkind{\vstctx}{\undecty}{\kwtykind}}
         {\unannkind{\vstctx}{\undecfutty{\undecty}}{\kwtykind}}
    \and
    \Rule{KU:Rec}
         {\unannkind{\vstctx, \haskind{\convar}{\kwtykind}}{\undecty}{\kwtykind}}
         {\unannkind{\vstctx}{\kwrec{\convar}{\undecty}}{\kwtykind}}
  \end{mathpar}
  \caption{Kinding rules for unannotated types.}
  \label{fig:unannotated-kinding}
\end{figure*}

\begin{figure*}
  \small
  \centering
  \def \MathparLineskip {\lineskip=0.43cm}
  \begin{mathpar}
    \Rule{SU:Fun}
         {\gctx = \hastype{\gvar}{\dagpi{\hastycl{\vvar_f}{\vsty_f}}{\hastycl{\vvar_t}{\vsty_t}}{\kgraph}}\\
		\ctx = \hastype{f}
	             {\kwpi{\hastycl{\vvar_f}{\vsty_f}}{\hastycl{\vvar_t}{\vsty_t}}
	               {\kwarrow{\tau_1}{\tau_2}{\kwtapp{\gvar}{\vvar_f}{\vvar_t}}}},
	             \hastype{x}{\tau_1}\\
           \gvar\fresh\\
		\utctx = \hastype{\vvar_f}{\vsty_f}, \hastype{\vvar_t}{\vsty_t}\\
           \tywithdag{\gctx}{\hastype{\vvar_f}{\vsty_f}}{\utctx}{\ctx}{e}{\tau_2}{\graph}\\
           \iskind{\ectx}{}{\utctx}{\ectx}{\tau_1}{\kwtykind}\\
           \iskind{\ectx}{}{\utctx}{\ectx}{\tau_2}{\kwtykind}}
         {\unannvaltype
           {\kwfun{\vvar_f}{\vvar_t}{f}{x}{e}}
           {\kwpi{\hastycl{\vvar_f}{\vsty_f}}{\hastycl{\vvar_t}{\vsty_t}}
             {\kwarrow{\tau_1}{\tau_2}
               {\kwtapp{(\dagrec{\gvar}{\dagpi{\hastype{\vvar_f}{\vsty_f}}{\hastype{\vvar_t}{\vsty_t}}
               {\graph}}{})}{\vvar_f}{\vvar_t}}}}
         }
    \and
    \Rule{SU:Unit}
         {\strut}
         {\unannvaltype{\kwtriv}{\kwunit}}
    \and
    \Rule{SU:Pair}
         {\unannvaltype{\undece_1}{\undecty_1}\\
           \unannvaltype{\undece_2}{\undecty_2}}
         {\unannvaltype{\kwpair{\undece_1}{\undece_2}}{\kwprod{\undecty_1}{\undecty_2}}}
    \and
    \Rule{SU:InL}
         {\unannvaltype{\undece}{\undecty_1}}
         {\unannvaltype{\kwinl{\undece}}{\kwsum{\undecty_1}{\undecty_2}}}
    \and
    \Rule{SU:InR}
         {\unannvaltype{\undece}{\undecty_2}}
         {\unannvaltype{\kwinr{\undece}}{\kwsum{\undecty_1}{\undecty_2}}}
    \and
    \Rule{SU:Roll}
         {\unannvaltype{\undece}{\sub{\undecty}{\kwrec{\convar}{\undecty}}{\convar}}}
         {\unannvaltype{\kwroll{\undece}}{\kwrec{\convar}{\undecty}}}
    \and
    \Rule{SU:Handle}
         {\unannvaltype{\undece}{\undecty}}
         {\unannvaltype{\undechandle{\undece}}{\undecfutty{\undecty}}}
  \end{mathpar}
  \caption{Typing rules for unannotated values.}
  \label{fig:unannotated-statics}
\end{figure*}

As stated in the main body of this paper, the main result of this section
requires an affine typing judgement for (annotated) values, 
a kinding judgement for unannotated types, and a typing judgement for unannotated values, 
whose rules are given in Figures \ref{fig:affine-statics}, \ref{fig:unannotated-kinding}, and \ref{fig:unannotated-statics} respectively.

%The addition of fork-join parallelism (given in Figure \ref{fig:fork-join-jar})
%is not supported by the inference algorithm presented in this paper, though we believe
%that adding this support would be straight-forward.

Before showing the main result of the section, we state and prove %two lemmas
a lemma
about how the two annotation judgments commute with substitution.
%
%Lemma~\ref{lem:annot-vp-sub} allows us to substitute a different vertex path
%in the type annotation judgment.
%
Lemma~\ref{lem:subst-annot-vstyvar} allows us to substitute any VS type for
a free VS type variable in the type annotation judgement.
This lemma is required for the proof of the main theorem, but is also useful
for thinking about annotating polymorphic types.
For example, to annotate a type~$\convar~\kw{list}$,
we can add~$\convar$ to the context with
kind~$\kwkindarr{\vstyvar}{\kwtykind}$ for a fresh vertex structure type
variable~$\vstyvar$.
The annotated type and vertex structure type will refer to~$\vstyvar$.
If we instantiate~$\convar$ with a concrete type~$\undecty'$,
the resulting annotated type and vertex structure type will strongly depend on
the particular type~$\undecty'$.
For example, no vertices need to be added if~$\undecty' = \kw{int}$, but we
need more vertices if~$\undecty' = \undecfutty{\kw{int}}$.
We also prove Lemma \ref{lem:subst-unannkind-vstyvar}, 
which states that well-formed unannotated types maintain their kinds 
when substituting any free type variables appropriately.

\begin{lemma}\label{lem:subst-annot-vstyvar}
  If~$\futurify{\vstctx}{\vs}{\undecty}{\tau}{\vsty}$
  	then~$\futurify{\vsub{\vstctx}{\vsty'}{\vstyvar}}{\vs}
			{\vsub{\undecty}{\vsty'}{\vstyvar}}{\vsub{\tau}{\vsty'}{\vstyvar}}{\vsub{\vsty}{\vsty'}{\vstyvar}}$.
\end{lemma}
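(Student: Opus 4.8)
The plan is to prove Lemma~\ref{lem:subst-annot-vstyvar} by induction on the derivation of $\futurify{\vstctx}{\vs}{\undecty}{\tau}{\vsty}$, following the structure of the rules in Figure~\ref{fig:annot-vs-types}. In each case I would apply the substitution $[\vsty'/\vstyvar]$ to the conclusion of the rule, use the induction hypothesis on the premises, and check that the substituted judgement is still derivable by the same rule. The key observation, which I would establish first, is that substitution of a VS type variable commutes with all the syntactic operations appearing in the rules: it commutes with forming products $\vsprod{\cdot}{\isav}{\cdot}{\isav}$, with forming corecursive types $\kwvscorec{\vstyvar'}{\cdot}$ (for $\vstyvar' \neq \vstyvar$, renaming if necessary), with the type-level operations $\kwprod{\cdot}{\cdot}$, $\kwsum{\cdot}{\cdot}$, $\kwfutt{\cdot}{\vs}$, $\kwvapp{\cdot}{\vs}$, and $\kwprec{\convar}{\hastycl{\vvar}{\cdot}}{\cdot}{\vs}$, and crucially it leaves $\vs$ untouched since $\vs$ is a vertex structure, not a VS type, and contains no VS type variables. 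This last fact is what makes the statement clean: the VS $\vs$ parameter is the same on both sides.

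The cases I expect to be routine: \rulename{F:Unit} and \rulename{F:Fun} return $\kwvty$ as the VS type and a type with no occurrence of $\vstyvar$ beyond what sits inside the already-annotated function (which the rule treats opaquely), so $[\vsty'/\vstyvar]$ merely pushes through; \rulename{F:Prod}, \rulename{F:Sum}, and \rulename{F:Fut} are direct appeals to the induction hypothesis on the subcomponents followed by reassembly. For \rulename{F:TyVar}, the conclusion $\futurify{\vstctx,\haskind{\convar}{\kwkindarr{\vsty}{\kwtykind}}}{\vs}{\convar}{\kwvapp{\convar}{\vs}}{\vsty}$ becomes, after substitution, a judgement over the context $\vsub{\vstctx}{\vsty'}{\vstyvar},\haskind{\convar}{\kwkindarr{\vsub{\vsty}{\vsty'}{\vstyvar}}{\kwtykind}}$, and since $\vsub{(\kwvapp{\convar}{\vs})}{\vsty'}{\vstyvar} = \kwvapp{\convar}{\vs}$ ($\vs$ unchanged), this is again an instance of \rulename{F:TyVar} with the substituted kind. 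I would note here that $\convar \neq \vstyvar$ since $\convar$ ranges over type variables and $\vstyvar$ over VS type variables, so no capture issue arises between them.

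The main obstacle will be the \rulename{F:Rec} case, because of the interplay between the bound VS type variable $\vstyvar'$ in $\kwvscorec{\vstyvar'}{\vsty}$ and the substitution. The premise is $\futurify{\vstctx,\haskind{\convar}{\kwkindarr{\kwvscorec{\vstyvar'}{\vsty_0}}{\kwtykind}}}{\vvar}{\undecty_0}{\con_0}{\sub{\vsty_0}{\kwvscorec{\vstyvar'}{\vsty_0}}{\vstyvar'}}$, and the conclusion returns VS type $\kwvscorec{\vstyvar'}{\vsty_0}$. After renaming $\vstyvar'$ to be fresh for $\vsty'$ and $\vstyvar$, the substitution $[\vsty'/\vstyvar]$ distributes over the corecursive type and over the unrolling $\sub{\vsty_0}{\kwvscorec{\vstyvar'}{\vsty_0}}{\vstyvar'}$ — this is the standard commutation of simultaneous substitutions when the bound variable is fresh, i.e.\ $\vsub{(\sub{\vsty_0}{\kwvscorec{\vstyvar'}{\vsty_0}}{\vstyvar'})}{\vsty'}{\vstyvar} = \sub{(\vsub{\vsty_0}{\vsty'}{\vstyvar})}{\kwvscorec{\vstyvar'}{\vsub{\vsty_0}{\vsty'}{\vstyvar}}}{\vstyvar'}$ — so that the induction hypothesis, applied with the substituted context, yields exactly the premise needed to re-apply \rulename{F:Rec}. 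I would state this substitution-commutation identity as an auxiliary fact (it is the kind of routine lemma about simultaneous substitution that one proves by structural induction on $\vsty_0$) and then the case closes. No deeper difficulty is anticipated; the whole lemma is essentially bookkeeping that the annotation rules are stable under a substitution that touches only the ``passive'' VS-type layer.

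\begin{proof}
  By induction on the derivation of $\futurify{\vstctx}{\vs}{\undecty}{\tau}{\vsty}$,
  using the fact that $\vsub{\cdot}{\vsty'}{\vstyvar}$ commutes with all type-level
  and VS-type-level constructors and leaves vertex structures $\vs$ unchanged, together
  with the standard identity that for $\vstyvar'$ fresh for $\vsty'$ and $\vstyvar$,
  $\vsub{(\sub{\vsty_0}{\kwvscorec{\vstyvar'}{\vsty_0}}{\vstyvar'})}{\vsty'}{\vstyvar}
   = \sub{(\vsub{\vsty_0}{\vsty'}{\vstyvar})}{\kwvscorec{\vstyvar'}{\vsub{\vsty_0}{\vsty'}{\vstyvar}}}{\vstyvar'}$.
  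In each case, substitution is applied to the conclusion, the induction hypothesis
  is used on the premises over the substituted context, and the same rule is re-applied.
\end{proof}
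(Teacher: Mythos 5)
Your proposal is correct and follows essentially the same route as the paper: induction on the annotation derivation, with the observation that the substitution passes through all constructors, leaves the vertex structure $\vs$ untouched, lands in the kind of $\convar$ in the \rulename{F:TyVar} case, and commutes with the unrolling substitution in the \rulename{F:Rec} case (where the paper likewise relies on $\vstyvar \neq \vstyvar'$ because $\vstyvar'$ is bound). No gaps.
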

\begin{proof}
  By induction on the derivation
  of~$\futurify{\vstctx}{\vs}{\undecty}{\tau}{\vsty}$.
  \begin{itemize}
\item \rulename{F:TyVar}.
	Then $\undecty = \convar$
		and $\tau = \kwapp{\convar}{\vs}$
		and $\vstctx = \vstctx', \haskind{\convar}{\kwkindarr{\vsty}{\kwtykind}}$.
	We have $\vsub{\vstctx}{\vsty'}{\vstyvar} = \vsub{\vstctx'}{\vsty'}{\vstyvar}, 
			\haskind{\convar}{\kwkindarr{\vsub{\vsty}{\vsty'}{\vstyvar}}{\kwtykind}}$.
	Apply \rulename{F:TyVar}.

\item \rulename{F:Unit}.
	Then $\undecty = \tau = \kwunit$
		and $\vsty = \kwvty$.
	Apply \rulename{F:Unit}.

\item \rulename{F:Fun}.
	Then $\undecty = \tau = \kwpi{\hastycl{\vvar_t}{\vsty_t}}{\hastycl{\vvar_f}{\vsty_f}}{\kwarrow{\tau_1}{\tau_2}{\graph}}$
		and $\vsty = \kwvty$.
	We have $\vsub{\undecty}{\vsty'}{\vstyvar} = \vsub{\tau}{\vsty'}{\vstyvar}
		= \undecty = \tau = \kwpi{\hastycl{\vvar_f}{\vsub{\vsty_f}{\vsty'}{\vstyvar}}}{\hastycl{\vvar_t}{\vsub{\vsty_t}{\vsty'}{\vstyvar}}}
				{\kwarrow{\vsub{\tau_1}{\vsty'}{\vstyvar}}{\vsub{\tau_2}{\vsty'}{\vstyvar}}{\vsub{\graph}{\vsty'}{\vstyvar}}}$.
	Apply \rulename{F:Fun}.

\item \rulename{F:Prod}.
	Then $\undecty = \kwprod{\undecty_1}{\undecty_2}$
		and $\tau = \kwprod{\tau_1}{\tau_2}$
		and $\vsty = \vsprod{\vsty_1}{\isav}{\vsty_2}{\isav}$
		and $\futurify{\vstctx}{\kwfst{\vs}}{\undecty_1}{\tau_1}{\vsty_1}$
		and $\futurify{\vstctx}{\kwsnd{\vs}}{\undecty_2}{\tau_2}{\vsty_2}$.
	By induction,
		$\futurify{\vsub{\vstctx}{\vsty'}{\vstyvar}}{\kwfst{\vs}}
			{\vsub{\undecty_1}{\vsty'}{\vstyvar}}{\vsub{\tau_1}{\vsty'}{\vstyvar}}{\vsub{\vsty_1}{\vsty'}{\vstyvar}}$
		and $\futurify{\vsub{\vstctx}{\vsty'}{\vstyvar}}{\kwsnd{\vs}}
			{\vsub{\undecty_2}{\vsty'}{\vstyvar}}{\vsub{\tau_2}{\vsty'}{\vstyvar}}{\vsub{\vsty_2}{\vsty'}{\vstyvar}}$.
	Apply \rulename{F:Prod}.

\item \rulename{F:Sum}.
	By symmetry with the previous case.

\item \rulename{F:Fut}.
	Then $\undecty = \undecfutty{\undecty'}$
		and $\tau = \kwfutt{\tau'}{\kwsnd{\vs}}$
		and $\vsty = \vsprod{\vsty''}{\isav}{\kwvty}{\isav}$
		and $\futurify{\vstctx}{\kwfst{\vs}}{\undecty'}{\tau'}{\vsty''}$.
	By induction,
		$\futurify{\vsub{\vstctx}{\vsty'}{\vstyvar}}{\kwfst{\vs}}
			{\vsub{\undecty'}{\vsty'}{\vstyvar}}{\vsub{\tau'}{\vsty'}{\vstyvar}}{\vsub{\vsty''}{\vsty'}{\vstyvar}}$.
	Apply \rulename{F:Fut}.

\item \rulename{F:Rec}.
	Then $\undecty = \kwrec{\convar}{\undecty'}$
		and $\tau = \kwprec{\convar}{\hastype{\vvar}{\kwvscorec{\vstyvar'}{\vsty''}}}{\tau'}{\vs}$
		and $\vsty = \kwvscorec{\vstyvar'}{\vsty''}$
		and $\futurify{\vstctx, \haskind{\convar}{\kwkindarr{\kwvscorec{\vstyvar'}{\vsty''}}{\kwtykind}}}
				{\vvar}{\undecty'}{\tau'}{\sub{\vsty''}{\kwvscorec{\vstyvar'}{\vsty''}}{\vstyvar'}}$.
	By induction,
		$\futurify{\vsub{\vstctx}{\vsty'}{\vstyvar}, \haskind{\convar}{\kwkindarr{\kwvscorec{\vstyvar'}{\vsub{\vsty''}{\vsty'}{\vstyvar}}}{\kwtykind}}}
				{\vvar}{\vsub{\undecty'}{\vsty'}{\vstyvar}}{\vsub{\tau'}{\vsty'}{\vstyvar}}
				{\sub{\vsub{\vsty''}{\vsty'}{\vstyvar}}{\kwvscorec{\vstyvar'}{\vsub{\vsty''}{\vsty'}{\vstyvar}}}{\vstyvar'}}$
		(since $\vstyvar'$ is bound within $\vsty$, $\vstyvar \neq \vstyvar'$).
	Apply \rulename{F:Rec}.
\end{itemize}
\end{proof}

\begin{lemma}\label{lem:subst-unannkind-vstyvar}
%  \item If~$\iskind{\gctx}{}{\utctx}{\vstctx}{\tsub{\con}{\vs}{\vvar}}{\kind}$
%    and~$\vsistype{\ectx}{\utctx}{\vs}{\vsty}$
%    then~$\iskind{\gctx}{}{\utctx, \hastype{\vvar}{\vsty}}{\vstctx}{\con}{\kind}$.

%  \item If~$\iskind{\ectx}{}{\utctx}{\vstctx}{\con}{\kind}$
%  	then~$\iskind{\ectx}{}{\vsub{\utctx}{\vsty'}{\vstyvar}}{\vsub{\vstctx}{\vsty'}{\vstyvar}}{\con}{\vsub{\kind}{\vsty'}{\vstyvar}}$.
  If $\unannkind{\vstctx, \haskind{\convar}{\kwtykind}}{\undecty}{\kwtykind}$
		and $\unannkind{\vstctx}{\undecty'}{\kwtykind}$
	then $\unannkind{\vstctx}{\sub{\undecty}{\undecty'}{\convar}}{\kwtykind}$.
\end{lemma}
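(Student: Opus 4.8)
\textbf{Proof plan for Lemma~\ref{lem:subst-unannkind-vstyvar}.}
The plan is to proceed by structural induction on the derivation of $\unannkind{\vstctx, \haskind{\convar}{\kwtykind}}{\undecty}{\kwtykind}$, following the rules of Figure~\ref{fig:unannotated-kinding} (\rulename{KU:Unit}, \rulename{KU:Var}, \rulename{KU:Fun}, \rulename{KU:Sum}, \rulename{KU:Prod}, \rulename{KU:Fut}, \rulename{KU:Rec}). This is the standard substitution-preserves-kinding argument, so most cases are entirely routine; the statement is really just bookkeeping that $\sub{\undecty}{\undecty'}{\convar}$ stays well-kinded once $\convar$ is discharged.

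First I would dispatch the leaf cases. For \rulename{KU:Unit} there is nothing to substitute, so $\sub{\kwunit}{\undecty'}{\convar} = \kwunit$ and we reapply \rulename{KU:Unit}. For \rulename{KU:Var}, the context entry being used is some $\haskind{\convar'}{\kwtykind}$; if $\convar' = \convar$ then $\sub{\convar}{\undecty'}{\convar} = \undecty'$ and the result is exactly the hypothesis $\unannkind{\vstctx}{\undecty'}{\kwtykind}$, while if $\convar' \neq \convar$ then $\sub{\convar'}{\undecty'}{\convar} = \convar'$, $\haskind{\convar'}{\kwtykind} \in \vstctx$, and we reapply \rulename{KU:Var}. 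For \rulename{KU:Fun}, the premises mention only the empty type-variable context $\ectx$ (the function type is already fully annotated), so none of the subderivations depend on $\convar$; since $\convar$ does not occur free in a \rulename{KU:Fun}-formed type, the substitution is the identity and we simply reapply \rulename{KU:Fun}.

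The recursive cases \rulename{KU:Sum}, \rulename{KU:Prod}, \rulename{KU:Fut} all go through by pushing the substitution inside, applying the induction hypothesis to each subderivation (which is over the same context $\vstctx, \haskind{\convar}{\kwtykind}$), and reassembling with the same rule. The one case needing a little care is \rulename{KU:Rec}: here $\undecty = \kwrec{\convar'}{\undecty_0}$ with premise $\unannkind{\vstctx, \haskind{\convar}{\kwtykind}, \haskind{\convar'}{\kwtykind}}{\undecty_0}{\kwtykind}$. By the usual Barendregt convention we may assume $\convar' \notin \fv{\undecty'}$ and $\convar' \neq \convar$, so $\sub{\undecty}{\undecty'}{\convar} = \kwrec{\convar'}{\sub{\undecty_0}{\undecty'}{\convar}}$. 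We weaken the side hypothesis to $\unannkind{\vstctx, \haskind{\convar'}{\kwtykind}}{\undecty'}{\kwtykind}$ (weakening of the unannotated kinding judgment, proved routinely by induction, or simply noted as an admissible structural rule), permute the two context entries so $\convar$ is outermost, and apply the induction hypothesis with ambient context $\vstctx, \haskind{\convar'}{\kwtykind}$ to obtain $\unannkind{\vstctx, \haskind{\convar'}{\kwtykind}}{\sub{\undecty_0}{\undecty'}{\convar}}{\kwtykind}$; reapplying \rulename{KU:Rec} closes the case.

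The only mild obstacle is the \rulename{KU:Rec} case, where one must be careful about the order of context entries and about freshness of the bound variable $\convar'$ with respect to $\undecty'$; this is handled by the standard alpha-renaming / weakening argument and introduces no genuine difficulty. Everything else is a direct congruence over the type structure, so the proof is short and I would present it tersely, showing \rulename{KU:Var} and \rulename{KU:Rec} explicitly and remarking that the remaining cases are immediate by the induction hypothesis.
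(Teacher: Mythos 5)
Your proposal is correct and follows essentially the same route as the paper's proof: induction on the kinding derivation, splitting the \rulename{KU:Var} case on whether the variable is $\convar$, observing that \rulename{KU:Fun} types cannot mention $\convar$ because their premises are kinded under the empty type-variable context, and handling \rulename{KU:Rec} by applying the induction hypothesis under the extended context. The only difference is that you spell out the weakening/exchange/alpha-renaming bookkeeping in the \rulename{KU:Rec} case that the paper leaves implicit, which is fine.
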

\begin{proof}
%  \item
%  By induction on the derivation
%  of~$\iskind{\gctx}{}{\utctx}{\vstctx}{\tsub{\con}{\vs}{\vvar}}{\kind}$.
%  \begin{itemize}
%\item \rulename{K:Unit}.
%	Then $\tsub{\con}{\vs}{\vvar} = \con = \kwunit$.
%  \end{itemize}
    By induction on the derivation of~$\unannkind{\vstctx, \haskind{\convar}{\kwtykind}}{\undecty}{\kwtykind}$.
    \begin{itemize}
\item \rulename{KU:Unit}. 
	Then $\undecty = \kwunit$.
	Apply \rulename{KU:Unit}.

\item \rulename{KU:Var}. 
	Then $\undecty = \convar'$.
	There are two cases:
		\begin{enumerate}
		\item $\convar \neq \convar'$.
			Then $\vstctx = \vstctx', \haskind{\convar'}{\kwtykind}$.
			Apply \rulename{KU:Var}. 
		\item $\convar = \convar'$.
			Then $\sub{\undecty}{\undecty'}{\convar} = \undecty'$.
			True by premise.
		\end{enumerate}

\item \rulename{KU:Fun}.
	Then $\undecty = \kwpi{\hastycl{\vvar_f}{\vsty_f}}{\hastycl{\vvar_t}{\vsty_t}}
             {\kwarrow{\con_1}{\con_2}{\kwtapp{\graph}{\vvar_f}{\vvar_t}}}$
		and $\iskind{\ectx}{}{\hastype{\vvar_f}{\vsty_f}, \hastype{\vvar_t}{\vsty_t}}{\ectx}{\con_1}{\kwtykind}$
		and $\iskind{\ectx}{}{\hastype{\vvar_f}{\vsty_f}, \hastype{\vvar_t}{\vsty_t}}{\ectx}{\con_2}{\kwtykind}$.
	Therefore, $\con_1$ and $\con_2$ contain no instances of $\convar$, 
		so $\sub{\undecty}{\undecty'}{\convar} = \undecty$
	Apply \rulename{KU:Fun}

\item \rulename{KU:Prod}.
	Then $\undecty = \kwprod{\undecty_1}{\undecty_2}$
		and~$\unannkind{\vstctx, \haskind{\convar}{\kwtykind}}{\undecty_1}{\kwtykind}$
		and~$\unannkind{\vstctx, \haskind{\convar}{\kwtykind}}{\undecty_2}{\kwtykind}$.
	By induction,
		$\unannkind{\vstctx}{\sub{\undecty_1}{\undecty'}{\convar}}{\kwtykind}$
		and $\unannkind{\vstctx}{\sub{\undecty_2}{\undecty'}{\convar}}{\kwtykind}$.
	Apply \rulename{KU:Prod}.

\item \rulename{KU:Sum}.
	By symmetry with the previous case.

\item \rulename{KU:Fut}.
	Then $\undecty = \undecfutty{\undecty''}$
		and $\unannkind{\vstctx, \haskind{\convar}{\kwtykind}}{\undecty''}{\kwtykind}$.
	By induction,
		$\unannkind{\vstctx}{\sub{\undecty''}{\undecty'}{\convar}}{\kwtykind}$.
	Apply \rulename{F:Fut}.

\item \rulename{KU:Rec}.
	Then $\undecty = \kwrec{\convar'}{\undecty''}$
		and $\unannkind{\vstctx, \haskind{\convar'}{\kwtykind}, \haskind{\convar}{\kwtykind}}{\undecty''}{\kwtykind}$.
	By induction,
		$\unannkind{\vstctx, \haskind{\convar'}{\kwtykind}}{\sub{\undecty''}{\undecty'}{\convar}}{\kwtykind}$.
	Apply \rulename{KU:Rec}.
  \end{itemize}
\end{proof}

The main result of this section additionally requires the ability to commute both recursive type annotation and recursive type kinding with the unrolling of recursive types. Lemma \ref{lem:subst-rec-annot} allows us to do this: loosely speaking, if a type $\undecty$ annotates to a well-formed $\tau$ where $\undecty$ and $\tau$ are the bodies of recursive types $\undecty'$ and $\tau'$ respectively, then the unrolling of $\undecty'$ annotates to a well-formed unrolling of $\tau'$. Note that the statement of Lemma \ref{lem:subst-rec-annot} is stronger than this description.

\begin{lemma}\label{lem:subst-rec-annot}
  If~$\futurify{\vstctx, \haskind{\convar}{\kwkindarr{\kwvscorec{\vstyvar}{\vsty}}{\kwtykind}}}{\vs}{\undecty}{\tau}{\vsty'}$
  and~$\futurify{\vstctx, \haskind{\convar}{\kwkindarr{\kwvscorec{\vstyvar}{\vsty}}{\kwtykind}}}{\vvar}
			{\undecty'}{\tau'}{\vsub{\vsty}{\kwvscorec{\vstyvar}{\vsty}}{\vstyvar}}$
	and $\unannkind{\unannvstctx{\vstctx}, \haskind{\convar}{\kwtykind}}{\undecty}{\kwtykind}$,
  then~$\futurify{\vstctx}{\vsub{\vs}{\vs'}{\vvar}}{\sub{\undecty}{\kwrec{\convar}{\undecty'}}{\convar}}{\tau''}{\vsty'}$,
	and for any $\utctx$ such that 
%		$\vsistype{\ectx}{\utctx, \hastype{\vvar}{\kwvscorec{\vstyvar}{\vsty}}}{\vs}{\vsty'}$
		$\vsistype{\ectx}{\utctx}{\vsub{\vs}{\vs'}{\vvar}}{\vsty'}$
%		and $\iskind{\ectx}{}{\utctx}
%				{\vstctx, \haskind{\convar}{\kwkindarr{\kwvscorec{\vstyvar}{\vsty}}{\kwtykind}}}{\vsub{\tau}{\vs'}{\vvar}}{\kwtykind}$
		and $\iskind{\ectx}{}{\utctx, \hastype{\vvar}{\kwvscorec{\vstyvar}{\vsty}}}
				{\vstctx, \haskind{\convar}{\kwkindarr{\kwvscorec{\vstyvar}{\vsty}}{\kwtykind}}}{\tau'}{\kwtykind}$,
%		and $\vsistype{\ectx}{\utctx}{\vs}{\kwvscorec{\vstyvar}{\vsty}}$,
      it follows that $\coneq{\ectx}{\utctx}{\vstctx}{\tau''}{\sub{\sub{\tau}{\vs'}{\vvar}}
				{\kwxi{\hastycl{\vvar'}{\kwvscorec{\vstyvar}{\vsty}}}
					{\kwprec{\convar}{\hastycl{\vvar}{\kwvscorec{\vstyvar}{\vsty}}}{\tau'}{\vvar'}}}{\convar}}{\kwtykind}$.
\end{lemma}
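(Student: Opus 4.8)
\textbf{Proof plan for Lemma~\ref{lem:subst-rec-annot}.}
The plan is to proceed by induction on the derivation of
$\futurify{\vstctx, \haskind{\convar}{\kwkindarr{\kwvscorec{\vstyvar}{\vsty}}{\kwtykind}}}{\vs}{\undecty}{\tau}{\vsty'}$,
exactly mirroring the case analysis already used in Lemmas~\ref{lem:subst-annot-vstyvar}
and~\ref{lem:subst-unannkind-vstyvar}. The intuition is that the annotation judgment
is entirely syntax-directed on $\undecty$, so each case of the annotation derivation
for $\undecty$ tells us the shape of $\undecty$ and $\tau$; substituting
$\kwrec{\convar}{\undecty'}$ for $\convar$ commutes with every type former except at the
leaves $\convar$, where the real work happens. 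In each non-leaf case
(\rulename{F:Unit}, \rulename{F:Fun}, \rulename{F:Prod}, \rulename{F:Sum}, \rulename{F:Fut},
\rulename{F:Rec}), I would invert the annotation derivation, invert the corresponding
kinding rule from Figure~\ref{fig:unannotated-kinding} to get the sub-kinding premises,
apply the induction hypothesis to the immediate subterms (with $\vs$ replaced by the
appropriate projection $\kwfst{\vs}$ or $\kwsnd{\vs}$, which is sound because
$\vsub{\kwfst{\vs}}{\vs'}{\vvar} = \kwfst{\vsub{\vs}{\vs'}{\vvar}}$), and then reassemble
using the same annotation rule followed by the congruence rules for type-constructor
equivalence (\rulename{CE:Prod}, \rulename{CE:Sum}, \rulename{CE:Fut}, \rulename{CE:Rec},
\rulename{CE:Reflexive}) from Figure~\ref{fig:constructor evaluation}. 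For the equivalence
side-condition in each case I would also need to push the supplied hypotheses
$\vsistype{\ectx}{\utctx}{\vsub{\vs}{\vs'}{\vvar}}{\vsty'}$ and the kinding of $\tau'$
down to the subterms; the former follows by inversion on the VS typing rules
(Figure~\ref{fig:vert-typing}) and the latter by inversion on the kinding rules
(Figure~\ref{fig:kinds}).

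The interesting case is \rulename{F:TyVar}, which splits according to whether the type
variable is $\convar$ itself. If $\undecty = \convar'$ with $\convar' \neq \convar$, then
$\sub{\undecty}{\kwrec{\convar}{\undecty'}}{\convar} = \undecty$, $\tau = \kwapp{\convar'}{\vs}$,
and everything commutes trivially, with the equivalence following from \rulename{CE:App}
applied to a reflexivity and the VS-equivalence fact $\vseq{\utctx}{\vsub{\vs}{\vs'}{\vvar}}{\vsub{\vs}{\vs'}{\vvar}}{\dots}$.
If $\undecty = \convar$, then by \rulename{F:TyVar} we have
$\tau = \kwapp{\convar}{\vs}$ and $\vsty' = \kwvscorec{\vstyvar}{\vsty}$, while
$\sub{\undecty}{\kwrec{\convar}{\undecty'}}{\convar} = \kwrec{\convar}{\undecty'}$, which we must
annotate using \rulename{F:Rec}. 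Here the second hypothesis of the lemma,
$\futurify{\vstctx, \haskind{\convar}{\dots}}{\vvar}{\undecty'}{\tau'}{\vsub{\vsty}{\kwvscorec{\vstyvar}{\vsty}}{\vstyvar}}$,
is exactly the premise \rulename{F:Rec} demands (after noting $\unannvstctx{(\vstctx, \haskind{\convar}{\dots})}$
supplies $\haskind{\convar}{\kwtykind}$ via $\mathsf{Unann}$, matching what is needed to
re-derive the kinding premise of \rulename{F:Rec} through Theorem~\ref{thm:annot-complete}(2)),
so we obtain
$\futurify{\vstctx}{\vsub{\vs}{\vs'}{\vvar}}{\kwrec{\convar}{\undecty'}}{\tau''}{\kwvscorec{\vstyvar}{\vsty}}$
with $\tau'' = \kwprec{\convar}{\hastycl{\vvar}{\kwvscorec{\vstyvar}{\vsty}}}{\tau'}{\vsub{\vs}{\vs'}{\vvar}}$.
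It then remains to check that this $\tau''$ is equivalent to the required unrolled type
$\sub{\sub{\tau}{\vs'}{\vvar}}{\kwxi{\hastycl{\vvar'}{\kwvscorec{\vstyvar}{\vsty}}}{\kwprec{\convar}{\hastycl{\vvar}{\kwvscorec{\vstyvar}{\vsty}}}{\tau'}{\vvar'}}}{\convar}$;
since $\tau = \kwapp{\convar}{\vs}$, the substitution of the type-level function for
$\convar$ followed by the $\vs'/\vvar$ substitution turns the right-hand side into
$\kwapp{(\kwxi{\hastycl{\vvar'}{\kwvscorec{\vstyvar}{\vsty}}}{\kwprec{\convar}{\hastycl{\vvar}{\kwvscorec{\vstyvar}{\vsty}}}{\tau'}{\vvar'}})}{\vsub{\vs}{\vs'}{\vvar}}$
(using that $\vs$ may mention $\vvar$, so the $\vs'/\vvar$ substitution lands on $\vs$),
which $\beta$-reduces by \rulename{CE:BetaEq} to precisely $\tau''$. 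Discharging the
\rulename{CE:BetaEq} premise requires $\iskind{\ectx}{}{\utctx, \hastype{\vvar'}{\kwvscorec{\vstyvar}{\vsty}}}{\vstctx}{\kwprec{\convar}{\hastycl{\vvar}{\kwvscorec{\vstyvar}{\vsty}}}{\tau'}{\vvar'}}{\kwtykind}$
and $\vsistype{\ectx}{\utctx}{\vsub{\vs}{\vs'}{\vvar}}{\kwvscorec{\vstyvar}{\vsty}}$,
both of which are available: the first from \rulename{K:Rec} using the supplied kinding
of $\tau'$, and the second from the lemma's own hypothesis.

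The main obstacle I anticipate is bookkeeping around the two levels of substitution
($\kwrec{\convar}{\undecty'}$ for $\convar$ at the type level, and $\vs'$ for $\vvar$ at the
VS level) and making sure the $\beta$-reduction in the \rulename{F:TyVar} case lines up
syntactically with the unrolled type appearing in \rulename{S:Unroll} / \rulename{S:Roll}.
A secondary subtlety is that in the inductive cases I must thread the equivalence
conclusion compositionally: the induction hypothesis gives $\tau''_i \equiv (\text{unrolled }\tau_i)$
for each subterm, and I need the congruence rules of Figure~\ref{fig:constructor evaluation}
to lift these to the compound type, which is routine but must be done carefully so that
the contexts $\utctx$ and $\vstctx$ match the premises of \rulename{CE:Prod} etc.
I would also need to be careful in the \rulename{F:Rec} case (the recursive type inside
$\undecty$, as opposed to the substituted one) that the fresh corecursive VS type
variable introduced there does not clash with $\vstyvar$, which is handled by the usual
Barendregt convention and is already implicit in the style of Lemma~\ref{lem:subst-annot-vstyvar}.
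Everything else reduces to inversions on typing, kinding, and VS-typing rules already
present in the excerpt, together with the substitution lemmas~\ref{lem:subst}
and~\ref{lem:subst-annot-vstyvar} and the preservation-of-kinding results
behind Theorem~\ref{thm:annot-complete}(2).
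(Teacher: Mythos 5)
Your proposal is correct and follows essentially the same route as the paper's proof: induction on the annotation derivation, congruence rules of Figure~\ref{fig:constructor evaluation} in the compound cases (using that substitution commutes with projections), and in the key \rulename{F:TyVar} case with $\convar' = \convar$, an application of \rulename{F:Rec} to the second hypothesis followed by \rulename{CE:BetaEq} (plus \rulename{CE:Commutative}) to relate the resulting $\mu$-type to the unrolled form, discharging the $\beta$-rule's premises exactly as you describe. The only detail worth noting is that in the \rulename{F:Fun} case the "trivial commutation" rests on the function type being kinded under closed contexts (so it mentions neither $\convar$ nor $\vvar$), which your inversion of \rulename{KU:Fun} would surface.
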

\begin{proof}
  By induction on the derivation
  of~$\futurify{\vstctx, \haskind{\convar}{\kwkindarr{\kwvscorec{\vstyvar}{\vsty}}{\kwtykind}}}{\vs}{\undecty}{\tau}{\vsty'}$.
  \begin{itemize}
\item \rulename{F:TyVar}.
	Then $\undecty = \convar'$
		and $\tau = \kwvapp{\convar'}{\vs}$.
	There are two cases:
	\begin{enumerate}
		\item $\convar \neq \convar'$.
			Then $\vstctx = \vstctx', \haskind{\convar'}{\kwkindarr{\vsty'}{\kwtykind}}$.
			We have $\sub{\undecty}{\kwrec{\convar}{\undecty'}}{\convar} = \undecty$
				and $\sub{\sub{\tau}{\vs'}{\vvar}}{\kwxi{\hastycl{\vvar'}{\kwvscorec{\vstyvar}{\vsty}}}
						{\kwprec{\convar}{\hastycl{\vvar}{\kwvscorec{\vstyvar}{\vsty}}}{\tau'}{\vvar'}}}{\convar}
					= \kwvapp{\convar'}{(\vsub{\vs}{\vs'}{\vvar})}$.
			By \rulename{F:TyVar},
		        $\futurify{\vstctx}{\vsub{\vs}{\vs'}{\vvar}}{\undecty}
						{\\\kwvapp{\convar'}{(\vsub{\vs}{\vs'}{\vvar})}}{\vsty'}$.
			By \rulename{K:Var} and \rulename{K:App},
				$\iskind{\ectx}{}{\utctx}{\vstctx}
						{\kwvapp{\convar'}{(\vsub{\vs}{\vs'}{\vvar})}}{\kwtykind}$
				for all $\utctx$ such that
				$\vsistype{\ectx}{\utctx}{\vsub{\vs}{\vs'}{\vvar}}{\vsty'}$
				and $\iskind{\ectx}{}{\utctx, \hastype{\vvar}{\kwvscorec{\vstyvar}{\vsty}}}
						{\vstctx, \haskind{\convar}{\kwkindarr{\kwvscorec{\vstyvar}{\vsty}}{\kwtykind}}}{\tau'}{\kwtykind}$.
			Apply \rulename{CE:Reflexive}.
		\item $\convar = \convar'$.
			Then $\vsty' = \kwvscorec{\vstyvar}{\vsty}$.
			We have $\sub{\undecty}{\kwrec{\convar}{\undecty'}}{\convar} = \kwrec{\convar}{\undecty'}$
				and $\sub{\sub{\tau}{\vs'}{\vvar}}{\kwxi{\hastycl{\vvar'}{\kwvscorec{\vstyvar}{\vsty}}}
						{\kwprec{\convar}{\hastycl{\vvar}{\kwvscorec{\vstyvar}{\vsty}}}{\tau'}{\vvar'}}}{\convar}
					= \kwvapp{(\kwxi{\hastycl{\vvar'}{\kwvscorec{\vstyvar}{\vsty}}}
						{\kwprec{\convar}{\hastycl{\vvar}{\kwvscorec{\vstyvar}{\vsty}}}{\tau'}{\vvar'}})}{(\vsub{\vs}{\vs'}{\vvar})}$.
			By \rulename{F:Rec},
		        $\futurify{\vstctx}{\vsub{\vs}{\vs'}{\vvar}}{\kwrec{\convar}{\undecty'}}
						{\kwprec{\convar}{\hastycl{\vvar}{\kwvscorec{\vstyvar}{\vsty}}}{\tau'}{\vsub{\vs}{\vs'}{\vvar}}}{\vsty'}$.
			By \rulename{U:PsiVar},
				$\vsistype{\ectx}{\hastype{\vvar'}{\kwvscorec{\vstyvar}{\vsty}}}{\vvar'}{\kwvscorec{\vstyvar}{\vsty}}$.
			By weakening and \rulename{K:Rec},
				$\iskind{\ectx}{}{\utctx, \hastype{\vvar'}{\kwvscorec{\vstyvar}{\vsty}}}{\vstctx}
						{\kwprec{\convar}{\hastycl{\vvar}{\kwvscorec{\vstyvar}{\vsty}}}{\tau'}{\vvar'}}{\kwtykind}$
				for all $\utctx$ such that
				$\vsistype{\ectx}{\utctx}{\vsub{\vs}{\vs'}{\vvar}}{\vsty'}$
				and $\iskind{\ectx}{}{\utctx, \hastype{\vvar}{\kwvscorec{\vstyvar}{\vsty}}}
						{\vstctx, \haskind{\convar}{\kwkindarr{\kwvscorec{\vstyvar}{\vsty}}{\kwtykind}}}{\tau'}{\kwtykind}$.
			By \rulename{CE:BetaEq},
				$\coneq{\ectx}{\utctx}{\vstctx}{\kwvapp{(\kwxi{\hastycl{\vvar'}{\kwvscorec{\vstyvar}{\vsty}}}
							{\kwprec{\convar}{\hastycl{\vvar}{\kwvscorec{\vstyvar}{\vsty}}}{\tau'}{\vvar'}})}{(\vsub{\vs}{\vs'}{\vvar})}}
						{\kwprec{\convar}{\hastycl{\vvar}{\kwvscorec{\vstyvar}{\vsty}}}{\tau'}{\vsub{\vs}{\vs'}{\vvar}}}{\kwtykind}$
				(since $\vvar'$ is a fresh binding, $\tau'$ contains no instances of $\vvar'$).
			Apply \rulename{CE:Commutative}.
	\end{enumerate}

\item \rulename{F:Unit}.
	Then $\undecty = \tau = \sub{\undecty}{\kwrec{\convar}{\undecty'}}{\convar}
				= \sub{\sub{\tau}{\vs'}{\vvar}}{\kwxi{\hastycl{\vvar'}{\kwvscorec{\vstyvar}{\vsty}}}
					{\kwprec{\convar}{\hastycl{\vvar}{\kwvscorec{\vstyvar}{\vsty}}}{\tau'}{\vvar'}}}{\convar}
			= \kwunit$
		and $\vsty' = \kwvty$.
	Apply \rulename{F:Unit}.
	By \rulename{K:Unit} and \rulename{CE:Reflexive},
		$\coneq{\ectx}{\utctx}{\vstctx}{\kwunit}{\kwunit}{\kwtykind}$
		for all $\utctx$.

\item \rulename{F:Fun}.
	Then $\undecty = \tau = \kwpi{\hastycl{\vvar_f}{\vsty_f}}{\hastycl{\vvar_t}{\vsty_t}}{\kwarrow{\tau_1}{\tau_2}{\graph}}$
		and $\vsty' = \kwvty$.
	By inversion on \rulename{KU:Fun},
		$\iskind{\ectx}{}{\hastype{\vvar_f}{\vsty_f}, \hastype{\vvar_t}{\vsty_t}}{\ectx}{\tau_1}{\kwtykind}$
		and $\iskind{\ectx}{}{\hastype{\vvar_f}{\vsty_f}, \hastype{\vvar_t}{\vsty_t}}{\ectx}{\tau_2}{\kwtykind}$
		and $\dagwf{\ectx}{\ectx}{\ectx}{\graph}{\dagpi{\hastycl{\vvar_f}{\vsty_f}}{\hastycl{\vvar_t}{\vsty_t}}{\kgraph}}$.
	Therefore
		$\tau_1$, $\tau_2$, and $\graph$ contain no instances of $\vvar$ or $\convar$,
		so we have $\sub{\undecty}{\kwrec{\convar}{\undecty'}}{\convar}
				= \sub{\sub{\tau}{\vs'}{\vvar}}{\kwxi{\hastycl{\vvar'}{\kwvscorec{\vstyvar}{\vsty}}}
					{\kwprec{\convar}{\hastycl{\vvar}{\kwvscorec{\vstyvar}{\vsty}}}{\tau'}{\vvar'}}}{\convar}
			= \kwpi{\hastycl{\vvar_t}{\vsty_t}}{\hastycl{\vvar_f}{\vsty_f}}{\kwarrow{\tau_1}{\tau_2}{\graph}}$.
	Apply \rulename{F:Fun}.
	By \rulename{K:Fun} and \rulename{CE:Reflexive} and weakening,
		$\coneq{\ectx}{\utctx}{\vstctx}
				{\kwpi{\hastycl{\vvar_t}{\vsty_t}}{\hastycl{\vvar_f}{\vsty_f}}{\kwarrow{\tau_1}{\tau_2}{\graph}}}
				{\kwpi{\hastycl{\vvar_t}{\vsty_t}}{\hastycl{\vvar_f}{\vsty_f}}{\kwarrow{\tau_1}{\tau_2}{\graph}}}{\kwtykind}$
		for all $\utctx$.

\item \rulename{F:Prod}.
	Then $\undecty = \kwprod{\undecty_1}{\undecty_2}$
		and $\tau = \kwprod{\tau_1}{\tau_2}$
		and $\vsty' = \vsprod{\vsty_1}{\isav}{\vsty_2}{\isav}$
		and $\futurify{\vstctx, \haskind{\convar}{\kwkindarr{\kwvscorec{\vstyvar}{\vsty}}{\kwtykind}}}{\kwfst{\vs}}{\undecty_1}{\tau_1}{\vsty_1}$
		and $\futurify{\vstctx, \haskind{\convar}{\kwkindarr{\kwvscorec{\vstyvar}{\vsty}}{\kwtykind}}}{\kwsnd{\vs}}{\undecty_2}{\tau_2}{\vsty_2}$.
	By inversion on \rulename{KU:Prod},
		$\unannkind{\unannvstctx{\vstctx}, \haskind{\convar}{\kwtykind}}{\undecty_1}{\kwtykind}$
		and $\unannkind{\unannvstctx{\vstctx}, \haskind{\convar}{\kwtykind}}{\undecty_2}{\kwtykind}$.
	By induction,
		$\futurify{\vstctx}{\vsub{\kwfst{\vs}}{\vs'}{\vvar}}{\sub{\undecty_1}{\kwrec{\convar}{\undecty'}}{\convar}}{\tau_1''}{\vsty_1}$
			and $\futurify{\vstctx}{\vsub{\kwsnd{\vs}}{\vs'}{\vvar}}{\sub{\undecty_2}{\kwrec{\convar}{\undecty'}}{\convar}}{\tau_2''}{\vsty_2}$,
		and for any $\utctx_1$ such that 
		$\vsistype{\ectx}{\utctx_1}{\vsub{\kwfst{\vs}}{\vs'}{\vvar}}{\vsty_1}$
		and $\iskind{\ectx}{}{\utctx_1, \hastype{\vvar}{\kwvscorec{\vstyvar}{\vsty}}}
				{\vstctx, \haskind{\convar}{\kwkindarr{\kwvscorec{\vstyvar}{\vsty}}{\kwtykind}}}{\tau'}{\kwtykind}$,
      $\coneq{\ectx}{\utctx_1}{\vstctx}{\tau_1''}{\sub{\sub{\tau_1}{\vs'}{\vvar}}
				{\kwxi{\hastycl{\vvar'}{\kwvscorec{\vstyvar}{\vsty}}}
					{\kwprec{\convar}{\hastycl{\vvar}{\kwvscorec{\vstyvar}{\vsty}}}{\tau'}{\vvar'}}}{\convar}}{\kwtykind}$,
		and for any $\utctx_2$ such that 
		$\vsistype{\ectx}{\utctx_2}{\vsub{\kwsnd{\vs}}{\vs'}{\vvar}}{\vsty_2}$
		and $\iskind{\ectx}{}{\utctx_2, \hastype{\vvar}{\kwvscorec{\vstyvar}{\vsty}}}
				{\vstctx, \haskind{\convar}{\kwkindarr{\kwvscorec{\vstyvar}{\vsty}}{\kwtykind}}}{\tau'}{\kwtykind}$,
      $\coneq{\ectx}{\utctx_1}{\vstctx}{\tau_2''}{\sub{\sub{\tau_2}{\vs'}{\vvar}}
				{\kwxi{\hastycl{\vvar'}{\kwvscorec{\vstyvar}{\vsty}}}
					{\kwprec{\convar}{\hastycl{\vvar}{\kwvscorec{\vstyvar}{\vsty}}}{\tau'}{\vvar'}}}{\convar}}{\kwtykind}$.
	By \rulename{F:Prod},
		$\futurify{\vstctx}{\vsub{\vs}{\vs'}{\vvar}}{\sub{\undecty}{\kwrec{\convar}{\undecty'}}{\convar}}{\kwprod{\tau_1''}{\tau_2''}}{\vsty'}$.
	By \rulename{U:Fst} and \rulename{U:Snd},
		$\vsistype{\ectx}{\utctx}{\kwfst{\vsub{\vs}{\vs'}{\vvar}}}{\vsty_1}$
			and $\vsistype{\ectx}{\utctx}{\kwsnd{\vsub{\vs}{\vs'}{\vvar}}}{\vsty_2}$
		for all $\utctx$ such that
			$\vsistype{\ectx}{\utctx}{\vsub{\vs}{\vs'}{\vvar}}{\vsty'}$
			and $\iskind{\ectx}{}{\utctx, \hastype{\vvar}{\kwvscorec{\vstyvar}{\vsty}}}
					{\vstctx, \haskind{\convar}{\kwkindarr{\kwvscorec{\vstyvar}{\vsty}}{\kwtykind}}}{\tau'}{\kwtykind}$.
	Apply \rulename{CE:Prod}.

\item \rulename{F:Sum}.
	By symmetry with the previous case.

\item \rulename{F:Fut}.
	Then $\undecty = \undecfutty{\undecty_0}$
		and $\tau = \kwfutt{\tau_0}{\kwsnd{\vs}}$
		and $\vsty' = \vsprod{\vsty_0}{\isav}{\kwvty}{\isav}$
		and $\futurify{\vstctx, \haskind{\convar}{\kwkindarr{\kwvscorec{\vstyvar}{\vsty}}{\kwtykind}}}{\kwfst{\vs}}{\undecty_0}{\tau_0}{\vsty_0}$.
	By inversion on \rulename{KU:Fut},
		$\unannkind{\unannvstctx{\vstctx}, \haskind{\convar}{\kwtykind}}{\undecty_0}{\kwtykind}$.
	By induction,
		$\futurify{\vstctx}{\vsub{\kwfst{\vs}}{\vs'}{\vvar}}{\sub{\undecty_0}{\kwrec{\convar}{\undecty'}}{\convar}}{\tau_0''}{\vsty_0}$,
		and for any $\utctx_0$ such that 
		$\vsistype{\ectx}{\utctx_0}{\vsub{\kwfst{\vs}}{\vs'}{\vvar}}{\vsty_0}$
		and $\iskind{\ectx}{}{\utctx_0, \hastype{\vvar}{\kwvscorec{\vstyvar}{\vsty}}}
				{\vstctx, \haskind{\convar}{\kwkindarr{\kwvscorec{\vstyvar}{\vsty}}{\kwtykind}}}{\tau'}{\kwtykind}$,
      $\coneq{\ectx}{\utctx_0}{\vstctx}{\tau_0''}{\sub{\sub{\tau_0}{\vs'}{\vvar}}
				{\kwxi{\hastycl{\vvar'}{\kwvscorec{\vstyvar}{\vsty}}}
					{\kwprec{\convar}{\hastycl{\vvar}{\kwvscorec{\vstyvar}{\vsty}}}{\tau'}{\vvar'}}}{\convar}}{\kwtykind}$.
	By \rulename{F:Fut},
		$\futurify{\vstctx}{\vsub{\vs}{\vs'}{\vvar}}{\sub{\undecty}{\kwrec{\convar}{\undecty'}}{\convar}}
				{\kwfutt{\tau_0''}{\vsub{\kwsnd{\vs}}{\vs'}{\vvar}}}{\vsty'}$.
	By \rulename{U:Fst} and \rulename{U:Snd},
		$\vsistype{\ectx}{\utctx}{\kwfst{\vsub{\vs}{\vs'}{\vvar}}}{\vsty_0}$
			and $\vsistype{\ectx}{\utctx}{\kwsnd{\vsub{\vs}{\vs'}{\vvar}}}{\kwvty}$
		for all $\utctx$ such that
			$\vsistype{\ectx}{\utctx}{\vsub{\vs}{\vs'}{\vvar}}{\vsty'}$
			and $\iskind{\ectx}{}{\utctx, \hastype{\vvar}{\kwvscorec{\vstyvar}{\vsty}}}
					{\vstctx, \haskind{\convar}{\kwkindarr{\kwvscorec{\vstyvar}{\vsty}}{\kwtykind}}}{\tau'}{\kwtykind}$.
	Apply \rulename{UE:Reflexive} and \rulename{CE:Fut}.

\item \rulename{F:Rec}.
	Then $\undecty = \kwrec{\convar'}{\undecty_0}$
		and $\tau = \kwprec{\convar'}{\hastype{\vvar_0}{\vsty'}}{\tau_0}{\vs}$
		and $\vsty' = \kwvscorec{\vstyvar'}{\vsty_0}$
		and $\futurify{\vstctx, \haskind{\convar'}{\kwkindarr{\vsty'}{\kwtykind}}, 
				\haskind{\convar}{\kwkindarr{\kwvscorec{\vstyvar}{\vsty}}{\kwtykind}}}{\vvar_0}{\undecty_0}{\tau_0}{\sub{\vsty_0}{\vsty'}{\vstyvar'}}$.
	By inversion on \rulename{KU:Rec},
		$\unannkind{\unannvstctx{\vstctx}, \haskind{\convar'}{\kwtykind}, \haskind{\convar}{\kwtykind}}{\undecty_0}{\kwtykind}$.
	By induction,
		$\futurify{\vstctx, \haskind{\convar'}{\kwkindarr{\vsty'}{\kwtykind}}}{\vvar_0}
				{\sub{\undecty_0}{\kwrec{\convar}{\undecty'}}{\convar}}{\tau_0''}{\sub{\vsty_0}{\vsty'}{\vstyvar'}}$,
		and for any $\utctx_0$ such that
		$\vsistype{\ectx}{\utctx_0}{\vvar_0}{\sub{\vsty_0}{\vsty'}{\vstyvar'}}$
		and $\iskind{\ectx}{}{\utctx_0, \hastype{\vvar}{\kwvscorec{\vstyvar}{\vsty}}}
				{\vstctx, \haskind{\convar'}{\kwkindarr{\vsty'}{\kwtykind}}, \haskind{\convar}{\kwkindarr{\kwvscorec{\vstyvar}{\vsty}}{\kwtykind}}}
				{\tau'}{\kwtykind}$,
      $\coneq{\ectx}{\utctx_0}{\vstctx, \haskind{\convar'}{\kwkindarr{\vsty'}{\kwtykind}}}{\tau_0''}{\sub{\sub{\tau_0}{\vs'}{\vvar}}
				{\kwxi{\hastycl{\vvar'}{\kwvscorec{\vstyvar}{\vsty}}}
					{\kwprec{\convar}{\hastycl{\vvar}{\kwvscorec{\vstyvar}{\vsty}}}{\tau'}{\vvar'}}}{\convar}}{\kwtykind}$.
	By \rulename{F:Rec},
		$\futurify{\vstctx}{\vsub{\vs}{\vs'}{\vvar}}{\sub{\undecty}{\kwrec{\convar}{\undecty'}}{\convar}}
				{\kwprec{\convar'}{\hastype{\vvar_0}{\vsty'}}{\tau_0''}{\vsub{\vs}{\vs'}{\vvar}}}{\vsty'}$.
	By \rulename{U:PsiVar} and \rulename{UT:Corec1} and \rulename{U:Subtype} and weakening,
		$\vsistype{\ectx}{\utctx, \hastype{\vvar_0}{\vsty'}}{\vvar_0}{\sub{\vsty_0}{\vsty'}{\vstyvar'}}$
		and $\iskind{\ectx}{}{\utctx, \hastype{\vvar_0}{\vsty'}, \hastype{\vvar}{\kwvscorec{\vstyvar}{\vsty}}}
				{\vstctx, \haskind{\convar'}{\kwkindarr{\vsty'}{\kwtykind}}, \haskind{\convar}{\kwkindarr{\kwvscorec{\vstyvar}{\vsty}}{\kwtykind}}}
				{\tau'}{\kwtykind}$
		for all $\utctx$ such that
			$\vsistype{\ectx}{\utctx}{\vsub{\vs}{\vs'}{\vvar}}{\vsty'}$
			and $\iskind{\ectx}{}{\utctx, \hastype{\vvar}{\kwvscorec{\vstyvar}{\vsty}}}
					{\vstctx, \haskind{\convar'}{\kwkindarr{\vsty'}{\kwtykind}}, \haskind{\convar}{\kwkindarr{\kwvscorec{\vstyvar}{\vsty}}{\kwtykind}}}
					{\tau'}{\kwtykind}$.
	Apply \rulename{UE:Reflexive} and \rulename{CE:Rec}.
  \end{itemize}
\end{proof}

Proof of Theorem~\ref{thm:annot-complete}.

\begin{proof}\strut
  \begin{enumerate}
  \item
    By induction on the derivation of~$\unannkind{\vstctx}{\undecty}{\kwtykind}$.
    \begin{itemize}
\item \rulename{KU:Unit}. 
	Then $\undecty = \kwunit$.
	By \rulename{F:Unit},
		$\futurify{\annvstctx{\vstctx}}{\vs}{\undecty}{\kwunit}{\kwvty}$
		for all $\vs$.

\item \rulename{KU:Var}. 
	Then $\undecty = \convar$.
		and $\vstctx = \vstctx', \haskind{\convar}{\kwtykind}$.
	Thus
		$\annvstctx{\vstctx} = \annvstctx{\vstctx'}, \haskind{\convar}{\kwkindarr{\vstyvar}{\kwtykind}}$ where $\vstyvar \fresh$.
	By \rulename{F:TyVar},
		$\futurify{\annvstctx{\vstctx}}{\vs}{\undecty}{\kwvapp{\convar}{\vs}}{\vstyvar}$
		for all $\vs$.

\item \rulename{KU:Fun}.
	Then $\undecty = \kwpi{\hastycl{\vvar_f}{\vsty_f}}{\hastycl{\vvar_t}{\vsty_t}}
             {\kwarrow{\con_1}{\con_2}{\kwtapp{\graph}{\vvar_f}{\vvar_t}}}$.
	By \rulename{F:Fun},
		$\futurify{\annvstctx{\vstctx}}{\vs}{\undecty}
			{\kwpi{\hastycl{\vvar_f}{\vsty_f}}{\hastycl{\vvar_t}{\vsty_t}}
             	{\kwarrow{\con_1}{\con_2}{\kwtapp{\graph}{\vvar_f}{\vvar_t}}}}{\kwvty}$
		for all $\vs$.

\item \rulename{KU:Prod}.
	Then $\undecty = \kwprod{\undecty_1}{\undecty_2}$
		and~$\unannkind{\vstctx}{\undecty_1}{\kwtykind}$
		and~$\unannkind{\vstctx}{\undecty_2}{\kwtykind}$.
	By induction,
		for all $\vs$
		there exists $\tau_1$, $\tau_2$, $\vsty_1$, and $\vsty_2$
		such that $\futurify{\annvstctx{\vstctx}}{\kwfst{\vs}}{\undecty_1}{\tau_1}{\vsty_1}$
			and $\futurify{\annvstctx{\vstctx}}{\kwsnd{\vs}}{\undecty_2}{\tau_2}{\vsty_2}$.
	Apply \rulename{F:Prod}.

\item \rulename{KU:Sum}.
	By symmetry with the previous case.

\item \rulename{KU:Fut}.
	Then $\undecty = \undecfutty{\undecty'}$
		and $\unannkind{\vstctx}{\undecty'}{\kwtykind}$.
	By induction,
		for all $\vs$
		there exists $\tau'$ and $\vsty'$
		such that $\futurify{\annvstctx{\vstctx}}{\kwfst{\vs}}{\undecty'}{\tau'}{\vsty'}$.
	Apply \rulename{F:Fut}.

\item \rulename{KU:Rec}.
	Then $\undecty = \kwrec{\convar}{\undecty'}$
		and $\unannkind{\vstctx, \haskind{\convar}{\kwtykind}}{\undecty'}{\kwtykind}$.
	Thus
		$\annvstctx{\vstctx, \haskind{\convar}{\kwtykind}} = \annvstctx{\vstctx}, \haskind{\convar}{\kwkindarr{\vstyvar}{\kwtykind}}$ 
			where $\vstyvar \fresh$.
	By induction,
		there exists $\tau'$ and $\vsty'$
		such that $\futurify{\annvstctx{\vstctx}, \haskind{\convar}{\kwkindarr{\vstyvar}{\kwtykind}}}{\vvar}{\undecty'}{\tau'}{\vsty'}$.
	By Lemma \ref{lem:subst-annot-vstyvar},
		$\futurify{\annvstctx{\vstctx}, \haskind{\convar}{\kwkindarr{\kwvscorec{\vstyvar}{\vsty'}}{\kwtykind}}}
				{\vvar}{\undecty'}{\vsub{\tau'}{\kwvscorec{\vstyvar}{\vsty'}}{\vstyvar}}{\vsub{\vsty'}{\kwvscorec{\vstyvar}{\vsty'}}{\vstyvar}}$
		(since $\vstyvar \fresh$, 
			$\vsub{\annvstctx{\vstctx}}{\kwvscorec{\vstyvar}{\vsty'}}{\vstyvar} = \annvstctx{\vstctx}$
			and $\vsub{\undecty'}{\kwvscorec{\vstyvar}{\vsty'}}{\vstyvar} \\= \undecty'$).
	By \rulename{F:Rec},
		$\futurify{\annvstctx{\vstctx}}{\vs}{\undecty}{\kwprec{\convar}
				{\hastype{\vvar}{\kwvscorec{\vstyvar}{\vsty'}}}{\vsub{\tau'}{\kwvscorec{\vstyvar}{\vsty'}}{\vstyvar}}{\vs}}
				{\kwvscorec{\vstyvar}{\vsty'}}$
		for all $\vs$.

    \end{itemize}

  \item
    By induction on the derivation
    of~$\futurify{\vstctx}{\vs}{\undecty}{\tau}{\vsty}$.
	\begin{itemize}
\item \rulename{F:TyVar}.
	Then $\tau = \kwvapp{\convar}{\vs}$
		and $\vstctx = \vstctx', \haskind{\convar}{\kwkindarr{\vsty}{\kwtykind}}$.
	Apply \rulename{K:Var} and \rulename{K:App}.

\item \rulename{F:Unit}.
	Then $\tau = \kwunit$.
	Apply \rulename{K:Unit}.

\item \rulename{F:Fun}.
	Then $\undecty = \tau = \kwpi{\hastycl{\vvar_f}{\vsty_f}}{\hastycl{\vvar_t}{\vsty_t}}{\kwarrow{\con_1}{\con_2}{\graph}}$.
	By inversion on \rulename{KU:Fun},
		$\iskind{\ectx}{}{\hastype{\vvar_f}{\vsty_f}, \hastype{\vvar_t}{\vsty_t}}{\ectx}{\con_1}{\kwtykind}$
		and $\iskind{\ectx}{}{\hastype{\vvar_f}{\vsty_f}, \hastype{\vvar_t}{\vsty_t}}{\ectx}{\con_2}{\kwtykind}$
		and $\dagwf{\ectx}{\ectx}{\ectx}{\graph}{\dagpi{\hastycl{\vvar_f}{\vsty_f}}{\hastycl{\vvar_t}{\vsty_t}}{\kgraph}}$.
	Apply \rulename{K:Fun} and weakening.

\item \rulename{F:Prod}.
	Then $\undecty = \kwprod{\undecty_1}{\undecty_2}$
		and $\tau = \kwprod{\tau_1}{\tau_2}$
		and $\vsty = \vsprod{\vsty_1}{\isav}{\vsty_2}{\isav}$
		and $\futurify{\vstctx}{\kwfst{\vs}}{\undecty_1}{\tau_1}{\vsty_1}$
		and $\futurify{\vstctx}{\kwsnd{\vs}}{\undecty_2}{\tau_2}{\vsty_2}$.
	By inversion on \rulename{KU:Prod},
		$\unannkind{\unannvstctx{\vstctx}}{\undecty_1}{\kwtykind}$
		and $\unannkind{\unannvstctx{\vstctx}}{\undecty_2}{\kwtykind}$.
	By \rulename{U:Fst} and \rulename{U:Snd},
		$\vsistype{\ectx}{\utctx}{\kwfst{\vs}}{\vsty_1}$
		and $\vsistype{\ectx}{\utctx}{\kwsnd{\vs}}{\vsty_2}$.
	By induction,
		$\iskind{\ectx}{}{\utctx}{\vstctx}{\tau_1}{\kwtykind}$
		and $\iskind{\ectx}{}{\utctx}{\vstctx}{\tau_2}{\kwtykind}$.
	Apply \rulename{K:Prod}.

\item \rulename{F:Sum}.
	By symmetry with the previous case.

\item \rulename{F:Future}.
	Then $\undecty = \undecfutty{\undecty'}$
		and $\tau = \kwfutt{\tau'}{\kwsnd{\vs}}$
		and $\vsty = \vsprod{\vsty'}{\isav}{\kwvty}{\isav}$
		and $\futurify{\vstctx}{\kwfst{\vs}}{\undecty'}{\tau'}{\vsty'}$.
	By inversion on \rulename{KU:Fut},
		$\unannkind{\unannvstctx{\vstctx}}{\undecty'}{\kwtykind}$.
	By \rulename{U:Fst} and \rulename{U:Snd},
		$\vsistype{\ectx}{\utctx}{\kwfst{\vs}}{\vsty'}$
		and $\vsistype{\ectx}{\utctx}{\kwsnd{\vs}}{\kwvty}$.
	By induction,
		$\iskind{\ectx}{}{\utctx}{\vstctx}{\tau'}{\kwtykind}$.
	Apply \rulename{K:Fut}.

\item \rulename{F:Rec}.
	Then $\undecty = \kwrec{\convar}{\undecty'}$
		and $\tau = \kwprec{\convar}{\hastype{\vvar}{\kwvscorec{\vstyvar}{\vsty'}}}{\tau'}{\vs}$
		and $\vsty = \kwvscorec{\vstyvar}{\vsty'}$
		and $\futurify{\vstctx, \haskind{\convar}{\kwkindarr{\vsty}{\kwtykind}}}{\vvar}
				{\undecty'}{\tau'}{\sub{\vsty'}{\kwvscorec{\vstyvar}{\vsty'}}{\vstyvar}}$.
	By inversion on \rulename{KU:Rec},
		$\unannkind{\unannvstctx{\vstctx}, \haskind{\convar}{\kwtykind}}{\undecty'}{\kwtykind}$.
	By \rulename{U:PsiVar},
		$\vsistype{\ectx}{\utctx, \hastype{\vvar}{\vsty}}{\vvar}{\vsty}$.
	By \rulename{UT:Corec1} and \rulename{U:Subtype},
		$\vsistype{\ectx}{\utctx, \hastype{\vvar}{\vsty}}{\vvar}{\sub{\vsty'}{\kwvscorec{\vstyvar}{\vsty'}}{\vstyvar}}$.
	By induction,
		$\iskind{\ectx}{}{\utctx, \hastype{\vvar}{\vsty}}{\vstctx, \haskind{\convar}{\kwkindarr{\vsty}{\kwtykind}}}{\tau'}{\kwtykind}$.
	Apply \rulename{K:Rec}.
    \end{itemize}

  \item
    By induction on the derivation
    of~$\futurify{\ectx}{\vspath}{\undecty}{\tau}{\vsty}$.
	\begin{itemize}
\item \rulename{F:Unit}.
	Then $\undecty = \tau = \kwunit$.
	By inversion on \rulename{SU:Unit},
		$\undece = \kwtriv$.
	By \rulename{FE:Unit},
		$\futurifye{\vspath}{\undece}{\kwtriv}$.
	Apply \rulename{SV:Unit}
		for all $\uspctx$.

\item \rulename{F:Fun}.
	Then $\undecty = \tau = \kwpi{\hastycl{\vvar_f}{\vsty_f}}{\hastycl{\vvar_t}{\vsty_t}}{\kwarrow{\tau_1}{\tau_2}{\graph}}$.
	By inversion on \rulename{SU:Fun},
		$\undece = \kwfun{\vvar_f}{\vvar_t}{f}{x}{e}$
		and $\graph = \kwtapp{(\dagrec{\gvar}{\dagpi{\hastype{\vvar_f}{\vsty_f}}{\hastype{\vvar_t}{\vsty_t}}
               {\graph'}}{})}{\vvar_f}{\vvar_t}$
		and $\gvar\fresh$
		and $\tywithdag{\hastype{\gvar}{\dagpi{\hastycl{\vvar_f}{\vsty_f}}{\hastycl{\vvar_t}{\vsty_t}}{\kgraph}}}
				{\hastype{\vvar_f}{\vsty_f}}{\hastype{\vvar_f}{\vsty_f}, \hastype{\vvar_t}{\vsty_t}}
				{\hastype{f}{\kwpi{\hastycl{\vvar_f}{\vsty_f}}{\hastycl{\vvar_t}{\vsty_t}}
					{\kwarrow{\tau_1}{\tau_2}{\kwtapp{\gvar}{\vvar_f}{\vvar_t}}}}, \hastype{x}{\tau_1}}
				{e}{\tau_2}{\graph'}$
		and $\iskind{\ectx}{}{\hastype{\vvar_f}{\vsty_f}, \hastype{\vvar_t}{\vsty_t}}{\ectx}{\tau_1}{\kwtykind}$
		and $\iskind{\ectx}{}{\hastype{\vvar_f}{\vsty_f}, \hastype{\vvar_t}{\vsty_t}}{\ectx}{\tau_2}{\kwtykind}$.
	By \rulename{FE:Fun},
		$\futurifye{\vspath}{\undece}{\kwfun{\vvar_f}{\vvar_t}{f}{x}{e}}$.
	Apply \rulename{SV:Fun}
		for all $\uspctx$.

\item \rulename{F:Prod}.
	Then $\undecty = \kwprod{\undecty_1}{\undecty_2}$
		and $\tau = \kwprod{\tau_1}{\tau_2}$
		and $\vsty = \vsprod{\vsty_1}{\isav}{\vsty_2}{\isav}$
		and $\futurify{\ectx}{\kwfst{\vspath}}{\undecty_1}{\tau_1}{\vsty_1}$
		and $\futurify{\ectx}{\kwsnd{\vspath}}{\undecty_2}{\tau_2}{\vsty_2}$.
	By inversion on \rulename{SU:Pair},
		$\undece = \kwpair{\undece_1}{\undece_2}$
		and $\unannvaltype{\undece_1}{\undecty_1}$
		and $\unannvaltype{\undece_2}{\undecty_2}$.
	By inversion on \rulename{KU:Prod},
		$\unannkind{\ectx}{\undecty_1}{\kwtykind}$
		and $\unannkind{\ectx}{\undecty_2}{\kwtykind}$.
	By induction,
		there exist $v_1$ and $v_2$
		such that $\futurifye{\kwfst{\vspath}}{\undece_1}{v_1}$
			and $\futurifye{\kwsnd{\vspath}}{\undece_2}{v_2}$,
		and for all~$\uspctx_1$ such that~$\vsistype{\uspctx_1}{\ectx}{\kwfst{\vspath}}{\vsty_1}$,
		we have~$\affinetyped{\uspctx_1}{v_1}{\tau_1}$,
		and for all~$\uspctx_2$ such that~$\vsistype{\uspctx_2}{\ectx}{\kwsnd{\vspath}}{\vsty_2}$,
		we have~$\affinetyped{\uspctx_2}{v_2}{\tau_2}$.
	By \rulename{FE:Pair},
		$\futurifye{\vspath}{\undece}{\kwpair{v_1}{v_2}}$.
	By \rulename{US:Prod},
		$\vstysplit{\vsty}{\vsprod{\vsty_1}{\isav}{\vsty_2}{\isunav}}{\vsprod{\vsty_1}{\isunav}{\vsty_2}{\isav}}$.
	By Lemma \ref{lem:vert-split-typing},
		\vstytouspsplitext{\uspctx}{\uspctx_1}{\uspctx_2}{\vspath}
				{\vsprod{\vsty_1}{\isav}{\vsty_2}{\isunav}}{\vsprod{\vsty_1}{\isunav}{\vsty_2}{\isav}}
		for all $\uspctx$ such that $\vsistype{\uspctx}{\ectx}{\vspath}{\vsty}$.
	By \rulename{U:Fst} and \rulename{U:Snd},
		$\vsistype{\uspctx_1}{\ectx}{\kwfst{\vspath}}{\vsty_1}$
		and $\vsistype{\uspctx_2}{\ectx}{\kwsnd{\vspath}}{\vsty_2}$.
	Apply \rulename{SV:Pair}.

\item \rulename{F:Sum}.
	Then $\undecty = \kwsum{\undecty_1}{\undecty_2}$
		and $\tau = \kwsum{\tau_1}{\tau_2}$
		and $\vsty = \vsprod{\vsty_1}{\isav}{\vsty_2}{\isav}$
		and $\futurify{\ectx}{\kwfst{\vspath}}{\undecty_1}{\tau_1}{\vsty_1}$
		and $\futurify{\ectx}{\kwsnd{\vspath}}{\undecty_2}{\tau_2}{\vsty_2}$.
	By inversion on either \rulename{SU:InL} or \rulename{SU:InR}, 
		either $\undece = \kwinl{\undece'}$
				and $\unannvaltype{\undece'}{\undecty_1}$
			or $\undece = \kwinr{\undece'}$
				and $\unannvaltype{\undece'}{\undecty_2}$.
		We take the former case, the other is similar.
	By inversion on \rulename{KU:Sum},
		$\unannkind{\ectx}{\undecty_1}{\kwtykind}$.
	By induction,
		there exists $v'$
		such that $\futurifye{\kwfst{\vspath}}{\undece'}{v'}$,
		and for all~$\uspctx'$ such that~$\vsistype{\uspctx'}{\ectx}{\kwfst{\vspath}}{\vsty_1}$,
		we have~$\affinetyped{\uspctx'}{v'}{\tau_1}$.
	By \rulename{FE:InL},
		$\futurifye{\vspath}{\undece}{\kwinl{v'}}$.
	By \rulename{U:Fst},
		$\vsistype{\uspctx}{\ectx}{\kwfst{\vspath}}{\vsty_1}$
		for all $\uspctx$ such that $\vsistype{\uspctx}{\ectx}{\vspath}{\vsty}$.
	Apply \rulename{SV:InL}.

\item \rulename{F:Future}.
	Then $\undecty = \undecfutty{\undecty'}$
		and $\tau = \kwfutt{\tau'}{\kwsnd{\vspath}}$
		and $\vsty = \vsprod{\vsty'}{\isav}{\kwvty}{\isav}$
		and $\futurify{\ectx}{\kwfst{\vspath}}{\undecty'}{\tau'}{\vsty'}$.
	By inversion on \rulename{SU:Handle}, 
		$\undece = \undechandle{\undece'}$
			and $\unannvaltype{\undece'}{\undecty'}$.
	By inversion on \rulename{KU:Fut},
		$\unannkind{\ectx}{\undecty'}{\kwtykind}$.
	By induction,
		there exists $v'$
		such that $\futurifye{\kwfst{\vspath}}{\undece'}{v'}$,
		and for all~$\uspctx'$ such that~$\vsistype{\uspctx'}{\ectx}{\kwfst{\vspath}}{\vsty'}$,
		we have~$\affinetyped{\uspctx'}{v'}{\tau'}$.
	By \rulename{FE:Handle},
		$\futurifye{\vspath}{\undece}{\kwhandle{\kwsnd{\vspath}}{v'}}$.
	By \rulename{US:Prod},
		$\vstysplit{\vsty}{\vsprod{\vsty'}{\isav}{\kwvty}{\isunav}}{\vsprod{\vsty'}{\isunav}{\kwvty}{\isav}}$.
	By Lemma \ref{lem:vert-split-typing},
		for all $\uspctx$ such that $\vsistype{\uspctx}{\ectx}{\vspath}{\vsty}$,
		\vstytouspsplitext{\uspctx}{\uspctx_1}{\uspctx_2}{\vspath}
				{\vsprod{\vsty'}{\isav}{\kwvty}{\isunav}}{\vsprod{\vsty'}{\isunav}{\kwvty}{\isav}}.
	By \rulename{U:Fst} and \rulename{U:Snd},
		$\vsistype{\uspctx_1}{\ectx}{\kwfst{\vspath}}{\vsty'}$
		and $\vsistype{\uspctx_2}{\ectx}{\kwsnd{\vspath}}{\kwvty}$.
	Apply \rulename{SV:Handle}.

\item \rulename{F:Rec}.
	Then $\undecty = \kwrec{\convar}{\undecty'}$
		and $\tau = \kwprec{\convar}{\hastype{\vvar}{\kwvscorec{\vstyvar}{\vsty'}}}{\tau'}{\vspath}$
		and $\vsty = \kwvscorec{\vstyvar}{\vsty'}$
		and $\futurify{\haskind{\convar}{\kwkindarr{\kwvscorec{\vstyvar}{\vsty'}}{\kwtykind}}}{\vvar}
				{\undecty'}{\tau'}{\sub{\vsty'}{\kwvscorec{\vstyvar}{\vsty'}}{\vstyvar}}$.
	By inversion on \rulename{SU:Roll}, 
		$\undece = \kwroll{\undece'}$
			and $\unannvaltype{\undece'}{\sub{\undecty'}{\kwrec{\convar}{\undecty'}}{\convar}}$.
	By inversion on \rulename{KU:Rec},
		$\unannkind{\haskind{\convar}{\kwtykind}}{\undecty'}{\kwtykind}$.
	By Lemma \ref{lem:subst-unannkind-vstyvar},
		$\unannkind{\ectx}{\sub{\undecty'}{\kwrec{\convar}{\undecty'}}{\convar}}{\kwtykind}$.
	By Lemma \ref{lem:subst-rec-annot},
		$\futurify{\ectx}{\vspath}{\sub{\undecty'}{\kwrec{\convar}{\undecty'}}{\convar}}{\tau''}{\sub{\vsty'}{\kwvscorec{\vstyvar}{\vsty'}}{\vstyvar}}$,
		and for all $\utctx$ such that $\vsistype{\ectx}{\utctx}{\vspath}{\sub{\vsty'}{\kwvscorec{\vstyvar}{\vsty'}}{\vstyvar}}$
			and $\iskind{\ectx}{}{\utctx, \hastype{\vvar}{\kwvscorec{\vstyvar}{\vsty'}}}
					{\haskind{\convar}{\kwkindarr{\kwvscorec{\vstyvar}{\vsty'}}{\kwtykind}}}{\tau'}{\kwtykind}$,
		$\coneq{\ectx}{\utctx}{\ectx}{\tau''}{\sub{\sub{\tau'}{\vspath}{\vvar}}
				{\kwxi{\hastycl{\vvar'}{\kwvscorec{\vstyvar}{\vsty'}}}
					{\kwprec{\convar}{\hastycl{\vvar}{\kwvscorec{\vstyvar}{\vsty'}}}{\tau'}{\vvar'}}}{\convar}}{\kwtykind}$.
	By induction,
		there exists $v'$
		such that $\futurifye{\vspath}{\undece'}{v'}$,
		and for all~$\uspctx'$ such that $\vsistype{\uspctx'}{\ectx}{\vspath}{\sub{\vsty'}{\kwvscorec{\vstyvar}{\vsty'}}{\vstyvar}}$,
		we have~$\affinetyped{\uspctx}{v'}{\tau''}$.
	By \rulename{FE:Roll},
		$\futurifye{\vspath}{\undece}{\kwroll{v'}}$.
	By \rulename{UT:Corec1} and \rulename{U:Subtype},
		$\vsistype{\uspctx}{\ectx}{\vspath}{\sub{\vsty'}{\kwvscorec{\vstyvar}{\vsty'}}{\vstyvar}}$
		for all $\uspctx$ such that $\vsistype{\uspctx}{\ectx}{\vspath}{\vsty}$.
	By Lemma \ref{lem:weaken-affine-restriction},
		$\vsistype{\ectx}{\uspctx}{\vspath}{\sub{\vsty'}{\kwvscorec{\vstyvar}{\vsty'}}{\vstyvar}}$
		and $\vsistype{\ectx}{\uspctx}{\vspath}{\vsty}$.
	By part \ref{lem:annot-preservation},
		$\iskind{\ectx}{}{\uspctx}{\vstctx}{\tau}{\kwtykind}$.
	By inversion on \rulename{K:Rec},
		$\iskind{\ectx}{}{\uspctx, \hastype{\vvar}{\kwvscorec{\vstyvar}{\vsty'}}}{\haskind{\convar}{\kwvscorec{\vstyvar}{\vsty'}}}{\tau'}{\kwtykind}$.
	Thus $\coneq{\ectx}{\uspctx}{\ectx}{\tau''}{\sub{\sub{\tau'}{\vspath}{\vvar}}
				{\kwxi{\hastycl{\vvar'}{\kwvscorec{\vstyvar}{\vsty'}}}
					{\kwprec{\convar}{\hastycl{\vvar}{\kwvscorec{\vstyvar}{\vsty'}}}{\tau'}{\vvar'}}}{\convar}}{\kwtykind}$.
	Apply \rulename{SV:Type-Eq} and \rulename{SV:Roll}.
    \end{itemize}
  \end{enumerate}
\end{proof}

\fi

\end{document}